\DeclareMathAlphabet\mathbfcal{OMS}{cmsy}{b}{n}
\newcommand{\mbf}{\mathbf}
\newcommand{\mc}{\mathcal}
\newcommand{\beq}{\begin{equation}}
\newcommand{\eeq}{\end{equation}}
\newcommand{\bea}{\begin{eqnarray}}
\newcommand{\eea}{\end{eqnarray}}
\newcommand{\beas}{\begin{eqnarray*}}
\newcommand{\eeas}{\end{eqnarray*}}
\newcommand{\bit}{\begin{itemize}}
\newcommand{\eit}{\end{itemize}}
\newcommand{\ben}{\begin{enumerate}} 
\newcommand{\een}{\end{enumerate}}
\newcommand{\bpm}{\begin{pmatrix}}
\newcommand{\epm}{\end{pmatrix}}
\newcommand{\bbm}{\begin{bmatrix}}
\newcommand{\ebm}{\end{bmatrix}}
\newcommand{\eps}{\epsilon}
\newcommand{\vep}{\varepsilon}
\renewcommand{\l}{\left}
\renewcommand{\r}{\right}
\def\wh{\widehat}
\def\wt{\widetilde}
\newcommand{\E}[0]{\mathsf{E}}
\newcommand{\Var}[0]{\mathsf{Var}}
\newcommand{\tr}[0]{\mathsf{tr}}
\newcommand{\p}{\mathsf{P}}
\newcommand{\Z}{\mathbb{Z}}
\newcommand{\ubar}{\underline}
\newcommand{\iid}{\mbox{\scriptsize{iid}}}
\newcommand{\nn}{\nonumber}
\newcommand{\cB}{\mathcal{B}}
\newcommand{\cD}{\mathcal{D}}
\newcommand{\cI}{\mathcal{I}}
\newcommand{\cL}{\mathcal{L}}
\newcommand{\cN}{\mathcal{N}}
\newcommand{\cR}{\mathcal{R}}
\newcommand{\cS}{\mathcal{S}}
\newcommand{\cT}{\mathcal{T}}
\newcommand{\cU}{\mathcal{U}}
\newcommand{\cY}{\mathcal{Y}}
\newcommand{\bbI}{\mathbb{I}}
\newcommand{\jts}{J^*_T}
\newcommand{\heta}{\wh\eta}
\theoremstyle{definition}
\newtheorem{thm}{Theorem}
\theoremstyle{definition}
\theoremstyle{definition}
\newtheorem{lem}{Lemma}
\theoremstyle{definition}
\newtheorem{prop}{Proposition}
\theoremstyle{definition}
\newtheorem{assum}{Assumption}
\theoremstyle{definition}
\newtheorem{rem}{Remark}
\theoremstyle{definition}
\theoremstyle{definition}
\title{Simultaneous multiple change-point and factor analysis for high-dimensional time series}
\author{Matteo Barigozzi$^1$ \hskip 1cm Haeran Cho$^2$ \hskip 1cm Piotr Fryzlewicz$^3$}
\date{\small{\today}}
\begin{document}

\maketitle

\begin{abstract}
We propose the first comprehensive treatment of high-dimensional time series factor models 
with multiple change-points in their second-order structure. 
We operate under the most flexible definition of piecewise stationarity,
and estimate the number and locations of change-points consistently 
as well as identifying whether they originate in the common or idiosyncratic components. 
Through the use of wavelets, we transform the problem of change-point detection in the second-order structure 
of a high-dimensional time series, into the (relatively easier) problem of change-point detection 
in the means of high-dimensional panel data. 
Also, our methodology circumvents the difficult issue of the accurate estimation of 
the true number of factors in the presence of multiple change-points by adopting a screening procedure. 
We further show that consistent factor analysis is achieved over each segment
defined by the change-points estimated by the proposed methodology.
In extensive simulation studies, we observe that factor analysis 
prior to change-point detection improves the detectability of change-points, 
and identify and describe an interesting `spillover' effect in which 
substantial breaks in the idiosyncratic components get, 
naturally enough, identified as change-points in the common components, which prompts us to regard 
the corresponding change-points as also acting as a form of `factors'. 
Our methodology is implemented in the R package {\tt factorcpt}, available from CRAN.
\\
\noindent{\bf Key words}: piecewise stationary factor model, change-point detection, principal component analysis, 
wavelet transformation, Double CUSUM Binary Segmentation. 
\end{abstract}

\thispagestyle{empty}

 \footnotetext[1]{Department of Statistics, London School of Economics, Houghton Street, London WC2A 2AE, UK.\\
Email: \url{m.barigozzi@lse.ac.uk}.} 

\footnotetext[2]{School of Mathematics, University of Bristol, University Walk, Bristol, BS8 1TW, UK.\\
Email: \url{haeran.cho@bristol.ac.uk}.}

 \footnotetext[3]{Department of Statistics, London School of Economics, Houghton Street, London WC2A 2AE, UK.\\
Email: \url{p.fryzlewicz@lse.ac.uk}.} 


\section{Introduction}
\label{sec:intro}

High-dimensional time series data abound in modern data science, 
including finance (e.g., simultaneously measured returns on a large number of assets \citep{fan2011, BH16}), 
economics (e.g., country-level macroeconomic data \citep{stock2008} or retail price index data \citep{groen2013}),
neuroimaging (e.g.,~measurements of brain activity \citep{schroder2015, barnett2016})
and biology (e.g., transcriptomics data \citep{omranian2015} or Hi-C data matrices \citep{brault2016}).

Factor modelling, in which the individual elements of a high-dimensional time series are modelled as sums
of a common component (a linear combination of a small number of possibly unknown factors),
plus each individual element's own idiosyncratic noise, 
is a well-established technique for dimension reduction in time series.
Time series factor models are classified, in relation to the effect of factors on the observed time series, 
into `static' (only a contemporaneous effect, see e.g., \cite{stockwatson02JASA, baing02, bai2003}) or 
`dynamic' (lagged factors may also have an effect, see e.g., \cite{fornilippi01, hallinlippi13}) factor models. 

It is increasingly recognised that in several important application areas, 
such as those mentioned at the beginning of this section,
nonstationary time series data are commonly observed. 
Arguably the simplest realistic departure from stationarity,
which also leads to sparse and interpretable time series modelling, 
is piecewise-stationarity, in which the time series is modelled as approximately stationary 
between neighbouring change-points, 
and changing its distribution (e.g., the mean or covariance structure) at each change-point. 

The main aim of this work is to provide the first comprehensive framework for the estimation
of time series factor models with multiple change-points in their second-order structure.
The existing literature on time series factor modelling has only partially embraced nonstationarity.
One way in which a change-point is typically handled in the literature is via 
the assumption that the structural break in the loadings is `moderate' and it affects only a limited number of series,
so that it does not adversely impact the quality of traditional stationary 
principal component analysis (PCA)-type estimation
\citep{stockwatson02JASA, stock2008, bates2013}.

However, opinions diverge on the empirically relevant degrees of temporal instability in the factor loadings,
and several authors observe that `large' changes in the stochastic data-generating mechanism have the potential
to severely distort the estimation of the factor structure.
Investigations into the effect of a {\it single} break in the loadings or the number of factors on the factor structure,
with accompanying change-point tests and estimators for its location,
can be found in \cite{breitung2011}, \cite{chen2014}, \cite{han2014}, 
\cite{corradi2014}, \cite{yamamoto2015}, \cite{baltagi2015}, \cite{bai2017} and \cite{massacci2017}.
Lasso-type estimation is considered for change-point analysis under factor modelling 
in \cite{cheng2016} and \cite{ma2016}:
the former concerns single change-point detection in the loadings and the number of factors, 
while the latter considers multiple change-point detection in loadings only. 
Note that the $\ell_1$-penalty of the Lasso is not optimal for change-point detection,
as investigated in \cite{brodsky1993} and \cite{cho2011}.
In summary, apart from \cite{ma2016} and \cite{sun2016}
(the latter considers factor models with multiple change-points but for a small number of time series only),
the existing change-point methods proposed for factor models focus on detecting a single break
of a particular type, namely a break in the loadings or the number of factors.

We now describe in detail the contribution and findings of this work, at the same time giving an overview of the organisation of the paper.
\begin{itemize}
\item[(a)] \textit{We propose a comprehensive methodology for the consistent estimation of 
multiple change-points in the second-order structure of 
a high-dimensional time series governed by a factor model.}
This is in contrast to the substantial time series factor model literature,
which is overwhelmingly concerned with testing for a single change-point. 
In practice, the possibility of the presence of multiple change-points cannot be ruled out
from a dataset consisting of observations over a long stretch of time,
as illustrated in our applications to financial and macroeconomic time series data in Section \ref{sec:real}.
Our estimators are `interpretable' in the sense that they enable the identification of whether 
each change-point originates from the common or idiosyncratic components.
Through simulation studies (Section \ref{sec:sim}),
it is demonstrated that in high-dimensional time series segmentation, 
factor analysis prior to change-point detection
improves the detectability of change-points that appear only in either of the common or the idiosyncratic components.

\item[(b)] We operate under the most flexible definition of piecewise-stationarity, 
embracing all possible structural instabilities under factor modelling:
it allows factors and idiosyncratic components to have unrestricted 
piecewise-stationary second-order structures, 
including changes in their autocorrelation structures (Section \ref{sec:model}).

\item[(c)] We derive a uniform convergence rate for
the PCA-based estimator of the common components under the factor model with multiple change-points
in Theorems \ref{thm:common} and \ref{thm:overestimation} (Section \ref{sec:sep}).  
A key to the derivation of the theoretical results is the introduction of 
the `capped' PCA estimator of the common components,
which controls for the possible contribution of spurious factors to individual common components
even when the number of factors is over-specified.

\item[(d)] Through the use of wavelets,
we transform the problem of change-point detection in the second-order structure 
of a high-dimensional time series, 
into the (relatively easier) problem of change-point detection in the means of high-dimensional panel data. 
More specifically, in the first stage of the proposed methodology, we
decompose the observed time series into common and idiosyncratic components,
and we compute wavelet transforms for each component (separately), 
see e.g., \cite{nason2000}, \cite{fryzlewicz2006} and \cite{van2008} 
for the use of locally stationary wavelet models for time series data. 
In this way, any change-point in the second-order structure of common or idiosyncratic components
is detectable as a change-point in the means of the wavelet-transformed series (Sections \ref{sec:wave} and \ref{sec:choice}).

\item [(e)] Each of the panels of transformed common and idiosyncratic components serves 
as an input to the second stage of our methodology,
which requires an algorithm for multiple change-point detection 
in the means of high-dimensional panel data.
For this, a number of methodologies have been investigated in the literature,
such as \cite{horvath2012}, \cite{enikeeva2014}, \cite{jirak2014}, \cite{cho2015} and \cite{wang2016}.
In Section \ref{sec:dcbs}, we adopt the Double CUSUM Binary Segmentation procedure 
proposed in \cite{cho2016}, which achieves consistency in 
estimating the total number and locations of the multiple change-points
while permitting both within-series and cross-sectional correlations.
In Section \ref{sec:consistency}, we prove that this consistency result 
carries over to the consistency in multiple change-point detection
in the common and idiosyncratic components, as the dimensions of the data, $n$ and $T$, diverge (Theorem \ref{thm:dcbs}).

\item[(f)] 
Motivated by the theoretical finding noted in (c), 
our methodology is equipped with a step that screens
the results of change-point analysis over a range of factor numbers 
employed for the estimation of common components (Section \ref{sec:factor:number}).
It enables us to circumvent the challenging problem of accurately estimating the true number of factors
in contrast to much of the existing literature, and plays the key role for the proposed methodology to achieve consistent change-point estimation.

\item [(g)] Once all the change-points are estimated, we show that 
the common components are consistently estimated via PCA on each estimated segment (Section \ref{sec:post}).
Such result is new to the literature on factor models with structural breaks.

\item[(h)] We identify and describe an interesting `spillover' effect in which change-points 
attributed to large breaks in the second-order structure of the idiosyncratic components 
get (seemingly falsely) identified as change-points in the common components (Section \ref{sec:sim}). 
We argue that this phenomenon is only too natural and expected, 
and can be ascribed to the prominent change-points 
playing the role of `common factors', regardless of whether 
they originate in the common or idiosyncratic components.
We refer to this new point of view as {\it regarding change-points as factors}.

\item[(i)] We provide an R package named {\tt factorcpt}, which implements our methodology. 
The package is available from CRAN \citep{factorcpt}.


\end{itemize}


\subsection*{An overview of the proposed methodology}

We provide an overview of the change-point detection methodology 
with an accompanying flowchart in Figure \ref{fig:flowchart}. 
For each step, we provide a reference to the relevant section in the paper.
\begin{description}[leftmargin=0pt]
\item[Input:] time series $\{x_{it}, \, i = 1, \ldots, n, \, t = 1, \ldots, T\}$ 
and a set of factor number candidates $\cR = \{\underline{r}, \underline{r}+1, \ldots, \bar{r}\}$.
\item[Iteration:] repeat Stages 1--2 for all $k \in \cR$.
\item[Stage 1: Factor analysis and wavelet transformation.] \hfill
\begin{description}
\item[PCA:]  $x_{it}$ is decomposed into the common component estimated with $k$ factors ($\wh\chi^k_{it}$) 
and the idiosyncratic component ($\wh{\eps}^k_{it}$) via PCA  (Section \ref{sec:sep}).
\item[Wavelet transformation:] $\wh\chi^k_{it}$ and $\wh{\eps}^k_{it}$ are separately transformed into panels with (almost) piecewise constant signals 
via wavelet-based transformations $g_j(\cdot)$ and $h_j(\cdot, \cdot)$ (Section \ref{sec:choice}).
\end{description}
\item[Stage 2: High-dimensional panel data segmentation.] The joint application of the Double CUSUM Binary Segmentation (Section \ref{sec:dcbs}) 
and stationary bootstrap (Section \ref{sec:bootstrap}) algorithms on the panels obtained in Stage 1,
returns the sets of change-points detected for the common ($\wh{\cB}^\chi(k)$) 
and idiosyncratic ($\wh{\cB}^\eps(k)$) components.
\item[Screening over $k \in \cR$:] 
the sets $\wh{\cB}^\chi(k)$ are screened
to obtain the final estimates $\wh{\cB}^\chi$ and $\wh{\cB}^\eps$ (Section \ref{sec:factor:number}).
\end{description}

\begin{figure}[t!]
\centering
\includegraphics[scale=1.2]{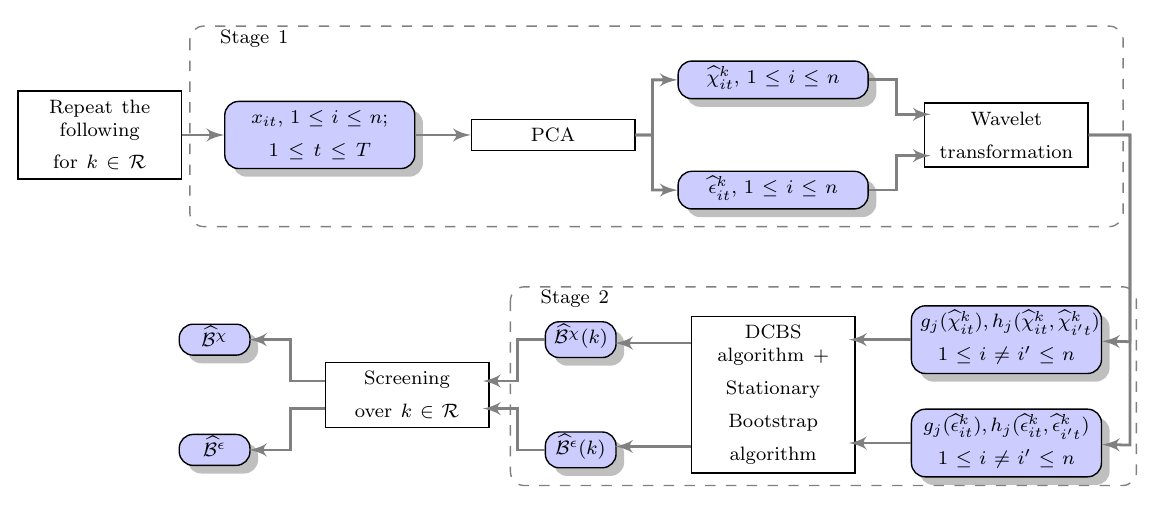}
\caption{\footnotesize Flowchart of the proposed methodology. }
\label{fig:flowchart}
\end{figure}

\subsection*{Notation}

For a given $m \times n$ matrix $\mbf B$ with $b_{i j} = [\mbf B]_{i, j}$ denoting its $(i, j)$ element, 
its spectral norm is defined as $\Vert\mbf B\Vert= \sqrt{\mu_1(\mbf B\mbf B^\top)}$,
where $\mu_k(\mbf C)$ denotes the $k$th largest eigenvalue of $\mbf C$,
and its Frobenius norm as $\Vert\mbf B\Vert_F=\sqrt{\sum_{i=1}^m\sum_{j=1}^n b_{ij}^2}$. 
For a given set $\Pi$, we denote its cardinality by $|\Pi|$.
The indicator function is denoted by $\bbI(\cdot)$.
Also, we use the notations $a \vee b = \max(a, b)$ and $a \wedge b = \min(a, b)$.
Besides, $a \sim b$ indicates that $a$ is of the order of $b$, and $a \gg b$ that $a^{-1} b \to 0$.
We denote an $n \times n$-matrix of zeros by $\mbf O_n$.

\section{Piecewise stationary factor model}
\label{sec:model}

In this section, we define a piecewise stationary factor model which provides a framework 
for developing our change-point detection methodology. 
Throughout, we assume that we observe an $n$-dimensional vector of time series 
$\mbf x_t = (x_{1t}, \ldots, x_{nt})^\top$, following a factor model and undergoing an unknown number of change-points in its second-order structure. The location of the change-points is also unknown.
Each of element $x_{it}$ can be written as:
\begin{eqnarray}
x_{it} = \chi_{it} + \eps_{it} = \bm\lambda_i^\top \mbf f_t + \eps_{it}, \qquad i=1,\ldots, n,\quad t = 1, \ldots, T, \label{ps:fm}
\end{eqnarray}
where $\bm\chi_t = (\chi_{1t}, \ldots, \chi_{nt})^\top$  
and $\bm\eps_t = (\eps_{1t}, \ldots, \eps_{nt})^\top$ denote 
the common and the idiosyncratic components of $\mbf x_t$,
with $\E(\chi_{it}) = \E(\eps_{it}) = 0$ for all $i, t$. We refer to $r$ as the number of factors in model \eqref{ps:fm}. Then the common components are driven by the $r$ factors $\mbf f_t = (f_{1t}, \ldots, f_{rt})^\top$ with $\E(\mbf f_t) = \mbf 0$,
and $\bm\lambda_i = (\lambda_{i1}, \ldots, \lambda_{ir})^\top$ denotes the $r$-dimensional vector of loadings. We denote the $n \times r$ matrix of factor loadings by 
$\bm\Lambda = [\bm\lambda_1, \ldots, \bm\lambda_n]^\top$.
The change-points in the second-order structure of $\mbf x_t$
are classified into $\cB^\chi = \{\eta^\chi_1, \ldots, \eta^\chi_{B_\chi}\}$, 
those in the common components,
and $\cB^\eps = \{\eta^\eps_1, \ldots, \eta^\eps_{B_\eps}\}$, those in the idiosyncratic components.


\begin{rem}
\label{rem:representation}
In the factor model \eqref{ps:fm}, 
both the loadings and the factor number are time-invariant.
However, it is well established in the literature
on factor models with structural breaks, that
any change in the number of factors or loadings
can be represented by a factor model with stable loadings and time-varying factors.
In doing so, $r$, the dimension of the factor space under \eqref{ps:fm},
satisfies $r \ge r_b$, where $r_b$ denotes
the minimum number of factors of each segment
$[\eta^\chi_b+1, \eta^\chi_{b+1}]$, $b = 0, \ldots, B_\chi$,
over which the common components remain stationary.
We provide a comprehensive example illustrating this point in Appendix \ref{sec:ex}.
Many tests and estimation methods for changes in the loadings rely on 
this equivalence between the two representations of time-varying factor models,
see e.g., \cite{han2014} and \cite{chen2014}.
Therefore, we work with the representation in \eqref{ps:fm}, where
all change-points in the common components are imposed on the
second-order structure of $\mbf f_t$ as detailed in
Assumption \ref{assum:one} below.
At the same time, we occasionally 
speak of changes in the loadings or the number of factors,
referring to those changes in $\mbf f_t$ that can be ascribed back to such changes.
\end{rem}

We denote by $\beta(t) = \max\{0 \le b \le B_\chi: \, \eta^\chi_b+1 \le t\}$ 
the index of the change-point in $\chi_{it}$ that is nearest to, and strictly to the left of $t$,
and by $\ubar{\eta}^\chi(t) = \max\{\eta^\chi_b: \, \eta^\chi_b+1 \le t, \ b=0, \ldots, B_\chi\}$ 
the latest change-point in $\chi_{it}$ strictly before $t$,
with the notational convention $\eta^\chi_0 = 0$ and $\eta^\chi_{B_\chi+1} = T$.
Similarly, $\ubar{\eta}^\eps(t)$ is defined with respect to $\eta^\eps_b \in \cB^\eps$, and
$\gamma(t)$ denotes the index of the change-point in $\eps_{it}$ 
that is nearest to $t$ while being strictly to its left.
Then, we impose the following conditions on $\mbf f_t$ and $\bm\eps_t$.
\begin{assum}
\label{assum:one}
\hfill
\bit
\item[(i)] There exist weakly stationary processes $\mbf f^b_t = (f^b_{1t}, \ldots, f^b_{rt})^\top$ 
associated with the intervals $[\eta^\chi_b+1, \eta^\chi_{b+1}]$ for $b = 0, \ldots, B_\chi$ such that 
$\E(f^b_{jt}) = 0$ for all $j, t$, and
\beas
\max_{1 \le j \le r} \E\Big\{f_{jt} - f^{\beta(t)}_{jt}\Big\}^2 = O\Big(\rho_f^{t - \ubar{\eta}^\chi(t)}\Big)
\eeas
for some fixed $\rho_f \in [0, 1)$.
\item[(ii)] Let $\bm\chi^b_t = (\chi^b_{1t}, \ldots, \chi^b_{nt})^\top = \bm\Lambda\mbf f^b_t$ and
denote the (auto)covariance matrix of $\bm\chi^b_t$ by 
$\wt{\bm\Gamma}^b_\chi(\tau) = \E\{\bm\chi^b_{t+\tau}(\bm\chi^b_t)^\top\} = 
\bm\Lambda\E\{\mbf f^b_{t+\tau}(\mbf f^b_t)^\top\}\bm\Lambda^\top$.
Then, there exists fixed $\bar{\tau}_\chi < \infty$ such that
for any $b = 1, \ldots, B_\chi$, we have
$\wt{\bm\Gamma}_\chi^b(\tau) - \wt{\bm\Gamma}_\chi^{b-1}(\tau) \ne \mbf O_n$ 
for some $|\tau| \le \bar{\tau}_\chi$.
\eit
\bit
\item[(iii)] There exist weakly stationary processes $\bm\eps^b_t = (\eps^b_{1t}, \ldots, \eps^b_{nt})^\top$ 
associated with the intervals $[\eta^\eps_b+1, \eta^\eps_{b+1}]$ for $b = 0, \ldots, B_\eps$ such that 
$\E(\eps^b_{it}) = 0$ for all $i, t$, and
\beas
\max_{1 \le i \le n} \E\Big\{\eps_{it} - \eps^{\gamma(t)}_{it}\Big\}^2 
= O\Big(\rho_\eps^{t - \ubar{\eta}^\eps(t)}\Big)
\eeas
for some fixed $\rho_\eps \in [0, 1)$.
\item[(iv)] Denote the (auto)covariance matrix of $\bm\eps^b_t$ by 
$\wt{\bm\Gamma}^b_\eps(\tau) = \E\{\bm\eps^b_{t+\tau}(\bm\eps^b_t)^\top\}$.
Then, there exists fixed $\bar{\tau}_\eps < \infty$ such that
for any $b = 1, \ldots, B_\eps$, we have
$\wt{\bm\Gamma}_\eps^b(\tau) - \wt{\bm\Gamma}_\eps^{b-1}(\tau) \ne \mbf O_n$ 
for some $|\tau| \le \bar{\tau}_\eps$.
\eit
\end{assum}
Assumption \ref{assum:one} (i)--(ii) indicates that for each $b = 0, \ldots, B_\chi$, 
the common component $\chi_{it}$ is `close' to a stationary process $\chi^b_{it}$ 
over the segment $[\eta^\chi_b+1, \eta^\chi_{b+1}]$
such that the effect of transition from one segment to another
diminishes at a geometric rate as $t$ moves away from $\eta^\chi_b$, and 
any $\eta^\chi_b \in \cB^\chi$ coincides with a change-point
in the autocovariance or cross-covariance matrices of $\bm\chi^{\beta(t)}_t$.
The same arguments apply to $\eps_{it}$ under Assumption \ref{assum:one} (iii)--(iv).
The treatment of such approximately piecewise stationary $\mbf f_t$ and $\bm\eps_t$
is similar to that in \cite{fryzlewicz2014b}.

Note that the literature on factor models with structural breaks have primarily focused
on the case of a single change-point in the loadings or factor number
see e.g., \cite{breitung2011}, \cite{chen2014} and \cite{han2014}.
We emphasise that to the best of our knowledge, 
the model \eqref{ps:fm} equipped with Assumption \ref{assum:one}
is the first one to offer a comprehensive framework
that allows for multiple change-points that are not confined to breaks in the loadings or 
the emergence of a new factor, 
but also includes breaks in the second-order structure of the factors and idiosyncratic components.

We now list and motivate the assumptions imposed on the piecewise stationary factor model; 
see e.g., \cite{stockwatson02JASA}, \cite{baing02}, \cite{bai2003}, \cite{FGLR09} and \cite{fan13}
for similar conditions on stationary factor models.
\begin{assum}
\label{assum:two} \hfill
\bit
\item[(i)] $T^{-1}\sum_{t=1}^T \E(\mbf f_t\mbf f_t^\top) = \mbf I_r$.
\item[(ii)] There exists some fixed $c_f, \beta_f \in (0, \infty)$
such that, for any $z > 0$ and $j = 1, \ldots, r$,
\beas
\p(|f_{jt}| > z) \le \exp\{1-(z/c_f)^{\beta_f}\}.
\eeas
\eit
\end{assum}

\begin{assum}
\label{assum:three} \hfill
\bit
\item[(i)] There exists a positive definite $r\times r$ matrix $\mbf H$ with distinct eigenvalues, such that $n^{-1}\bm\Lambda^\top\bm\Lambda \to \mbf H$ as $n\to\infty$. 
\item[(ii)] There exists $\bar{\lambda} \in (0, \infty)$ such that
$|\lambda_{ij}| < \bar{\lambda}$ for all $i, j$.
\eit
\end{assum}

\begin{assum}
\label{assum:four} \hfill
\bit
\item[(i)] There exists $C_\eps \in (0, \infty)$ such that
\begin{eqnarray*}
\max_{1 \le s \le e \le T} \bigg\{\frac{1}{e-s+1}\sum_{i, i'=1}^n\sum_{t, t' = s}^e
a_ia_{i'}\E(\eps_{it}\eps_{i't'})\bigg\} < C_\eps,
\end{eqnarray*}
for any sequence of coefficients $\{a_i\}_{i=1}^n$ satisfying $\sum_{i=1}^n a_i^2 = 1$.
\item[(ii)] $\eps_{it}$ are normally distributed.
\eit
\end{assum}

\begin{assum}
\label{assum:five} \hfill
\bit
\item[(i)] $\{\mbf f_t, 1\le t \le T\}$ and $\{\bm\eps_t, 1\le t \le T\}$ are independent.
\item[(ii)] Denoting the $\sigma$-algebra generated by $\{(\mbf f_t, \bm\eps_t), \, s \le t \le e\}$ 
by $\mc F_s^e$,
let
\beas
\alpha(k) = \max_{1 \le t \le T} \sup_{\substack{A \in \mc F_{-\infty}^t \\ B \in \mc F_{t+k}^\infty}}
|\p(A)\p(B) - \p(A \cap B)|.
\eeas
Then, there exists some fixed $c_\alpha, \beta \in (0, \infty)$ 
satisfying $3\beta_f^{-1} + \beta^{-1} > 1$,
such that, for all $k \in \Z^+$, 
we have $\alpha(k) \le \exp(-c_\alpha k^\beta)$.
\eit
\end{assum}
We adopt the normalisation given in Assumption \ref{assum:two} (i) 
for the purpose of identification; in general, factors and loadings are 
recoverable up to a linear invertible transformation only.
Similar assumptions are found in the factor model literature, see e.g., equation (2.1) of \cite{fan13}.
In order to motivate Assumptions \ref{assum:two} (i), \ref{assum:three} and \ref{assum:four} (i), 
we introduce the notations
\begin{align*}
& \bm\Gamma^{b}_f = \frac{1}{\eta^\chi_{b+1} - \eta^\chi_b}\sum_{t=\eta^\chi_b+1}^{\eta^\chi_{b+1}}\E(\mbf f_t\mbf f_t^\top),
\quad
\bm\Gamma^{b}_\eps = \frac{1}{\eta^\eps_{b+1} - \eta^\eps_b}\sum_{t=\eta^\eps_b+1}^{\eta^\eps_{b+1}}\E(\bm\eps_t\bm\eps_t^\top),
\quad
\bm\Gamma^{b}_\chi = \bm\Lambda \bm\Gamma^{b}_f  \bm\Lambda^\top,
\quad
\\
& \bm\Gamma_{\chi} = \frac{1}{T}\sum_{b=0}^{B_\chi} (\eta^\chi_{b+1}-\eta^\chi_b)\bm\Gamma^{b}_\chi, \quad
\bm\Gamma_\eps = \frac{1}{T}\sum_{b=0}^{B_\eps} (\eta^\eps_{b+1}-\eta^\eps_b)\bm\Gamma^{b}_\eps \quad \mbox{and} \quad
\bm\Gamma_x = \bm\Gamma_\chi  + \bm\Gamma_\eps.
\end{align*}
We denote the eigenvalues (in non-increasing order) of 
$\bm\Gamma^b_\chi$, $\bm\Gamma^b_\eps$, $\bm\Gamma_x$, $\bm\Gamma_\chi$ and $\bm\Gamma_\eps$ by 
$\mu_{\chi, j}^{b}$, $\mu_{\eps, j}^{b}$, $\mu_{x, j}$, $\mu_{\chi, j}$ and $\mu_{\eps, j}$, respectively.
Then, Assumptions \ref{assum:two} (i) and \ref{assum:three} imply that $\mu_{\chi, j}^{b}, \, j=1, \ldots, r_b$ 
are diverging as $n \to \infty$ and, in particular, are of order $n$ for all $b = 0, \ldots, B_\chi$. 
Assumption \ref{assum:three} implicitly rules out change-points which are due to weak changes in the loadings, 
in the sense that the magnitudes of the changes in the loadings are small, 
or only a small fraction of $\chi_{it}$ undergoes the change.
A similar requirement can be found in e.g.,
Assumption 1 of \cite{chen2014} and Assumption 10 of \cite{han2014}.
We note that \cite{bai2017} studies the consistency of a least squares estimator 
for a single change-point attributed to a possibly weak break in the loadings in the above sense.
In Section \ref{sec:sim}, we provide numerical results 
on the effect of the size of the break on our proposed methodology.

Assumption \ref{assum:four} (i) guarantees that, 
when $\mbf a = (a_1, \ldots, a_n)^\top$ is a normalised eigenvector of $\bm\Gamma_\eps^b$, then
the largest eigenvalue of $\bm\Gamma_\eps^{b}$ 
is bounded for all $b = 0, \ldots, B_\eps$ and $n$, 
that is $\mu_{\eps, 1}^{b} < C_\eps$. 
This is the same assumption as those made in \cite{chamberlain1983} and \cite{FGLR09}
and comparable to Assumption C.4 of \cite{bai2003} and 
Assumption 2.1 of \cite{fan2011b} in the stationary case.
Note that Assumption \ref{assum:four} (i) is sufficient in guaranteeing
the \textit{commonness} of $\chi_{it}$ and the \textit{idiosyncrasy} of $\eps_{it}$
according to Definitions 2.1 and 2.2 of \cite{hallinlippi13}.  Assumption \ref{assum:four} (ii) may be relaxed to allow
for $\epsilon_{it}$ of exponential-type tails, provided that the tail behaviour
carries over to the cross-sectional sums of $\epsilon_{it}$.
Note that the normality of the idiosyncratic component does not necessarily 
imply the normality of the data since the factors are allowed to be non-normal.


Assumption \ref{assum:five} is commonly found in the factor model literature 
(see e.g., Assumptions 3.1--3.2 of \cite{fan2011b}). 
In particular, the exponential-type tail conditions in Assumptions \ref{assum:two} (ii) and \ref{assum:four} (ii),
along with the mixing condition in Assumption \ref{assum:five} (ii), allow us to control the deviation of sample covariance estimates from their population counterparts,
via Bernstein-type inequality (see e.g., Theorem 1.4 of \cite{bosq1998} and Theorem 1 of \cite{merlevede2011}). 

We also assume the following on the minimum distance between two adjacent change-points.
\begin{assum}
\label{assum:six}
There exists a fixed $c_\eta \in (0, 1)$ such that 
\beas
\min\Big\{\min_{0 \le b \le B_\chi} |\eta^\chi_{b+1}-\eta^\chi_b|, 
\min_{0 \le b \le B_\eps} |\eta^\eps_{b+1}-\eta^\eps_b|\Big\} \ge c_\eta T.
\eeas
\end{assum}
Under Assumptions \ref{assum:two}--\ref{assum:six}, 
the eigenvalues of $\bm\Gamma_{\chi}$ and $\bm\Gamma_\eps$ satisfy:
\ben
\item [(C1)] there exist some fixed $\underline c_j, \bar{c}_j$ such that for $j = 1, \ldots, r$,
\beas
0 < \underline c_j <\lim_{n\to\infty}\!\!\inf \frac {\mu_{\chi, j}}{n} \le 
\lim_{n\to\infty}\!\!\sup \frac{\mu_{\chi, j}}{n} < \bar{c}_j < \infty,
\eeas
and $\bar{c}_{j+1} < \underline c_j$ for $j = 1, \ldots, r-1$;
\item [(C2)] $\mu_{\eps, 1} < C_{\eps}$, for any $n$.
\een
Moreover, by Weyl's inequality, 
we have the following asymptotic behaviour of the eigenvalues of $\bm\Gamma_x$:
\ben
\item [(C3)] the $r$ largest eigenvalues, $\mu_{x, 1}, \ldots, \mu_{x, r}$, 
are diverging linearly in $n$ as $n \to \infty$;
\item [(C4)] the $(r+1)$th largest eigenvalue, $\mu_{x, r+1}$, stays bounded for any $n$.
\een
From (C1)--(C4) above, it is clear that for identification of the common and idiosyncratic components, 
factor models need to be studied in the asymptotic limit where $n \to \infty$ and $T \to \infty$. 
This requirement, in practice, recommends the use of large cross-sections to apply PCA for factor analysis. 
In particular, we require
\begin{assum}
\label{assum:seven}
$n \to \infty$ as $T \to \infty$, with $n = O(T^\kappa)$ for some $\kappa \in (0, \infty)$.
\end{assum}
Under Assumption \ref{assum:seven}, we are able to establish 
the consistency in estimation of the common components 
as well as that of the change-points in high-dimensional settings where $n >T$,
even when the factor number $r$ is unknown.

\begin{rem}
\label{rem:beta}
The multiple change-point detection algorithm adopted in Stage 2 of our methodology
still achieves consistency when Assumption \ref{assum:six} is relaxed to allow 
$\min\{|\eta^\chi_{b+1}-\eta^\chi_b|, \, |\eta^\eps_{b+1}-\eta^\eps_b|\} \ge c_\eta T^\nu$ 
for some fixed $c > 0$ and $\nu \in (6/7, 1]$,
with $B \equiv B_T$ increasing in $T$ such that $B_T = O(\log^2 T)$.
However, under these relaxed conditions, it is no longer guaranteed that $\bm\Gamma_\chi$ and,
consequently, $\bm\Gamma_x$ have $r$ diverging eigenvalues.
Therefore, in this paper we limit the scope of our theoretical results to the more restricted setting 
of Assumption \ref{assum:six}.
\end{rem}

\section{Factor analysis and wavelet transformation}
\label{sec:stage:one}

\subsection{Estimation of the common and idiosyncratic components}
\label{sec:sep}

Decomposing $x_{it}$ into common and idiosyncratic components
is an essential step for the separate treatment of the change-points in $\chi_{it}$ and $\eps_{it}$,
such that we can identify the origins of detected change-points. 
Therefore, in this section, we establish the asymptotic bounds on the estimation error 
of the common and idiosyncratic components, when using PCA
under the piecewise stationary factor model in \eqref{ps:fm}. 

Let $\wh{\mbf w}_{x, j}$ denote the $n$-dimensional normalised eigenvector corresponding 
to the $j$th largest eigenvalue of the sample covariance matrix, $\wh{\bm\Gamma}_{x}$, 
with its entries $\wh{w}_{x, ij}$, $i=1,\ldots, n$. 
When the number of factors $r$ is known,
the PCA estimator of $\chi_{it}$ is defined as
$\wh{\chi}_{it} = \sum_{j=1}^r \wh{w}_{x,ij} \wh{\mbf w}^{\top}_{x,j} \mbf x_t$,
for which the following theorem holds.
\begin{thm}
\label{thm:common}
{\it
Under Assumptions \ref{assum:one}--\ref{assum:seven}, 
the PCA estimator of $\chi_{it}$ with $r$ known satisfies
\beas
\max_{1 \le i \le n} \max_{1 \le t \le T}
\vert \wh{\chi}_{it}-\chi_{it}\vert = O_p\l\{\Big(\sqrt{\frac{\log\,n}{T}} \vee \frac{1}{\sqrt n}\Big)\log^\theta T\r\},
\eeas
for some fixed $\theta \ge 1 + (\beta_f^{-1} \vee 1/2)$.}
\end{thm}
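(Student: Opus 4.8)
Since the loadings are time-invariant, every common vector lies in a \emph{fixed} $r$-dimensional space: $\bm\chi_t = \bm\Lambda\mbf f_t \in \mathrm{col}(\bm\Lambda)$ for all $t$. Moreover, combining Assumption \ref{assum:two} (i) with the definitions of $\bm\Gamma^b_\chi$ and $\bm\Gamma_\chi$ yields the exact identity $\bm\Gamma_\chi = \bm\Lambda\bm\Lambda^\top$, whose $r$ nonzero eigenvalues coincide with those of $\bm\Lambda^\top\bm\Lambda$, which are of order $n$ with distinct limits and positive pairwise gaps (Assumption \ref{assum:three} (i) and (C1)). Since $\bm\Gamma_\eps$ has bounded spectral norm (C2), Weyl's inequality gives (C3)--(C4): the population top-$r$ eigenspace of $\bm\Gamma_x$ is separated from the remainder by an eigengap of order $n$ and lies within $O(1/n)$, in the $\sin\Theta$ sense, of $\mathrm{col}(\bm\Lambda)$. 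Writing $\wh{\mbf P} = \sum_{j=1}^r \wh{\mbf w}_{x,j}\wh{\mbf w}_{x,j}^\top$ and $\mbf P_\Lambda = \bm\Lambda(\bm\Lambda^\top\bm\Lambda)^{-1}\bm\Lambda^\top$, so that $\wh\chi_{it} = \mbf e_i^\top\wh{\mbf P}\mbf x_t$ and $\mbf e_i^\top\mbf P_\Lambda\bm\chi_t = \chi_{it}$, I would split
\[
\wh\chi_{it} - \chi_{it} = \underbrace{\mbf e_i^\top(\wh{\mbf P}-\mbf P_\Lambda)\mbf x_t}_{A_{it}} + \underbrace{\mbf e_i^\top\mbf P_\Lambda\bm\eps_t}_{B_{it}}.
\]
The term $B_{it} = \bm\lambda_i^\top(\bm\Lambda^\top\bm\Lambda)^{-1}\bm\Lambda^\top\bm\eps_t$ is the irreducible leakage of the idiosyncratic component through the (known) factor space; using $(\bm\Lambda^\top\bm\Lambda)^{-1}$ of order $n^{-1}$, the boundedness of $\bm\lambda_i$ (Assumption \ref{assum:three} (ii)) and the variance control of the cross-sectional sums $\sum_i\lambda_{ij}\eps_{it}$ (Assumption \ref{assum:four} (i)), this contributes the $n^{-1/2}$ part of the rate, while $A_{it}$ carries the estimation error of the factor space and produces the $\sqrt{\log n/T}$ part.

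The central probabilistic input is a uniform concentration bound for the sample covariance,
\[
\max_{1\le i,i'\le n}\bigl|[\wh{\bm\Gamma}_x]_{i i'} - [\bm\Gamma_x]_{i i'}\bigr| = O_p\!\left\{\sqrt{\tfrac{\log n}{T}}\,\log^{\theta'}T\right\},
\]
together with an operator-norm bound on the idiosyncratic sample covariance. I would obtain these from a Bernstein-type inequality for strongly mixing, sub-Weibull sequences (Theorem 1.4 of \cite{bosq1998}, Theorem 1 of \cite{merlevede2011}), which is exactly what Assumptions \ref{assum:two} (ii), \ref{assum:four} (ii) and the mixing condition \ref{assum:five} (ii) (with $3\beta_f^{-1}+\beta^{-1}>1$) are designed to feed; this is the source of the logarithmic power, with $\theta'$ dictated by $\beta_f$ and $\beta$. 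A point requiring care, absent from the stationary theory, is that $\mbf x_t$ is only \emph{piecewise} stationary, so $\wh{\bm\Gamma}_x$ estimates the time-averaged $\bm\Gamma_x$; here I would use Assumption \ref{assum:one} to show that the geometric transitions near the change-points, together with the finiteness of $B_\chi$ and $B_\eps$, contribute a bias negligible relative to $\sqrt{\log n/T}$.

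Feeding the covariance bound and the order-$n$ eigengap into Weyl's inequality and a Davis--Kahan ($\sin\Theta$) argument controls the estimated eigenvalues $\wh\mu_{x,j}$ and the eigenspace, giving an operator-norm bound on $\wh{\mbf P}-\mbf P_\Lambda$ (the difference between $\mbf P_\Lambda$ and the population top-$r$ eigenprojection being of order $1/n$, hence harmless). The main obstacle is that an operator-norm (equivalently $\sin\Theta$) bound is \emph{not} enough for the required $\max_i$ statement: $A_{it}$ must be controlled entrywise in $i$. I would therefore not stop at $\sin\Theta$ but exploit the eigen-relation $\wh\mu_{x,j}\wh{\mbf w}_{x,j} = \wh{\bm\Gamma}_x\wh{\mbf w}_{x,j}$ to express each coordinate $\wh w_{x,ij}$ as a normalised inner product and bound it uniformly, using $\wh{\bm\Gamma}_x = \bm\Lambda\bm\Lambda^\top + (\text{error})$, the boundedness of the loadings and the established eigenvalue separation. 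This yields a uniform-in-$i$ approximation of $\wh{\mbf P}$ by $\mbf P_\Lambda$ that, contracted against $\mbf x_t = \bm\Lambda\mbf f_t+\bm\eps_t$, reduces $A_{it}$ to manageable pieces.

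Finally I would assemble the two terms and take $\max_i\max_t$. For the factor-driven part of $A_{it}$ the relevant quantity is $\max_{t\le T}\max_{j\le r}|f_{jt}|$, which the sub-Weibull tail of Assumption \ref{assum:two} (ii) bounds by $O_p((\log T)^{1/\beta_f})$, while the Gaussian idiosyncratic leakage in $B_{it}$ contributes a $\max_t$ factor of order $(\log T)^{1/2}$ (Assumption \ref{assum:four} (ii)). Collecting the logarithmic powers from these maxima and from the covariance concentration produces the stated exponent $\theta \ge 1 + (\beta_f^{-1}\vee 1/2)$ and the overall rate $(\sqrt{\log n/T}\vee n^{-1/2})\log^\theta T$. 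I expect the entrywise (rather than $\ell_2$) eigenvector control and the bookkeeping of these logarithmic factors to be the two genuinely delicate points; the remainder is standard perturbation theory combined with the concentration inequality.
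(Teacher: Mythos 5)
Your proposal is correct and takes essentially the same route as the paper: the identical decomposition $\wh\chi_{it}-\chi_{it} = \bm\varphi_i^\top(\wh{\mbf W}_x\wh{\mbf W}_x^\top - \mbf W_\chi\mbf W_\chi^\top)\mbf x_t + \bm\varphi_i^\top\mbf W_\chi\mbf W_\chi^\top\bm\eps_t$ (your $\mbf P_\Lambda$ coincides with $\mbf W_\chi\mbf W_\chi^\top$ since $\bm\Gamma_\chi = \bm\Lambda\bm\Lambda^\top$ under Assumption \ref{assum:two} (i)), Bernstein-type covariance concentration under the mixing and exponential-tail conditions, Weyl plus the Yu--Wang--Samworth variant of Davis--Kahan, and, crucially, the row-wise eigenvector bound extracted from the eigen-relation $\wh{\mbf W}_x = \wh{\bm\Gamma}_x\wh{\mbf W}_x\wh{\mbf M}_x^{-1}$, which is precisely how the paper proves Lemma \ref{lem:evecs} (ii). Your closing tail bounds, $\max_{j}\max_t|f_{jt}| = O_p(\log^{1/\beta_f}T)$ for the factor-driven part and the Gaussian $O_p(\log^{1/2}T)$ control of the idiosyncratic leakage through the factor space, match the paper's Lemma \ref{lem:x:n} and its treatment of term $II$ essentially verbatim.
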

Proofs of Theorem \ref{thm:common} and all other theoretical results are provided in the Appendix. 

Despite the presence of multiple change-points, allowed both in the variance and autocorrelations of $\mbf f_t$, 
the rate of convergence for $\wh{\chi}_{it}$ is almost as fast as the one derived for the stationary case, 
e.g., Theorem 3 of \cite{bai2003}, and 
Theorem 1 of \cite{pelger2016} and Theorem 5 of \cite{ait2016} 
in the context of factor modelling high-frequency data. 
We highlight that Theorem \ref{thm:common} derives 
a uniform bound on the estimation error over $i = 1, \ldots, n$ and $t = 1, \ldots, T$;
similar results are found in Theorem 4 of \cite{fan13}.


However, the true number of factors $r$ is typically unknown and, 
since the seminal paper by \cite{baing02}, 
estimation of the number of factors has been one of the most researched problems in the factor model literature;
see also \cite{ABC10}, \cite{onatski10} and \cite{ahnhorenstein13}.
Although, \cite{han2014} (in their Proposition 1) and \cite{chen2014} (in their Proposition 2)
showed that the information criteria of \cite{baing02}
achieve asymptotic consistency in estimating $r$ in the presence of a single break in the loadings,
it has been observed that such an approach tends to exhibit poor finite sample performance
when the idiosyncratic components are both serially and cross-sectionally correlated,
or when $\Var(\eps_{it})$ is large compared to $\Var(\chi_{it})$.
Also, it was noted that the factor number estimates heavily 
depend on the relative magnitude of $n$ and $T$ and the choice of penalty terms
as demonstrated in the numerical studies of \cite{baing02} and \cite{ahnhorenstein13}.
While the majority of change-point methods for factor models 
rely on the consistent estimation of the number of factors
\citep{breitung2011, han2014, chen2014, corradi2014},
our empirical study on simulated data, available in the supplementary document, indicates that 
this dependence on a single estimate may lead to failure in change-point detection.

To remedy this, we propose to consider a range of factor number candidates in our change-point analysis, 
in particular, allowing for the over-specification of the number of factors. 
In high dimensions, the estimated eigenvectors $\wh{\mbf w}_{x, j}$ for $j > r$ 
are in general not consistent, as implied by Theorem 2 of \cite{yu15}.
Thus, estimates of common components based on more than $r$ principal components 
may be subject to a non-negligible overestimation error.
In order to control the effect of over-specifying the number of factors,
we propose a modified PCA estimator of $\chi_{it}$ defined as
$\wh{\chi}^k_{it} = \sum_{j=1}^k \wt{w}_{x,ij} \wt{\mbf w}^{\top}_{x,j} \mbf x_t$,
where each element of $\wt{\mbf w}_{x,j}$ is `capped' according to the rule
\beas
\wt{w}_{x, ij} = \wh{w}_{x,ij} \, \bbI\Big(|\wh{w}_{x,ij}| \le \frac{c_w}{\sqrt n}\Big)
+ \mbox{sign}(\wh{w}_{x,ij}) \cdot \frac{c_w}{\sqrt n} \, \bbI\Big(|\wh{w}_{x,ij}| > \frac{c_w}{\sqrt n}\Big),
\eeas
for some fixed $c_w>0$. 
The estimated loadings are then given by $\wh\lambda_{ij} = \sqrt{n}\wt{w}_{x, ij}$ and the factors by
$\wh{f}_{jt} = n^{-1/2}\wt{\mbf w}^{\top}_{x,j} \mbf x_t$.
A practical way of choosing $c_w$ is discussed in Section \ref{sec:tuningp}.

Note that, thanks to the result in (C3) and Lemmas \ref{lem:evals} and \ref{lem:evecs} in the Appendix, 
we have $\max_{1 \le j \le r} \max_{1 \le i \le n} |\wh{w}_{x,ij}| = O_p(1/\sqrt{n})$.
In other words, asymptotically, the capping does not alter the contribution
from the $r$ leading eigenvectors of $\wh{\bm\Gamma}_x$ to $\wh{\chi}^k_{it}$,
even when the capping is applied without the knowledge of $r$.
On the other hand, by means of this capping, 
we control the contribution from the `spurious' eigenvectors when $k > r$,
which allows us to establish the following bound on the partial sums of the estimation errors
even when the factor number is over-specified.

\begin{thm}
\label{thm:overestimation}
{\it Suppose that Assumptions \ref{assum:one}--\ref{assum:seven} hold,
and let $\theta$ be defined as in Theorem \ref{thm:common}.
Then, the capped estimator $\wh\chi^k_{it}$ satisfies the following:
\bit
\item[(i)] when $k = r$,
\bea
\max_{1 \le i \le n} \, \max_{1 \le s < e \le T}\frac 1 {\sqrt {e-s+1}} 
\Big\vert \sum_{t = s}^e (\wh{\chi}^r_{it}-\chi_{it}) \Big\vert 
= O_p\l\{\Big(\sqrt{\frac{\log\,n}{T}} \vee \frac{1}{\sqrt n}\Big)\log^\theta T\r\};
\label{eq:thm:overestimation:one}
\eea
\item[(ii)] when $k > r$,
\bea
\max_{1 \le i \le n} \, \max_{1 \le s < e \le T}\frac 1 {\sqrt {e-s+1}} 
\Big\vert \sum_{t = s}^e (\wh{\chi}^k_{it}-\chi_{it}) \Big\vert 
= O_p(\log^\theta T);
\label{eq:thm:overestimation:two}
\eea
\eit
with $\theta$ given in Theorem \ref{thm:common}.}
\end{thm}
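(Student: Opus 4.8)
The plan is to avoid the naive triangle inequality, which, applied to the pointwise bound of Theorem \ref{thm:common}, would inflate the rate by a factor $\sqrt{e-s+1}$, and instead to revisit the error decomposition underlying Theorem \ref{thm:common}, bounding the partial sum of each constituent term directly and exploiting the cancellation available in sums of (nearly) mean-zero, short-range dependent arrays in $t$. For part (i) I would first observe that, by (C3) together with Lemmas \ref{lem:evals} and \ref{lem:evecs}, $\max_{1 \le j \le r}\max_{1 \le i \le n}|\wh w_{x,ij}| = O_p(1/\sqrt n)$, so with probability tending to one the capping is inactive for the leading $r$ eigenvectors and $\wh\chi^r_{it}$ coincides with its uncapped counterpart $\wh{\bm\lambda}_i^\top\wh{\mbf f}_t$, where $\wh{\bm\lambda}_i = (\wh\lambda_{i1}, \ldots, \wh\lambda_{ir})^\top$ and $\wh{\mbf f}_t = (\wh f_{1t}, \ldots, \wh f_{rt})^\top$. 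Writing $\mbf G$ for the $r \times r$ rotation matrix arising in the proof of Theorem \ref{thm:common}, I would then use the exact algebraic identity
\beas
\sum_{t=s}^e (\wh\chi^r_{it} - \chi_{it})
&=& \wh{\bm\lambda}_i^\top \sum_{t=s}^e (\wh{\mbf f}_t - \mbf G\mbf f_t)
+ (\mbf G^\top\wh{\bm\lambda}_i - \bm\lambda_i)^\top \sum_{t=s}^e \mbf f_t
\eeas
and bound its two terms separately.

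In the second term, the loading estimation error $\Vert\mbf G^\top\wh{\bm\lambda}_i - \bm\lambda_i\Vert$ is $O_p$ of the rate in Theorem \ref{thm:common} uniformly in $i$, while $\Vert\sum_{t=s}^e\mbf f_t\Vert = O_p(\sqrt{e-s+1})$ up to a power of $\log T$, uniformly over all intervals, by a Bernstein-type maximal inequality for the exponential-tailed, mixing, piecewise stationary factors (Assumptions \ref{assum:two} and \ref{assum:five}); the geometric transition bias of Assumption \ref{assum:one}(i) contributes only $O(1)$ to the partial sum and is negligible after normalisation. In the first term, $\Vert\wh{\bm\lambda}_i\Vert = O_p(1)$, and $\sum_{t=s}^e(\wh{\mbf f}_t - \mbf G\mbf f_t)$ inherits cancellation because each summand in the Bai-type expansion of the factor error is, up to a summable deterministic bias, a mean-zero short-range dependent array in $t$: its leading idiosyncratic-linear piece is proportional to $n^{-1}\bm\Lambda^\top\sum_{t=s}^e\bm\eps_t$, which by Assumption \ref{assum:four}(i) is $O_p(\sqrt{(e-s+1)/n}\,)$ up to a power of $\log T$ (contributing the $1/\sqrt n$ part of the rate), while the quadratic pieces built from $n^{-1}\bm\eps_s^\top\bm\eps_t$ concentrate around their means at rate $\sqrt{\log n/T}$ by the Bernstein-type inequality permitted by Assumption \ref{assum:five}(ii). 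Dividing by $\sqrt{e-s+1}$ returns the stated rate, with the accumulated logarithmic factors absorbed into the fixed exponent $\theta$.

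For part (ii) I would decompose $\wh\chi^k_{it} - \chi_{it} = (\wh\chi^r_{it} - \chi_{it}) + \sum_{j=r+1}^k \wt w_{x,ij}\,\wt{\mbf w}_{x,j}^\top\mbf x_t$. The first summand is controlled by part (i), whose normalised partial sum is $O_p(\log^\theta T)$ since the rate of Theorem \ref{thm:common} is itself $O(\log^\theta T)$. For the spurious contribution I would not attempt to use any (unavailable) consistency of $\wt{\mbf w}_{x,j}$ for $j > r$, but only the capping bounds $|\wt w_{x,ij}| \le c_w/\sqrt n$ and $\Vert\wt{\mbf w}_{x,j}\Vert \le c_w$. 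Splitting $\mbf x_t = \bm\Lambda\mbf f_t + \bm\eps_t$ and summing over $t \in [s,e]$, the idiosyncratic part is bounded by Cauchy--Schwarz by $(c_w/\sqrt n)\,c_w\,\Vert\sum_{t=s}^e\bm\eps_t\Vert$, and $\max_{s,e}\Vert\sum_{t=s}^e\bm\eps_t\Vert/\sqrt{e-s+1} = O_p(\sqrt n)$ up to a power of $\log T$ (from Assumption \ref{assum:four} and normality), so after normalisation this term is $O_p(\log^\theta T)$; the common part is bounded by $(c_w/\sqrt n)\Vert\wt{\mbf w}_{x,j}^\top\bm\Lambda\Vert\,\Vert\sum_{t=s}^e\mbf f_t\Vert \le (c_w/\sqrt n)(c_w\Vert\bm\Lambda\Vert)\,O_p(\sqrt{e-s+1})$, and since $\Vert\bm\Lambda\Vert = O(\sqrt n)$ by Assumption \ref{assum:three} this is again $O_p(\log^\theta T)$ after normalisation. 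As $k - r$ is finite, summing over the spurious indices preserves the bound.

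The main obstacle is the uniformity over all $O(nT^2)$ triples $(i,s,e)$: the genuinely new content relative to Theorem \ref{thm:common} is the supremum over intervals, so the crux is to establish the \emph{maximal} versions of the Bernstein-type inequalities above, in particular for the quadratic idiosyncratic arrays $n^{-1}\bm\eps_s^\top\bm\eps_t$ and for $\Vert\sum_{t=s}^e\bm\eps_t\Vert$, simultaneously over all start and end points, while keeping the accumulated logarithmic factors within the admissible exponent $\theta \ge 1 + (\beta_f^{-1}\vee 1/2)$. Securing these maximal inequalities, rather than the per-term algebra, is where the real effort lies.
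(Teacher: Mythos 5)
Your proposal is correct and, despite its Bai-style packaging, ends up essentially where the paper does. Part (ii) is the paper's argument almost verbatim: capping gives $|\wt w_{x,ij}| \le c_w/\sqrt{n}$, and the spurious contribution is killed by the maximal inequality $\max_{s,e}(e-s+1)^{-1/2}\Vert\sum_{t=s}^e \mbf x_t\Vert = O_p(\sqrt{n}\log^\theta T)$ (the paper's Lemma \ref{lem:x:n}, proved via Bosq's Bernstein-type inequality plus Bonferroni over the $O(T^2)$ intervals, splitting $\mbf x_t$ into $\bm\Lambda \mbf f_t$ and $\bm\eps_t$ exactly as you do); your opening observation that the capping is asymptotically inactive on the $r$ leading eigenvectors is also the paper's first step. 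For part (i) the paper works at the level of projections, decomposing the error into $\bm\varphi_i^\top(\wh{\mbf W}_x\wh{\mbf W}_x^\top - \mbf W_\chi\mbf W_\chi^\top)\mbf x_t$ plus $\bm\varphi_i^\top \mbf W_\chi\mbf W_\chi^\top\bm\eps_t$, whereas you split into factor error and loading error; the two are algebraically equivalent here, because $\wh f_{jt} = n^{-1/2}\wt{\mbf w}_{x,j}^\top \mbf x_t$ is a \emph{linear} functional of $\mbf x_t$ with full-sample weights, so that with $\bm\Lambda = \mbf W_\chi \mbf M_\chi^{1/2}\mbf Q$ and $\mbf G = \mbf S^\top(\mbf M_\chi/n)^{1/2}\mbf Q$ one gets exactly $\wh{\mbf f}_t - \mbf G\mbf f_t = n^{-1/2}(\wh{\mbf W}_x - \mbf W_\chi\mbf S)^\top\mbf x_t + n^{-1/2}\mbf S^\top\mbf W_\chi^\top\bm\eps_t$, whose partial sums are controlled by Lemma \ref{lem:evecs} together with Lemma \ref{lem:x:n} and the Gaussian bound under Assumption \ref{assum:four}, delivering your two claimed rates.

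The one place where your plan overshoots is the closing paragraph. Because the eigenvectors are estimated once from the whole sample, the quadratic arrays $n^{-1}\bm\eps_u^\top\bm\eps_t$ enter only through the single full-sample covariance error $n^{-1}\Vert \wh{\bm\Gamma}_x - \bm\Gamma_\chi\Vert = O_p(\sqrt{\log n/T} \vee 1/n)$ (the paper's Lemma \ref{lem:cov}); no maximal-over-all-intervals version of the quadratic Bernstein inequality is needed. The only objects requiring uniformity over $(i,s,e)$ are the linear partial sums of $f_{jt}$, $\eps_{it}$ and $\mbf W_\chi^\top\bm\eps_t$ — precisely what Lemma \ref{lem:x:n} and the Gaussian Bonferroni argument supply. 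A literal Bai (2003)-type asymptotic expansion of $\wh{\mbf f}_t$, with interval-wise quadratic pieces, would be viable under the paper's assumptions but is strictly heavier than necessary; the "real effort" you locate is therefore smaller than you fear, provided you keep the factor estimate in its linear-functional form.
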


The bound in \eqref{eq:thm:overestimation:one} concerns 
the case in which we correctly specify the number of factors and is in agreement with Theorem \ref{thm:common}. 
Turning to when the factor number is over-specified, \cite{FHLR00} (in their Corollary 2) 
and \cite{onatski15} (in his Proposition 1) reported similar results for the stationary case.
However, the uniform consistency of $\wh\chi^k_{it}, \, k  \ge r$, in the presence of multiple change-points, 
has not been spelled out before in the factor model literature to the best of our knowledge;
we achieve this via the proposed capping. 
On the other hand, it is possible to show that with $k < r$, 
the estimation error in $\wh{\chi}^k_{it}$ is non-negligible. 
Lastly, thanks to Lemma \ref{lem:x:n} in Appendix,
it is straightforward to show that analogous bounds hold for the idiosyncratic component 
$\wh{\eps}^k_{it} = x_{it} - \wh{\chi}^k_{it}$.

Although the over-specification of $k > r$ brings the bound in \eqref{eq:thm:overestimation:one}
to increase by $\sqrt{n} \wedge \sqrt{T}$,
we can still guarantee that all change-points in $\chi_{it}$ ($\eps_{it}$) are detectable from
$\wh\chi^k_{it}$ ($\wh\eps^k_{it}$) provided that $k \ge r$,
as shown in Proposition \ref{prop:chi:additive} and Theorem \ref{thm:dcbs} below.
In what follows, we continue describing our methodology
by supposing that $k \ge r$ is given, and we refer to Section \ref{sec:factor:number} 
for the complete description of the proposed `screening' procedure that considers 
a range of values for $k$. 

\subsection{Wavelet periodograms and cross-periodograms}
\label{sec:wave}



As the first stage of the proposed methodology, we construct
a wavelet-based transformation (WT) of the estimated common and idiosyncratic components 
$\wh\chi^k_{it}$ and $\wh\eps^k_{it}$ from Section \ref{sec:sep}, 
which serves as an input to the algorithm for high-dimensional change-point analysis in Stage 2. 
In order to motivate the WT, which will be formally introduced in Section \ref{sec:choice},
we limit our discussion in this section to the (unobservable) common component $\chi_{it}$;
the same arguments hold verbatim for the idiosyncratic one. 
In practice, the WT is performed on the estimated common component, $\wh{\chi}^k_{it}$, 
and the effect of considering estimated quantities rather than the true ones is studied in Section \ref{sec:choice}. 
 
\cite{nason2000} have proposed the use of wavelets as building blocks in nonstationary time series 
analogous to Fourier exponentials in the classical Cram\'{e}r representation for stationary processes.
The simplest example of a wavelet system, Haar wavelets, are defined as
\beas
\psi_{j, l} = 2^{j/2}\bbI(0 \le l \le 2^{-j-1}-1) - 2^{j/2}\bbI(2^{-j-1} \le l \le 2^{-j}-1),
\eeas
with $j \in \{-1, -2, \ldots\}$ denoting the wavelet scale, and $l \in \Z$ denoting the location.
Small negative values of the scale parameter $j$ denote fine scales 
where the wavelet vectors are more localised and oscillatory, 
while large negative values denote coarser scales with longer, less oscillatory wavelet vectors. 

Recall the notation $\beta(t) = \max\{0 \le b \le B_\chi: \, \eta^\chi_b+1 \le t\}$. 
Wavelet coefficients of $\chi^{\beta(t)}_{it}$ (introduced in Assumption \ref{assum:one}) are defined as 
$d_{j, it} = \sum_{l=0}^{\cL_j-1} \chi^{\beta(t)}_{i, t-l}\psi_{j, l}$,
with respect to $\bm\psi_j = (\psi_{j, 0}, \ldots, \psi_{j, \cL_j-1})^\top$, a vector of discrete wavelets at scale $j$.
Note that the support of $\bm\psi_j$ is of length $\cL_j = M2^{-j}$ for a fixed $M> 0$ 
(which depends on the choice of wavelet family),
so that we have access to wavelet coefficients from at most $\lfloor \log_2\,T \rfloor$ scales 
for a time series of length $T$.
In other words, wavelet coefficients are obtained by filtering $\chi^{\beta(t)}_{it}$ 
with respect to wavelet vectors of finite lengths.

Wavelet periodogram and cross-periodogram sequences of $\chi^{\beta(t)}_{it}$ are defined as
$I_{j, it} = |d_{j, it}|^2$ and $I_{j, ii't} = d_{i, it}d_{j, i't}$.
It has been shown that the expectations of these sequences 
have a one-to-one correspondence with the second-order structure of the input time series, 
see \cite{cho2012} for the case of univariate time series 
and \cite{cho2015} for the high-dimensional case.
To illustrate, suppose that $t- \cL_j +1 \ge \ubar{\eta}^\chi(t) + 1$.
Then,
\bea
\E\vert d_{j, it} \vert^2 
&=& \sum_{l, l' = 0}^{\cL_j-1} \E(\chi^{\beta(t-l)}_{i, t-l}\chi^{\beta(t-l')}_{i, t-l'})\psi_{j, l}\psi_{j, l'}
= \sum_{l = 0}^{\cL_j-1}\sum_{|\tau| < \cL_j} \E(\chi^{\beta(t-l)}_{i, t-l}\chi^{\beta(t-l-\tau)}_{i, t-l-\tau})\psi_{j, l}\psi_{j, l+\tau} \nn
\\
&=& \sum_{|\tau| < \cL_j} [\wt{\bm\Gamma}^{\beta(t)}_\chi(\tau)]_{i, i} \Psi_j(\tau),
\label{eq:chi:wp}
\eea
where $\Psi_j(\tau) = \sum_{l \in \Z}\psi_{j, l}\psi_{j, l+\tau}$ (with $\psi_{j, l} = 0$ unless $0 \le l \le \cL_j-1$)
denotes the autocorrelation wavelets \citep{nason2000}
and $\wt{\bm\Gamma}^b_\chi(\tau) = \E\{\bm\chi^b_{t+\tau}(\bm\chi^b_t)^\top\}$.
That is, provided that $t$ is sufficiently distanced (by $\cL_j$) from any change-points to its left, 
$\E\vert d_{j, it} \vert^2$ is a wavelet transformation of the (auto)covariances
$\E(\chi^{\beta(t)}_{it}\chi^{\beta(t)}_{i, t+\tau}), \, |\tau| \le \cL_j-1$.
Similar arguments hold between wavelet cross-periodograms and cross-covariances of $\chi^{\beta(t)}_{it}$. 
Following \cite{cho2015}, we conclude that under Assumption \ref{assum:one} (ii),
any jump in the autocovariance and cross-covariance structures of $\bm\chi^{\beta(t)}_t$
is translated to a jump in the means of its wavelet periodogram and cross-periodogram sequences 
at some wavelet scale $j$, in the following sense:
$\{\E\vert d_{j, it} \vert^2, \,  1 \le i \le n, \, \E(d_{j, it}d_{j, i't}), \, 1 \le i < i' \le n\}$ 
are `almost' piecewise constant with their change-points coinciding with those in $\cB^\chi$,
apart from intervals of length $\cL_j$ around the change-points.

It is reasonable to limit our consideration to wavelets at the first $\jts$ 
finest scales $j = -1, \ldots, -\jts$ (with $\jts \le \lfloor \log_2\,T \rfloor$),
in order to control the possible bias in change-point estimation that 
arises from the transition intervals of length $\cL_j$.
On the other hand, due to the compactness of the support of $\bm\psi_j$,
a change in $\wt{\bm\Gamma}_\chi^{\beta(t)}(\tau)$ that appears 
only at some large lags ($|\tau| \ge \cL_{-\jts} = M2^{\jts}$), 
is not detectable as a change-point in the wavelet periodogram and cross-periodogram sequences
at the few finest scales ($j \ge -\jts$), see \eqref{eq:chi:wp}.
To strike a balance between the above quantities related to the choice of $\jts$, 
we set $\jts = \lfloor C\log_2\log^\upsilon T \rfloor$ for some $\upsilon \in (0, 1]$ and some fixed $C > 0$. 
We refer to Section \ref{sec:tuningp} for the choice of $C$. Then, 
\begin{itemize}
\item[(a)] any change in $\bm\Gamma_\chi^{\beta(t)}(\tau)$ 
that occurs at (at least) one out of an increasing number of lags ($|\tau| \le M\log^\upsilon T$),
is registered as a change-point in the expectations of the wavelet (cross-)periodograms of $\chi^{\beta(t)}_{it}$;
\item[(b)] the possible bias in the registered locations of the change-points 
is controlled to be at most $O(\log^\upsilon T)$.
\end{itemize}

\subsection{Wavelet-based transformation for change-point analysis}
\label{sec:choice}

In this section, we propose the WT of estimated common and idiosyncratic components,
and show that the change-points in the complex (autocovariance and cross-covariance) 
structure of (unobservable) $\chi_{it}$ and $\epsilon_{it}$, are made detectable
as the change-points in the relatively simple structure (means) of the panel of the 
wavelet transformed $\wh\chi^k_{it}$ and $\wh\epsilon^k_{it}$. 
This panel then serves as an input to Stage 2 of our methodology, as described in Section \ref{sec:stage:two}.
As in Section \ref{sec:wave}, we limit the discussion of the WT and its properties 
when applied to the change-point analysis of the common components;
the same arguments are applicable to that of the idiosyncratic components.

Let $\wh{d}_{j, it} = \sum_{l=0}^{\cL_j-1} \wh\chi^k_{i, t-l}\psi_{j, l}$ denote 
the wavelet coefficients of $\wh\chi^k_{it}$. 
Then, for each $j = -1, -2, \ldots, -\jts$, we propose the following transformation which takes $\wh\chi^k_{it}$
and produces a panel of $n(n+1)/2$-dimensional sequences with elements:
\begin{eqnarray*}
g_j(\wh\chi^k_{it}) &\equiv& g_j(\wh\chi^k_{it}, \ldots, \wh\chi^k_{i, t-\cL_j+1}) 
= \vert \wh d_{j, it} \vert = \Big\vert \sum_{l=0}^{\cL_j-1} \wh\chi^k_{i, t-l}\psi_{j, l} \Big\vert, \quad 1 \le i \le n,
\\
h_j(\wh\chi^k_{it}, \wh\chi^k_{i't}) &\equiv& h_j(\wh\chi^k_{it}, \ldots, \wh\chi^k_{i, t-\cL_j+1}, \wh\chi^k_{i't}, \ldots, \wh\chi^k_{i', t-\cL_j+1}) 
= \vert \wh d_{j, it} + s_{ii'}\wh d_{j, i't} \vert 
\\
&=& \Big\vert \sum_{l=0}^{\cL_j-1} \wh\chi^k_{i, t-l}\psi_{j, l}+s_{ii'}\sum_{l'=0}^{\cL_j-1} \wh\chi^k_{i', t-l'}\psi_{j, l'} \Big\vert, \quad 1 \le i < i' \le n,
\end{eqnarray*}
where $s_{ii'} \in \{-1, 1\}$.
For example, with Haar wavelets at scale $j = -1$,
\beas
g_{-1}(\wh\chi^k_{it}) = \frac{1}{\sqrt 2}|\wh\chi^k_{it} - \wh\chi^k_{i, t-1}| \quad \mbox{and} \quad
h_{-1}(\wh\chi^k_{it}, \wh\chi^k_{i't}) = \frac{1}{\sqrt 2}|\wh\chi^k_{it} - \wh\chi^k_{i, t-1} + s_{ii'}(\wh\chi^k_{i't} - \wh\chi^k_{i', t-1})|.
\eeas
\begin{rem} Notice that $g_j(\wh\chi^k_{it})$ is simply the squared root of 
the wavelet periodogram of $\wh\chi^k_{it}$ at scale $j$. 
Arguments supporting the choice of $h_j$ in place of wavelet cross-periodogram sequences 
can be found in Section 3.1.2 of \cite{cho2015},
where it is guaranteed that any change detectable from $\wh d_{j, it} \cdot \wh d_{j, i't}$
can also be detected from $h_j(\wh\chi^k_{it}, \wh\chi^k_{i't})$ with either of $s_{ii'} \in \{1, -1\}$.
As per the recommendation made therein, 
we select $s_{ii'} = -\mbox{sign}\{\wh{\mbox{cor}}(\wh\chi^k_{it}, \wh\chi^k_{i't})\}$,
where $\wh{\mbox{cor}}(\wh\chi^k_{it}, \wh\chi^k_{i't})$ denotes the sample correlation 
between $\wh\chi^k_{it}$ and $\wh\chi^k_{i't}$ over $t = 1, \ldots, T$.
This is in an empirical effort to select $s_{ii'}$ that better brings out any change in 
$\E\{h_j(\chi_{it}, \chi_{i't})\}$ for the given data.
\end{rem}

As with wavelet periodograms and cross-periodograms discussed in Section \ref{sec:wave}, 
the transformed series $g_j(\wh\chi^k_{it})$ and $h_j(\wh\chi^k_{it}, \wh\chi^k_{i't})$ 
contain the change-points in the second-order structure of $\chi^{\beta(t)}_{it}$ as change-points in their 'signals'. 
This is formalised in the following proposition.

\begin{prop}
\label{prop:chi:additive}
{\it Suppose that all the conditions in Theorem \ref{thm:overestimation} hold. 
For some fixed $k \ge r$ and $\jts = \lfloor C\log_2\log^\upsilon T\rfloor$, 
consider the $N = \jts\, n(n+1)/2$-dimensional panel
\bea
\l\{g_j(\wh\chi^k_{it}), \, 1 \le i \le n, \, h_j(\wh\chi^k_{it}, \wh\chi^k_{i't}), \, 1 \le i < i' \le n; \, -\jts \le j \le -1; \, 1 \le t \le T\r\} \label{eq:chi:panel}
\eea
and denote as $y_{\ell t}$ a generic element of \eqref{eq:chi:panel}. Then, we have the following decomposition:
\begin{eqnarray}
\label{eq:add}
y_{\ell t} = z_{\ell t} + \vep_{\ell t}, \quad \ell = 1, \ldots, N,  \quad  t=1, \ldots, T.
\end{eqnarray}
\bit
\item[(i)] $z_{\ell t}$ are piecewise constant as the corresponding elements of
\bea
\l\{\E\{g_j(\chi^{\beta(t)}_{it})\}, \, 1 \le i \le n, \, 
\E\{h_j(\chi^{\beta(t)}_{it}, \chi^{\beta(t)}_{i't})\}, \, 1 \le i < i' \le n; \, -\jts \le j \le -1; \, 1 \le t \le T\r\}.
\label{eq:add:z}
\eea
That is, all change-points in $z_{\ell t}$ belong to $\cB^\chi = \{\eta^\chi_1, \ldots, \eta^\chi_{B_\chi}\}$ and
for each $b \in \{1, \ldots, B_\chi\}$, there exists at least a single index $\ell \in \{1, \ldots, N\}$ for which 
$|z_{\ell, \eta^\chi_b+1} - z_{\ell \eta^\chi_b}| \ne 0$.
\item[(ii)]
$\max_{1 \le \ell \le N} \max_{1 \le s < e \le T} (e-s+1)^{-1/2} \;
\vert \sum_{t=s}^e \vep_{\ell t} \vert = O_p(\log^{\theta+\upsilon}T)$.
\eit}
\end{prop}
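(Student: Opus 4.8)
The plan is to identify the signal $z_{\ell t}$ with the expectation of the transform applied to the \emph{stationary proxy}, and to control the noise by separating estimation error, nonstationary transition error, and genuine stochastic fluctuation. For a generic $g$-type coordinate I write $\wh d_t=\sum_{l=0}^{\cL_j-1}\wh\chi^k_{i,t-l}\psi_{j,l}$, $\bar d_t=\sum_l\chi_{i,t-l}\psi_{j,l}$ and $d_t=\sum_l\chi^{\beta(t)}_{i,t-l}\psi_{j,l}$, and set $z_{\ell t}=\E|d_t|$ (with the obvious two-series analogue $\E|d_{j,it}+s_{ii'}d_{j,i't}|$ for $h$-type coordinates). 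Then $\vep_{\ell t}=(|\wh d_t|-|\bar d_t|)+(|\bar d_t|-|d_t|)+(|d_t|-\E|d_t|)=:\mathrm{I}_t+\mathrm{II}_t+\mathrm{III}_t$, and I would bound the three pieces separately.

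For part (i), since $\chi^{\beta(t)}$ is weakly stationary within each segment and $d_t$ carries the fixed index $\beta(t)$, the quantity $z_{\ell t}=\E|d_t|$ depends on $t$ only through $\beta(t)$; it is thus exactly constant on each $[\eta^\chi_b+1,\eta^\chi_{b+1}]$ and can only jump at points of $\cB^\chi$, which settles the ``no spurious change-point'' half. For detectability I invoke Assumption \ref{assum:one}(ii): at each $\eta^\chi_b$ some entry of $\wt{\bm\Gamma}^{b}_\chi(\tau)-\wt{\bm\Gamma}^{b-1}_\chi(\tau)$ is nonzero for a lag $|\tau|\le\bar\tau_\chi$. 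Because $\jts=\lfloor C\log_2\log^\upsilon T\rfloor$ makes $\cL_{-\jts}=M2^{\jts}\gtrsim\log^\upsilon T\ge\bar\tau_\chi$ for large $T$, that lag is accessible; using the representation \eqref{eq:chi:wp} and the injectivity of the autocorrelation-wavelet operator $\{\Psi_j(\tau)\}$ \citep{cho2012}, a change in a diagonal entry forces a change in $\E|d_{j,it}|^2$ at some scale $j$, hence in the corresponding $g$-coordinate mean, while a change in an off-diagonal entry is captured by an $h$-coordinate as guaranteed in Section 3.1.2 of \cite{cho2015}. The point is that the full collection of transform means cannot be simultaneously blind to a genuine second-order change.

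For part (ii), the stochastic term $\mathrm{III}_t$ is the clean part: within each segment $\{|d_t|-\E|d_t|\}$ is a strictly stationary, strongly mixing, mean-zero sequence (Assumption \ref{assum:five}) that is a bounded-length linear functional of the exponential-tailed factors (Assumption \ref{assum:two}(ii)), so $|d_t|$ inherits exponential-type tails. Splitting any $[s,e]$ at the finitely many change-points and applying a Bernstein-type inequality for mixing sequences (as in \cite{bosq1998,merlevede2011}) controls each interval partial sum, and a union bound over the $N=\jts\,n(n+1)/2=O(T^{2\kappa}\log\log T)$ coordinates (costing only a $\log T$ factor) yields the $O_p(\log^{\theta+\upsilon}T)$ rate, with the exponents $\beta_f^{-1}$ and $\upsilon$ entering through the tail/mixing calibration and through $\cL_j\lesssim\log^\upsilon T$. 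The transition term satisfies $|\mathrm{II}_t|\le|\bar d_t-d_t|\le\sum_l|\psi_{j,l}|\,|\chi_{i,t-l}-\chi^{\beta(t)}_{i,t-l}|$, and by Assumption \ref{assum:one}(i) with bounded loadings $\E(\chi_{it}-\chi^{\beta(t)}_{it})^2$ decays geometrically in $t-\ubar\eta^\chi(t)$; it is therefore supported, up to exponentially small tails, within $O(\cL_j)$ of the finitely many change-points, so its contribution to the normalised partial sum is negligible.

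The main obstacle is the estimation term $\mathrm{I}_t$, where the nonlinearity of $|\cdot|$ collides with the fact that, for over-specified $k>r$, no useful uniform pointwise bound on $\wh\chi^k_{it}-\chi_{it}$ is available --- only the interval partial-sum bound of Theorem \ref{thm:overestimation}. The reverse triangle inequality gives $|\mathrm{I}_t|\le|\wh d_t-\bar d_t|$, but summing $\sum_{t=s}^e|\wh d_t-\bar d_t|$ naively (via Cauchy--Schwarz and Young's inequality) loses all cancellation and injects a spurious $\sqrt{T/n}$ factor that blows up when $n\ll T$. The resolution I would pursue is to retain the partial-sum structure: write $|\wh d_t|-|\bar d_t|=\mathrm{sign}(\bar d_t)(\wh d_t-\bar d_t)+\mathrm{R}_t$, where $|\mathrm{R}_t|\le 2|\wh d_t-\bar d_t|\,\bbI(|\wh d_t-\bar d_t|>|\bar d_t|)$ is supported on the sign-flip set, and expand $\wh d_t-\bar d_t=\sum_l\psi_{j,l}(\wh\chi^k_{i,t-l}-\chi_{i,t-l})$ through the separable (loadings $\times$ factor) structure of the PCA error underlying Theorem \ref{thm:overestimation}; the sign-weighted sums of the per-$t$ factors are then $O_p(e-s+1)$ by stationarity, while the small per-$i$ loadings/eigenvector error carries the rate, reproducing $\log^\theta T$, and the $\sqrt{\cL_j}\lesssim\log^{\upsilon/2}T$ from $\|\bm\psi_j\|_1$ together with the union bound completes the $O_p(\log^{\theta+\upsilon}T)$ bound. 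Showing that the sign-flip remainder $\mathrm{R}_t$ is negligible --- that $|\bar d_t|$ is typically bounded away from the estimation perturbation --- is the delicate quantitative heart of the argument.
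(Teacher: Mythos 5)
Your overall architecture coincides with the paper's: you identify the signal with $\E\{g_j(\chi^{\beta(t)}_{it})\}$ and its $h_j$ analogue, your part (i) is the same definitional argument plus Assumption \ref{assum:one} (ii) with $\cL_{-\jts} \ge \bar\tau_\chi$, and your bounds for the transition and fluctuation terms are minor rearrangements of the paper's Lemmas \ref{lem:chi:i} and \ref{lem:chi:a9} (you keep the transition pathwise and centre the stationary proxy, the paper centres $g_j(\chi_{it})$ and puts the transition in expectations; both routes are fine since Assumption \ref{assum:one} (i) gives a summable geometric bound either way).

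The genuine gap is in your treatment of the estimation term $\mathrm{I}_t = |\wh d_t| - |\bar d_t|$, which is exactly the content of the paper's Lemma \ref{lem:chi:xi}, and your proposed resolution fails precisely in the regime the proposition is designed for, namely $k > r$. First, the main term: with an over-specified factor number, the dominant part of $\wh d_t - \bar d_t$ comes from the capped spurious components $\wt{w}_{x,ij}\,\wt{\mbf w}_{x,j}^\top \mbf x_{t-l}$, with $|\wt{w}_{x,ij}| \le c_w/\sqrt{n}$ and $\wt{\mbf w}_{x,j}^\top\mbf x_t$ of typical size $\sqrt{n}$. Since $\mathrm{sign}(\bar d_t)$ is correlated with $\mbf x_t$ (both are driven by the factors), $\E\{\mathrm{sign}(\bar d_t)\,\wt{\mbf w}_{x,j}^\top\mbf x_{t-l}\}$ has no reason to vanish, so your conceded $O_p(e-s+1)$ bound for the sign-weighted sums is sharp; multiplying by $c_w/\sqrt{n}$ and normalising by $\sqrt{e-s+1}$ then yields $O_p(\sqrt{e-s+1})$, i.e.\ up to $\sqrt{T}$, not $O_p(\log^\theta T)$. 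The $\log^\theta T$ rate in Theorem \ref{thm:overestimation} (ii) comes entirely from the interval cancellation in $\sum_{t=s}^e \mbf x_t$ (Lemma \ref{lem:x:n}), and inserting the rough data-dependent weights $\mathrm{sign}(\bar d_t)$ destroys exactly that cancellation. Second, the remainder: for $k > r$ the perturbation $\wh d_t - \bar d_t$ is only $O_p(1)$ pointwise --- Theorem \ref{thm:overestimation} controls its interval averages, not its size at a fixed $t$ --- so the flip event $|\wh d_t - \bar d_t| > |\bar d_t|$ has non-vanishing probability at each $t$, sign flips occur on a positive fraction of any interval, and $\sum_{t=s}^e |\mathrm{R}_t|$ is of order $e-s+1$. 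The anti-concentration statement you defer as ``the delicate quantitative heart'' (that $|\bar d_t|$ typically dominates the perturbation) is therefore not a technicality to be filled in later: it is false in the over-specified case, and no anti-concentration of $\bar d_t$ can repair it.

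The paper's proof of Lemma \ref{lem:chi:xi} avoids this entirely: it telescopes one wavelet coordinate at a time, so each comparison is between $w_{it}$ and $v_{it}$ differing in a single summand, $w_{it} - v_{it} = \psi_{j,h}(\wh\chi^k_{i,t-h} - \chi_{i,t-h})$, and it then partitions the time indices by sign agreement --- on $\{t: w_{it}v_{it} \ge 0\}$ the modulus difference equals $\pm(w_{it}-v_{it})$, while on the discordant sets it equals $\pm c_t(w_{it}-v_{it})$ with $c_t \in [0, 1)$. Every piece is thereby reduced to the partial-sum bound \eqref{eq:pf:xi:one} on $\wh\chi^k_{it} - \chi_{it}$ itself, inherited from Theorem \ref{thm:overestimation}, with the $\cL_j \lesssim \log^\upsilon T$ telescoping steps supplying the extra $\log^\upsilon T$; no pointwise smallness of the estimation error, no sign-weighted sums of $\mbf x_t$, and no anti-concentration of $\bar d_t$ are needed. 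If you want to retain a linearisation-style argument, the increments must stay attached to partial sums of $\wh\chi^k_{it} - \chi_{it}$ over the sign-defined index sets, which is exactly what the paper's case-splitting accomplishes.
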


Therefore, the panel data in (\ref{eq:chi:panel}) contains all change-points in
the second-order structure of $\bm\chi^{\beta(t)}_t$ as the change-points in its 
piecewise constant signals represented by $z_{\ell t}$.

Let us denote by $\wt{y}_{\ell t}$ an element of the panel obtained by transforming 
$\chi_{it}$ in place of $\wh\chi^k_{it}$ in \eqref{eq:chi:panel}. 
Then, the proof of Proposition \ref{prop:chi:additive} is based on the following decomposition
\beas
\vep_{\ell t} = y_{\ell t} - z_{\ell t} 
= \{\E(\wt{y}_{\ell t})-z_{\ell t}\} + \{\wt{y}_{\ell t}-\E(\wt{y}_{\ell t})\} + (y_{\ell t} -\wt{y}_{\ell t})
= I + II + III.
\eeas
Then, $I$ accounts for the discrepancy between $\chi_{it}$ and $\chi^{\beta(t)}_{it}$ and is controlled by 
Assumption \ref{assum:one} (i),
and $II$ follows from the weak dependence structure and the tail behaviour of $\mbf f_t$ assumed in
Assumptions \ref{assum:two} (ii) and \ref{assum:five}.
Term $III$ arises from the estimation error in $\wh{\chi}_{it}^k$ which can be bounded 
as shown in Theorem \ref{thm:overestimation},
which further motivates the WT of $\wh\chi^k_{it}$ via $g_j$ and $h_j$
rather than using its wavelet (cross-)periodograms.

To conclude, note that the WT of $\wh\eps^k_{it}$ are collected into the $N$-dimensional panel
\bea
\l\{g_j(\wh\eps^k_{it}), \, 1 \le i \le n, \, 
h_j(\wh\eps^k_{it}, \wh\eps^k_{i't}), \, 1 \le i < i' \le n; \, -\jts \le j \le -1; \, 1 \le t \le T\r\}
\label{eq:eps:panel0}
\eea
for change-point analysis, where the elements of \eqref{eq:eps:panel0} are also decomposed into 
piecewise constant signals 
\bea
\l\{\E\{g_j(\eps^{\gamma(t)}_{it})\}, \, 1 \le i \le n, \, 
\E\{h_j(\eps^{\gamma(t)}_{it}, \eps^{\gamma(t)}_{i't})\}, \, 1 \le i < i' \le n; \, -\jts \le j \le -1; \, 1 \le t \le T\r\},
\label{eq:eps:z}
\eea
and error terms of bounded partial sums; 
see Appendix \ref{sec:when:y:e} where we present the result analogous to Proposition \ref{prop:chi:additive}
for the idiosyncratic components.

\section{High-dimensional panel data segmentation}
\label{sec:stage:two}

\subsection{Double CUSUM Binary Segmentation}
\label{sec:dcbs}

Cumulative sum (CUSUM) statistics have been widely adopted for change-point detection 
in both univariate and multivariate data.
In order to detect change-points in the $N$-dimensional additive panel data in \eqref{eq:chi:panel}, 
we compute $N$ univariate CUSUM series and aggregate the high-dimensional CUSUM series via
the Double CUSUM statistic proposed by \cite{cho2016},
which achieves this through point-wise, data-driven partitioning of the panel data 
When dealing with multiple change-point detection, 
the Double CUSUM statistic is used jointly with binary segmentation in an algorithm,
which we refer to as the Double CUSUM Binary Segmentation (DCBS) algorithm.
The DCBS algorithm guarantees the consistency in multiple change-point detection in high-dimensional settings,
as shown in Theorem \ref{thm:dcbs},
while allowing for the cross-sectional size of change to decrease with increasing sample size 
(see Assumption \ref{assum:b:two} below).
Further, it admits both serial- and cross-correlations in the data,
which is a case highly relevant for the time series factor model considered in this paper.

Consider the $N$-dimensional input panel $\{y_{\ell t}, \, \ell=1,\ldots, N; \, t=1,\ldots, T\}$,
computed from $\wh\chi^k_{it}$ or $\wh\eps^k_{it}$
for the change-point analysis in either the common or the idiosyncratic components via WT;
see \eqref{eq:chi:panel} and \eqref{eq:eps:panel0} for their definitions.
We define the CUSUM series of $y_{\ell t}$ over a generic segment $[s, e]$ for some $1 \le s < e \le T$, as
\beas
\cY^\ell_{s, b, e} = \frac{1}{\sigma_\ell}\sqrt{\frac{(b-s+1)(e-b)}{e-s+1}}
\Big(\frac{1}{b-s+1}\sum_{t=s}^b y_{\ell t} - \frac{1}{e-b}\sum_{t=b+1}^e y_{\ell t} \Big),
\eeas
for $b=s, \ldots, e-1$, where $\sigma_\ell$ denotes a scaling constant for treating 
all rows of the panel data $y_{\ell t}, \, 1 \le \ell \le N$ on equal footing; 
see Remark \ref{rem:sigma} below for its choice.
We note that if $\vep_{\ell t}$ in \eqref{eq:add} were i.i.d. Gaussian random variables,
the maximum likelihood estimator of the change-point location for $\{y_{\ell t}, s\le t \le e\}$
would coincide with $\arg\max_{b \in [s, e)}|\cY^\ell_{s, b, e}|$.

Proposed in \cite{cho2016}, 
the Double CUSUM (DC) operator aggregates the $N$ series of CUSUM statistics from $y_{\ell t}$
and returns a two-dimensional array of DC statistics:
\beas
\cD_{s, b, e}(m) =
\sqrt{\frac{m(2N-m)}{2N}}
\Big(\frac{1}{m} \sum_{\ell=1}^m |\cY^{(\ell)}_{s, b, e}| -
\frac{1}{2N-m} \sum_{\ell=m+1}^N |\cY^{(\ell)}_{s, b, e}|\Big),
\eeas
for $b = s, \ldots, e-1$ and $m = 1, \ldots, N$,
where $|\cY^{(\ell)}_{s, b, e}|$ denotes the CUSUM statistics at $b$ ordered according to their moduli, i.e., 
$|\cY^{(1)}_{s, b, e}| \ge \ldots \ge |\cY^{(N)}_{s, b, e}|$.
Notice that $\cD_{s, b, e}(m)$ takes the contrast between
the $m$ largest CUSUM values $|\cY^{(\ell)}_{s, b, e}|, \, 1 \le \ell \le m$ and the rest at each $b$,
and thus partitions the coordinates into the $m$ that are the most likely to 
contain a change-point and those which are not in a point-wise manner. 
Then, the test statistic is derived by maximising the two-dimensional array of DC statistics 
over both time and cross-sectional indices, as
\bea
\label{dc:test:stat}
\cT_{s, e} = \max_{b \in [s, e)}\max_{1 \le m \le N} \cD_{s, b, e}(m),
\eea
which is compared against a threshold $\pi_{N, T}$ for determining 
the presence of a change-point over the interval $[s, e]$.
If $\cT_{s, e} > \pi_{N, T}$, the location of the change-point is identified as 
\beas
\heta = \arg\max_{b\in[s, e)}\max_{1 \le m \le N} \cD_{s, b, e}(m).
\eeas

\begin{rem}[Choice of $\sigma_\ell$.] 
\label{rem:sigma}
\cite{cho2016} assumes second-order stationarity on the error term $\vep_{\ell t}$ in (\ref{eq:add}),
which enables the use of its long-run variance estimator as the scaling term $\sigma_\ell$.
However, it is not trivial to define a similar quantity in the problem considered here,
particularly due to the possible nonstationarities in $\vep_{\ell t}$.
Following \cite{cho2015}, in order for the CUSUM series computed on $y_{\ell t}$
not to depend on the level of $\E(y_{\ell t}^2)$, 
we also adopt the choice of $\sigma_\ell = \sqrt{T^{-1}\sum_{t=1}^T y_{\ell t}^2}$.
Note that the asymptotic consistency of the DCBS algorithm in Theorem \ref{thm:dcbs} below
does not depend on the choice of $\sigma_\ell$,
provided that it is bounded away from zero and from the above for all $\ell = 1, \ldots, N$ 
with probability tending to one (see Assumption (A6) of \cite{cho2016}). 
By adopting arguments similar to those in the proof of Lemma 6 in \cite{cho2012}, 
it can be shown that our choice of $\sigma_\ell$ satisfies these properties.
\end{rem}

We now formulate the DCBS algorithm which is equipped with the threshold $\pi_{N, T}$.
The index $u$ is used to denote the level (indicating the progression of the segmentation procedure)
and $v$ to denote the location of the node at each level.

\begin{description}
\item[The Double CUSUM Binary Segmentation (DCBS) algorithm]
\item[Step 0:]
Set $(u, v)=(1, 1)$, $s_{u, v}=1$, $e_{u, v}=T$ and $\wh\cB = \emptyset$.

\item[Step 1:] At the current level $u$, repeat the following for all $v$.
\begin{description}
\item[Step 1.1:] Letting $s=s_{u, v}$ and $e=e_{u, v}$,
obtain the series of CUSUMs $\cY^\ell_{s, b, e}$ for $b \in [s, e)$ and $\ell = 1, \ldots, N$,
on which $\cD_{s, b, e}(m)$ is computed over all $b$ and $m$.

\item[Step 1.2:] Obtain the test statistic $\cT_{s, e} = \max_{b \in [s, e)}\max_{1 \le m \le N} \cD_{s, b, e}(m)$.

\item[Step 1.3:] If $\cT_{s, e} \le \pi_{N, T} $, quit searching for change-points on the interval $[s, e]$.
On the other hand, if $\cT_{s, e} > \pi_{N, T} $, 
locate $\heta = \arg\max_{b\in[s, e)}\max_{1 \le m \le N} \cD_{s, b, e}(m)$,
add it to the set of estimated change-points $\wh\cB$,
and proceed to Step 1.4.

\item[Step 1.4:] Divide the interval $[s_{u, v}, e_{u, v}]$ into two sub-intervals
$[s_{u+1, 2v-1}, e_{u+1, 2v-1}]$ and $[s_{u+1, 2v}, e_{u+1, 2v}]$,
where $s_{u+1, 2v-1} = s_{u, v}$, $e_{u+1, 2v-1} = \heta$, 
$s_{u+1, 2v} = \heta+1$ and $e_{u+1, 2v} = e_{u, v}$.
\end{description}

\item[Step 2:]
Once $[s_{u, v}, e_{u, v}]$ for all $v$ are examined at level $u$,
set $u \leftarrow u+1$ and go to Step 1.
\end{description}

Step 1.3 provides a stopping rule to the DCBS algorithm, 
by which the search for further change-points is terminated 
once $\cT_{s, e} \le \pi_{N, T}$ on every segment defined 
by two adjacent estimated change-points in $\wh{\cB}$.
Depending on the choice of the input panel data $y_{\ell t}$ as in \eqref{eq:chi:panel} or \eqref{eq:eps:panel0},
the DCBS algorithm returns $\wh{\cB}^\chi(k)$ or $\wh{\cB}^\eps(k)$,
the sets of change-points detected from $\wh{\chi}^k_{it}$ and $\wh{\eps}^k_{it}$, respectively. 

\subsection{Consistency in multiple change-point detection}
\label{sec:consistency}

In this section, we show the consistency of change-points estimated 
for the common and idiosyncratic components by the DCBS algorithm, 
in terms of their total number and locations.  
Consider the panel $\{y_{\ell t}, \, \ell=1,\ldots, N; \, t=1,\ldots, T\}$ that represents 
the WTs of $\wh\chi^k_{it}$ and $\wh \eps^k_{it}$ as 
in \eqref{eq:chi:panel} and \eqref{eq:eps:panel0}, respectively.
In either case, let $\eta_b, \, b = 1, \ldots, B$ denote
the change-points in the piecewise constant signals $z_{\ell t}$ underlying $y_{\ell t}$ 
(see \eqref{eq:add:z} and \eqref{eq:eps:z} for the precise definitions of $z_{\ell t}$), 
so that we have either $\eta_b = \eta^\chi_b$ with $B = B_\chi$, 
or $\eta_b = \eta^\eps_b$ with $B = B_\eps$. 
We impose the following assumptions on the signals $z_{\ell t}$ and the change-point structure therein.

\begin{assum}
\label{assum:b:one}
There exists a fixed constant $\bar{g}>0$ such that 
$\max_{1 \le \ell \le N}\max_{1 \le t \le T} |z_{\ell t}| \le \bar{g}$.
\end{assum}

\begin{assum}
\label{assum:b:two}
At each $\eta_b$, define 
$\Pi_b = \{1 \le \ell \le N: \, \delta_{\ell, b} = |z_{\ell, \eta_b+1} - z_{\ell \eta_b}| > 0\}$,
the index set of those $z_{\ell t}$ that undergo a break at $t = \eta_b$,
and denote its cardinality by $m_b = |\Pi_b|$.
Further, let $\wt{\delta}_b = m_b^{-1}\sum_{\ell \in \Pi_b} |\delta_{\ell, b}|$,
the average size of jumps in $z_{\ell t}$ at $t = \eta_b$.
Then $\Delta_{N, T} = \min_{1 \le b \le B} \sqrt{m_b}\wt{\delta}_b$ satisfies
$(\sqrt{N}\log^{\theta+\upsilon}T)^{-1} \Delta_{N, T} T^{1/4} \to \infty$ as $T \to \infty$. 
\end{assum}

Assumption \ref{assum:b:one} requires the expectations of the WTs of $\chi^b_{it}$ and $\eps^b_{it}$
defined in Assumption \ref{assum:one} to be bounded, which in fact holds trivially as
$\sum_{l=0}^{\mc L_j-1} \psi_{j, l}^2 = 1$ for all $j$.

Assumption \ref{assum:b:two} imposes a condition on the minimal cross-sectional size of the changes in $z_{\ell t}$,
represented by $\Delta_{N, T}$.
Under Assumption \ref{assum:three}, 
the change-points which are
due to breaks in the loadings or (dis)appearance of new factors
(see Remark \ref{rem:representation}), are implicitly required to be `dense',
in the sense that the number of coordinates in $\mbf x_t$ 
affected by the changes is of order $n$.
Consequently, such changes appear in a large number (of order $n^2$) 
of elements of $\wt{\bm\Gamma}^{\beta(t)}_\chi(\tau)$ and, therefore, 
that of the WT of the second-order structure, $z_{\ell t}$.
Similarly, noting that
$[\wt{\bm\Gamma}^b_\chi(\tau)]_{i, i'} = \sum_{j, j'=1}^r \lambda_{ij}\lambda_{i'j'} \E(f_{jt}^b f_{j', t+\tau}^b)$,
a break in the autocovariance functions of $\mbf f_t^{\beta(t)}$ results in a dense change-point
that affects a large number of $z_{\ell t}$.
In other words, for change-point analysis in the common components, 
Assumption \ref{assum:b:two} is reduced to requiring
$(\sqrt{\jts}\log^{\theta + \upsilon}T)^{-1} T^{1/4} 
\min_{1 \le b \le B_\chi}\wt{\delta}_b \to \infty$, as $T\to\infty$,
allowing for the average jump size in individual coordinates $z_{\ell t}$ to tend to zero at a rate 
slower than $T^{-1/4}$, and is no longer dependent on $n$.
On the other hand, our model in \eqref{ps:fm} does not impose any assumption 
on the `denseness' of $\eta^\eps_b, \, b = 1, \ldots, B_\eps$, and 
sparse change-points (with $m_b \ll n^2$) in the idiosyncratic components are detectable provided 
that their sparsity is compensated by the size of jumps, $\wt\delta_b$.


Let $\wh{\cB} = \{\heta_b, \ b=1, \ldots, \wh B, \ 1 < \heta_1 < \ldots < \heta_{\wh{B}} < T\}$ 
denote the change-points detected from $\{y_{\ell t}, \,\ell=1,\ldots, N; \, t=1,\ldots, T\}$
by the DCBS algorithm, i.e., 
$\wh{\cB} = \wh{\cB}^\chi(k)$ or $\wh{\cB} = \wh{\cB}^\eps(k)$, 
depending on whether $y_{\ell t}$ is the WT of $\wh{\chi}^k_{it}$ or $\wh{\eps}^k_{it}$ for some fixed $k$.
Accordingly, we have either $\eta_b = \eta^\chi_b$ with $B = B_\chi$, 
or $\eta_b = \eta^\eps_b$ with $B = B_\eps$. 
Then the following theorem establishes that the DCBS algorithm performs
consistent change-point analysis for both the common and the idiosyncratic components.
\begin{thm}
\label{thm:dcbs}
{\it Suppose that Assumptions \ref{assum:one}--\ref{assum:b:two} hold,
and let $\theta$ be defined as in Theorem \ref{thm:common}. 
Also, let the threshold $\pi_{N, T}$ satisfy
$C'N\Delta_{N, T}^{-1}\log^{2\theta+2\upsilon}T < \pi_{N, T}  < C''\Delta_{N, T} T^{1/2}$ 
for some fixed $C', C'' > 0$.
Then, there exists $c_1>0$ such that 
\beas
\p\Big(\wh{B}=B; \,|\heta_b-\eta_b| < c_1\omega_{N, T} \mbox{ for } b=1, \ldots, \wh B\Big) \to 1
\eeas
as $T \to \infty$, where
$\omega_{N, T} = N\Delta_{N, T}^{-2} \log^{2\theta+2\upsilon}T$.}
\end{thm}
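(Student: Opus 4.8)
The plan is to reduce the claim to the consistency result for the DCBS algorithm established in \cite{cho2016}, whose argument depends on the error sequence only through a uniform bound on its CUSUM-type partial sums. The essential adaptation is that, whereas \cite{cho2016} derives such a bound from second-order stationarity of $\vep_{\ell t}$, here the errors are genuinely nonstationary (their structure changes across the segments), so the required control must instead be supplied by Proposition \ref{prop:chi:additive}(ii); the remaining signal-detection and binary-segmentation steps are then deterministic given Assumptions \ref{assum:b:one} and \ref{assum:b:two}.

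First I would decompose, for any candidate segment $[s, e]$ and split point $b$, the CUSUM $\cY^\ell_{s, b, e}$ of $y_{\ell t} = z_{\ell t} + \vep_{\ell t}$ into a signal CUSUM $\mathcal Z^\ell_{s, b, e}$ computed from $z_{\ell t}$ and an error CUSUM computed from $\vep_{\ell t}$. Writing the error part in terms of the partial sums $\sum_{t=s}^b \vep_{\ell t}$ and $\sum_{t=b+1}^e \vep_{\ell t}$, bounding each by Proposition \ref{prop:chi:additive}(ii), and using $\sqrt{(b-s+1)(e-b)/(e-s+1)} \le \sqrt{b-s+1} \wedge \sqrt{e-b}$, I obtain the uniform control
\beas
\max_{1 \le \ell \le N}\; \max_{1 \le s < e \le T}\; \max_{s \le b < e} \big| \cY^\ell_{s, b, e} - \mathcal Z^\ell_{s, b, e} \big| = O_p\l(\log^{\theta+\upsilon} T\r),
\eeas
which plays the role of the fluctuation bound used in \cite{cho2016}, now with rate $\log^{\theta+\upsilon} T$.

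Next I would supply the two ingredients of the segmentation induction. For detection: on any interval $[s, e]$ that straddles a true change-point $\eta_b$ with both sides of length of order $T$ (guaranteed recursively through the minimum-spacing Assumption \ref{assum:six}), evaluating the Double CUSUM on the signal part at the partition level $m = m_b$ isolates exactly the $m_b$ coordinates breaking at $\eta_b$, yielding a value of order $\sqrt{m_b}\,\wt\delta_b\,\sqrt{T} \ge \Delta_{N, T} T^{1/2}$; combined with the upper threshold bound $\pi_{N, T} < C'' \Delta_{N, T} T^{1/2}$ and the error bound above, this forces $\cT_{s, e} > \pi_{N, T}$, so no true change-point is missed. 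For the absence of false positives: on any interval free of change-points $\mathcal Z^\ell \equiv 0$, and the pure-error DC array is $O_p(\sqrt{N}\log^{\theta+\upsilon} T)$ (the contrast of ordered moduli is maximised at $m \sim N$); the lower bound $\pi_{N, T} > C' N \Delta_{N, T}^{-1}\log^{2\theta+2\upsilon} T$ dominates this and, moreover, prevents over-detection near already-located change-points. Crucially, the two threshold bounds are simultaneously satisfiable precisely because Assumption \ref{assum:b:two} is equivalent to $N\log^{2\theta+2\upsilon}T \ll \Delta_{N,T}^2\,T^{1/2}$. Inducting over the levels $u$ of the algorithm then gives $\wh B = B$.

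Finally, the location rate follows from the usual signal-versus-noise balance near an isolated change-point: expanding $\cD_{s, b, e}(m_b)$ of the signal around $b = \eta_b$, the signal decreases at rate $\Delta_{N, T} T^{-1/2}$ per unit of $|b - \eta_b|$ while the error contributes $O_p(\sqrt{N}\log^{\theta+\upsilon} T)$, so equating the two yields $|\heta_b - \eta_b| = O_p(N \Delta_{N, T}^{-2}\log^{2\theta+2\upsilon} T) = O_p(\omega_{N, T})$, and Assumption \ref{assum:b:two} ensures $\omega_{N, T} = o(T)$ so detected and true locations cannot be confused. I expect the main obstacle to lie in verifying that the data-driven ordering inside the Double CUSUM operator---the contrast between the $m$ largest moduli $|\cY^{(\ell)}|$ and the remaining ones---behaves as required under the merely uniform error control of Proposition \ref{prop:chi:additive}(ii), rather than under the sharper stationary fluctuations available to \cite{cho2016}: because the partition at each node mixes breaking and non-breaking coordinates, one must show that it still separates the $m_b$ breaking coordinates from the rest with the stated rates, and that this separation is preserved along the recursion.
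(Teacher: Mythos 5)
Your overall strategy is exactly the paper's: the published proof of Theorem \ref{thm:dcbs} consists entirely of observing that the DC operator coincides with $\cD^\varphi_m(\cdot)$ of \cite{cho2016} with $\varphi = 1/2$, that Assumptions \ref{assum:b:one}--\ref{assum:b:two} together with Assumption \ref{assum:six} imply the signal conditions imposed there, and that the only role played by stationarity of the noise in \cite{cho2016} is to furnish the uniform scaled partial-sum bound (their Lemma 1, rate $\log T$), which is here replaced by the rate $\log^{\theta+\upsilon}T$ --- including your correct remark that $\E(\vep_{\ell t}) \ne 0$ is immaterial since the bound is on the partial sums themselves. Two smaller omissions relative to the paper: the theorem covers both panels, so for $\wh{\cB}^\eps(k)$ you must invoke the idiosyncratic analogue, Proposition \ref{prop:e:additive} in Appendix \ref{sec:when:y:e}, not only Proposition \ref{prop:chi:additive}(ii); and your CUSUM decomposition implicitly requires the scaling $\sigma_\ell$ to be bounded away from zero and infinity with probability tending to one, which the paper disposes of in Remark \ref{rem:sigma}.

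There is, however, one step in your sketch that fails as written: the location rate. Equating a signal drop of $\Delta_{N, T}\, d\, T^{-1/2}$ (for $d = |b - \eta_b|$) against a \emph{constant} noise level $\sqrt{N}\log^{\theta+\upsilon}T$ gives $d \asymp \sqrt{NT}\,\Delta_{N, T}^{-1}\log^{\theta+\upsilon}T$, not $\omega_{N, T} = N\Delta_{N, T}^{-2}\log^{2\theta+2\upsilon}T$; under Assumption \ref{assum:b:two} the former exceeds the latter by a factor diverging at least like $T^{1/4}$, so your stated balance only delivers a strictly weaker bias bound than the theorem claims. The rate $\omega_{N, T}$ requires using the \emph{increment} structure of the noise CUSUMs: for split points $b$ and $\eta_b$ at distance $d$, the difference $\cY^{\ell}_{s, b, e} - \cY^{\ell}_{s, \eta_b, e}$ is driven by the partial sum of $\vep_{\ell t}$ over the short interval between $b$ and $\eta_b$, hence is of order $\sqrt{d/T}\,\log^{\theta+\upsilon}T$ per coordinate (this is precisely where the uniformity over \emph{all} pairs $s < e$ in Proposition \ref{prop:chi:additive}(ii) is needed, applied to the interval $[b \wedge \eta_b, b \vee \eta_b]$). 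The correct balance $\Delta_{N, T}\, d\, T^{-1/2} \gtrsim \sqrt{N}\sqrt{d/T}\,\log^{\theta+\upsilon}T$ then yields $d \gtrsim N\Delta_{N, T}^{-2}\log^{2\theta+2\upsilon}T = \omega_{N, T}$, which is how the argument proceeds in the proof of Theorem 3.3 of \cite{cho2016} that the paper imports verbatim. With this repair (and the two additions above), your proposal coincides with the paper's proof.
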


Theorem \ref{thm:dcbs} shows both the total number and the locations of 
the change-points in $\mc B^\chi$ and $\mc B^\eps$ are consistently estimated;
in the rescaled time $t/T\in [0, 1]$, the bound on the bias in estimated change-point locations satisfies 
$\omega_{N, T}/T \to 0$ as $T \to \infty$ under Assumption \ref{assum:b:two}.
The optimality in change-point detection may be defined as
when the true and estimated change-points are within the distance of $O_p(1)$ \citep{korostelev1987}.
With our approach, near-optimality in change-point estimation is achieved up to a logarithmic factor
when the change-points are cross-sectionally dense ($m_b \sim N$)
with average size $\wt\delta_b$ bounded away from zero, so that $\Delta_{N, T}\sim \sqrt N$. 

\subsection{Screening over a range of factor number candidates}
\label{sec:factor:number}

In this section, we detail a screening procedure motivated by Theorem \ref{thm:overestimation},
which enables us to bypass the challenging task of estimating the number of factors
in the presence of (possibly) multiple change-points in the factor structure. 

The performance of most methods proposed for change-point analysis under factor modelling, 
such as those listed in Introduction, relies heavily on accurate estimation of the factor number. 
Thanks to Theorem \ref{thm:overestimation}, however,
mis-specifying the factor number in our methodology does not 
influence the theoretical consistency as reported in Theorem \ref{thm:dcbs},
provided that we choose $k$ sufficiently large to satisfy $k \ge r$.
Based on this observation, we propose to screen $\wh{\cB}^\chi(k)$, 
the set of change-points detected from $\wh{\chi}^k_{it}$,
for a range of factor number candidates denoted as $\cR = \{\underline{r}, \underline{r}+1, \ldots, \bar{r}\}$. 
Specifically, we select the $\wh{\cB}^\chi(k)$ with the largest cardinality over $k$ and, 
if there is a tie, we select $\wh{B}^\chi(k)$ with the largest $k$.
Denoting $k^* = \max\{k \in \cR: \, |\wh{\cB}^\chi(k)| = \max_{k'\in\cR} |\wh{\cB}^\chi(k')|\}$,
change-points in the idiosyncratic components are detected from 
$\wh{\eps}_{it}^* \equiv \wh{\eps}_{it}^{k^*} = x_{it} - \wh{\chi}_{it}^*$,
where $\wh \chi^*_{it} \equiv \wh \chi^{k^*}_{it}$. 

As noted below Theorem \ref{thm:overestimation}, 
using $k < r$ factors leads to $\wh\chi^k_{it}$ with non-negligible estimation error
which may not contain all the change-points in $\cB^\chi$ as change-points in its second-order structure.
Moreover, with such $k$, some $\eta^\chi_b$'s may appear as change-points in $\wh{\eps}^k_{it}$,
which we refer to as the spillage of change-points in the common components over to the idiosyncratic components.
This justifies the proposed screening procedure and the choice of $k^*$.

For $\underline{r}$, we use the number of factors estimated by minimising the information criterion of \cite{baing02}
using as penalty $p(n, T) = (n \wedge T)^{-1}\log(n \wedge T)$;
in principle, any other procedure for factor number estimation may be adopted to select $\underline{r}$.
The range $\cR$ also involves the choice of the maximum number of factors, $\bar{r}$.
In the factor modelling literature, this choice is a commonly faced problem,
e.g., when estimating the number of factors using an information criterion-type estimator. 
In the literature on stationary factor models,
the maximum number of factors is usually fixed at a small number
($\bar{r} = 8$ is used in \cite{baing02}) for practical purposes. 
However, the presence of change-points tends to increase the number of factors,
as shown with an example in Appendix \ref{sec:ex}.
Therefore, we set $\bar{r}  = 20\vee \lfloor \sqrt{n\wedge T}\rfloor$, 
where the second term dominates the first when $n$ and $T$ are large.

\section{Factor analysis after change-point detection}
\label{sec:post}

Once the change-points are detected, we can estimate the factor space over each segment
$I^\chi_b = [\wh\eta^\chi_b+1, \wh\eta^\chi_{b+1}], \, b = 0, \ldots, \wh{B}_\chi$,
defined by two consecutive change-points estimated from the common components.
Denoting the sample covariance matrix over $t \in I^\chi_b$ by
$\wh{\bm\Gamma}^b_x = (\wh\eta^\chi_{b+1} - \wh\eta^\chi_b)^{-1} 
\sum_{t \in I^\chi_b} \mbf x_t\mbf x_t^\top$,
let $\wh{\mbf w}^b_{x, j}$ denote the eigenvector of $\wh{\bm\Gamma}^b_x$ corresponding 
to its $j$-th largest eigenvalue. 
Then, for a fixed number of factors $k$, the segment-specific estimator of $\chi_{it}$ is obtained via PCA as
$\wh{\chi}^{(b, k)}_{it} = \sum_{j=1}^{k} \wh{w}^b_{x, ij}(\wh{\mbf w}^b_{x, j})^\top\mbf x_t$. 

The number of factors $r_b$ in the $b$-th segment can be estimated by means of the information criterion proposed in \cite{baing02}:
\bea
\wh{r}_b = \arg\min_{1 \le k \le \bar{r}} \Big[\log\Big\{\frac{1}{n(\wh\eta^\chi_{b+1}-\wh\eta^\chi_b)}
\sum_{i=1}^n\sum_{t=\wh\eta^\chi_b+1}^{\wh\eta^\chi_{b+1}}(x_{it}-\wh{\chi}^{(b,k)}_{it})^2\Big\} 
+ k p(n, T) \Big],
\label{eq:ic}
\eea
where $\bar{r}$ denotes the maximum allowable factor number,
and the penalty function $p(n, T)$ satisfies 
$p(n, T) \to 0$ as well as $(n^2 \wedge \sqrt{T})\log^{-2/\beta_f}T \cdot p(n, T) \to \infty$. 
Motivated by the formulation of penalties in \cite{baing02}, we may use
$p(n, T) = (n + \sqrt T)/(n\sqrt T) \log^a(n \wedge \sqrt T)$ for some $a > 2/\beta_f$.

Let $\bm\Lambda_b$ be the $n\times r_b$ matrix of loadings for the $b$-th segment. 
In order to discuss the theoretical properties of segment-wise factor analysis,
we require the following assumption that extends Assumption \ref{assum:three} (i) to each segment $I^\chi_b$.
\begin{assum}
\label{assum:three:new}
There exists a positive definite 
$r_b \times r_b$ matrix $\mbf H_b$ such that $n^{-1}\bm\Lambda_b^\top\bm\Lambda_b \to \mbf H_b$ 
as $n\to\infty$. 
\end{assum}
Then, we obtain the asymptotic results in Propositions \ref{prop:bn}--\ref{thm:pca}
for the segment-wise estimators of the factor number and the common components.

\begin{prop}
\label{prop:bn} 
{\it Suppose that all the conditions in Theorem \ref{thm:dcbs} and Assumption \ref{assum:three:new} hold.
For all $b = 0, \ldots, \wh B_\chi$, $\wh r_b$ returned by \eqref{eq:ic} satisfies
$\p(\wh r_b = r_b) \to 1$ as $n, T \to \infty$.}
\end{prop}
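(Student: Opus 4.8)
The plan is to follow the classical information-criterion argument of \cite{baing02}, adapted to the present setting where the segment $I^\chi_b$ is defined by \emph{estimated} change-points and the within-segment process is only approximately stationary. Writing $T_b = \wh\eta^\chi_{b+1} - \wh\eta^\chi_b$ and $V(k) = (nT_b)^{-1}\sum_{i=1}^n\sum_{t\in I^\chi_b}(x_{it}-\wh\chi^{(b,k)}_{it})^2$, so that $\wh r_b = \arg\min_{1\le k \le \bar r}\{\log V(k) + k\,p(n,T)\}$, it suffices to show that, with probability tending to one, $\log V(k) - \log V(r_b) + (k-r_b)p(n,T) > 0$ for every $k \ne r_b$ in $\{1,\ldots,\bar r\}$; a union bound over this finite range then yields the claim. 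First I would invoke Theorem \ref{thm:dcbs} to condition on the event $\{\wh B_\chi = B_\chi,\ |\wh\eta^\chi_b - \eta^\chi_b| < c_1\omega_{N,T}\ \forall b\}$, whose probability tends to one. On this event $\omega_{N,T}/T \to 0$, so $I^\chi_b$ agrees with the true segment $[\eta^\chi_b+1,\eta^\chi_{b+1}]$ except on a vanishing fraction of time points, and Assumption \ref{assum:one}(i) ensures that the geometric-decay discrepancy between $\chi_{it}$ and its stationary surrogate $\chi^b_{it}$ contributes negligibly to $V(k)$. Thus I may analyse $V(k)$ as if the data over $I^\chi_b$ were generated by a stationary $r_b$-factor model satisfying Assumption \ref{assum:three:new}.

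For the under-specification case $k < r_b$, I would show that $V(k)$ is bounded away from $V(r_b)$ by a positive constant with probability tending to one. Because $n^{-1}\bm\Lambda_b^\top\bm\Lambda_b \to \mbf H_b$ with $\mbf H_b$ positive definite (Assumption \ref{assum:three:new}) and the within-segment factor covariance is non-degenerate (Assumptions \ref{assum:two}(i) and \ref{assum:three}), omitting any factor leaves a common signal whose contribution to the residual sum of squares is of exact order $n$, hence $V(k) - V(r_b) \ge c > 0$ for some constant $c$ not depending on $n,T$. Since $(r_b-k)p(n,T) = o(1)$ by the assumption $p(n,T)\to 0$, the log-ratio term dominates and $\log V(k) + k\,p(n,T) > \log V(r_b) + r_b\,p(n,T)$. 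This is essentially the lower-bound part of Bai--Ng and is robust to the serial and cross-sectional dependence allowed here.

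The delicate case is over-specification, $r_b < k \le \bar r$, where I must establish that the reduction in fit from adding spurious components is dominated by the penalty, i.e. $V(r_b) - V(k) = o_p(p(n,T))$. The key is the sharp rate $V(r_b) - V(k) = O_p\big((n^2\wedge\sqrt T)^{-1}\log^{2/\beta_f}T\big)$ for each fixed $k>r_b$, which together with the penalty condition $(n^2\wedge\sqrt T)\log^{-2/\beta_f}T\cdot p(n,T)\to\infty$ gives exactly $V(r_b)-V(k)=o_p(p(n,T))$. To obtain this rate I would control the gap between the sample covariance $\wh{\bm\Gamma}^b_x$ over $I^\chi_b$ and its population counterpart, using the exponential-tail and strong-mixing conditions (Assumptions \ref{assum:two}(ii), \ref{assum:four} and \ref{assum:five}) through the Bernstein-type inequalities cited after Assumption \ref{assum:five}; these supply the $\log^{2/\beta_f}T$ factor and the $\sqrt T$ term, while the $n^2$ term reflects the cross-sectional averaging of the idiosyncratic contribution. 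Since $\mu_{\eps,1}$ is bounded (condition (C2)), the $(r_b{+}1)$-th and higher sample eigenvalues of $\wh{\bm\Gamma}^b_x$ are $O_p$-bounded, so each spurious principal component explains only a vanishing share of the normalised variance; a first-order expansion of $\log V(\cdot)$ around the common positive limit of $V(r_b)$ and $V(k)$ then converts the fit gap into $\log V(r_b) - \log V(k) + (r_b-k)p(n,T) = (r_b-k)p(n,T) + O_p((n^2\wedge\sqrt T)^{-1}\log^{2/\beta_f}T) < 0$ with probability tending to one.

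Combining the two cases over the finite range $k\in\{1,\ldots,\bar r\}$ by a union bound gives $\p(\wh r_b = r_b)\to 1$, and repeating the argument for each $b=0,\ldots,\wh B_\chi$ (again a finite collection on the event above) completes the proof. I expect the main obstacle to be the over-specification rate: unlike the full-sample estimator, the segment-wise PCA defining $\wh\chi^{(b,k)}_{it}$ is \emph{uncapped}, so controlling the contribution of the spurious eigenvectors of $\wh{\bm\Gamma}^b_x$ must be done directly through eigenvalue and eigenvector perturbation bounds on the segment, and care is needed to ensure that the boundary misalignment of size $\omega_{N,T}$ does not inflate these bounds beyond the $(n^2\wedge\sqrt T)^{-1}\log^{2/\beta_f}T$ target.
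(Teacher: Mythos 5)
Your proposal follows essentially the same route as the paper's proof: conditioning on the event of Theorem \ref{thm:dcbs}, establishing segment-wise covariance consistency that absorbs the $\omega_{N,T}$ boundary misalignment (the paper's Lemma \ref{lem:pw}), treating under- and over-specification separately via Davis--Kahan-type eigenvector perturbation on the segment (Lemmas \ref{lem:pw:evec}--\ref{lem:lam}, the latter bounding $\Vert\wh{\mbf V}^\top\bm\Lambda_b\Vert$ for exactly the spurious uncapped directions you flag as the main obstacle), and finishing with the Bai--Ng Corollary 1 argument to pass from $V$ to $\log V$. The only discrepancy is in the fine print of the over-specification rate: the paper's derived fit-gap bounds are $O_p\{(\log n/T \vee n^{-2})\log^{2/\beta_f}T\}$ for the common part and $O_p(\sqrt{\log n/T})$ for the idiosyncratic part, rather than your single rate $(n^2\wedge\sqrt T)^{-1}\log^{2/\beta_f}T$, but both are dominated by the penalty under the stated conditions, so the conclusion is unaffected.
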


\begin{prop}
\label{thm:pca}
{\it Suppose that all the conditions in Theorem \ref{thm:dcbs} and Assumption \ref{assum:three:new} hold.
Then, for all $b = 0, \ldots, \wh{B}_\chi$,
\beas
\max_{1 \le i \le n} \max_{t \in I^\chi_b}
\vert \wh\chi^{(b,r_b)}_{it} - \chi_{it}\vert = 
O_p\l\{\Big(\sqrt{\frac{\log\,n}{T}} \vee \frac{1}{\sqrt{n}}\Big) \log^\theta T\r\}.
\eeas}
\end{prop}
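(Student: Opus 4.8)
The plan is to reduce the segment-wise problem to the global analysis already carried out for Theorem \ref{thm:common}, the only genuinely new ingredient being the control of the effect of defining the segments $I^\chi_b$ through the \emph{estimated} change-points $\wh\eta^\chi_b$. First I would condition on the event $\mc E$ on which Theorem \ref{thm:dcbs} holds, that is $\wh B_\chi = B_\chi$ and $|\wh\eta^\chi_b - \eta^\chi_b| \le c_1\omega_{N, T}$ for all $b$, together with $\omega_{N, T}/T \to 0$; since $\p(\mc E) \to 1$ it suffices to prove the bound on $\mc E$. On $\mc E$, each $I^\chi_b = [\wh\eta^\chi_b+1, \wh\eta^\chi_{b+1}]$ has length $L_b \ge (c_\eta - o(1))T$ by Assumption \ref{assum:six}, and it coincides with the true segment $(\eta^\chi_b, \eta^\chi_{b+1}]$ except on a boundary set of at most $O(\omega_{N, T})$ time points.

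The key step is to show that the segment sample covariance $\wh{\bm\Gamma}^b_x = L_b^{-1}\sum_{t \in I^\chi_b}\mbf x_t\mbf x_t^\top$ is close, in the relevant eigenstructure sense, to the stationary population covariance $\bm\Gamma^b_x = \bm\Gamma^b_\chi + \bm\Gamma^b_\eps$. I would split $I^\chi_b$ into an interior part, lying strictly inside the true segment and distanced from its endpoints, and a boundary part of cardinality $O(\omega_{N, T})$. On the interior, Assumption \ref{assum:one}(i),(iii) lets me replace $\chi_{it}, \eps_{it}$ by the stationary $\chi^b_{it}, \eps^b_{it}$ up to a geometrically small error, after which the Bernstein-type concentration used for Theorem \ref{thm:common} (justified by the tails in Assumptions \ref{assum:two}(ii), \ref{assum:four}(ii) and the mixing in Assumption \ref{assum:five}) yields a bound on $\Vert\wh{\bm\Gamma}^b_x - \bm\Gamma^b_x\Vert$ of the order producing the $\sqrt{\log n/T}$ factor. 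On the boundary, each summand $\mbf x_t\mbf x_t^\top$ has operator norm $O_p(n)$, so its averaged contribution is $O_p(n\,\omega_{N, T}/T) = o_p(n)$ up to logarithmic factors, negligible relative to the $r_b$ leading eigenvalues of $\bm\Gamma^b_\chi$, themselves of order $n$ by Assumptions \ref{assum:two}(i) and \ref{assum:three:new}. I expect this boundary control to be the main obstacle, as it is precisely where the estimated-segment structure enters; the argument hinges on $\omega_{N, T}/T$ being not merely $o(1)$ but dominated by the target rate, which I would read off from Assumption \ref{assum:b:two}.

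Given the covariance bound, I would invoke Weyl's inequality together with the subspace perturbation of Lemmas \ref{lem:evals}--\ref{lem:evecs} to transfer closeness of $\wh{\bm\Gamma}^b_x$ to $\bm\Gamma^b_x$ into closeness of the estimated $r_b$-dimensional principal subspace $\mbf P^b = \sum_{j=1}^{r_b}\wh{\mbf w}^b_{x, j}(\wh{\mbf w}^b_{x, j})^\top$ to its population counterpart. It is worth emphasising that, because $\wh\chi^{(b, r_b)}_{it} = [\mbf P^b\mbf x_t]_i$ depends only on the projection onto the leading subspace and not on individual eigenvectors, I need only the gap between the $r_b$-th and $(r_b+1)$-th eigenvalues, namely order $n$ versus bounded, so that the positive definiteness of $\mbf H_b$ in Assumption \ref{assum:three:new}, without distinctness of eigenvalues, suffices. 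Along the way I would record the entrywise bound $\max_{1 \le j \le r_b}\max_{1 \le i \le n}|\wh w^b_{x, ij}| = O_p(1/\sqrt n)$, the segment-wise analogue of the remark preceding Theorem \ref{thm:overestimation}.

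Finally I would expand $\wh\chi^{(b, r_b)}_{it} - \chi_{it}$ along the decomposition used for Theorem \ref{thm:common}: a subspace-estimation term, the projected idiosyncratic term $[\mbf P^b\bm\eps_t]_i$, and a stationary-approximation remainder governed by Assumption \ref{assum:one}(i). Each piece is bounded uniformly over $i$ and $t \in I^\chi_b$ using the entrywise eigenvector bound together with the tail and mixing conditions, which supply the $\sqrt{\log n}$ and $\log^\theta T$ factors and reproduce exactly the rate of Theorem \ref{thm:common}. Since on $\mc E$ the number of segments is the fixed constant $B_\chi+1$, taking the maximum over $b = 0, \ldots, \wh B_\chi$ preserves the rate and completes the argument.
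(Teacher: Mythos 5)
Your proposal is correct and follows essentially the same route as the paper: condition on the event of Theorem \ref{thm:dcbs}, control $n^{-1}\Vert\wh{\bm\Gamma}^b_x - \bm\Gamma^b_x\Vert$ on the estimated segments with the boundary contribution of $O(\omega_{N, T})$ mis-attributed time points shown to be dominated by $\sqrt{\log n/T}$ via $\omega_{N, T} \ll T^{1/2}$ from Assumption \ref{assum:b:two} (this is exactly the paper's Lemma \ref{lem:pw}, which works elementwise in Frobenius norm using $\max_{i, i', t}|x_{it}x_{i't}| = O_p(\log^{\theta - 1/2}T)$ where you use per-summand operator norms, an immaterial difference), then apply the Davis--Kahan machinery of Lemmas \ref{lem:evals}--\ref{lem:evecs} and repeat the proof of Theorem \ref{thm:common} verbatim. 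Your explicit observation that only the leading $r_b$-dimensional projection matters, so that the gap between the $r_b$-th (order $n$) and $(r_b+1)$-th (bounded) eigenvalues suffices without distinctness, is a point the paper leaves implicit but is consistent with Assumption \ref{assum:three:new} dropping the distinct-eigenvalue requirement of Assumption \ref{assum:three}.
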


From Propositions \ref{prop:bn} and \ref{thm:pca},
we have the guarantee that the factor space is consistently estimated by PCA
over each estimated segment.
We may further refine the post change-point analysis by first determining
whether $\wh\eta^\chi_b$ can be associated with (a) a break in the loadings or factor number, or
(b) that in the autocorrelation structure of the factors only.
This can be accomplished by comparing $\wh r_{b-1}$ and $\wh r_b$ against
the factor number estimated from the pooled segment $I^\chi_{b-1} \cup I^\chi_b$:
a break in the loadings or factor number necessarily brings in a change
in the number of factors from the pooled segment.
However, if (b) is the case, the segments before and after $\wh\eta^\chi_b$ as well as the pooled one
return the identical number of factors, and 
we can perform the joint factor analysis of the two segments.

\section{Computational aspects}

\subsection{Bootstrap procedure for threshold selection}
\label{sec:bootstrap}

The range of theoretical rates supplied for $\pi_{N, T}$ in Theorem \ref{thm:dcbs},
involves typically unattainable knowledge of the minimum cross-sectional size of the changes.
Hence, we propose a bootstrap procedure for the selection of $\pi_{N, T}^\chi$ and $\pi_{N, T}^\eps$,
the thresholds for change-point analysis of $\wh\chi^k_{it}$ and $\wh\eps^k_{it}$, respectively.
We omit $N$ and $T$ from their subscripts for notational convenience when there is no confusion.
Although a formal proof on the validity of the proposed bootstrap algorithm is beyond the scope of the current paper,
simulation studies reported in Section \ref{sec:sim} demonstrate its good performance
when applied jointly with our proposed methodology.
We refer to \cite{trapani2013}, \cite{corradi2014} and \cite{gonccalves2014, gonccalves2016}
for alternative bootstrap methods under factor models
and \cite{jentsch2015} for the linear process bootstrap for multivariate time series in general.


We propose a new bootstrap procedure which is specifically motivated by 
the separate treatment of common and idiosyncratic components in our change-point detection methodology.
Namely, the resampling method produces bootstrap samples from 
the common and idiosyncratic components independently,
relying on the consistency of the estimated components 
with an over-specified factor number as reported in Theorem \ref{thm:overestimation}.

Let $\cT^\chi_{s, e}(k)$ and $\cT^\eps_{s, e}(k)$ 
denote the test statistics $\cT_{s, e}$ computed on the interval $[s, e]$
for the panel data obtained from the WT of $\wh{\chi}^k_{it}$ and $\wh \eps^k_{it}$, respectively.
The proposed resampling procedure aims at approximating
the distributions of $\cT^\chi_{s, e}(k)$ and $\cT^\eps_{s, e}(k)$
under the null hypothesis of no change-point,
which then can be used for the selection of $\pi^\chi_{s, e}(k)$ and $\pi^\eps_{s, e}(k)$,
the corresponding, interval-specific test criteria for $\wh{\chi}^k_{it}$ and $\wh\eps^k_{it}$ over $[s, e]$,
an interval which is considered at some iteration of the DCBS algorithm.

Among the many block bootstrap procedures proposed in the literature 
for bootstrapping time series (see \cite{politis2004} for an overview), 
stationary bootstrap (SB) proposed in \cite{politis1994} generates bootstrap samples 
which are stationary conditional on the observed data (see Appendix \ref{app:SB} for details).
Based on the SB, our procedure derives $\pi^\chi_{s, e}(k)$ and $\pi^\eps_{s, e}(k)$.
Recall that $\wh\lambda_{ij}$ and $\wh{f}_{jt}$ denote the loadings and factors estimated via 
the capped PCA for $j=1,\ldots, k$, see Section \ref{sec:sep}.
\begin{description}
\item[Stationary bootstrap algorithm for factor models.]
\item[Step 1] \hfill
\begin{description}
\item[For the common components:]
For each $l \in \{1, \ldots, k\}$, 
produce the SB sample of $\{\wh f_{lt}, \, t=1,\ldots, T\}$ as $\{f^\bullet_{lt}, t=1,\ldots, T\}$. 
Compute $\chi^{k\bullet}_{it} = \sum_{l=1}^k \wh \lambda_{il} f^\bullet_{lt}$.

\item[For the idiosyncratic components:]
Produce the SB sample of 
$\{\wh{\bm\eps}^k_t = (\wh \eps^k_{1t}, \ldots, \wh \eps^k_{nt})^\top, \, t=1,\ldots, T\}$ as 
$\{\bm\eps^{k \bullet}_t = (\eps_{1t}^{k \bullet}, \ldots, \eps_{nt}^{k \bullet})^\top, \, t=1,\ldots, T\}$.
\end{description}

\item[Step 2:] Generate $y^\bullet_{\ell t}$ through transforming $\chi^{k\bullet}_{it}$ or $\eps^{k\bullet}_{it}$
using $g_j(\cdot)$ and $h_j(\cdot,\cdot)$ as described in Section \ref{sec:choice}.

\item[Step 3:] Compute $\cY^{\ell \bullet}_{s, b, e}$ on $y^\bullet_{\ell t}$
and generate the test statistic $\cT^\bullet_{s, e}$ according to \eqref{dc:test:stat}.

\item[Step 4:] Repeat Steps 1--3 $R$ times.
The critical value $\pi^\chi_{s, e}(k)$ or $\pi^\eps_{s, e}(k)$ for the segment $[s, e]$ 
is selected as the $(1-\alpha)$-quantile of the $R$ bootstrap test statistics $\cT^\bullet_{s, e}$ 
at given $\alpha \in (0, 1)$.
\end{description}

The bootstrap algorithm is designed to produce ${\bm\chi}^{k\bullet}_t$ and ${\bm\eps}^{k\bullet}_t$
that mimic the second-order structure of $\bm{\wh\chi}^k_t$ and $\wh{\bm\eps}^k_t$, respectively, 
when there is no change-point present,
and thus approximates the distributions of the test statistics under the null hypothesis.
In the algorithm, the treatment of the common and idiosyncratic components 
differs only in the application of SB in Step 1:
since the factors estimated from the PCA are uncorrelated by construction,
we generate the SB samples of $\wh f_{jt}$ for each $j$ separately,
while the $n$ elements of $\wh{\bm\eps}^k_t$ are resampled jointly 
in an attempt to preserve the cross-sectional dependence therein.
We discuss the choice of the bootstrap size $R$ and the level of quantile $\alpha$ in Section \ref{sec:tuningp}.

\subsection{Selection of tuning parameters}
\label{sec:tuningp}

The proposed methodology involves the choice of tuning parameters 
for the capped PCA, WT, DCBS algorithm and the bootstrap procedure. 
We here list the values used for the simulation studies (Section \ref{sec:sim})
and real data analysis (Section \ref{sec:real}). We provide further guidance on the implementation of the proposed methodology in Appendix \ref{sec:tricks}.

In the examples considered in Section \ref{sec:sim}, 
we have not observed unreasonably large contributions to $\wh\chi^k_{it}$ 
from spurious factors. 
As noticed in Section \ref{sec:sep}, 
selecting a sufficiently large constant as $c_w$ effectively 
disables the capping for the (unknown) $r$ leading principal components 
to $\wh\chi^k_{it}$.
For this reason, in the current implementation, we disable the capping.
However, this does not necessarily mean that capping will always be of no practical use.
Therefore, we recommend the data-driven choice of 
$c_w = \sqrt n \max_{1 \le j \le \underline{r}} \max_{1 \le i \le n} |\wh{w}_{x, ij}|$.
With such $c_w$, the capping is enabled for only those $\wh{w}_{x, ij}$
with $\underline{r} + 1 \le j \le \bar{r}$,
where $\underline{r}$ and $\bar{r}$ denote
the smallest and largest number of factors considered in Section \ref{sec:factor:number}.

For the WT, we propose to use $\jts = \lfloor C\log_2\log^\upsilon T \rfloor$ number of finest Haar wavelets
for some $\upsilon \in (0, 1]$ and $C > 0$, in order to control for any bias in change-point estimation arising from WT. 
Noting that the bias increases at the rate $2^C$ with increasing $C$,
we recommend the choice of $\jts = \lfloor\log_2\log_2\, T\rfloor$ in practice.

Although omitted from the description of the DCBS algorithm, we select a parameter $d_T$
controlling the minimum distance between two estimated change-points.
In light of Remark \ref{rem:beta} and Theorem \ref{thm:dcbs},
we choose to use $d_T= [\log^2 T \wedge 0.25T^{6/7}]$. 
Note that we can avoid using this parameter by replacing the binary segmentation procedure 
with wild binary segmentation \citep{fryzlewicz2014a}, such that
the Double CUSUM is applied over randomly drawn intervals to derive the test statistic.
It is conjectured that such a procedure will place a tighter bound 
on the bias in estimated change-point locations, as well as bypassing the need for the parameter $d_T$.
We leave the investigation in this direction for the future research.

Finally, for the proposed bootstrap procedure, 
we use the bootstrap sample size $R=200$ for the simulation studies and $R=500$ for the real data analysis.
As for the level of quantile, we select $\alpha = 0.05$; 
note that this choice does not indicate the significance level in hypothesis testing.

\section{Simulation studies}
\label{sec:sim}

In this section, we apply the proposed change-point detection methodology 
to both single and multiple change-point scenarios under factor modelling.
While our methodology is designed for multiple change-point detection,
Section \ref{sec:sim:single} gives us insight into its performance 
in the presence of a single change-point,
which is of the type, size and denseness that vary in a systematic manner.
Multiple change-point scenarios are considered in Section \ref{sec:sim:multi}. 

\subsection{Single change-point scenarios}
\label{sec:sim:single}

The following stationary $q$-factor model
allows for serial correlations in $f_{jt}$ and both serial and cross-sectional correlations in $\eps_{it}$:
\begin{align}
& x_{it} = \sum_{j=1}^q \lambda_{ij}f_{jt} + \sqrt{\vartheta}\eps_{it}, \quad \mbox{where} \quad 
\vartheta = \phi \cdot \frac{q}{1-\rho_f^2} \cdot \frac{1-\rho^2}{1+2H\beta^2}, \label{eq:sim:single:model}
\\
& f_{jt} = \rho_{f, j} f_{j, t-1} + u_{jt}, \quad u_{jt} \sim_{\iid} \cN(0, 1), \label{eq:sim:single:model:f}
\\
& \eps_{it} = \rho_{\eps,i} \eps_{i, t-1} + v_{it} + \beta_i \sum_{|k| \le H_i, \, k \ne 0} v_{i+k, t}, \quad v_{it} \sim_{\iid} \cN(0, 1), \label{eq:sim:single:model:e}
\end{align}
and $\lambda_{ij} \sim_{\iid} \cN(0, 1)$. 
The stationary factor model in \eqref{eq:sim:single:model} has been frequently adopted for 
empirical studies in the factor model literature including \cite{baing02}.
Throughout we set $q = 5$. T
he parameters $\rho_{f, j} = \rho_f-0.05(j-1)$ with $\rho_f = 0.4$, and 
$\rho_{\eps,i} \sim_{\iid} \cU(-\rho, \rho)$ with $\rho = 0.5$,
determine the autocorrelations in the factors and idiosyncratic components, 
while $H_i = H = \min(n/20, 10)$ and $\beta_i$ (randomly drawn from $\{-\beta, \beta\}$ with $\beta = 0.2$), 
determine the cross-sectional correlations in $\bm\eps_t$. 
The parameter $\phi$, or its inverse $\phi^{-1}$, is chosen from $\{1, 1.5, 2, 2.5\}$ 
depending on the change-point scenario, 
in order to investigate the impact of the ratio between the variance of the common and idiosyncratic components,
on the performance of the change-point detection methodology. 
We fix the number of observations at $T = 200$ and vary the dimensionality as $n \in \{100, 300\}$.

A single change-point is introduced to either $\chi_{it}$ or $\eps_{it}$ at $\eta_1 = [T/3] = 67$ as follows.
\begin{description}
\item[(S1) Change in the loadings.] For a randomly chosen index set $\mathcal{S} \subset \{1, \ldots, n\}$, 
the loadings $\lambda_{ij}, \, i \in \mathcal{S}$ are shifted by $\Delta_{ij} \sim_{\iid}\cN(0, \sigma^2)$, 
where  $\sigma \in \sqrt{2}\{1, 0.75, 0.5, 0.25\}$. 

\item[(S2) Change in $\rho_{f, j}$.] The signs of the AR parameters in (\ref{eq:sim:single:model:f}) are switched 
such that the autocorrelations of $f_{jt}$ change while their variance remains the same. 

\item[(S3) A new factor.] For a randomly chosen $\mathcal{S} \subset \{1, \ldots, n\}$, 
a new factor is introduced to $\chi_{it}, \, i \in \mathcal{S}$:
$f_{q+1, t} = \rho_f f_{q+1, t-1} + u_{q+1, t}$ with $u_{q+1, t} \sim_{\iid} \cN(0, 1)$
and $\lambda_{i, q+1} \sim_{\iid}\cN(0, \sigma^2)$, where $\sigma = \sqrt{2}$.

\item[(S4) Change in $\rho_{\eps,i}$.] For a randomly chosen $\mathcal{S} \subset \{1, \ldots, n\}$, 
the corresponding $\rho_{\eps,i}, \, i \in \mathcal{S}$ in (\ref{eq:sim:single:model:e}) 
have their signs switched so that the autocorrelations of such $\eps_{it}$ change
while their variance remains the same.

\item[(S5) Change in the covariance of $\bm\eps_{t}$.] 
For a randomly chosen $\mathcal{S} \subset \{1, \ldots, n\}$, 
the bandwidth $H_i, \, i \in \mathcal{S}$ in (\ref{eq:sim:single:model:e}) doubles.
\end{description}
In (S1) and (S3)--(S5), the size of the index set $\mc S$ is controlled by the parameter 
$\varrho \in \{1, 0.75, 0.5, 0.25\}$, as $|\mathcal{S}| = [\varrho n]$.


For each simulated dataset, we consider the range of possible factor numbers $k\in \cR$ 
selected as described in Section \ref{sec:factor:number}. 
For any given $k$, we estimate the common and idiosyncratic components and proceed with 
WT of Stage 1, to which {\it only the first iteration} of the DCBS algorithm is applied.
In this way, we test for the existence of at most a single change-point 
in the common and idiosyncratic components separately and, 
if its presence is detected, we identify its location in time. 
With a slight abuse of notation, we refer to the simultaneous testing and locating procedure as the DC test, 
and report its detection power and accuracy in change-point estimation. 

As noted in Introduction, existing methods for change-point analysis under factor modelling 
are not applicable to the scenarios other than (S1) and (S3). 
Hence, we compare the performance of the DC test to change-point tests based on 
two other high-dimensional CUSUM aggregation approaches, which are referred to as the MAX and AVG tests:
after Stage 1, the $N$-dimensional CUSUMs of the WT series
are aggregated via taking their point-wise maximum or average, respectively, 
which replaces the Double CUSUM statistics in Stage 2. 
In addition, we report the detection power of a variant of the DC test,
where the first iteration of the DCBS algorithm is applied to the panel data consisting 
of the WT of $x_{it}$, 
and compare its performance against DC, MAX and AVG tests applied to the WT of $\wh\chi^k_{it}$ under (S1)--(S3),
and that of $\wh\eps^k_{it}$ under (S4)--(S5).
We include this approach, termed the DC-NFA (no factor analysis) test, 
in order to demonstrate the advantage in linking the factor modelling and change-point detection 
as proposed in our methodology.

Figures \ref{sim:fig:100:chi:one:power:one}--\ref{sim:fig:100:vep:two:power} plot
the detection power of our change-point test as well as that of MAX, AVG and DC-NFA tests
under different scenarios over $100$ realisations when $n = 100$
(we only present the results from (S1) when $\sigma \in \{\sqrt{2}, 0.02\sqrt{2}\}$).
Since $\chi_{it}$ ($\eps_{it}$) do not contain any change under (S4)--(S5) ((S1)--(S3)),
testing for a change-point in $\wh{\eps}_{it}^*$ ($\wh{\chi}_{it}^*$) in these scenarios
offers insights into the size behaviour of DC, MAX and AVG tests.

Overall, change-point detection becomes more challenging as $\sigma$ (the size of changes) or 
$\varrho$ (the proportion of the coordinates with the change) decreases,
and also as $\Var(\chi_{it})/\Var(x_{it})$ decreases (with increasing $\phi$) under (S1)--(S3)
and increases (with increasing $\phi^{-1}$) under (S4)--(S5).
In all scenarios, the DC test shows superior performance to the DC-NFA test.
This confirms that the factor analysis prior to change-point analysis is an essential step in markedly improving 
the detection power, as well as in identifying the origins of the change-points; without a factor analysis,
a change-point that appears in either of $\chi_{it}$ or $\eps_{it}$ 
may be `masked' by the presence of the other component and thus escape detection.

The DC and AVG tests generally attain similar powers, while the MAX test tends to have 
considerably lower power in scenarios such as (S2), (S4) and (S5).
An exception is under (S3), where the MAX test attains larger power than the others 
in some settings with decreasing $\varrho$ (Figure \ref{sim:fig:100:chi:three:power}).
It may be explained by the fact that the smaller $\varrho$ is, 
the sparser the change-point becomes cross-sectionally,
which is a setting that favours the approach taken in MAX in aggregating the high-dimensional CUSUM series
(see the discussions in Section 2.1 of \cite{cho2015}).
Between the DC and AVG tests, the former outperforms the latter in the more challenging settings 
when the change is sparse cross-sectionally (with small $\varrho$),
particularly when the change is attributed to the introduction of a single factor to the existing five as in (S3).

Apart from (S5), the origin of the change-point is correctly identified in the sense that 
it is detected only from $\wh\chi_{it}^*$ under (S1)--(S3) or 
$\wh \eps_{it}^*$ under (S4) with power strictly above $\alpha = 0.05$.
In (S5), we observe spillage of the change-point in $\eps_{it}$:
the change-point is detected from both $\wh\chi_{it}^*$ and $\wh\eps_{it}^*$,
particularly when a large proportion of $\eps_{it}$ undergoes the change 
of an increase in the bandwidth $H_i$ in (\ref{eq:sim:single:model:e}), 
and when $\Var(\chi_{it})/\Var(x_{it})$ is small (with large $\phi$),
see Figure \ref{sim:fig:100:vep:two:power}.
This supports the notion of a pervasive change-point being a factor, 
i.e., a significant break that affects the majority of the idiosyncratic components in their second-order structure,
may be viewed as a {\it common} feature.
We also note that due to the relatively large variance of common component, 
the detection power of the DC-NFA test behaves as that of the DC test applied to $\wh\chi^*_{it}$ 
rather than $\wh\eps^*_{it}$ in this scenario.

We present in a supplementary document the results on 
other data generating processes taken from the literature on testing for a single change-point under factor models,
as well as tables and figures summarising the simulation results for (S1)--(S5) with $n = 300$,
along with box plots of the estimated change-points $\heta_1$. 
The performance of all tests under consideration generally improve when $n = 300$.

\begin{figure}[t!]
\centering
\includegraphics[scale=.5]{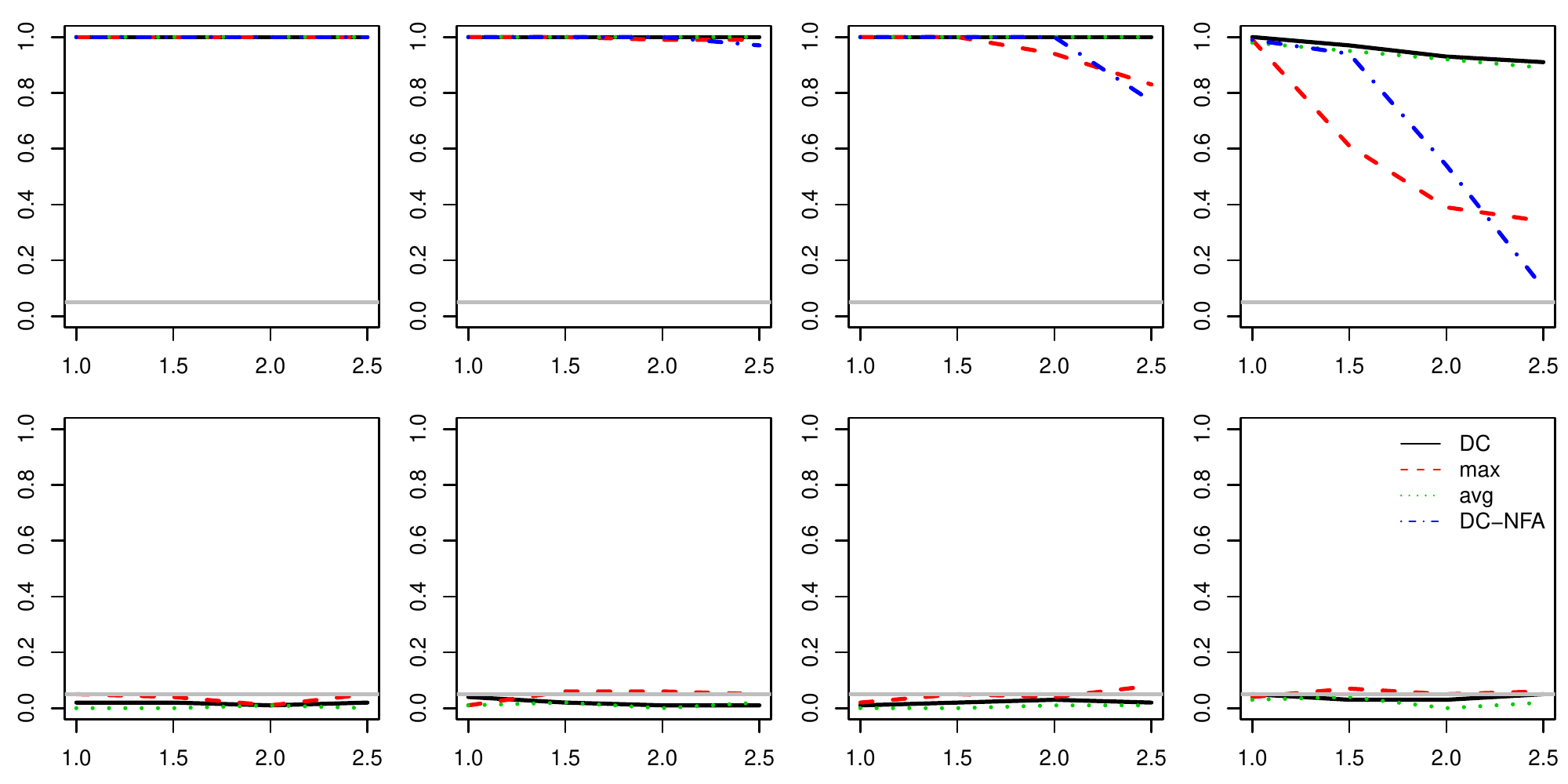}
\caption{\footnotesize(S1) Detection power ($y$-axis) of change-point tests on the common (top) and 
idiosyncratic (bottom) components with varying $\phi \in \{1, 1.5, 2, 2.5\}$ ($x$-axis)
when $n=100$, $T=200$, $\sigma = \sqrt{2}$ and $\varrho \in \{1, 0.75, 0.5, 0.25\}$ (left to right);
horizontal grey lines indicate the significance level $\alpha = 0.05$.}
\label{sim:fig:100:chi:one:power:one}
\end{figure}

%

\begin{figure}[t!]
\centering
\includegraphics[scale=.5]{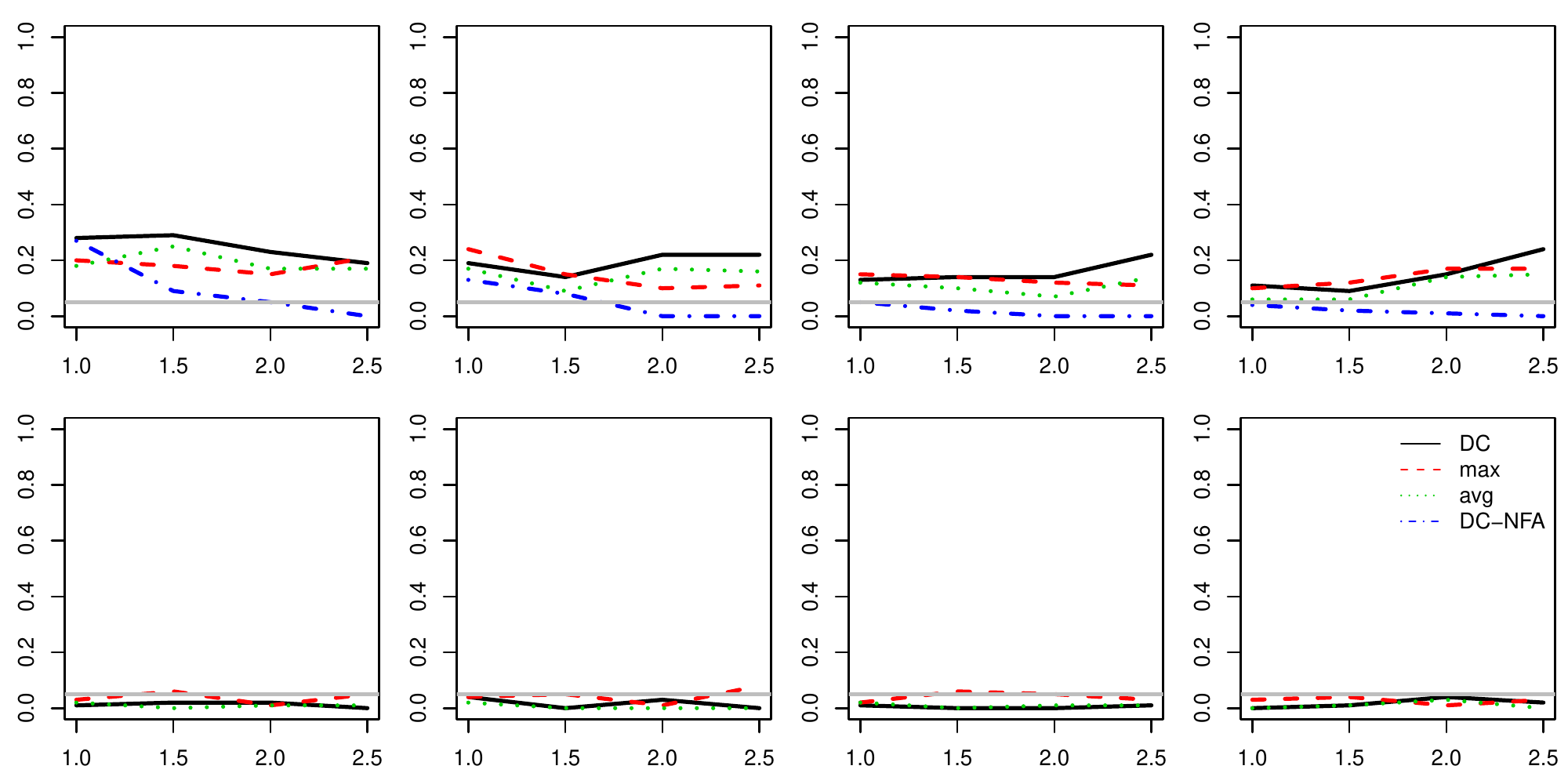}
\caption{\footnotesize(S1) Detection power of change-point tests when $\sigma = 0.25\sqrt{2}$.}
\label{sim:fig:100:chi:one:power:four}
\end{figure}

\begin{figure}[t!]
\centering
\includegraphics[scale=.65]{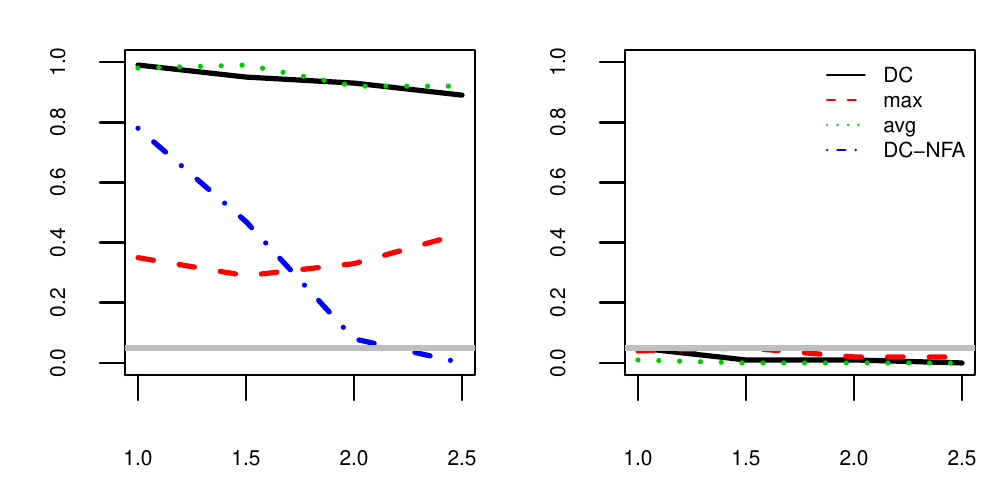}
\caption{\footnotesize(S2) Detection power of change-point tests on the common (left) and idiosyncratic (right) components 
with varying $\phi \in \{1, 1.5, 2, 2.5\}$ ($x$-axis)
when $n=100$ and $T=200$.}
\label{sim:fig:100:chi:two:power}
\end{figure}

\begin{figure}[t!]
\centering
\includegraphics[scale=.5]{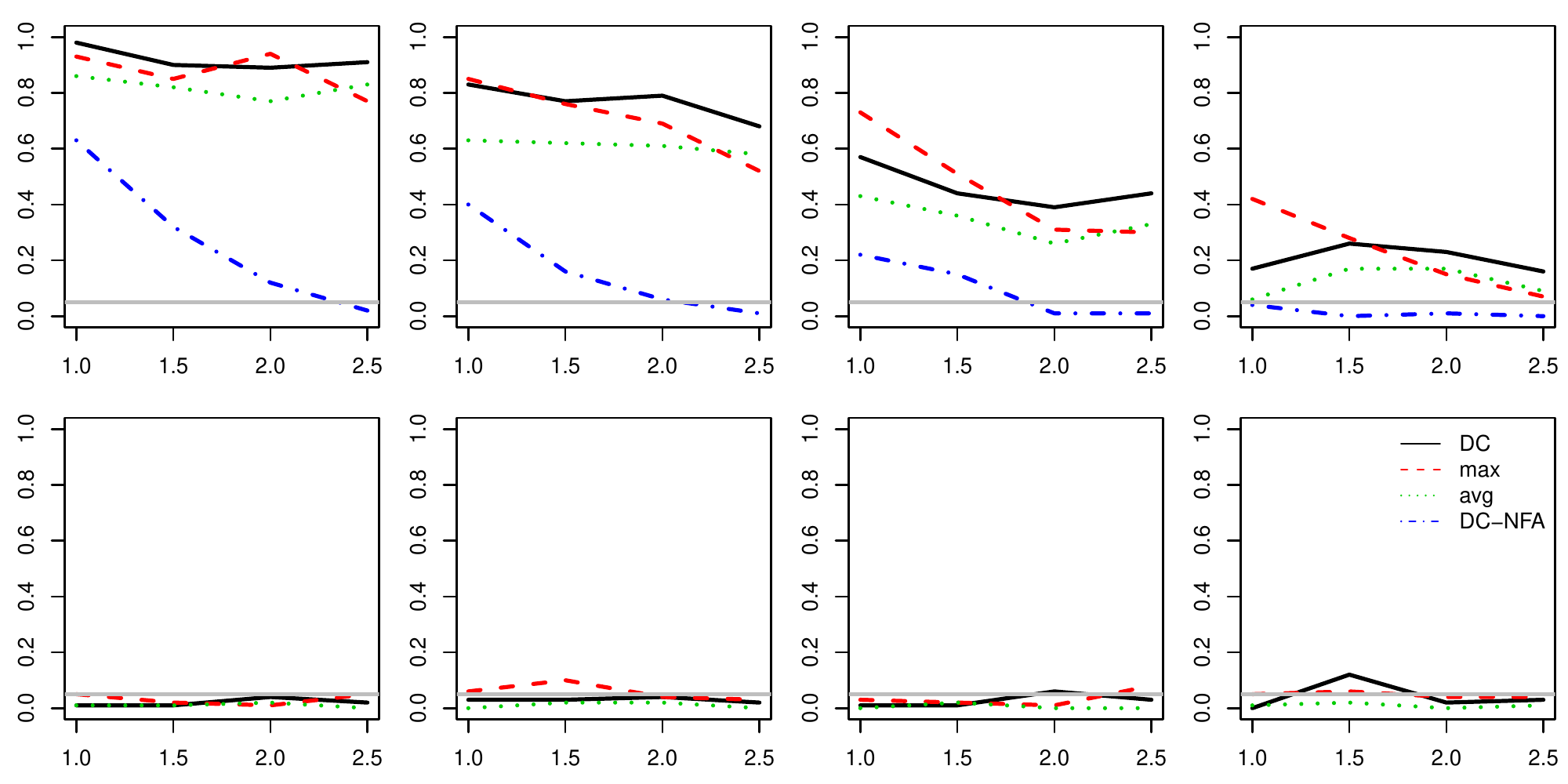}
\caption{\footnotesize(S3) Detection power of change-point tests on the common (top) and idiosyncratic (bottom) components 
with varying $\phi \in \{1, 1.5, 2, 2.5\}$ ($x$-axis)
when $n=100$, $T=200$, $\sigma = \sqrt{2}$ and $\varrho \in \{1, 0.75, 0.5, 0.25\}$ (left to right).}
\label{sim:fig:100:chi:three:power}
\end{figure}

\begin{figure}[t!]
\centering
\includegraphics[scale=.5]{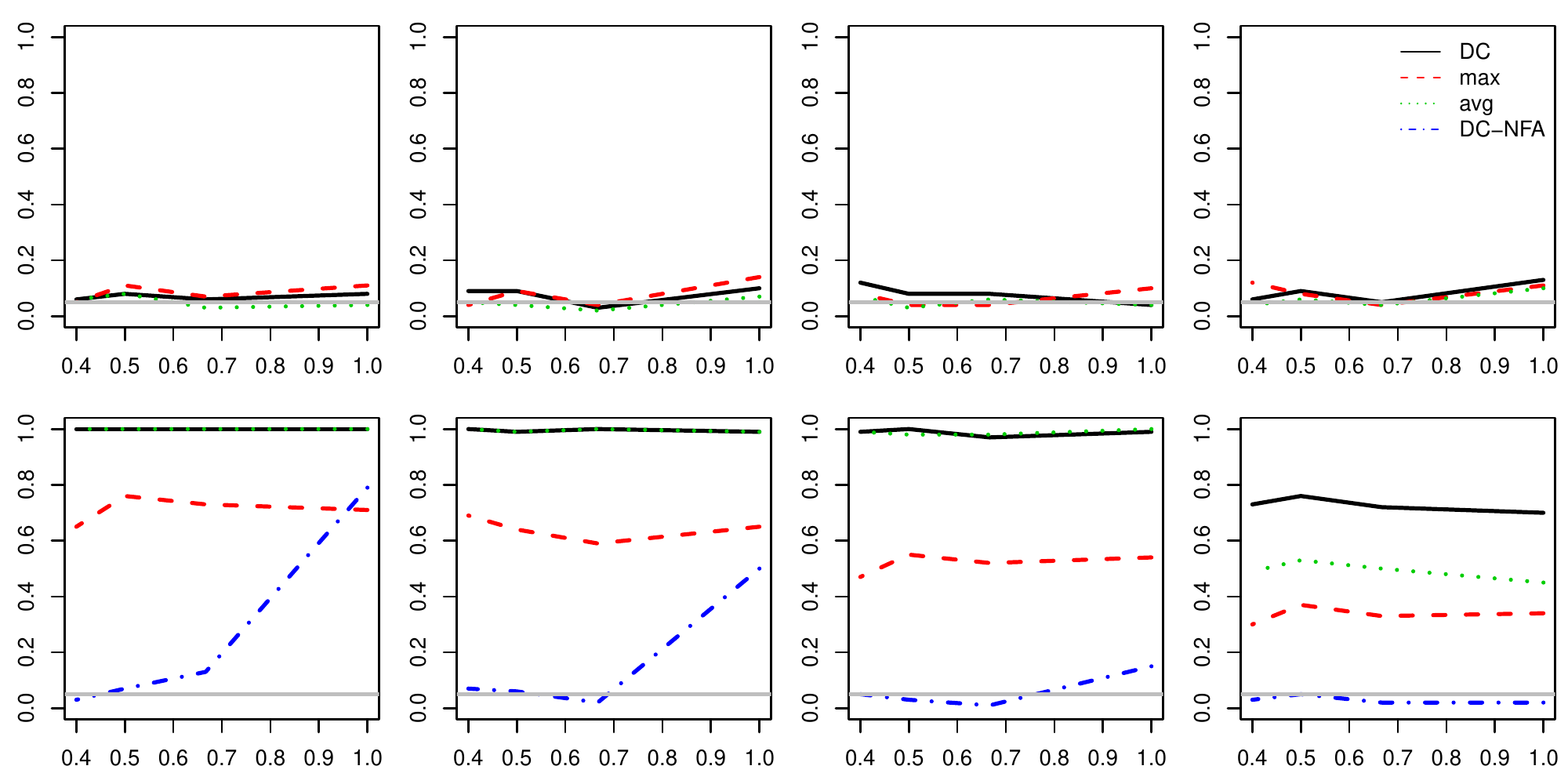}
\caption{\footnotesize(S4) Detection power of change-point tests on the common (top) and idiosyncratic (bottom) components 
with varying $\phi \in \{2.5^{-1}, 2^{-1}, 1.5^{-1}, 1\}$ ($x$-axis)
when $n=100$, $T=200$ and $\varrho \in \{1, 0.75, 0.5, 0.25\}$ (left to right).}
\label{sim:fig:100:vep:one:power}
\end{figure}

\begin{figure}[t!]
\centering
\includegraphics[scale=.5]{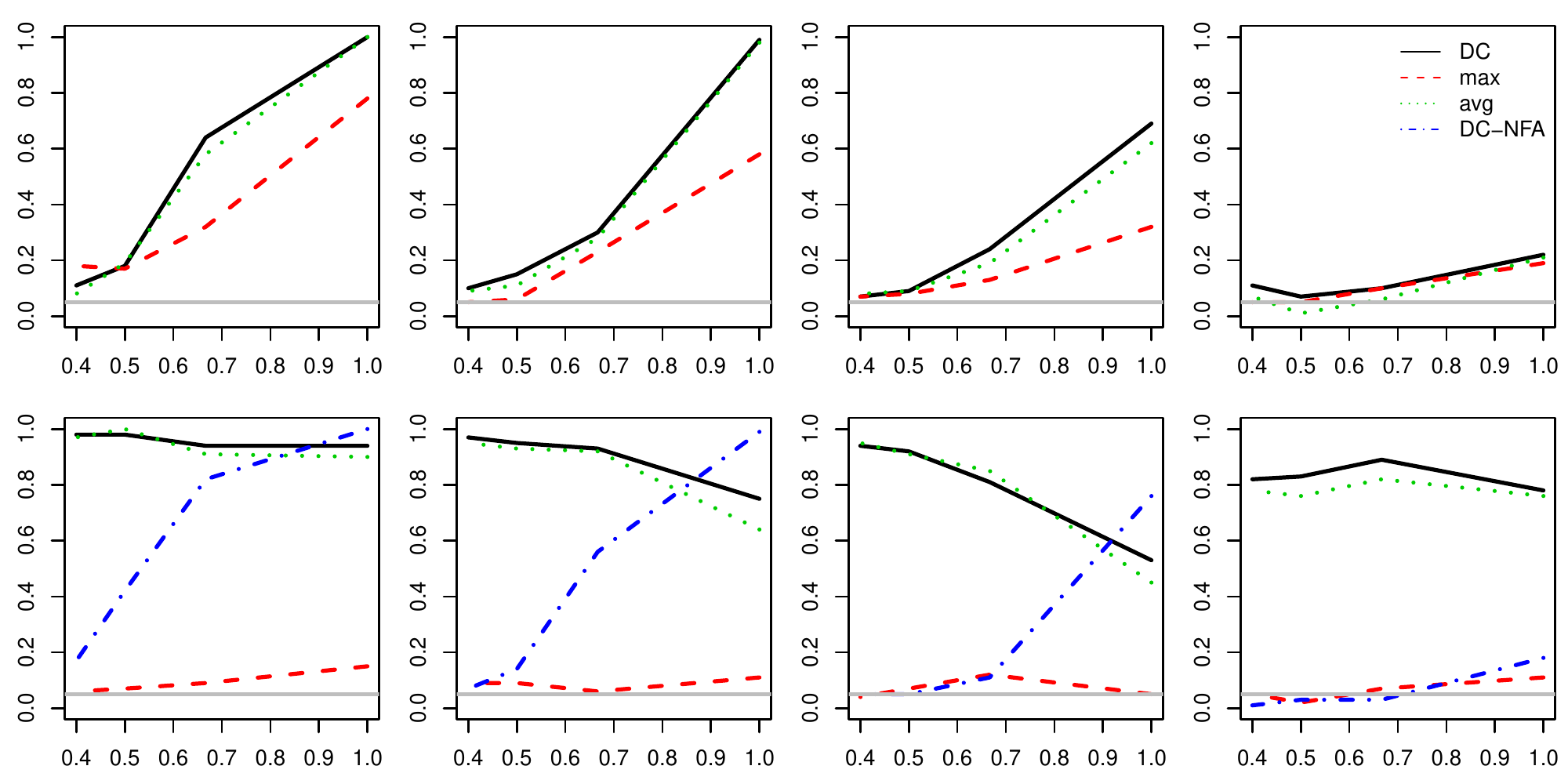}
\caption{\footnotesize (S5) Detection power of change-point tests on the common (top) and idiosyncratic (bottom) components 
with varying $\phi \in \{2.5^{-1}, 2^{-1}, 1.5^{-1}, 1\}$ ($x$-axis)
when $n=100$, $T=200$ and $\varrho \in \{1, 0.75, 0.5, 0.25\}$ (left to right).}
\label{sim:fig:100:vep:two:power}
\end{figure}

\subsection{Multiple change-point scenarios}
\label{sec:sim:multi}

\subsubsection{Model (M1)}
\label{sec:sim:multi:model:M1}

This model is intended to mimic the behaviour of the Standard and Poor's 100 log-return data, 
analysed in Section \ref{sec:sp100}. 
The information criterion of \cite{baing02} returned $q=4$ factors for these data 
and, imposing stability on the loadings, common factors four and idiosyncratic components 
were estimated by means of PCA as $\wh f_{jt}, \, j = 1, \ldots, q$ and $\wh \eps_{it}$.
Estimated factors exhibit little serial correlations, show 
the evidence of multiple change-points in their variance and heavy tails.
The same evidence holds for the estimated idiosyncratic components.
Based on these observations, we adopt the following data generating model:
\begin{align}
& x_{it} = \chi_{it} + \sqrt{\vartheta}\eps_{it} = \sum_{j=1}^q \wh\lambda_{ij} f_{jt} + \sqrt{\vartheta} \eps_{it}, 
\quad \mbox{where} \quad
\vartheta = \phi \cdot \sum_{i=1}^n \wh{\Var}(\chi_{it})\l\{\sum_{i=1}^n \wh{\Var}(\eps_{it})\r\}^{-1}, \nn
\\
& f_{jt} = \l\{\begin{array}{ll}
u_{j, t} & \mbox{for } 1 \le t \le \eta^\chi_1, \\
(\sigma \delta^f_{j, b})^{g_b} \cdot u_{jt} & 
\mbox{for } \eta^\chi_b+1 \le t \le \eta^\chi_{b+1} \mbox{ with } b \ge 1,
\end{array}\r. 
\nonumber
\\
& \eps_{it} = \l\{\begin{array}{ll}
v_{it} & \mbox{for } 1 \le t \le \eta^\eps_1, \\ 
(\sigma \delta^\eps_{i, b})^{g_b} \cdot v_{jt} & 
\mbox{for } \eta^\eps_b+1 \le t \le \eta^\eps_{b+1} \mbox{ with } b \ge 1,
\end{array}\r. 
\nonumber
\end{align}
where $\wh{\Var}(\cdot)$ denotes the sample variance operator,
$u_{jt}, v_{it} \sim_{\iid} t_7$ and $g_b = 1$ for even $b$ and $g_b = -1$ otherwise. 
We use the loadings estimated from the S\&P100 data without capping, denoted by $\wh\lambda_{ij}$,
and $\delta^f_{j, b}$ ($\delta^\eps_{i, b}$) is chosen from the same dataset
by contrasting pairs of intervals with visibly different $\wh{\Var}(\wh f_{jt})$ ($\wh{\Var}(\wh \eps_{it})$).
The change-points in the common components are introduced to $\Var(f_{jt})$
at $\eta^\chi_1 = [T/3]$ and $\eta^\chi_2 = [T/2]$,
and those in the idiosyncratic components to $\Var(\eps_{it})$ 
at $\eta^\eps_1 = [T/2]$ and $\eta^\eps_2 = [4T/5]$.
The magnitude of each change is controlled by $\sigma \in \{1, 0.75, 0.5, 0.25\}$, while
$\phi \in \{1, 1.5, 2, 2.5\}$ determines the ratio between the variance of the common and idiosyncratic components
(the larger $\phi$, the smaller $\Var(\chi_{it})/\Var(x_{it})$ is).  
We fix $T = 500$ and vary $n \in \{100, 300\}$.

We report the performance of our methodology over $100$ realisations when $n = 100$ 
in Figures \ref{sim:fig:m:one:100:one}--\ref{sim:fig:m:one:100:four} 
for the two extreme cases with $\phi \in \{1, 2.5\}$.
We also apply the DCBS algorithm to the WT of true common and idiosyncratic components generated under (M1)
and report the corresponding results,
which serve as a benchmark against which the efficacy of the PCA-based factor analysis
of the proposed methodology is assessed.
Additional simulation results with varying $\phi$ and $n$ 
are reported in the supplementary document.

\begin{figure}[t!]
\centering
\includegraphics[scale=.5]{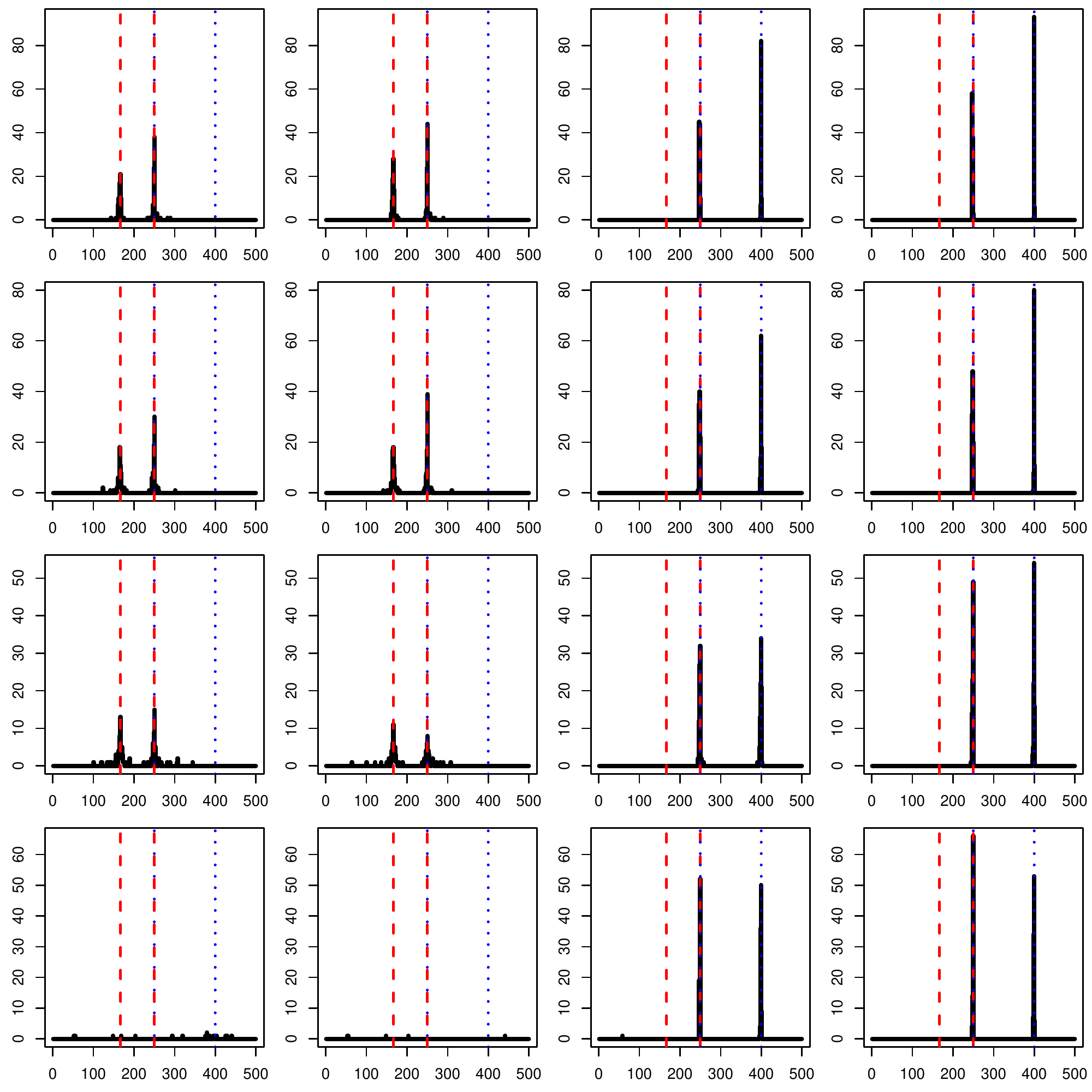}
\caption{\footnotesize (M1) Locations of the change-points estimated from $\wh{\chi}_{it}^*$, $\chi_{it}$ (oracle), $\wh{\eps}_{it}^*$ and $\eps_{it}$ (oracle)
by the DCBS algorithm (left to right) for $\sigma \in \{1, 0.75, 0.5, 0.25\}$ (top to bottom) when $n=100$, $T=500$ and $\phi=1$;
vertical lines indicate the locations of the true change-points $\eta^\chi_b, \, b=1, 2$ (dashed) and $\eta^\eps_b, \, b=1, 2$ (dotted); 
recall that $\eta^\chi_2 = \eta^\eps_1$.}
\label{sim:fig:m:one:100:one}
\end{figure}

\begin{figure}[t!]
\centering
\includegraphics[scale=.5]{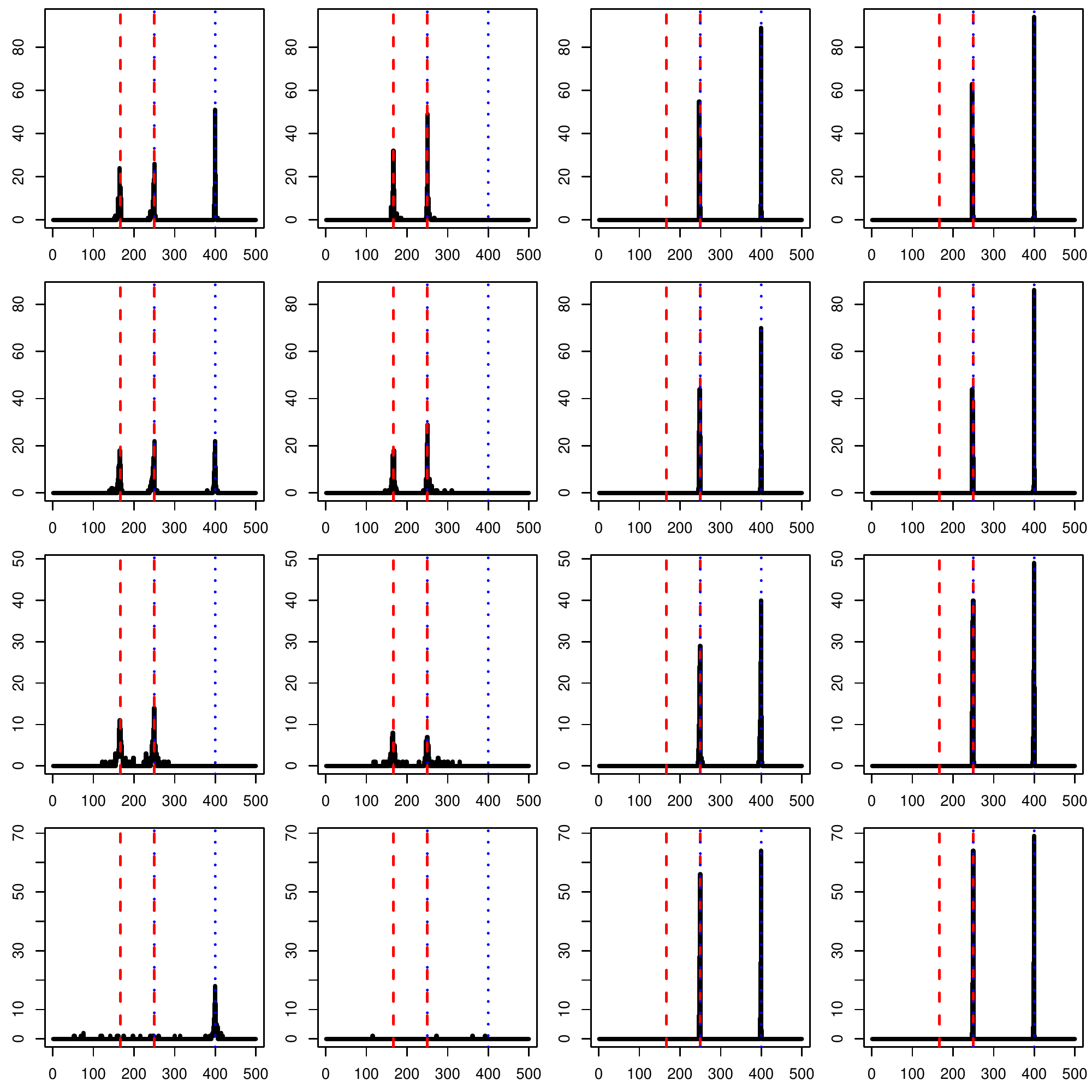}
\caption{\footnotesize (M1) Locations of the estimated change-points when $n=100$, $T=500$ and $\phi=2.5$.}
\label{sim:fig:m:one:100:four}
\end{figure}

The DCBS algorithm detects the two change-points in the idiosyncratic components 
equally well from $\wh \eps_{it}^*$ and $\eps_{it}$,
regardless of the values of $\phi$ and $\sigma$.
On the other hand, detection of $\eta^\chi_b, \, b=1, 2$ is highly variable with respect to these parameters.
When the size of the change is large ($\sigma \in \{1, 0.75\}$),
the panel data generated from transforming $\wh\chi_{it}^*$ serves as as good an input 
to the DCBS algorithm as that generated from transforming $\chi_{it}$
in terms of translating the presence and locations of both change-points.

With decreasing $\Var(\chi_{it})/\Var(x_{it})$, the change-point $\eta^\eps_2$,
which appears only in the idiosyncratic components,
is detected with increasing frequency as a change-point in the common components from $\wh \chi_{it}^*$, 
when (a) $\sigma \ge 0.75$ (the change in $\eps_{it}$ is large 
and thus all three change-points are detected from $\wh\chi_{it}^*$), or 
(b) $\sigma = 0.25$ (the changes in $\chi_{it}$ are ignored and 
a single change-point is detected at $\eta^\eps_2$ from $\wh\chi_{it}^*$).
This phenomenon is in line with the observation made for the model (S5) in Section \ref{sec:sim:single},
on the spillage of change-points in the idiosyncratic components over to the common components:
a significant co-movement in the dependence structure of $\eps_{it}$ 
may be regarded as being pervasive and common, 
and hence is captured as a change in the dependence structure of 
the common components by our proposed methodology. 

Lastly, we note that although we impose normality on the idiosyncratic components
for the theoretical development, these results show that
our methodology works well even when the data exhibits some deviations of normality, such as heavy-tails.

\subsubsection{Model (M2)}

In this model, change-points are introduced as in (S1)--(S4) of Section \ref{sec:sim:single}.
More specifically, 
\beas
\lambda_{ij, t} &=& \l\{\begin{array}{ll}
\lambda_{ij} & \mbox{for } 1 \le t \le \eta^\chi_1 = [T/3], \\
\lambda_{ij}+\Delta_{ij}\bbI(i \in \cS^\chi_1) & \mbox{for } \eta^\chi_1+1 \le t \le T, 
\end{array}
\r.
\\
f_{jt} &=& \l\{\begin{array}{ll}
\rho_{f, j} f_{j, t-1} + u_{jt}, & \mbox{for } 1 \le t \le \eta^\chi_2 = [T/2], \\
-\rho_{f, j} f_{j, t-1} + u_{jt}, & \mbox{for } \eta^\chi_2 = [T/2]+1 \le t \le T,
\end{array}
\r.
\\
\eps_{it} &=& \l\{\begin{array}{ll}
\rho_{\eps,i} \eps_{i, t-1} + v_{it} + \beta_i \sum_{\substack{|k| \le H \\ k \ne 0}} v_{i+k, t} & \mbox{for } 1 \le t \le \eta^\eps_1 = [3T/5], \\
\rho_{\eps,i}\{\bbI(i \notin \cS^\eps_1) - \bbI(i \in \cS^\eps_1)\}\eps_{i, t-1} + v_{it} + \beta_i \sum_{\substack{|k| \le H \\ k \ne 0}} v_{i+k, t} 
& \mbox{for } \eta^\eps_1+1 \le t \le T,
\end{array}
\r.
\\
x_{it} &=& \l\{\begin{array}{ll}
\sum_{j=1}^q \lambda_{ij, t}f_{jt} + \sqrt{\vartheta}\eps_{it} & \mbox{for } 1 \le t \le \eta^\chi_3 = [4T/5], \\
\sum_{j=1}^q \lambda_{ij, t}f_{jt} + \sqrt{2}\lambda_{i, q+1}f_{q+1, t}\bbI(i \in \cS^\chi_3) + \sqrt{\vartheta}\eps_{it} 
& \mbox{for } \eta^\chi_3+1 \le t \le T,
\end{array}
\r.
\eeas
with $q = 5$ and $\lambda_{ij}, u_{jt}, v_{it} \sim_{\iid} \cN(0, 1)$. 
The parameters $\rho_{f, j}$, $\rho_{\eps,i}$, $\beta_i$, $H$ and $\vartheta$ 
are chosen identically as in Section \ref{sec:sim:single}.
In summary, three change-points in the common components are introduced to the loadings ($\eta^\chi_1$),
autocorrelations of the factors ($\eta^\chi_2$) and the number of factors ($\eta^\chi_3$),
while a single change-point in the idiosyncratic components is introduced 
to their serial correlations ($\eta^\eps_1$).
The cardinality of the index sets $\cS^\chi_1$, $\cS^\chi_3$ and $\cS^\eps_1$ 
determines the sparsity of the change-points $\eta^\chi_1$, $\eta^\chi_3$ and $\eta^\eps_1$, respectively.
We randomly draw each index set from $\{1, \ldots, n\}$ 
with its cardinality set at $[\varrho n]$, where $\varrho \in \{1, 0.75, 0.5, 0.25\}$.
Also, in the case of $\eta^\chi_1$, the size of the shifts in the loadings is controlled by the parameter 
$\sigma \in \sqrt{2}\{1, 0.75, 0.5, 0.25\}$, as $\Delta_{ij} \sim_{\iid}\cN(0, \sigma^2)$.
Finally, we set $\phi \in \{1, 1.5, 2, 2.5\}$, a parameter that features in $\vartheta$,
in order to investigate the impact of $\Var(\chi_{it})/\Var(x_{it})$ 
on the performance of the change-point detection methodology. 
We fix the number of observations at $T = 500$ and the dimensionality at $n = 100$. 

We report the performance of our methodology over $100$ realisations 
in Figures \ref{sim:fig:m:two:100:one}--\ref{sim:fig:m:two:100:four} 
for the two extreme cases when $\phi \in \{1, 2.5\}$ with $\sigma = 0.75\sqrt{2}$.
Also, we include the results from applying the DCBS algorithm to the WT of 
true common and idiosyncratic components generated under (M2) as a benchmark case.
Additional simulation results with varying $\phi$ and $\sigma$ are reported in the supplementary document.

\begin{figure}[t!]
\centering
\includegraphics[scale=.5]{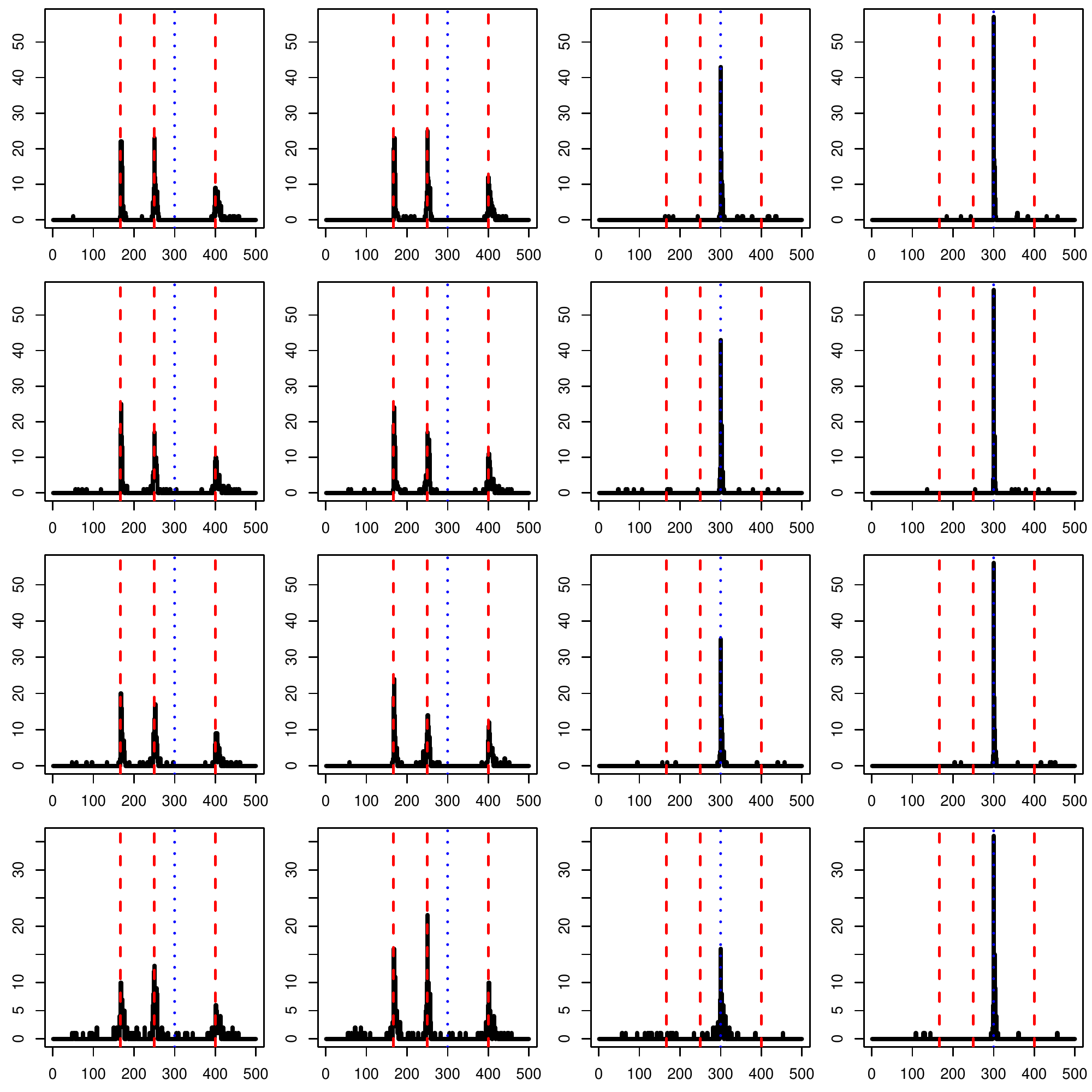}
\caption{\footnotesize(M2) Locations of the change-points estimated from $\wh{\chi}_{it}^*$, $\chi_{it}$ (oracle), $\wh{\eps}_{it}^*$ and $\eps_{it}$ (oracle)
by the DCBS algorithm (left to right) for $\varrho \in \{1, 0.75, 0.5, 0.25\}$ (top to bottom) 
when $n=100$, $T=500$, $\sigma = 0.75\sqrt{2}$ and $\phi=1$;
vertical lines indicate the locations of the true change-points $\eta^\chi_b, \, b=1, 2, 3$ (dashed) and $\eta^\eps_1$ (dotted).}
\label{sim:fig:m:two:100:one}
\end{figure}

\begin{figure}[t!]
\centering
\includegraphics[scale=.5]{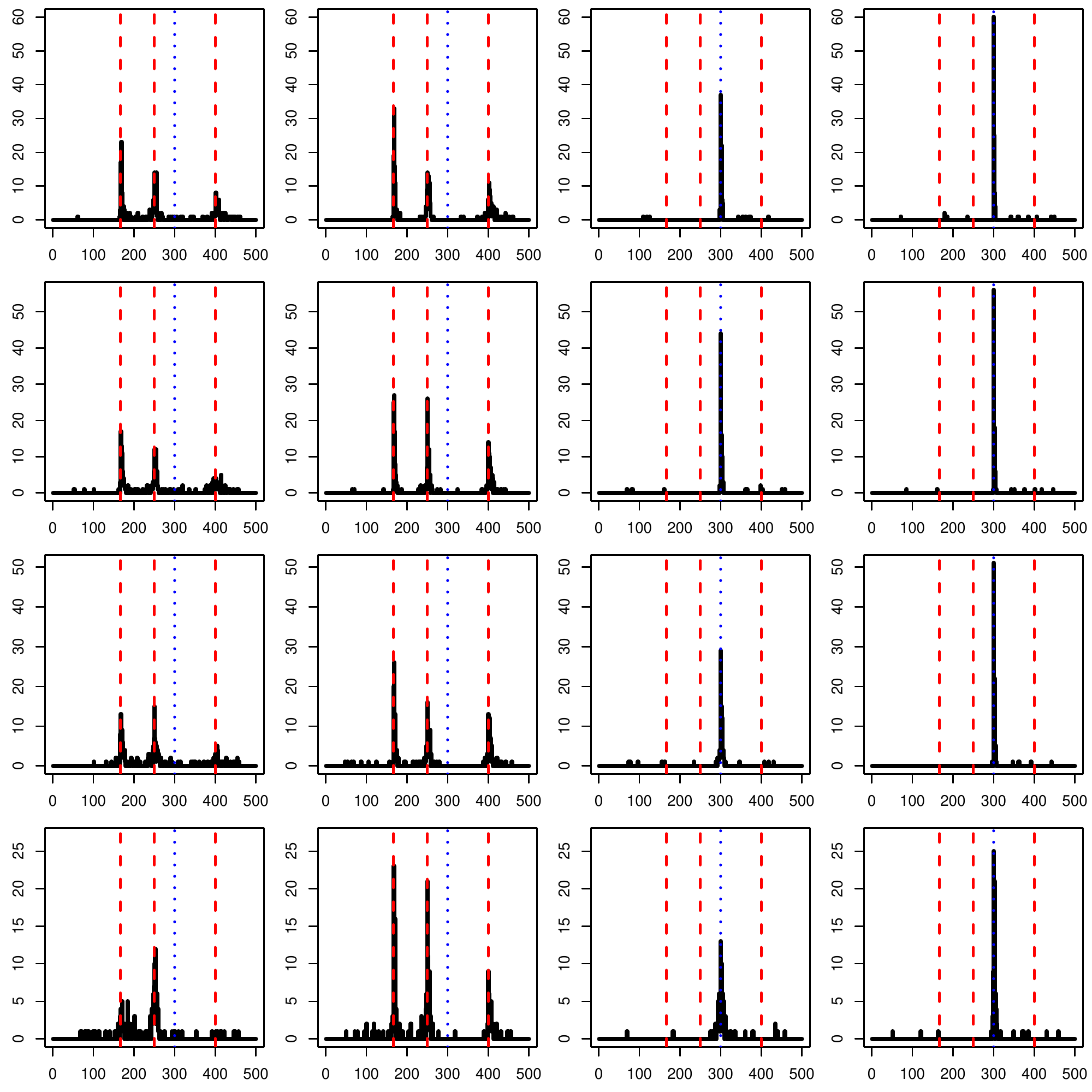}
\caption{\footnotesize(M2) Locations of the estimated change-points when $n=100$, $T=500$, $\sigma = 0.75\sqrt{2}$ and $\phi=2.5$.}
\label{sim:fig:m:two:100:four}
\end{figure}

In accordance with the observations made under single change-point scenarios (Section \ref{sec:sim:single}), 
detecting change-points in the common components, 
$\eta^\chi_1$ and $\eta^\chi_3$ in particular, becomes more challenging
as they grow sparse cross-sectionally (with decreasing $\varrho$)
and as $\Var(\chi_{it})/\Var(x_{it})$ decreases (with increasing $\phi$).
For the settings considered here, the DCBS algorithm applied to WT of $\wh\eps^*_{it}$
performs as well as that applied to the WT of the true $\eps_{it}$,
regardless of the model parameters $\phi$ and $\varrho$.
Not surprisingly, as the break in the loadings grows weaker with decreasing $\sigma$, 
the detection rate of $\eta^\chi_1$ deteriorates, especially when $\sigma = 0.25\sqrt{2}$.
Comparing the performance of the DCBS algorithm applied to $\wh\chi^*_{it}$ and $\chi_{it}$,
the gap is not so striking in the detection of the change-point $\eta^\chi_2$
in the autocorrelations of the factors. 
As for $\eta^\chi_1$ and $\eta^\chi_3$, provided that the breaks in the loadings
and the number of factors are moderately dense ($\varrho \ge 0.5$),
and the magnitude of the former reasonably large ($\sigma \ge 0.5\sqrt{2}$),
we can expect the common components estimated via PCA to recover the both change-points 
as those in their second-order structure, for a range of $\phi$.

\section{Real data analysis}
\label{sec:real}

\subsection{S\&P100 stock returns}
\label{sec:sp100}

In this section, we perform change-point analysis on
log returns of the daily closing values of the stocks composing the Standard and Poor's 100 (S\&P100) index,
observed between $4$ January $2000$ and $10$ August $2016$ ($n = 88$ and $T = 4177$). 
The dataset is available from Yahoo Finance.
With $\wh{r} = 4$ returned by the criterion in \eqref{eq:ic}, 
the set of factor number candidates is chosen as $\cR = \{4, \ldots, 20\}$.~Also, a constraint is imposed so that 
no two change-points are detected within the period of $20$ working days. 
The maximum number of change-points for the common components is attained 
with $k^* = 8$ ($\wh{B}_\chi = 15$), and 
we obtain $\wh{\cB}^\chi(k) \subset \wh{\cB}^\chi(k^*)$ for all $k \in \cR\setminus\{8\}$. 
Table \ref{table:sp:100} reports the change-points estimated from $\wh\chi_{it}^*$ and $\wh\eps_{it}^*$,
as well as their order of detection (represented by the level index $u$ 
of the nodes corresponding to the estimated change-points in the DCBS algorithm),
and Figure \ref{fig:sp100:ex} plots two representative daily log return series 
from the dataset along with $\wh\eta^\chi_b$.

Most of the change-points we find are in a neighbourhood of the events 
that characterise the financial market (some of which are not exactly dated). In particular:
\begin{enumerate}
\item the burst of the dot-com bubble which took place between March 2000 to October 2002;
\item the start of the second Iraq war in late March 2003;
\item Lehman Brothers bankruptcy in September 2008;
\item the first and second stages of the Greek and EU sovereign debt crisis in the summers of 2011 and 2015, respectively.
\end{enumerate}

\begin{table}[t!]
\caption{\footnotesize S\&P100 data: change-points estimated from the common and idiosyncratic components and their order of detection.}
\label{table:sp:100}
\centering
\scriptsize{
\begin{adjustwidth}{-.25cm}{-.25cm}
\begin{tabular}{c | cccccccc}
\hline\hline
\multirow{2}{*}{$\wh{\eta}^\chi_b$} 
& 43 & 89  & 197  & 331 & 613 & 656 & 816 & 1895
\\
& 06/03/2000 & 10/05/2000 & 12/10/2000 & 26/04/2001 & 14/06/2002 & 15/08/2002 & 04/04/2003 & 19/07/2007
\\
order & 3 & 2  & 5  & 4 & 3 & 4  & 1 & 2 \\
\cline{1-9}
\multirow{2}{*}{$\wh{\eta}^\chi_b$} & 2186 & 2249 & 2357 & 2397 & 2914 & 3020 & 3913
\\
& 12/09/2008 & 11/12/2008 & 19/05/2009 & 16/07/2009 & 03/08/2011 & 04/01/2012 & 24/07/2015
\\
order & 4 & 5 & 6 & 3 & 5 & 4 & 5
\\ \hline\hline
\multirow{2}{*}{$\wh{\eta}^\eps_b$} & 85 & 181 & 206 & 268 & 336 & 631 & 652 & 735
\\
& 04/05/2000 & 20/09/2000 & 25/10/2000 & 25/01/2001 & 03/05/2001 & 11/07/2002 & 09/08/2002 & 06/12/2002
 \\
order & 3 & 4 & 5 & 2 & 4 & 3 & 4 & 1
\\
 \cline{1-9}
\multirow{2}{*}{$\wh{\eta}^\eps_b$} & 914 & 1957 & 2184 & 2210 & 2253 & 2354 & 2537 & 3911
\\
& 25/08/2003 & 16/10/2007 & 10/09/2008 & 16/10/2008 & 17/12/2008 & 14/05/2009 & 04/02/2010 & 22/07/2015
\\
order & 4 & 5 & 3 & 5 & 6 & 4 & 2 & 3
\\
 \hline\hline   
\end{tabular}
\end{adjustwidth}
}
\end{table}

\begin{figure}[t!]
\centering
\includegraphics[scale=.6]{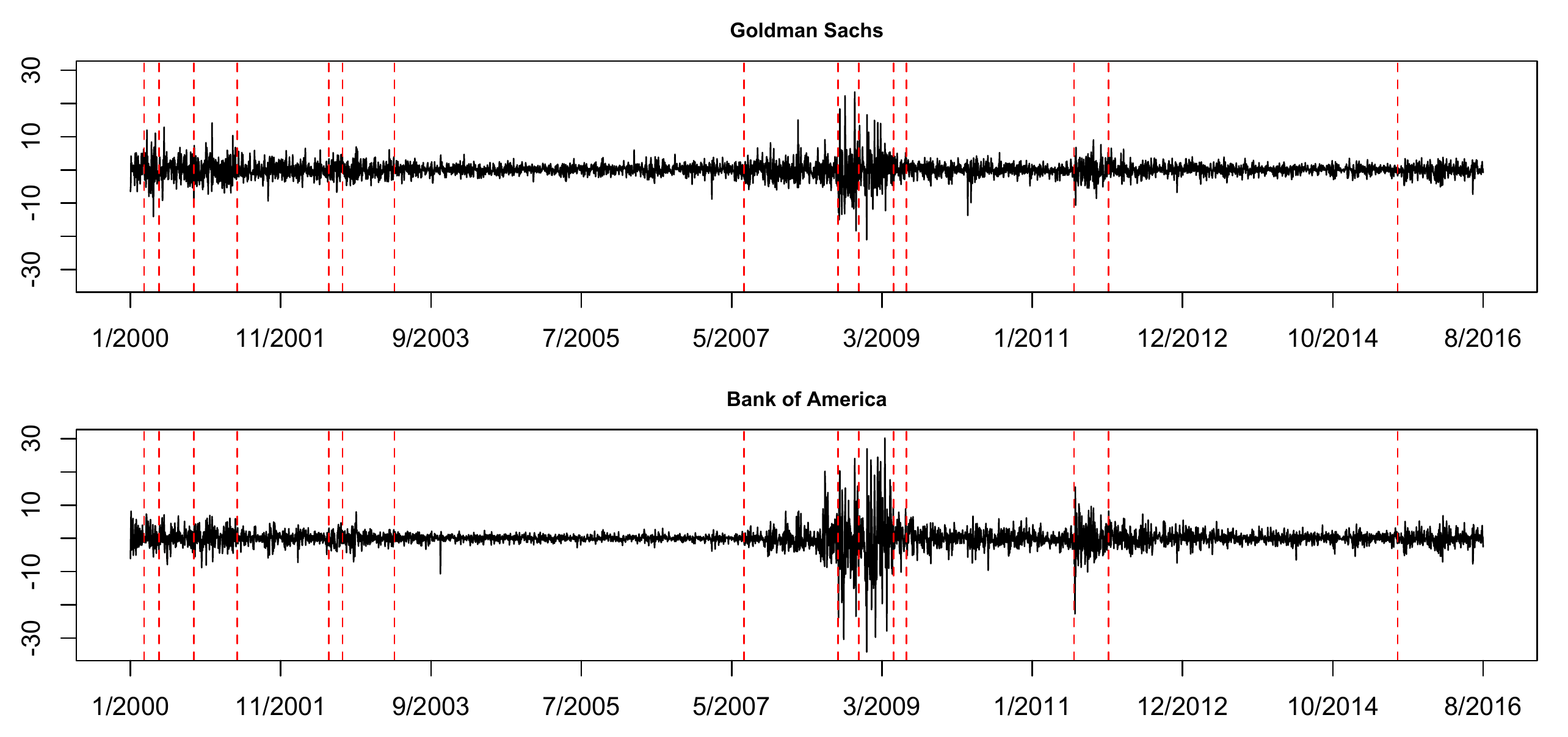}
\caption{\footnotesize S\&P100 data: daily log returns of Goldman Sachs (top) and Bank of America (bottom)
between $4$ January $2000$ and $10$ August $2016$, along with $\wh\eta^\chi_b, \, b = 1, \ldots, \wh{B}_\chi$
(vertical broken lines).}
\label{fig:sp100:ex}
\end{figure}

\begin{figure}[t!]
\centering
\includegraphics[scale=.5]{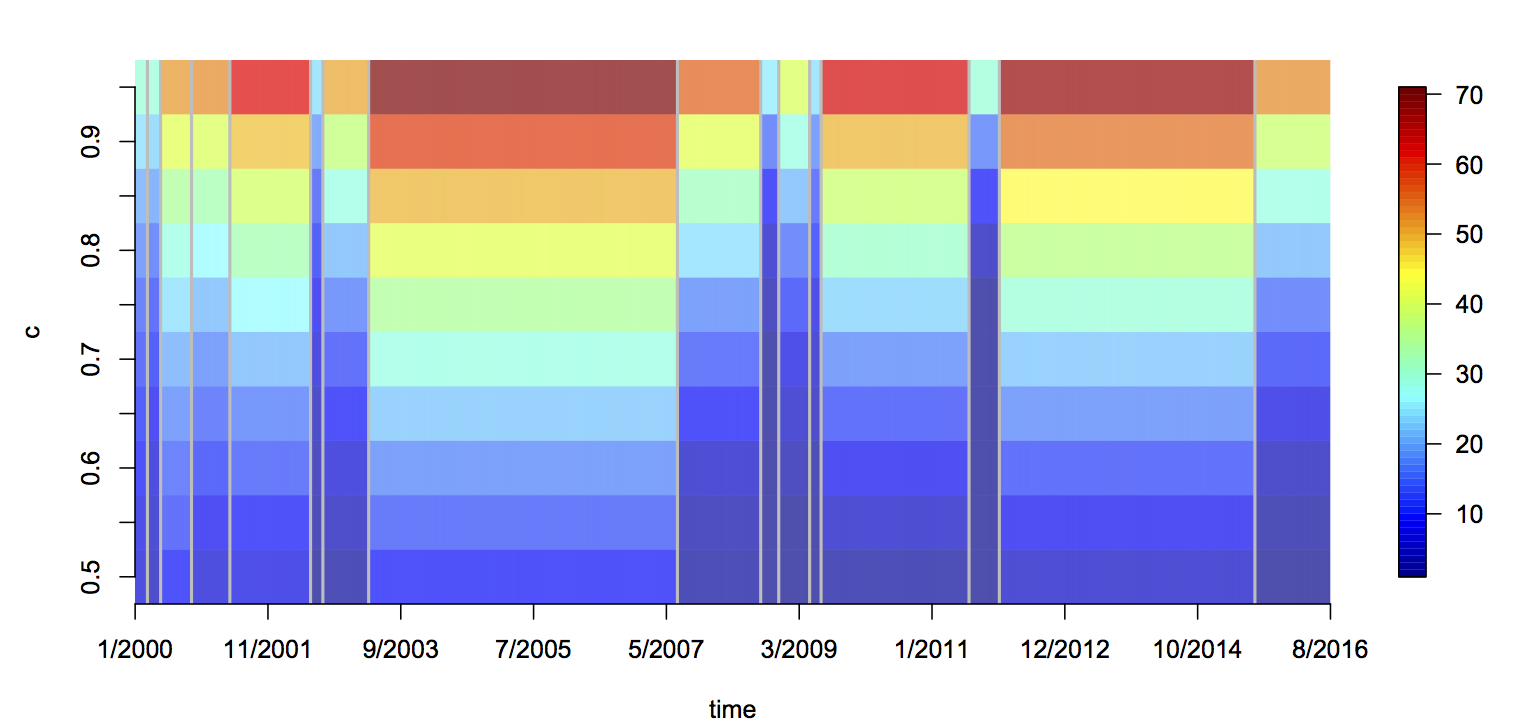}
\caption{\footnotesize S\&P100 data: $k_b(c)$ for each $[\wh\eta^\chi_b+1, \wh\eta^\chi_{b+1}]$, $b = 1, \ldots, \wh{B}_\chi$
according to the colour legend in the right;
$x$-axis denotes the time and $y$-axis denotes $c \in \{0.5, 0.55, \ldots, 0.95\}$.}
\label{fig:sp100:heat}
\end{figure}

By way of investigating the validity of $\wh\eta^\chi_b, \, b = 1, \ldots, \wh{B}_\chi$, 
we computed the following quantities over each segment defined by two neighbouring change-points,
$[\wh\eta^\chi_b+1, \wh\eta^\chi_{b+1}]$:
\beas
k_b(c) = \min\Big\{1 \le k \le q_b:\, \frac{\sum_{j = 1}^k \wh\mu^b_{x, j}}{\sum_{j = 1}^{q_b} \wh\mu^b_{x, j}} > c\Big\}
\mbox{ for some } c \in (0, 1),
\eeas
where $\wh\mu^b_{x, j}$ denotes the $j$th largest eigenvalue of $\wh{\bm\Gamma}^b_x$,
and $q_b = (n-1) \wedge (\wh\eta^\chi_{b+1}-\wh\eta^\chi_b - 1)$.
In short, $k_b(c)$ is the minimum number of eigenvalues required so that
the proportion of the variance of $\mbf x_t, \, t \in [\wh\eta^\chi_b+1, \wh\eta^\chi_{b+1}]$
accounted for by $\wh\mu^b_{x, j}, \, j = 1, \ldots, k_b(c)$ exceeds a given $c$.
Varying $c \in \{0.5, 0.55, \ldots, 0.95\}$, we plot $k_b(c)$ over the $\wh{B}_\chi+1$ segments in 
Figure \ref{fig:sp100:heat}.
We observe that over long stretches of stationarity, greater numbers of eigenvalues are required 
to account for the same proportion of variance, compared to shorter intervals which all tend to be characterised by high volatility. 
This finding is in accordance with the observation made in \cite{li2016},
that a small number of factors drive the majority of the cross-sectional correlations during
the periods of high volatility.

\subsection{US macroeconomic data}
\label{sec:macro}

We analyse the US representative macroeconomic dataset of $101$ time series, 
collected quarterly between $1960$:Q2 and $2012$:Q3 ($T = 210$), for change-points.
Similar datasets have been analysed frequently in the factor model literature, 
for example, in \cite{stockwatson02JASA}. 
The dataset is available from the St.~Louis Federal Reserve Bank website ({\url{https://fred.stlouisfed.org/}}). 
We impose a restriction in applying the DCBS algorithm so that 
no two change-points are detected within three quarters in analysing the quarterly observations.
Applying the information criterion of \cite{baing02},
$\wh{r} = 8$ is returned so we choose $\cR = \{8, \ldots, 20\}$.
All $k \ge 14$ lead to the identical change-point estimates for the common component 
with $\wh{B}_\chi = 5$ so that we select $k^* = 20$.
We also obtain $\wh{\cB}^\chi(k) \subset \wh{\cB}^\chi(k^*)$ for all $k < 14$.
Table \ref{table:macro:light} reports the change-points estimated from $\wh\chi^*_{it}$ and $\wh\eps^*_{it}$,
and we plot two representative series from the dataset, 
gross domestic product (GDP) growth rate and consumer price inflation (CPI), 
along with $\wh\eta^\chi_b$ in Figure \ref{fig:macro:ex}.

According to the change-points detected, the observations are divided into periods 
corresponding to different economic regimes characterised by high or low volatility. 
In particular, we highlight the following regimes 
(recessions are dated by the National Bureau of Economic Research, {\url {http://www.nber.org/cycles.html}}):
\ben
\item early 1970s to early 1980s marked by two major economic recessions,
which were characterised by high inflation due to the oil crisis and the level of interest rates;
\item the so-called Great Moderation period which, according to our analysis, started in late 1983 
and was characterised by low volatility of most economic indicators 
as a result of the implementation of new monetary policies, see also \cite{stockwatson03};
\item the period of the financial crisis that took place between 2007 and 2009 
and corresponds to the most recent (as of 2018) economic recession, 
with record low levels of GDP growth and inflation and associated high volatility;
\item the post-2009 years corresponding to the slow recovery of the US economy.
\een

\begin{table}[t!]
\caption{\footnotesize Macroeconomic data: change-points estimated from the common and idiosyncratic components
and their order of detection.}
\label{table:macro:light}
\centering
\scriptsize{
\begin{tabular}{c | cccccc}
\hline\hline
\multirow{2}{*}{$\wh{\eta}^\chi_b$} 
& 48 & 60 & 95 & 190 & 196
\\
&1972:Q1 & 1975:Q1 & 1983:Q4 & 2007:Q3 &2009:Q1
\\
order & 2 & 3 & 1 & 2 & 3 
\\ \hline\hline
\multirow{2}{*}{$\wh{\eta}^\eps_b$} 
& 49 & 58 & 92 & 189 & 194 & 200
\\
& 1972:Q2 & 1974:Q3 & 1983:Q1 &2007:Q2 & 2008:Q3 &2010:Q4
\\
order & 2 & 3 & 1 & 2 & 3 & 4
\\
 \hline\hline   
\end{tabular}}
\end{table}

\begin{figure}[t!]
\centering
\includegraphics[scale=.6]{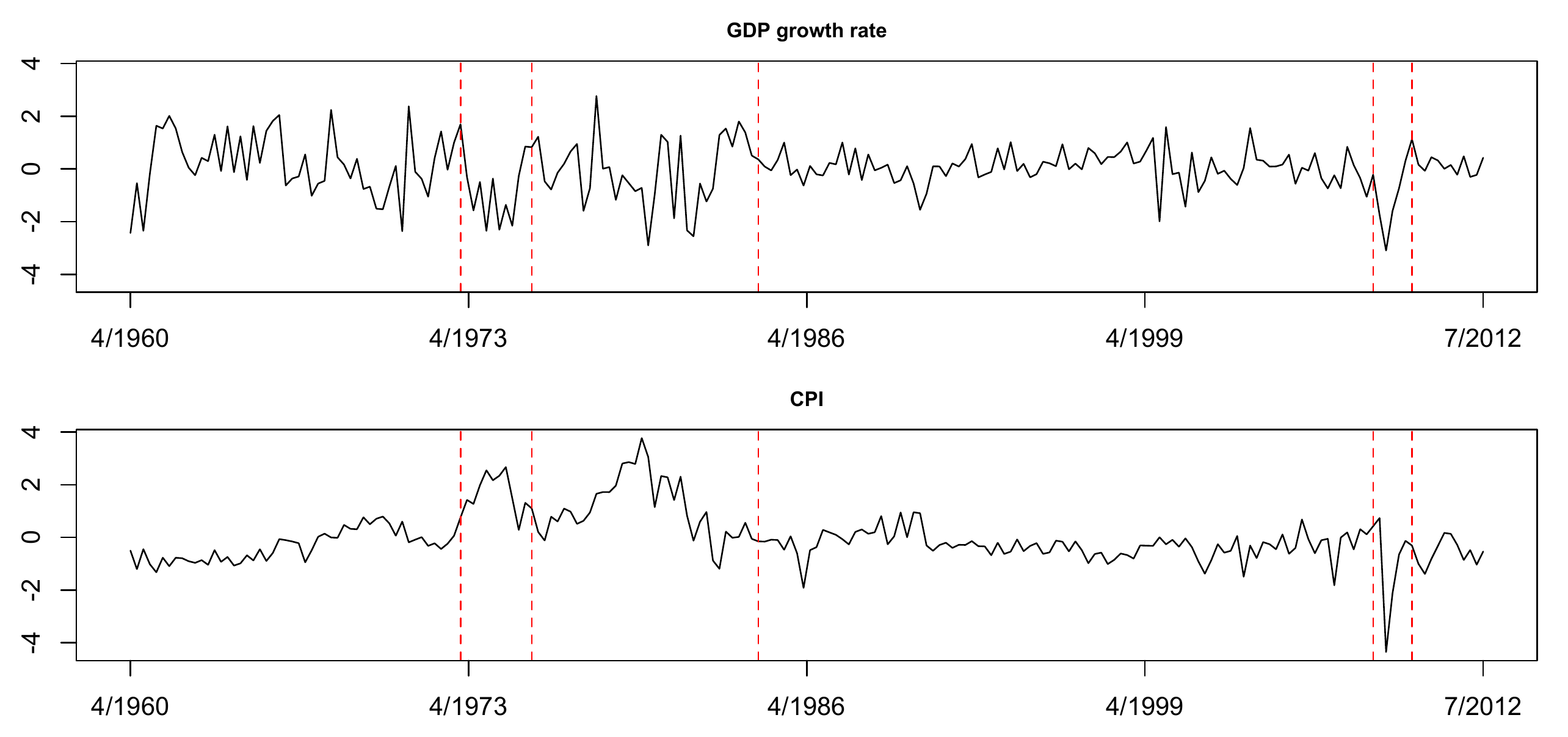}
\caption{\footnotesize Macroeconomic data: GDP growth rate (top) and CPI (bottom) 
between 1960:Q2 and 2012:Q3, along with $\wh\eta^\chi_b, \, b = 1, \ldots, \wh{B}_\chi$
(vertical broken lines).}
\label{fig:macro:ex}
\end{figure}

\begin{figure}[t!]
\centering
\includegraphics[scale=.5]{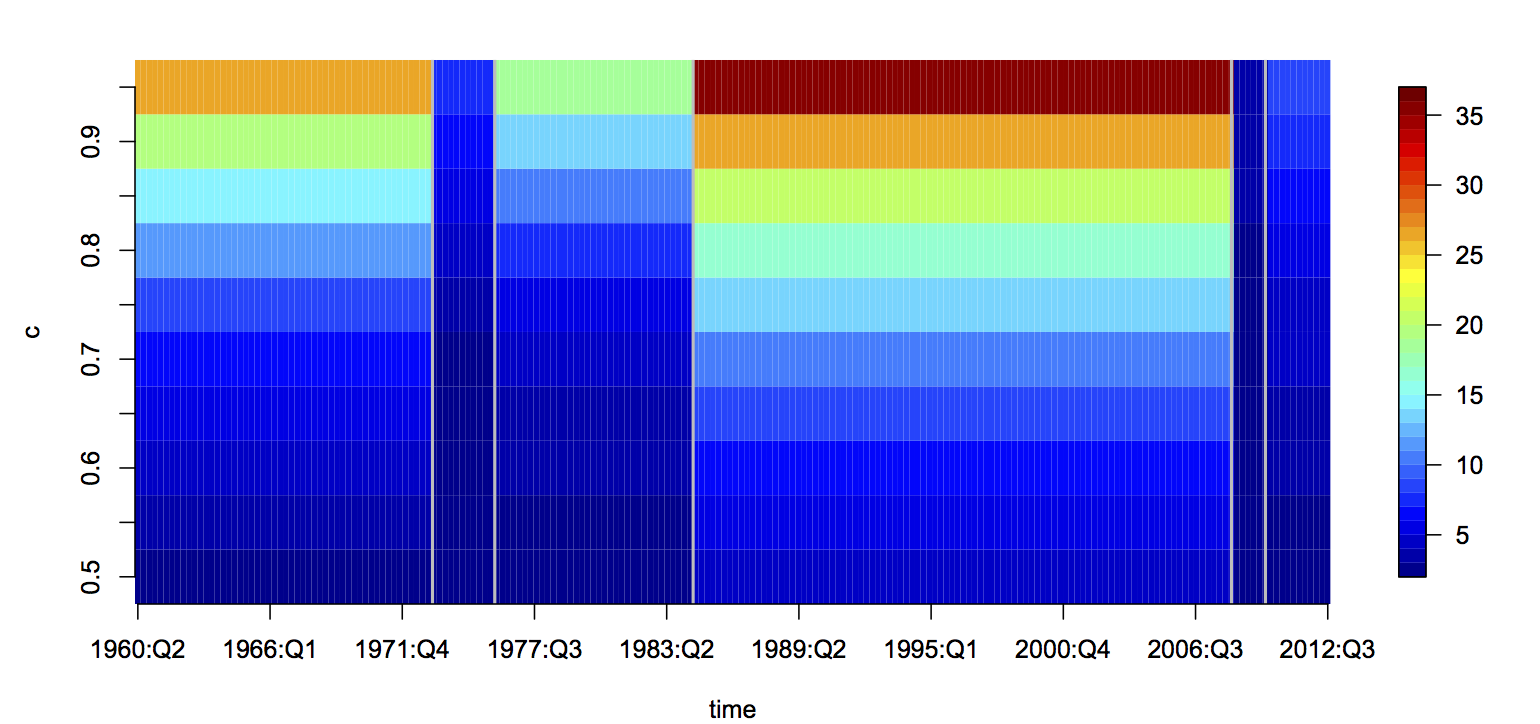}
\caption{\footnotesize Macroeconomic data: $k_b(c)$ for each $[\wh\eta^\chi_b+1, \wh\eta^\chi_{b+1}]$, $b = 1, \ldots, \wh{B}_\chi$
according to the colour legend in the right;
$x$-axis denotes the time and $y$-axis denotes $c \in \{0.5, 0.55, \ldots, 0.95\}$.}
\label{fig:macro:heat}
\end{figure}

\cite{cheng2016} performed change-point analysis on a similar set of macroeconomic and financial indicators,
observed monthly rather than quarterly, over a shorter span of period between January $1985$ and January $2013$.
Their focus was on verifying the existence of a structural break corresponding to the beginning of the financial crisis, 
which they estimated to be in December $2007$, 
a date which is close to our $\wh\eta^\chi_4$ (considering that we analyse quarterly observations).
As in Section \ref{sec:sp100}, we perform a post-change-point analysis 
by plotting $k_b(c)$ computed on each segment defined by $\wh\eta^\chi_b$, 
see Figure \ref{fig:macro:heat}, where we make similar observations about t
he contrast between the number of factors required 
over long stretches of stationarity and short intervals of volatility.

\section{Conclusions}

We have provided the first comprehensive treatment of 
high-dimensional time series factor models with multiple change-points 
in their second-order structure. 
We have proposed an estimation approach based on the capped PCA 
and wavelet transformations, 
first separating common and idiosyncratic components and then 
performing multiple change-point analysis on the levels of the transformed data. 
The number and locations of change-points are estimated consistently as $n, T \to \infty$
for both the common and idiosyncratic components. 
Our methodology is robust to the over-specification of the number of factors which, 
in the presence of multiple change-points, may not be accurately estimated 
by standard methods. 
Post change-point detection, we have proved the consistency of 
the common components estimated via PCA on each stationary segment.

An extensive numerical study has shown the good practical performance 
of our method and demonstrated that factor analysis prior to change-point detection 
improves the detectability of change-points. 
Two applications involving economic data have shown that we are able to pick up 
most of the structural changes in the economy, 
such as the recent financial crisis (2008--2009), 
economic recessions (mid 1970s and late 2000s)
or changes in the monetary policy regime 
(the start of the so-called Great Moderation in early 1908s). 
Our method is implemented in the R package  {\tt factorcpt}, available from CRAN.

\subsection*{Acknowledgements}

Haeran Cho's work was supported by the Engineering and Physical Sciences Research Council grant no. EP/N024435/1.
Piotr Fryzlewicz's work was supported by the Engineering and Physical Sciences Research Council grant no. EP/L014246/1.
We thank the Editor, Associate Editor and two referees for very helpful comments which
led to a substantial improvement of this paper.

\clearpage

\clearpage
\appendix

\section{An example of piecewise stationary factor model}
\label{sec:ex}

We illustrate with an example that the piecewise stationary factor model
\eqref{ps:fm} under Assumption \ref{assum:one} admits all possible change-point scenarios that arise 
under factor modelling for the common components.

Let $g_{1t}$ be an AR(1) process with white noise innovations $u_t \sim (0, 1)$, 
and let $g_{2t}$ be a white noise process.
Suppose that the common components, $\chi_{it}$, for all $i = 1, \ldots, n$, are written as 
\begin{align}
\chi_{it} = \l\{\begin{array}{ll}
a_i g_{1t} = a_i(\alpha g_{1, t-1} + u_t) & \mbox{for } \eta^\chi_0+1 = 1 \le t \le \eta^\chi_1, \\
a_i g_{1t} + b_i g_{2t} = a_i(\alpha g_{1, t-1} + u_t) + b_i g_{2t} & \mbox{for } \eta^\chi_1+1 \le t \le \eta^\chi_2, \\
c_i g_{1t} + d_i g_{2t} = c_i(\alpha g_{1, t-1} + u_t) + d_i g_{2t} & \mbox{for } \eta^\chi_2+1 \le t \le \eta^\chi_3, \\
c_i g_{1t} + d_i g_{2t} = c_i(\beta g_{1, t-1} + u_t) + d_i g_{2t} & \mbox{for } \eta^\chi_3+1 \le t \le \eta^\chi_4 = T, \\
\end{array}\r.
\label{eq:ex}
\end{align}
for some $|\alpha|, |\beta| < 1$.
In \eqref{eq:ex}, $\chi_{it}$ begins with a single factor $g_{1t}$ for $t \le \eta^\chi_1$, 
to which three change-points are introduced:
\begin{inparaenum}
\item [(a)] appearance of a new factor $g_{2t}$ at $t = \eta^\chi_1+1$,
\item [(b)] changes in both loadings at $t = \eta^\chi_2+1$, and
\item [(c)] changes in the autocorrelation structure of $g_{1t}$ at $t = \eta^\chi_3+1$.
\end{inparaenum}
We can re-write \eqref{eq:ex} into the piecewise stationary factor model in \eqref{ps:fm} 
with constant loadings $\bm\lambda_i = (a_i, b_i, c_i, d_i)^\top$, 
and a factor vector $\mbf f_t$ of dimension $r=4$ and defined over the four segments 
$[\eta^\chi_b+1, \eta^\chi_{b+1}], \, b = 0, \ldots, 3$ as
\bea
\mbf f_t = \l\{\begin{array}{ll}
(g_{1t}, 0, 0 ,0)^\top = (\alpha g_{1, t-1} + u_t, 0, 0 ,0)^\top & \mbox{for } 1 \le t \le \eta^\chi_1, \\ 
(g_{1t}, g_{2t}, 0 ,0)^\top = (\alpha g_{1, t-1} + u_t, g_{2t}, 0 ,0)^\top & \mbox{for } \eta^\chi_1+1 \le t \le \eta^\chi_2, \\
(0, 0, g_{1t}, g_{2t})^\top = (0, 0, \alpha g_{1, t-1} + u_t, g_{2t})^\top & \mbox{for } \eta^\chi_2+1 \le t \le \eta^\chi_3, \\
(0, 0, g_{1t}, g_{2t})^\top = (0, 0, \beta g_{1, t-1} + u_t, g_{2t})^\top & \mbox{for } \eta^\chi_3+1 \le t \le T. \\
\end{array}\r.
\label{eq:ex:one}
\eea
The following four comments help in understanding the properties of model \eqref{ps:fm}.

\ben
\item Note that $\mbf f_t$ in \eqref{eq:ex:one} meets condition Assumption \ref{assum:one} (i) with
\beas
\mbf f^0_t = (g^{(1)}_{1t}, 0, 0 ,0)^\top, \quad \mbf f^1_t = (g^{(1)}_{1t}, g_{2t}, 0 ,0)^\top, \quad
\mbf f^2_t = (0, 0, g^{(1)}_{1t}, g_{2t})^\top, \quad \mbf f^3_t = (0, 0, g^{(2)}_{1t}, g_{2t})^\top
\eeas
where $g^{(1)}_{1t} = \alpha g^{(1)}_{1, t-1} + u_t$ and 
$g^{(2)}_{1t} = \beta g^{(2)}_{1, t-1} + u_t$ are stationary AR($1$) processes. Indeed, it is clear that $\mbf f_t = \mbf f^b_t$ for $0 \le b \le 2$ and Assumption \ref{assum:one} (i) trivially holds in those segments.  Then, at $\eta \equiv \eta^\chi_2$, the AR parameter switches from $\alpha$ to $\beta$ and for $t \ge \eta+1$,
\beas
g_{1t} &=& \beta g_{1, t-1} + u_t = \beta(\beta g_{1, t-2} + u_{t-1}) + u_t = \cdots 
\\
&=& \beta^{t-\eta-1}g_{1, \eta+1} + \beta^{t-\eta-2}u_{\eta+2} + \ldots + \beta u_{t-1} + u_t,
\\
g^{(2)}_{1t} &=& \beta g^{(2)}_{1, t-1} + u_t  
= \beta^{t-\eta-1}g^{(2)}_{1, \eta+1} + \beta^{t-\eta-2}u_{\eta+2} + \ldots + \beta u_{t-1} + u_t.
\eeas
Since $|\alpha|, |\beta| < 1$, both $g_{1t}$ and $g^{(2)}_{1t}$ have finite variance. Therefore, by selecting $\rho_f = \beta^2 \in [0, 1)$, we have
\beas
\E(g_{1t} - g^{(2)}_{1t})^2 = \beta^{2(t-\eta-1)}\E(g_{1, \eta+1} - g^{(2)}_{1, \eta+1})^2
\le 2\beta^{2(t-\eta-1)}[\E(g_{1, \eta+1}^2) + \E\{(g^{(2)}_{1, \eta+1})^2\}]= O(\rho_f^{t-\eta}),
\eeas
and Assumption \ref{assum:one} (i) holds.


\item Recall that the number of factors in \eqref{ps:fm} satisfies $r \ge r_b$: in this example we have a single factor prior to $\eta^\chi_1$ ($r_0 = 1$) and then two factors in each of the subsequent segments ($r_1 = r_2 = r_3 = 2$), whereas $r = 4$.

\item The representation in \eqref{ps:fm} is not unique. We can for example re-write \eqref{eq:ex:one} with constant loadings $\bm\lambda_i = (a_i, b_i, c_i-a_i, d_i-b_i)^\top$ and $\mbf f_t = (g_{1t}, g_{2t}, g_{1t}, g_{2t})^\top$ for $\eta^\chi_2+1 \le t \le T$. 

\item Model \eqref{ps:fm} can also be written as a piecewise stationary version of the dynamic factor model 
introduced in \cite{fornilippi01} and \cite{hallinlippi13}. We can for example re-write \eqref{eq:ex:one} as the piecewise stationary version of a dynamic factor model with $r = 5$ and constant dynamic loadings $\bm\lambda_i(L) = (a_i(1-\alpha L)^{-1}, b_i, c_i(1-\alpha L)^{-1}, d_i, c_i(1-\beta L)^{-1})^\top$ ($L$ denoting the lag operator), and factors defined as
\beas
\mbf f^{0}_t = (u_{t}, 0, 0 ,0, 0)^\top, \quad
\mbf f^{1}_t = (u_{t}, g_{2t}, 0 ,0, 0)^\top, \quad
\mbf f^{2}_t =(0,0, u_{t}, g_{2t},0)^\top, \quad
\mbf f^{3}_t =(0,0,0, g_{2t},u_t)^\top.
\eeas
A change in the autocorrelation of the factors can therefore be equivalently represented by a change in the dynamic loadings. 
\een

\section{Proofs}
\label{sec:proofs}

\subsection{Preliminary results}

We denote by $\bm\varphi_i$ the $n$-dimensional vector with one as its $i$th element and zero elsewhere.

\begin{lem}\label{lem:cov} \hfill
\bit
\item [(i)] $n^{-1}\Vert \wh{\bm\Gamma}_x - \bm\Gamma_{\chi}\Vert =
O_p\Big(\sqrt{\frac{\log\,n}{T}} \vee \frac{1}{n}\Big)$;
\item [(ii)] $n^{-1/2}\Vert \bm\varphi_i^\top(\wh{\bm\Gamma}_x - \bm\Gamma_{\chi}) \Vert =
 O_p\l(\sqrt{\frac{\log\,n}{T}} \vee \frac{1}{\sqrt n}\r)$.
\eit
\end{lem}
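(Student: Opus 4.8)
The plan is to expand $\wh{\bm\Gamma}_x = T^{-1}\sum_{t=1}^T \mbf x_t\mbf x_t^\top$ via $\mbf x_t = \bm\Lambda\mbf f_t + \bm\eps_t$ and compare against $\bm\Gamma_\chi$. First I would note that, by Assumption \ref{assum:two} (i), $\bm\Gamma_\chi = \bm\Lambda\big(T^{-1}\sum_t \E(\mbf f_t\mbf f_t^\top)\big)\bm\Lambda^\top = \bm\Lambda\bm\Lambda^\top$, so that
\[
\wh{\bm\Gamma}_x - \bm\Gamma_\chi
= \bm\Lambda\wh{\mbf S}_f\bm\Lambda^\top
+ \bm\Lambda\wh{\mbf S}_{f\eps} + \wh{\mbf S}_{f\eps}^\top\bm\Lambda^\top
+ (\wh{\bm\Gamma}_\eps - \bm\Gamma_\eps) + \bm\Gamma_\eps,
\]
where $\wh{\mbf S}_f = T^{-1}\sum_t\{\mbf f_t\mbf f_t^\top - \E(\mbf f_t\mbf f_t^\top)\}$ is $r\times r$, $\wh{\mbf S}_{f\eps} = T^{-1}\sum_t\mbf f_t\bm\eps_t^\top$ is $r\times n$, and $\wh{\bm\Gamma}_\eps = T^{-1}\sum_t\bm\eps_t\bm\eps_t^\top$ with $\bm\Gamma_\eps = T^{-1}\sum_t\E(\bm\eps_t\bm\eps_t^\top)$. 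The five summands are then treated separately using $\Vert\bm\Lambda\Vert = O(\sqrt n)$ and $\Vert\bm\lambda_i\Vert = O(1)$ from Assumption \ref{assum:three}, and $\Vert\bm\Gamma_\eps\Vert = \mu_{\eps, 1} < C_\eps = O(1)$ from (C2).

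The heart of the argument is a set of uniform entrywise concentration bounds obtained from a Bernstein-type inequality for strongly mixing, exponential-tailed sequences (e.g.\ Theorem 1.4 of \cite{bosq1998} or Theorem 1 of \cite{merlevede2011}), whose applicability is guaranteed by the tail conditions in Assumptions \ref{assum:two} (ii) and \ref{assum:four} (ii) together with the mixing condition in Assumption \ref{assum:five} (ii). Each of the sequences $f_{jt}f_{j't} - \E(f_{jt}f_{j't})$, $f_{jt}\eps_{it}$ (mean zero by Assumption \ref{assum:five} (i)), and $\eps_{it}\eps_{i't} - \E(\eps_{it}\eps_{i't})$ is mean zero, inherits sub-Weibull tails and the exponential mixing rate (mixing being preserved under coordinatewise measurable transformations), and has summable autocovariances and hence bounded long-run variance; note that piecewise stationarity is harmless here, since the centring is performed pointwise in $t$ while the tail and mixing bounds hold uniformly in $t$. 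A union bound then yields, with probability tending to one,
\[
\max_{j, j'}\Big|\big[\wh{\mbf S}_f\big]_{j, j'}\Big|, \;\;
\max_{j, i}\Big|\big[\wh{\mbf S}_{f\eps}\big]_{j, i}\Big|, \;\;
\max_{i, i'}\Big|\big[\wh{\bm\Gamma}_\eps - \bm\Gamma_\eps\big]_{i, i'}\Big|
= O_p\Big(\sqrt{\tfrac{\log n}{T}}\Big),
\]
where the $\log n$ factor is the cost of maximising over up to $n^2$ entries and is of the same order as $\log T$ under Assumption \ref{assum:seven}.

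For part (i) I would bound each summand in spectral norm. Since $r$ is fixed, $\Vert\wh{\mbf S}_f\Vert = O_p(\sqrt{\log n/T})$ and hence $\Vert\bm\Lambda\wh{\mbf S}_f\bm\Lambda^\top\Vert \le \Vert\bm\Lambda\Vert^2\Vert\wh{\mbf S}_f\Vert = O_p(n\sqrt{\log n/T})$; similarly $\Vert\wh{\mbf S}_{f\eps}\Vert \le \Vert\wh{\mbf S}_{f\eps}\Vert_F = O_p(\sqrt{n\log n/T})$ gives $\Vert\bm\Lambda\wh{\mbf S}_{f\eps}\Vert = O_p(n\sqrt{\log n/T})$, and the crude but sufficient bound $\Vert\wh{\bm\Gamma}_\eps - \bm\Gamma_\eps\Vert \le \Vert\wh{\bm\Gamma}_\eps - \bm\Gamma_\eps\Vert_F = O_p(n\sqrt{\log n/T})$ follows from the entrywise control over all $n^2$ coordinates. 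Adding the $O(1)$ contribution of $\bm\Gamma_\eps$ and dividing by $n$ produces $O_p(\sqrt{\log n/T} \vee n^{-1})$. For part (ii) the same decomposition is applied row by row: the $i$th row of $\bm\Lambda\wh{\mbf S}_f\bm\Lambda^\top$ equals $\bm\lambda_i^\top\wh{\mbf S}_f\bm\Lambda^\top$, of norm $O_p(\sqrt n\,\sqrt{\log n/T})$ (using $\Vert\bm\lambda_i\Vert = O(1)$ and $\Vert\bm\Lambda\Vert = O(\sqrt n)$); the rows of the two cross terms, namely $\bm\lambda_i^\top\wh{\mbf S}_{f\eps}$ and $(T^{-1}\sum_t\eps_{it}\mbf f_t^\top)\bm\Lambda^\top$, are again of norm $O_p(\sqrt{n\log n/T})$; the $i$th row of $\wh{\bm\Gamma}_\eps - \bm\Gamma_\eps$ has Euclidean norm $O_p(\sqrt{n\log n/T})$ from the entrywise bound; and $\Vert\bm\varphi_i^\top\bm\Gamma_\eps\Vert \le \Vert\bm\Gamma_\eps\Vert = O(1)$. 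Dividing by $\sqrt n$ yields $O_p(\sqrt{\log n/T}\vee n^{-1/2})$.

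The main obstacle I anticipate is the rigorous verification of the entrywise Bernstein bounds under piecewise stationarity --- in particular, checking that the product sequences retain a usable (sub-Weibull) tail and bounded long-run variance despite the non-stationarity of $\E(\mbf f_t\mbf f_t^\top)$ and $\E(\bm\eps_t\bm\eps_t^\top)$ --- together with confirming that the Frobenius bound on the purely idiosyncratic block $\wh{\bm\Gamma}_\eps - \bm\Gamma_\eps$, although wasteful as a generic operator-norm bound, is in fact of exactly the order required once divided by $n$: its $n^2$ entries are each of order $\log n/T$, summing to the Frobenius rate $n\sqrt{\log n/T}$ and thus contributing precisely the leading term $\sqrt{\log n/T}$.
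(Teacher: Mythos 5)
Your proof is correct and follows essentially the same route as the paper's: both rest on the decomposition $\wh{\bm\Gamma}_x - \bm\Gamma_{\chi} = (\wh{\bm\Gamma}_x - \bm\Gamma_x) + \bm\Gamma_\eps$, with the sampling error controlled entrywise at rate $O_p(\sqrt{\log n/T})$ via Bernstein-type bounds on the products $f_{jt}f_{j't}$, $f_{jt}\eps_{it}$ and $\eps_{it}\eps_{i't}$ (the paper outsources these to Lemmas A.3 and B.1 (ii) of \cite{fan2011b} and then uses a single Frobenius bound, where you spell out the block structure with $\bm\Lambda$ factored out), and the deterministic term $\Vert\bm\Gamma_\eps\Vert = O(1)$ supplying the $1/n$ and $1/\sqrt{n}$ contributions. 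Your anticipated obstacle --- validity of the concentration bounds under piecewise stationarity --- is resolved exactly as you suggest, since the tail and mixing conditions of Assumptions \ref{assum:two} (ii), \ref{assum:four} (ii) and \ref{assum:five} (ii) hold uniformly in $t$.
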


\begin{proof}
Under Assumptions \ref{assum:two}--\ref{assum:five}, Lemmas A.3 and B.1 (ii) of \cite{fan2011b} show that
\bea
\max_{1 \le i, i' \le n} \l\vert \frac{1}{T}\sum_{t=1}^T x_{it}x_{i't} - 
\E\Big(\frac{1}{T}\sum_{t=1}^T x_{it}x_{i't}\Big)\r\vert
\le
\max_{1 \le j, j' \le r} r^2\bar{\lambda}^2 \l\vert \frac{1}{T}\sum_{t=1}^T f_{jt}f_{j't} - 
\E\Big(\frac{1}{T}\sum_{t=1}^T f_{jt}f_{j't}\Big)\r\vert
\nn \\
+
\max_{1 \le i, i' \le n} \l\vert \frac{1}{T}\sum_{t=1}^T \eps_{it}\eps_{i't} - 
\E\Big(\frac{1}{T}\sum_{t=1}^T \eps_{it}\eps_{i't}\Big)\r\vert
+
2\max_{\substack{1 \le j \le r \\ 1 \le i \le n}} r\bar{\lambda}
\l\vert \frac{1}{T}\sum_{t=1}^T f_{jt}\eps_{it}\r\vert
= O_p\l(\sqrt{\frac{\log\,n}{T}}\r).
\label{eq:consgamma0}
\eea
Therefore,
\beq\label{eq:consgamma}
\frac 1 n\Vert \wh{\bm\Gamma}_x - \bm\Gamma_{\chi}\Vert \leq  
\frac 1 n\Vert \wh{\bm\Gamma}_x - \bm\Gamma_{x}\Vert_F +\frac 1 n\Vert{\bm\Gamma}_\eps\Vert 
= O_p\l(\sqrt{\frac{\log\,n}{T}} \vee \frac{1}{n}\r),
\eeq
which follows from \eqref{eq:consgamma0}, Assumption \ref{assum:five} (i)
 and the observation under Assumption \ref{assum:four} (i) that 
$\mu_{\eps, 1}=\Vert \bm\Gamma_\eps\Vert<\infty$ for any $n$, see also the result in (C2). 
This proves part (i). 

For part (ii), now we deal with an $n$-dimensional vector and not a matrix. 
Hence, using the same approach as in part (i),
\beq
\frac 1{\sqrt n} \Vert \bm\varphi_i^\top(\wh{\bm\Gamma}_x - \bm\Gamma_{\chi}) \Vert\le 
\frac 1 {\sqrt n}\Vert\bm\varphi_i^\top( \wh{\bm\Gamma}_x - \bm\Gamma_{x})\Vert +\frac 1 {\sqrt n}\Vert{\bm\Gamma}_\eps\Vert  = O_p\l(\sqrt{\frac{\log\,n}{T}} \vee \frac{1}{\sqrt n}\r),\nn
\eeq
which completes the proof.
\end{proof}

\begin{lem}
\label{lem:evals}
{\it Let $r \times r$ diagonal matrices $\wh{\mbf M}_x$ and $\mbf M_{\chi}$ have
the $r$ largest eigenvalues of $\wh{\bm\Gamma}_x$ and of $\bm\Gamma_{\chi}$ in the decreasing order
as the diagonal elements, respectively. Then,
\beas
\Big\Vert\Big(\frac{\wh{\mbf M}_{x}}{n}\Big)^{-1}-\Big(\frac{\mbf M_{\chi}}{n}\Big)^{-1}\Big\Vert
= O_p\l(\sqrt{\frac{\log\,n}{T}} \vee \frac{1}{n}\r).
\eeas}
\end{lem}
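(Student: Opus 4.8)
The plan is to exploit the fact that both $\wh{\mbf M}_x$ and $\mbf M_\chi$ are diagonal, so that the spectral norm on the left-hand side is simply the largest modulus among the $r$ diagonal differences, and to reduce everything to a statement about the individual scaled eigenvalues. Writing $\wh\mu_{x,j}$ and $\mu_{\chi,j}$ for the $j$th largest eigenvalues of $\wh{\bm\Gamma}_x$ and $\bm\Gamma_\chi$ respectively, I would first establish uniform (over $j = 1, \ldots, r$) consistency of the scaled eigenvalues. By Weyl's inequality, $|\wh\mu_{x,j} - \mu_{\chi,j}| \le \Vert \wh{\bm\Gamma}_x - \bm\Gamma_\chi\Vert$ for each $j$, so dividing by $n$ and applying Lemma \ref{lem:cov} (i) gives
\beas
\max_{1 \le j \le r} \Big| \frac{\wh\mu_{x,j}}{n} - \frac{\mu_{\chi,j}}{n}\Big|
\le \frac 1 n \Vert \wh{\bm\Gamma}_x - \bm\Gamma_\chi\Vert
= O_p\l(\sqrt{\frac{\log n}{T}} \vee \frac 1 n\r).
\eeas

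Next I would pass to the inverses through the elementary identity
\beas
\Big(\frac{\wh\mu_{x,j}}{n}\Big)^{-1} - \Big(\frac{\mu_{\chi,j}}{n}\Big)^{-1}
= \frac{\mu_{\chi,j}/n - \wh\mu_{x,j}/n}{(\wh\mu_{x,j}/n)(\mu_{\chi,j}/n)},
\eeas
after which the task reduces to bounding the denominator away from zero uniformly over $j \le r$. Here I would invoke (C1), which guarantees $\liminf_{n} \mu_{\chi,j}/n \ge \underline c_j > 0$ for every $j = 1, \ldots, r$, so that $(\mu_{\chi,j}/n)^{-1}$ stays bounded. Combined with the previous display, $\wh\mu_{x,j}/n \ge \underline c_j/2$ holds simultaneously for all $j \le r$ with probability tending to one (the maximum being over a fixed, finite range of indices), so that $(\wh\mu_{x,j}/n)^{-1}$ is $O_p(1)$ as well, and the whole denominator is bounded away from zero with probability tending to one.

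Combining the two steps, for each $j \le r$ the numerator is $O_p(\sqrt{\log n/T} \vee 1/n)$ while the denominator is $O_p(1)$ and bounded below, whence
\beas
\Big| \Big(\frac{\wh\mu_{x,j}}{n}\Big)^{-1} - \Big(\frac{\mu_{\chi,j}}{n}\Big)^{-1}\Big|
= O_p\l(\sqrt{\frac{\log n}{T}} \vee \frac 1 n\r).
\eeas
Taking the maximum over the finitely many $j \le r$ — which is exactly the spectral norm of the diagonal matrix $(\wh{\mbf M}_x/n)^{-1} - (\mbf M_\chi/n)^{-1}$ — then yields the claimed rate.

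I do not expect a genuine obstacle here, this being a preliminary lemma. The only point requiring any care is the uniform-in-$j$ lower bound on the scaled empirical eigenvalues $\wh\mu_{x,j}/n$, which is what prevents the inversion from blowing up; this is delivered cleanly by (C1) together with the eigenvalue consistency from the first step, and is clean precisely because $r$ is fixed so that the union over $j \le r$ of the high-probability events causes no difficulty.
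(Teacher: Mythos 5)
Your proof is correct and follows essentially the same route as the paper's: Weyl's inequality combined with Lemma \ref{lem:cov} (i) for the scaled eigenvalue consistency, then inversion using the lower bound $\mu_{\chi,j}/n \ge \underline c_j$ from (C1) to keep the denominators away from zero with probability tending to one. The only cosmetic difference is that you pass from the diagonal entries to the spectral norm via the exact maximum characterisation, whereas the paper bounds it by the Frobenius norm and the sum over the fixed, finite range $j \le r$ — both are equally valid.
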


\begin{proof}
As a consequence of Lemma \ref{lem:cov} (i) and Weyl's inequality, $\wh\mu_{x, j}$ satisfy
\beq\label{eq:eval}
\frac 1 n|\wh\mu_{x, j}-\mu_{\chi, j}|\leq \frac 1 n\Vert \wh{\bm\Gamma}_x-\bm\Gamma_{\chi}\Vert 
= O_p\l(\sqrt{\frac{\log\,n}{T}} \vee \frac{1}{n}\r), \quad j=1, \ldots, r.
\eeq
From \eqref{eq:eval}, there exists $\underline c_r \in (0, \infty)$ such that 
$\mu_{\chi, r}/n \ge \underline c_r$ and thus $\wh{\mu}_{x, r}/n \ge \underline c_r + O_p(\sqrt{\frac{\log n}{T}} \vee \frac{1}{n})$,
which implies that the matrix $n^{-1}{\mbf M_{\chi}}$ is invertible and the inverse of $n^{-1}{\wh{\mbf M}_{x}}$ 
exists with probability tending to one as $n,T\to\infty$. Therefore,
\beq
\Big\Vert\Big(\frac{\mbf M_{\chi}}{n}\Big)^{-1}\Big\Vert = \frac{n}{\mu_{\chi, r}} = O(1),
\qquad\Big\Vert\Big(\frac{\wh{\mbf M}_{x}}{n}\Big)^{-1}\Big\Vert = \frac{n}{\wh{\mu}_{x, r}} = O_p(1).\nn
\eeq
Finally, 
\begin{align}
& \Big\Vert\Big(\frac{\wh{\mbf M}_{x}}{n}\Big)^{-1 } - \Big(\frac{\mbf M_{\chi}}{n}\Big)^{-1}\Big\Vert 
\le \sqrt{\sum_{j=1}^r\Big(\frac{n}{\wh{\mu}_{x, j}}-\frac{n}{\mu_{\chi, j}}\Big)^2} 
\leq \sum_{j=1}^r n\Big|\frac{\wh{\mu}_{x, j}-\mu_{\chi, j}}{\wh{\mu}_{x, j}\mu_{\chi, j}}\Big| \nn
\\
\leq& \frac{r\max_{1 \le j \le r}|\wh{\mu}_{x, j}-\mu_{\chi, j}|}
{n\underline c_r^2+O_p\Big(n\sqrt{\frac{\log\,n}{T}} \vee 1\Big)}
= O_p\l(\sqrt{\frac{\log\,n}{T}} \vee \frac{1}{n}\r). \nn
\end{align}
\end{proof}

\begin{lem}
\label{lem:evecs}
{\it Denote the $n$-dimensional normalised eigenvectors corresponding to 
the $j$th largest eigenvalues of $\wh{\bm\Gamma}_x$ and ${\bm\Gamma}_{\chi}$,
by $\wh{\mbf w}_{x, j}$ and $\mbf w_{\chi, j}$, respectively.  
We further define the $n \times r$ matrices $\wh{\mbf W}_x = [\wh{\mbf w}_{x,1}, \ldots, \wh{\mbf w}_{x,r}]$ and 
$\mbf W_{\chi} = [\mbf w_{\chi,1}, \ldots, \mbf w_{\chi,r}]$. 
Then, there exists an orthonormal $r\times r$-matrix $\mbf S$ such that
\bit
\item [(i)] $\Vert \wh{\mbf W}_{x} - \mbf W_{\chi}\mbf S \Vert = 
O_p\l(\sqrt{\frac{\log\,n}{T}} \vee \frac{1}{n}\r)$;
\item [(ii)] $\sqrt n\,\Vert\bm\varphi_i^\top(\wh{\mbf W}_x-\mbf W_{\chi}\mbf S)\Vert
= O_p\l(\sqrt{\frac{\log\,n}{T}} \vee \frac{1}{\sqrt n}\r)$.
\eit}
\end{lem}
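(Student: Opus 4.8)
The plan is to prove part (i) by a direct appeal to the Davis--Kahan theorem, and then to bootstrap this global eigenvector bound into the sharper row-wise bound in part (ii) using the eigen-identity satisfied by $\wh{\mbf W}_x$. First I would record the spectral structure that makes the argument work. By Assumption \ref{assum:two} (i) we have $\bm\Gamma_f = T^{-1}\sum_{t=1}^T\E(\mbf f_t\mbf f_t^\top) = \mbf I_r$, so $\bm\Gamma_\chi = \bm\Lambda\bm\Lambda^\top$, which has rank exactly $r$ under Assumption \ref{assum:three} (i); hence $\mu_{\chi, r+1} = 0$ while $\mu_{\chi, r} \ge \underline c_r n$ asymptotically by (C1), so that the eigengap isolating the leading $r$-dimensional eigenspace of $\bm\Gamma_\chi$ is of order $n$. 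For part (i), applying the variant of the Davis--Kahan $\sin\Theta$ theorem of \cite{yu15} to the pair $(\wh{\bm\Gamma}_x, \bm\Gamma_\chi)$, and combining it with Lemma \ref{lem:cov} (i), which gives $\Vert\wh{\bm\Gamma}_x - \bm\Gamma_\chi\Vert = O_p\big(n(\sqrt{\log n/T}\vee 1/n)\big)$, produces an orthonormal $r\times r$ matrix $\mbf S$ with $\Vert\wh{\mbf W}_x - \mbf W_\chi\mbf S\Vert = O_p(\sqrt{\log n/T}\vee 1/n)$ (the factor $n$ in the numerator cancelling against the order-$n$ eigengap). This same $\mbf S$ is then used throughout part (ii).

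For part (ii) I would begin from the eigen-identity $\wh{\mbf W}_x = \wh{\bm\Gamma}_x\wh{\mbf W}_x\wh{\mbf M}_x^{-1}$, write $\wh{\bm\Gamma}_x = \bm\Gamma_\chi + \mbf D$ with $\mbf D = \wh{\bm\Gamma}_x - \bm\Gamma_\chi$, and use the rank-$r$ decomposition $\bm\Gamma_\chi = \mbf W_\chi\mbf M_\chi\mbf W_\chi^\top$. Left-multiplying by $\bm\varphi_i^\top$ and subtracting $\bm\varphi_i^\top\mbf W_\chi\mbf S$ yields
\[
\bm\varphi_i^\top(\wh{\mbf W}_x - \mbf W_\chi\mbf S) = \bm\varphi_i^\top\mbf W_\chi(\mbf H - \mbf S) + \bm\varphi_i^\top\mbf D\,\wh{\mbf W}_x\wh{\mbf M}_x^{-1}, \qquad \mbf H := \mbf M_\chi\mbf W_\chi^\top\wh{\mbf W}_x\wh{\mbf M}_x^{-1}.
\]
The second term is dispatched by crude submultiplicativity: $\Vert\bm\varphi_i^\top\mbf D\Vert = \sqrt n\,O_p(\sqrt{\log n/T}\vee 1/\sqrt n)$ by Lemma \ref{lem:cov} (ii), $\Vert\wh{\mbf W}_x\Vert = 1$, and $\Vert\wh{\mbf M}_x^{-1}\Vert = O_p(1/n)$ by Lemma \ref{lem:evals}, so this term is $O_p\big(n^{-1/2}(\sqrt{\log n/T}\vee 1/\sqrt n)\big)$, which after multiplication by $\sqrt n$ matches the claimed rate.

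The first term requires two ingredients. I would bound the row norm $\Vert\bm\varphi_i^\top\mbf W_\chi\Vert = O(1/\sqrt n)$ uniformly in $i$: since $\bm\Gamma_\chi = \bm\Lambda\bm\Lambda^\top$, each $\mbf w_{\chi, j}$ lies in the column space of $\bm\Lambda$, so $\mbf w_{\chi, j} = \bm\Lambda\mbf b_j$ with $\Vert\mbf b_j\Vert = O(1/\sqrt n)$ (using $\mu_{\chi, j}\sim n$ and $\Vert\bm\Lambda\Vert\sim\sqrt n$ from Assumption \ref{assum:three} (i)), whence $|w_{\chi, ij}| \le \Vert\bm\lambda_i\Vert\,\Vert\mbf b_j\Vert = O(1/\sqrt n)$ by the boundedness of the loadings in Assumption \ref{assum:three} (ii). I would then show $\Vert\mbf H - \mbf S\Vert = O_p(\sqrt{\log n/T}\vee 1/n)$. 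Setting $\mbf E := \mbf W_\chi^\top\wh{\mbf W}_x - \mbf S$, so that $\Vert\mbf E\Vert \le \Vert\wh{\mbf W}_x - \mbf W_\chi\mbf S\Vert = O_p(\sqrt{\log n/T}\vee 1/n)$ by part (i), one has the decomposition $\mbf H - \mbf S = (\mbf M_\chi\mbf S - \mbf S\wh{\mbf M}_x)\wh{\mbf M}_x^{-1} + \mbf M_\chi\mbf E\wh{\mbf M}_x^{-1}$. Projecting the eigen-equation $\wh{\bm\Gamma}_x\wh{\mbf W}_x = \wh{\mbf W}_x\wh{\mbf M}_x$ onto $\mbf W_\chi^\top$ and using $\mbf W_\chi^\top\bm\Gamma_\chi = \mbf M_\chi\mbf W_\chi^\top$ delivers the near-conjugation identity $\mbf M_\chi\mbf S - \mbf S\wh{\mbf M}_x = \mbf E\wh{\mbf M}_x - \mbf M_\chi\mbf E - \mbf W_\chi^\top\mbf D\wh{\mbf W}_x$, whose norm is $O_p\big(n(\sqrt{\log n/T}\vee 1/n)\big)$ since $\Vert\mbf M_\chi\Vert = O(n)$, $\Vert\wh{\mbf M}_x\Vert = O_p(n)$ and $\Vert\mbf W_\chi^\top\mbf D\wh{\mbf W}_x\Vert \le \Vert\mbf D\Vert$. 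Multiplying through by $\Vert\wh{\mbf M}_x^{-1}\Vert = O_p(1/n)$ gives $\Vert\mbf H - \mbf S\Vert = O_p(\sqrt{\log n/T}\vee 1/n)$, so the first term is $O(1/\sqrt n)\cdot O_p(\sqrt{\log n/T}\vee 1/n)$, again matching the target after the factor $\sqrt n$.

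I expect the main obstacle to be the refined control of $\Vert\mbf H - \mbf S\Vert$, i.e.\ the near-conjugation identity. The key realisation is that one need \emph{not} argue separately that $\mbf S$ is approximately a signed permutation (which would invoke the distinctness of the limiting eigenvalues in (C1)); instead, the matrix $\mbf M_\chi\mbf S - \mbf S\wh{\mbf M}_x$ falls out directly from the eigen-equation and the part-(i) bound, and its order-$n$ size is precisely offset by $\Vert\wh{\mbf M}_x^{-1}\Vert = O_p(1/n)$. The remaining care is purely bookkeeping: every factor of $\mbf M_\chi$ or $\wh{\mbf M}_x$ contributes an $n$ and every inverse a $1/n$, and one must verify that these cancel so that the global rate $\sqrt{\log n/T}\vee 1/n$ from part (i), when transferred to a single row and multiplied by $\sqrt n$, is dominated by the coarser rate $\sqrt{\log n/T}\vee 1/\sqrt n$ stated in (ii).
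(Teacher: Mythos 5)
Your proposal is correct, and it splits naturally into two comparisons. Part (i) is essentially the paper's own proof: Theorem 2 of \cite{yu15} applied to the pair $(\wh{\bm\Gamma}_x, \bm\Gamma_\chi)$ with the order-$n$ eigengap $\mu_{\chi, r} - \mu_{\chi, r+1} = \mu_{\chi, r} \ge \underline c_r n$ (using $\mu_{\chi, r+1} = 0$, since $\bm\Gamma_\chi = \bm\Lambda\bm\Lambda^\top$ has rank $r$ under the normalisation of Assumption \ref{assum:two} (i)), combined with Lemma \ref{lem:cov} (i). For part (ii), however, you take a genuinely different route. The paper first invokes Corollary 1 of \cite{yu15} together with the distinctness of the limiting eigenvalues in (C1) to upgrade $\mbf S$ to a diagonal matrix with $\pm 1$ entries; this step is not cosmetic, since it is exactly what licenses the representation $\mbf W_\chi\mbf S = \bm\Gamma_\chi\mbf W_\chi\mbf S\mbf M_\chi^{-1}$ (one needs $\mbf S$ to commute with the diagonal $\mbf M_\chi$), after which the paper telescopes $\bm\varphi_i^\top\{\wh{\bm\Gamma}_x\wh{\mbf W}_x(\wh{\mbf M}_x/n)^{-1} - \bm\Gamma_\chi\mbf W_\chi\mbf S(\mbf M_\chi/n)^{-1}\}$ into three differences controlled by Lemma \ref{lem:cov} (ii), the inverse-difference bound of Lemma \ref{lem:evals}, part (i), and $\Vert\bm\varphi_i^\top\bm\Gamma_\chi\Vert = O(\sqrt n)$. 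You instead keep $\mbf S$ a generic orthonormal matrix, use the eigen-identity only for $\wh{\mbf W}_x$, split $\wh{\bm\Gamma}_x = \bm\Gamma_\chi + \mbf D$, and control the alignment matrix $\mbf H - \mbf S$ via the near-conjugation identity $\mbf M_\chi\mbf S - \mbf S\wh{\mbf M}_x = \mbf E\wh{\mbf M}_x - \mbf M_\chi\mbf E - \mbf W_\chi^\top\mbf D\wh{\mbf W}_x$, which is algebraically exact (note $\mbf E = \mbf W_\chi^\top(\wh{\mbf W}_x - \mbf W_\chi\mbf S)$ by the orthonormality of the columns of $\mbf W_\chi$, so $\Vert\mbf E\Vert$ inherits the part-(i) rate), and whose order-$n$ size is offset by $\Vert\wh{\mbf M}_x^{-1}\Vert = O_p(1/n)$; the bookkeeping then delivers exactly the claimed rates, with the $1/\sqrt n$ term arising, as in the paper, from the row-wise covariance error $\bm\varphi_i^\top\mbf D$. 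What each approach buys: yours avoids the distinct-eigenvalue requirement among the top $r$ eigenvalues within this lemma (only the magnitude of $\mu_{\chi, r}$ matters) and uses Lemma \ref{lem:evals} only through $\Vert\wh{\mbf M}_x^{-1}\Vert = O_p(1/n)$ rather than through its inverse-difference bound, at the cost of longer algebra; the paper's argument is shorter and symmetric in the two eigen-identities once the diagonality of $\mbf S$ is in hand, which is anyway available under (C1). Your direct derivation of $\Vert\bm\varphi_i^\top\mbf W_\chi\Vert = O(1/\sqrt n)$ through the column space of $\bm\Lambda$ is also sound and simply replicates, by different means, the bound the paper establishes in \eqref{eq:evec:size} from $\mbf W_\chi = \bm\Gamma_\chi\mbf W_\chi\mbf M_\chi^{-1}$.
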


\begin{proof}
From Theorem 2 in \cite{yu15}, which is a generalisation of the $\sin\theta$ theorem in \cite{dk70}, we have
\begin{align}\label{eq:dk1}
\Vert \wh{\mbf W}_{x}-  \mbf W_{\chi}\mbf S \Vert \le 
\frac{2^{3/2}\sqrt{r}\Vert\wh{\bm\Gamma}_x-\bm\Gamma_{\chi}\Vert}
{\min\big(\mu_{\chi, 0}-\mu_{\chi, 1},\mu_{\chi, r}-\mu_{\chi, r+1}\big)}, 
\end{align}
where $\mu_{\chi, 0} = \infty$ and $\mu_{\chi, r+1} = 0$. 
From \eqref{eq:eval}, the denominator of \eqref{eq:dk1} is bounded from the below by $\underline c_r n$
and thus part (i) follows immediately from Lemma \ref{lem:cov} (i).

For part (ii), evoking Corollary 1 of \cite{yu15} and noticing that $\bm\Gamma_{\chi}$ has distinct eigenvalues given in (C1), we can further show that $\mbf S$ is a diagonal matrix with entries $\pm 1$. Then, using part (i) above, Lemmas \ref{lem:cov} (ii) and \ref{lem:evals}, and the fact that 
$\Vert\mbf W_{\chi}\mbf S\Vert=1$ and $\Vert\bm\varphi_i^{\top}\bm\Gamma_{\chi}\Vert=O(\sqrt n)$, we have
\begin{align}
&\sqrt n\Vert\bm\varphi_i^\top(\wh{\mbf W}_{x}- \mbf W_{\chi}\mbf S)\Vert=
\frac 1 {\sqrt n}\Big\Vert
 {\bm\varphi_i^\top} \Big\{\wh{\bm\Gamma}_{x}\wh{\mbf W}_{x}\Big(\frac{\wh{\mbf M}_{x}}{n}\Big)^{-1}
-\bm\Gamma_{\chi}\mbf W_{\chi}\mbf S\Big(\frac{\mbf M_{\chi}}{n}\Big)^{-1}\Big\}
\Big\Vert\nn\\
\leq& \frac 1 {\sqrt n}\Vert
\bm\varphi_i^\top\big(\wh{\bm\Gamma}_{x}-\bm\Gamma_{\chi}\big)\Vert\,
\Big\Vert\Big(\frac{\mbf M_{\chi}}{n}\Big)^{-1}\Big\Vert+
\frac 1 {\sqrt n}\Vert
\bm\varphi_i^\top\bm\Gamma_{\chi}\Vert\,\Big\Vert\Big(\frac{\wh{\mbf M}_{x}}{n}\Big)^{-1}-\Big(\frac{\mbf M_{\chi}}{n}\Big)^{-1}\Big\Vert\nn\\ 
+ &\frac 1 {\sqrt n}\Vert
\bm\varphi_i^\top\bm\Gamma_{\chi}\Vert\,
\Big\Vert\Big(\frac{\mbf M_{\chi}}{n}\Big)^{-1}\Big\Vert\,\Vert
\wh{\mbf W}_{x}-\mbf W_{\chi}\mbf S\big\Vert + o_p\l( \sqrt{\frac{\log\,n}{T}} \vee \frac 1 {\sqrt n}\r)
= O_p\l( \sqrt{\frac{\log\,n}{T}} \vee \frac{1}{\sqrt n}\r).\nn
\end{align} 
\end{proof}

\begin{lem}
\label{lem:x:n}
{\it For a fixed $\theta \ge 1+(\beta_f^{-1} \vee 1/2)$, we have
\beas
\max_{1 \le s \le e \le T}
\frac{1}{\sqrt{e-s+1}} \Big\Vert\sum_{t=s}^e \mbf x_t \Big\Vert = O_p(\sqrt{n}\log^\theta T).
\eeas}
\end{lem}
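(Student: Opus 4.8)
The plan is to split $\mbf x_t = \bm\Lambda\mbf f_t + \bm\eps_t$ and control the common and idiosyncratic partial sums separately via the triangle inequality,
\[
\max_{1\le s\le e\le T}\frac{1}{\sqrt{e-s+1}}\Big\Vert\sum_{t=s}^e\mbf x_t\Big\Vert
\le \Vert\bm\Lambda\Vert\cdot\max_{s,e}\frac{1}{\sqrt{e-s+1}}\Big\Vert\sum_{t=s}^e\mbf f_t\Big\Vert
+\max_{s,e}\frac{1}{\sqrt{e-s+1}}\Big\Vert\sum_{t=s}^e\bm\eps_t\Big\Vert .
\]
By Assumption \ref{assum:three}, $\Vert\bm\Lambda\Vert\le\Vert\bm\Lambda\Vert_F=O(\sqrt n)$, so it suffices to show that the factor maximum (over the $r$-dimensional sums) is $O_p(\log^\theta T)$ and the idiosyncratic maximum is $O_p(\sqrt n)$.

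For the common part, since $r$ is fixed it is enough to bound $\max_{s,e}(e-s+1)^{-1/2}\big|\sum_{t=s}^e f_{jt}\big|$ for each $j$. Under the exponential-tail condition of Assumption \ref{assum:two}(ii) and the mixing condition of Assumption \ref{assum:five}(ii), I would apply a Bernstein-type inequality for weakly dependent sequences (Theorem 1.4 of \cite{bosq1998} or Theorem 1 of \cite{merlevede2011}) to each fixed pair $(s,e)$, and then take a union bound over the $O(T^2)$ pairs and the $r$ coordinates. The inequality produces two competing terms: a sub-Gaussian term $\exp(-cx^2/(e-s+1))$ controlling moderate deviations, and a heavier large-deviation term arising from the combination of the stretched-exponential tails and the mixing rate. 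Calibrating the deviation level at $x\asymp\sqrt{e-s+1}\,\log^\theta T$ so that both terms remain summable against $T^2 r$ is precisely what forces the exponent $\theta\ge 1+(\beta_f^{-1}\vee 1/2)$, and this balancing (together with the behaviour on very short segments, where $|f_{jt}|\asymp(\log T)^{1/\beta_f}$ is unavoidable) is the main technical obstacle of the proof.

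For the idiosyncratic part I would exploit the Gaussianity of Assumption \ref{assum:four}(ii): $\mbf S_{s,e}=\sum_{t=s}^e\bm\eps_t$ is $N(\mbf 0,\bm\Sigma_{s,e})$ with $\bm\Sigma_{s,e}=\sum_{t,t'=s}^e\E(\bm\eps_t\bm\eps_{t'}^\top)$. Applying Assumption \ref{assum:four}(i) with $\mbf a=\bm\varphi_i$ and summing over $i$ gives $\tr(\bm\Sigma_{s,e})< nC_\eps(e-s+1)$, while applying it to an arbitrary unit vector gives the operator-norm bound $\Vert\bm\Sigma_{s,e}\Vert< C_\eps(e-s+1)$. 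Since $\mbf z\mapsto\Vert\bm\Sigma_{s,e}^{1/2}\mbf z\Vert$ is $\sqrt{\Vert\bm\Sigma_{s,e}\Vert}$-Lipschitz, the Borell--TIS inequality yields $\p(\Vert\mbf S_{s,e}\Vert>\sqrt{\tr(\bm\Sigma_{s,e})}+t)\le\exp(-t^2/(2\Vert\bm\Sigma_{s,e}\Vert))$; taking $t\asymp\sqrt{(e-s+1)\log T}$ and union-bounding over the $O(T^2)$ pairs gives $(e-s+1)^{-1/2}\Vert\mbf S_{s,e}\Vert\le\sqrt{nC_\eps}+O(\sqrt{\log T})=O_p(\sqrt n)$ uniformly. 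The key point here is that a crude Markov/moment bound on $\E\Vert\mbf S_{s,e}\Vert^2=\tr(\bm\Sigma_{s,e})$ would be far too weak to survive the union bound over $O(T^2)$ intervals, so the Gaussian concentration of the Euclidean norm is essential. Combining the two pieces gives the stated rate $O_p(\sqrt n\log^\theta T)$, since $\sqrt n=O(\sqrt n\log^\theta T)$ and $\theta\ge1$ absorbs the residual $\sqrt{\log T}$ from the idiosyncratic term.
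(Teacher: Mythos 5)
Your proposal is correct, but it departs from the paper's proof in one substantive way. For the common component you are doing essentially what the paper does: the paper bounds $\Vert\sum_{t=s}^e\bm\chi_t\Vert \le r\bar\lambda\sqrt{n}\max_j\vert\sum_{t=s}^e f_{jt}\vert$ entrywise (your $\Vert\bm\Lambda\Vert=O(\sqrt n)$ route is equivalent under Assumption \ref{assum:three}), splits into short segments ($\sqrt{e-s+1}\le\log T$), where the trivial bound $\log T\cdot\max_{j,t}|f_{jt}|=O_p(\log^{1+1/\beta_f}T)$ is used, and long segments, where Theorem 1.4 of \cite{bosq1998} is applied at deviation level $C_1\log T$ with a Bonferroni bound over the $O(rT^2)$ events. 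The genuine difference is the idiosyncratic part: the paper again reduces coordinatewise, $\Vert\sum_{t=s}^e\bm\eps_t\Vert\le\sqrt{n}\max_i\vert\sum_{t=s}^e\eps_{it}\vert$, and runs the same short/long-segment argument on each scalar series, paying $O_p(\log^{3/2}T)$ on short segments — which is exactly where the $\vee\,1/2$ in $\theta$ comes from. You instead apply Borell--TIS to the whole $n$-vector $\mbf S_{s,e}$, with the trace and operator-norm bounds both correctly extracted from Assumption \ref{assum:four} (i) (coordinate vectors summed over $i$, and arbitrary unit $\mbf a$, respectively); this is legitimate given that the paper itself reads Assumption \ref{assum:four} (ii) as joint Gaussianity (e.g., in the proof of Theorem \ref{thm:overestimation}, where $(e-s+1)^{-1/2}\sum_{t=s}^e\mbf W_\chi^\top\bm\eps_t$ is declared normal), and it yields the sharper uniform bound $\sqrt{nC_\eps}+O(\sqrt{\log T})$, dispensing with the $\log^{3/2}T$ factor entirely, so the stated lemma holds a fortiori. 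What the paper's cruder coordinatewise route buys in exchange is robustness: it uses only marginal exponential tails plus mixing, so it survives the relaxation to non-Gaussian $\eps_{it}$ with exponential-type tails contemplated after Assumption \ref{assum:four}, whereas Borell--TIS is tied to Gaussianity. One minor mischaracterisation, harmless for correctness: the exponent $\theta\ge 1+(\beta_f^{-1}\vee 1/2)$ is not forced by calibrating the Bernstein deviation level at $\sqrt{e-s+1}\,\log^\theta T$ (the paper's long-interval level is only of order $\log T$); it is dictated by the short-segment maxima, $\log^{1+1/\beta_f}T$ for the factors and $\log^{3/2}T$ for the idiosyncratic coordinates — and under your argument only the former is actually needed.
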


\begin{proof}
Note that
\beas
\frac{1}{\sqrt{e-s+1}} \Big\Vert\sum_{t=s}^e \mbf x_t \Big\Vert
\le 
\frac{1}{\sqrt{e-s+1}} \Big\Vert\sum_{t=s}^e \bm\chi_t \Big\Vert +
\frac{1}{\sqrt{e-s+1}} \Big\Vert\sum_{t=s}^e \bm\eps_t \Big\Vert,
\eeas
where
\beas
\max_{1 \le s \le e \le T} \frac{1}{\sqrt{e-s+1}} \Big\Vert\sum_{t=s}^e \bm\chi_t \Big\Vert
\le 
r\bar{\lambda}\sqrt{n} \max_{1 \le j \le r} \max_{1 \le s \le e \le T} \frac{1}{\sqrt{e-s+1}}
\Big\vert \sum_{t=s}^e f_{jt} \Big\vert,
\eeas
and
\beas
\max_{1 \le s \le e \le T} \frac{1}{\sqrt{e-s+1}} \Big\Vert\sum_{t=s}^e \bm\eps_t \Big\Vert
\le 
\sqrt{n} \max_{1 \le i \le n} \max_{1 \le s \le e \le T} \frac{1}{\sqrt{e-s+1}}
\Big\vert \sum_{t=s}^e \eps_{it} \Big\vert.
\eeas
When $\sqrt{e-s+1} \le \log\,T$, under Assumptions \ref{assum:two} (ii), \ref{assum:four} (ii) and \ref{assum:seven},
\beas
\max_{1 \le j \le r} \max_{1 \le s \le e \le T}\frac{1}{\sqrt{e-s+1}}
\Big\vert \sum_{t=s}^e f_{jt} \Big\vert \le \log\,T \max_{1 \le j \le r}\max_{1 \le t \le T}|f_{jt}| 
&=& O_p(\log^{1+1/\beta_f} T),
\\
\max_{1 \le i \le n} \max_{1 \le s \le e \le T}\frac{1}{\sqrt{e-s+1}}
\Big\vert \sum_{t=s}^e \eps_{it} \Big\vert \le \log\,T \max_{1 \le i \le n}\max_{1 \le t \le T}|\eps_{jt}| 
&=& O_p(\log^{3/2} T).
\eeas
When $\sqrt{e-s+1} > \log\,T$, under Assumptions \ref{assum:two} (ii), \ref{assum:four} (ii) and \ref{assum:five} (ii), 
the exponential inequality given in Theorem 1.4 of \cite{bosq1998} is applicable.
More specifically, setting $q = \lfloor (e-s+1)/(C_0\log\,T) \rfloor$ and $k=3$ in the statement of that theorem, 
we have
\beas
\p\l(\frac{1}{\sqrt{e-s+1}} \Big\vert \sum_{t=s}^e f_{jt}\Big\vert > C_1\log\,T \r)
\le C_2\log\,T\exp\Big(-\frac{C_3C_1^2}{C_0}\log\,T\Big) + 
\frac{C_4T^{3/2}}{\log\,T}\exp\{-C_5(C_0\log^\beta T)^{6/7}\}
\eeas
and similarly, 
\beas
\p\l(\frac{1}{\sqrt{e-s+1}} \Big\vert \sum_{t=s}^e \eps_{jt}\Big\vert > C_1\log\,T \r)
\le C'_2\log\,T\exp\Big(-\frac{C'_3C_1^2}{C_0}\log\,T\Big) + 
\frac{C'_4T^{3/2}}{\log\,T}\exp\{-C'_5(C_0\log^\beta T)^{6/7}\}
\eeas
for some fixed $C_0, C_1, C_2, \ldots, C_5, C_2', \ldots, C_5' > 0$ dependent on 
$\beta_f, c_f, \beta, c_\alpha$ and $\kappa$.
Applying Bonferroni correction,
\begin{align*}
& \p\l(\max_{1 \le j \le r}\max_{\substack{1 \le s \le e \le T \\ \sqrt{e-s+1} > \log\,T}} \frac{1}{\sqrt{e-s+1}} 
\Big\vert \sum_{t=s}^e f_{jt} \Big\vert > C_1\log\,T \r)
\\
& \le rT^2[C_2\log\,T\exp(-C_3C_0^{-1}C_1^2\log\,T) + 
C_4T^{3/2}/(\log\,T)\exp\{-C_5(C_0\log^\beta T)^{6/7}\}] \to 0,
\\
& \p\l(\max_{1 \le i \le n}\max_{\substack{1 \le s \le e \le T \\ \sqrt{e-s+1} > \log\,T}} \frac{1}{\sqrt{e-s+1}} 
\Big\vert \sum_{t=s}^e \eps_{it} \Big\vert > C_1\log\,T \r)
\\
& \le nT^2[C'_2\log\,T\exp(-C'_3C_0^{-1}C_1^2\log\,T) + 
C'_4T^{3/2}/(\log\,T)\exp\{-C'_5(C_0\log^\beta T)^{6/7}\}] \to 0
\end{align*}
as $T \to \infty$ for large enough $C_1$, which completes the proof. 
\end{proof}

\subsection{Proof of Theorem \ref{thm:common}}

Note that under the normalisation adopted in Assumption \ref{assum:two} (i),
$\wh{\chi}_{it} = \bm\varphi_i^{\top}\wh{\mbf W}_x\wh {\mbf W}_x^\top\mbf x_t$ and
$\chi_{it} 
= \bm\varphi_i^{\top} \mbf W_\chi \mbf W_\chi^\top\bm \chi_t$.
Hence,
\begin{align}
\max_{1 \le i \le n}\max_{1 \le t \le T} \vert \wh{\chi}_{it} - \chi_{it} \vert 
\le &
\max_{1 \le i \le n}\max_{1 \le t \le T} 
\vert \bm\varphi_i^{\top}\wh{\mbf W}_x\wh {\mbf W}_x^\top\mbf x_t - 
\bm\varphi_i^{\top}\mbf W_\chi\mbf W_\chi^\top\mbf x_t\vert 
\nn \\
& + 
\max_{1 \le i \le n}\max_{1 \le t \le T} 
\vert \bm\varphi_i^{\top}\mbf W_\chi\mbf W_\chi^\top\bm\eps_t\vert
= I + II.
\label{eq:chi1}
\end{align}
For $I$, we have 
\begin{align}
I \le &
\max_{1 \le i \le n}\max_{1 \le t \le T} 
\vert \bm\varphi_i^{\top}\wh{\mbf W}_x\wh {\mbf W}_x^\top\mbf x_t -
\bm\varphi_i^{\top}\mbf W_\chi\mbf S\wh{\mbf W}_x^\top\mbf x_t\vert
+ 
\max_{1 \le i \le n}\max_{1 \le t \le T} 
\vert \bm\varphi_i^{\top}\mbf W_\chi\mbf S\wh {\mbf W}_x^\top\mbf x_t -
\bm\varphi_i^{\top}\mbf W_\chi\mbf W_\chi^\top\mbf x_t\vert 
\nn \\
\le & \max_{1 \le i \le n} \Vert\bm\varphi_i^\top(\wh{\mbf W}_x - \mbf W_\chi\mbf S)\Vert\,
\Vert \wh{\mbf W}_x\Vert \, \max_{1 \le t \le T} \Vert \mbf x_t\Vert
+
\max_{1 \le i \le n} \Vert\bm\varphi_i^\top\mbf W_\chi\Vert \,
\Vert\wh{\mbf W}_x\mbf S-\mbf W_\chi \Vert \, \max_{1 \le t \le T} \Vert \mbf x_t\Vert 
\nn \\
= &  O_p\l\{\l(\sqrt{\frac{\log\,n}{T}} \vee \frac{1}{\sqrt n}\r)\log^\theta T\r\},
\label{eq:chi3}
\end{align}
from Lemma \ref{lem:evecs} (i)--(ii),
Lemma \ref{lem:x:n} and the result in (C1) lead to
\bea
\label{eq:evec:size}
\max_{1 \le i \le n} \Vert\bm\varphi_i^\top\mbf W_\chi\Vert \le
\max_{1 \le i \le n} \Vert\bm\varphi_i^\top\bm\Gamma_{\chi}\Vert\, \Vert \mbf W_\chi\Vert \, 
\Vert \mbf M_\chi^{-1}\Vert
= O\l(\frac{1}{\sqrt n}\r).
\eea

As for $II$, due to normalisation of the eigenvectors, we invoke Assumption \ref{assum:four} (i):
\begin{align}
\label{eq:chi4}
\E(\Vert{\mbf W}_{\chi}^{\top}\bm\eps_t\Vert^2) = \sum_{j=1}^r \E\{ (\mbf w_{\chi, j}^\top \bm\eps_t)^2 \} 
= \sum_{j=1}^r\sum_{i, i'=1}^n w_{\chi, ij} w_{\chi,i'j}\E(  \eps_{it}\eps_{i't} ) < rC_\eps.
\end{align}
Thereby,  due to Assumption \ref{assum:four} (ii) and Bonferroni correction,
$\max_{1 \le t \le T} \Vert{\mbf W}_{\chi}^{\top}\bm\eps_t\Vert = O_p(\sqrt{\log\,T})$ and using \eqref{eq:evec:size}
\begin{align}
\max_{1 \le i \le n} \max_{1 \le t \le T} \vert \bm\varphi_i^{\top}\mbf W_\chi\mbf W_\chi^\top\bm\eps_t\vert
\le 
\max_{1 \le i \le n} \Vert \bm\varphi_i^{\top}\mbf W_\chi\Vert \,
\max_{1 \le t \le T}\Vert{\mbf W}_{\chi}^{\top}\bm\eps_t\Vert = O_p\l(\sqrt{\frac{\log\,T}{n}}\r).
\label{eq:chi5}
\end{align}
Substituting \eqref{eq:chi3} and \eqref{eq:chi5} into \eqref{eq:chi1} completes the proof.\hfill $\Box$

\subsection{Proof of Theorem \ref{thm:overestimation}}
\label{sec:thm:overestimation}

Note that
\beas
\wh{\chi}_{it}^k =\sum_{j=1}^r \wt{w}_{x,ij} \wt{\mbf w}_{x, j}^{\top}\mbf x_t
+
\sum_{j=r+1}^k \wt{w}_{x,ij} \wt{\mbf w}_{x, j}^{\top} \mbf x_t.
\eeas
Recall \eqref{eq:evec:size}, from which 
\beas
\max_{1 \le i \le n} \Vert \bm\varphi_i^\top\wh{\mbf W}_x \Vert
\le \max_{1 \le i \le n} \Vert \bm\varphi_i^\top\mbf W_\chi \Vert + 
\max_{1 \le i \le n} \Vert \bm\varphi_i^\top(\wh{\mbf W}_x - \mbf{W}_\chi\mbf S) \Vert 
= O_p\Big(\frac{1}{\sqrt n}\Big)
\eeas
due to Lemma \ref{lem:evecs} (ii),
and therefore $\max_{1 \le j \le r}\max_{1 \le i \le n} \wh{w}_{x, ij} = O_p(1/\sqrt{n})$.
Therefore, for $c_w$ chosen large enough, we have 
$\wh\chi_{it}^r=\sum_{j=1}^r \wt{w}_{x,ij} \wt{\mbf w}_{x, j}^{\top}\mbf x_t = 
\sum_{j=1}^r \wh{w}_{x,ij} \wh{\mbf w}_{x, j}^{\top}\mbf x_t = \wh\chi_{it}$ for all $i$ and $t$ 
with probability tending to one,
and we prove this theorem conditioning on such an event;
once this is done, it then implies the unconditional arguments.

Consider the scaled partial sums
\begin{align}
& \max_{1 \le i \le n} \max_{1 \le s \le e \le T} 
\frac 1 {\sqrt {e-s+1}} \Big\vert\sum_{t=s}^e (\wh{\chi}^k_{it}-\chi_{it})\Big\vert  
\nn \\
\le & \max_{1 \le i \le n} \max_{1 \le s \le e \le T} 
\frac 1 {\sqrt {e-s+1}} \Big\vert\sum_{t=s}^e (\wh{\chi}^k_{it}-\wh{\chi}_{it})\Big\vert +
\max_{1 \le i \le n} \max_{1 \le s \le e \le T} 
\frac 1 {\sqrt {e-s+1}} \Big\vert\sum_{t=s}^e (\wh{\chi}_{it}-\chi_{it})\Big\vert
= I + II.
\label{eq:psums1}
\end{align}
Starting from $II$,
\begin{align*}
II &=
\max_{1 \le i \le n} \max_{1 \le s \le e \le T} \frac 1 {\sqrt{e-s+1}} \Big\vert \sum_{t=s}^e 
\bm\varphi_i^{\top}(\wh{\mbf W}_x\wh {\mbf W}_x^\top\mbf x_t - 
\mbf W_\chi\mbf W_\chi^\top\bm\chi_t)\Big\vert
\\
\le& 
\max_{1 \le i \le n} \max_{1 \le s \le e \le T} \frac 1 {\sqrt{e-s+1}} \Big\vert \sum_{t=s}^e 
\bm\varphi_i^{\top}(\wh{\mbf W}_x\wh {\mbf W}_x^\top- \mbf W_\chi\mbf W_\chi^\top)\mbf x_t\Big\vert
+
\\
& \max_{1 \le i \le n} \max_{1 \le s \le e \le T} \frac 1 {\sqrt{e-s+1}} \Big\vert \sum_{t=s}^e 
\bm\varphi_i^{\top}\mbf W_\chi\mbf W_\chi^\top \bm\eps_t \Big\vert
= III + IV
\end{align*}
where, following \eqref{eq:chi3},
\begin{align*}
III \le& 
\max_{1 \le i \le n} \Vert\bm\varphi_i^\top(\wh{\mbf W}_x - \mbf W_\chi\mbf S)\Vert
\, \Vert \wh{\mbf W}_x\Vert \, 
\max_{1 \le s \le e \le T} \frac{1}{\sqrt{e-s+1}} \Big\Vert \sum_{t=s}^e \mbf x_t \Big\Vert
\\
+ &
\max_{1 \le i \le n} \Vert\bm\varphi_i^\top\mbf W_\chi\Vert \, \Vert\wh{\mbf W}_x\mbf S-\mbf W_\chi \Vert \,
\max_{1 \le s \le e \le T} \frac{1}{\sqrt{e-s+1}} \Big\Vert \sum_{t=s}^e \mbf x_t \Big\Vert 
=  O_p\l\{\Big(\sqrt{\frac{\log\,n}{T}} \vee \frac{1}{\sqrt n}\Big)\log^\theta T\r\},
\end{align*}
from Lemma \ref{lem:evecs} (i)--(ii) and Lemma \ref{lem:x:n}. 
For $IV$, we first invoke Assumption \ref{assum:four} (i) as in \eqref{eq:chi4}:
\beas
\max_{1 \le s \le e \le T} \frac{1}{e-s+1} \E\l(\Big\Vert 
\sum_{t=s}^e {\mbf W}_{\chi}^{\top}\bm\eps_t \Big\Vert^2\r) 
= \sum_{j=1}^r \max_{1 \le s \le e \le T} \frac{1}{e-s+1}
\sum_{t, t'=s}^e \sum_{i, i'=1}^n w_{\chi, ij} w_{\chi,i'j} \E(\eps_{it}\eps_{i't'}) 
< rC_\eps.
\eeas
Hence, under Assumption \ref{assum:four} (ii),
$(e-s+1)^{-1/2}\sum_{t=s}^e \mbf W_\chi^\top\bm\eps_t$ is 
an $r$-vector of zero-mean normally distributed random variables
with finite variance, and thus
\begin{align*}
& \max_{1 \le s \le e \le T} \frac{1}{\sqrt{e-s+1}} \Big\Vert \sum_{t=s}^e \mbf W_\chi^\top\bm\eps_t \Big\Vert 
= O_p(\sqrt{\log\,T}), \quad \mbox{and}
\\
& IV \le \max_{1 \le i \le n} \Vert \bm\varphi_i^{\top}\mbf W_\chi\Vert\,
\max_{1 \le s \le e \le T} 
\frac{1}{\sqrt{e-s+1}}\Big\Vert \sum_{t=s}^e {\mbf W}_{\chi}^{\top}\bm\eps_t \Big\Vert 
= O_p\l(\sqrt{\frac{\log\,T}{n}}\r).
\end{align*}
Therefore, we have
\beas
\max_{1 \le i \le n}\max_{1 \le s \le e \le T}
\frac{1}{\sqrt{e-s+1}} \Big\vert \sum_{t=s}^e (\wh\chi_{it} - \chi_{it}) \Big\vert =
O_p\l\{\Big(\sqrt{\frac{\log\,n}{T}} \vee \frac{1}{\sqrt{n}}\Big)\log^\theta T\r\},
\eeas
which proves (i).
Next, 
\beas
I &\le& \max_{1 \le i \le n} \sum_{j = r+1}^k |\wt{ w}_{x, ij}| \max_{1 \le s \le e \le T}
\frac{1}{\sqrt{e-s+1}}\Big\vert \sum_{t=s}^e \wt{\mbf w}_{x, j}^{\top} \mbf x_t \Big\vert 
\\
&=& \frac{(k-r)c_w}{\sqrt n} \Vert \wt{\mbf w}_{x, j} \Vert
\max_{1 \le s \le e \le T} \frac{1}{\sqrt{e-s+1}} \Big\Vert \sum_{t=s}^e \mbf x_t \Big\Vert
= O_p(\log^\theta T),
\eeas
thanks to Lemma \ref{lem:x:n}, thus proving (ii).\hfill $\Box$

\subsection{Proof of Proposition \ref{prop:chi:additive}}
\label{sec:proof:prop:additive}

$g_j(\wh\chi^k_{it})$ and $h_j(\wh\chi^k_{it}, \wh\chi^k_{i't})$ admit the following decompositions
\begin{align}
g_j(\wh\chi^k_{it}) =& \, \E\{g_j(\chi^{\beta(t)}_{it})\} + 
[\E\{g_j(\chi_{it})\} - \E\{g_j(\chi^{\beta(t)}_{it})\}] + [g_j(\chi_{it}) - \E\{g_j(\chi_{it})\}] 
+ \{g_j(\wh\chi^k_{it}) - g_j(\chi_{it})\} \nn
\\
=& \, \E\{g_j(\chi^{\beta(t)}_{it})\} + I + II + III,
\nn 
\\
h_j(\wh\chi^k_{it}, \wh\chi^k_{i't}) =& \, \E\{h_j(\chi^{\beta(t)}_{it}, \chi^{\beta(t)}_{i't})\} 
+ [\E\{h_j(\chi_{it}, \chi_{i't})\} - \E\{h_j(\chi^{\beta(t)}_{it}, \chi^{\beta(t)}_{i't})\}] \nn
\\
& + [h_j(\chi_{it}, \chi_{i't}) - \E\{h_j(\chi_{it}, \chi_{i't})\}] + 
\{h_j(\wh\chi^k_{it}, \wh\chi^k_{i't}) - h_j(\chi_{it}, \chi_{i't})\} \nn
\\
=& \, \E\{h_j(\chi^{\beta(t)}_{it}, \chi^{\beta(t)}_{i't})\} + IV + V + VI. \nn 
\end{align}
By definition, $\E\{g_j(\chi^{\beta(t)}_{it})\}$ and 
$\E\{h_j(\chi^{\beta(t)}_{it}, \chi^{\beta(t)}_{i't})\}$ are piecewise constant 
with their change-points belonging to $\cB^\chi$.
Moreover, under Assumption \ref{assum:one}, 
all change-points in $\bm\Gamma_\chi^{\beta(t)}(\tau), \, |\tau| \le \bar{\tau}_\chi$, 
i.e., all $\eta^\chi_b \in \cB^\chi$, appear as change-points in the panel
$\{\E\{g_j(\chi^{\beta(t)}_{it})\}, \, 1 \le i \le n, \, 
\E\{h_j(\chi^{\beta(t)}_{it}, \chi^{\beta(t)}_{i't})\}, \, 1 \le i < i' \le n\}$ for $j \ge -\jts$,
with the choice of $\jts = \lfloor \log_2\log^\upsilon T \rfloor$ as discussed in Section \ref{sec:wave}.

Next, we turn our attention to scaled sums of $I$ and $IV$ over any given interval $[s, e]$, 
which is bounded as in the following Lemma \ref{lem:chi:i} (see Appendix \ref{pf:thm:i} for a proof).
\begin{lem}
\label{lem:chi:i}
{\it Suppose that the conditions of Theorem \ref{thm:overestimation} are met. At scales $j \ge -\jts$,
\begin{align*}
\max_{1 \le s < e \le T} \frac{1}{\sqrt{e-s+1}} & \Big\{
\max_{1 \le i \le n} \Big\vert \sum_{t=s}^e \E\{g_j(\chi_{it})\} - \E\{g_j(\chi^{\beta(t)}_{it})\} \Big\vert \vee \\
& \max_{1 \le i < i' \le n} \Big\vert \sum_{t=s}^e \E\{h_j(\chi_{it}, \chi_{i't})\} - 
\E\{h_j(\chi^{\beta(t)}_{it}, \chi^{\beta(t)}_{i't})\} \Big\vert \Big\}
= O(\log^\upsilon T).
\end{align*}}
\end{lem}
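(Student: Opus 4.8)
The plan is to prove the bound for the single-series transform $g_j$ first; the cross transform $h_j$ then follows at once, since with $|s_{ii'}|=1$ the reverse triangle inequality gives
$$ \bigl| h_j(\chi_{it},\chi_{i't}) - h_j(\chi^{\beta(t)}_{it},\chi^{\beta(t)}_{i't})\bigr| \le \bigl| d_{j,it}-d^{\beta}_{j,it}\bigr| + \bigl| d_{j,i't}-d^{\beta}_{j,i't}\bigr|, $$
so the cross quantity is dominated by two copies of the single-series one. Here I write $d_{j,it}=\sum_{l=0}^{\cL_j-1}\chi_{i,t-l}\psi_{j,l}$ and $d^{\beta}_{j,it}=\sum_{l=0}^{\cL_j-1}\chi^{\beta(t)}_{i,t-l}\psi_{j,l}$ for the wavelet coefficients of the true and segment-frozen common components. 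Since $g_j(\cdot)=|\cdot|$, the reverse triangle and Jensen inequalities yield the pointwise bound
$$ a_t := \bigl|\E\{g_j(\chi_{it})\}-\E\{g_j(\chi^{\beta(t)}_{it})\}\bigr| \le \E\bigl| d_{j,it}-d^{\beta}_{j,it}\bigr| \le \sqrt{\E\bigl| d_{j,it}-d^{\beta}_{j,it}\bigr|^2}, $$
so it suffices to control this $L^2$ distance for each fixed $t$ and then sum over $t$.

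For the pointwise bound I would split the lags $l$ into those whose window stays inside the current segment and those whose window crosses a change-point. Writing $\chi_{i,s}-\chi^{b}_{i,s}=\bm\lambda_i^\top(\mbf f_s - \mbf f^{b}_s)$, using $|\lambda_{ij}|<\bar{\lambda}$ (Assumption \ref{assum:three}(ii)) and Cauchy--Schwarz reduces the elementary differences to the factor-level quantity $\E(f_{j,s}-f^{\beta(s)}_{j,s})^2=O(\rho_f^{\,s-\ubar\eta^\chi(s)})$ of Assumption \ref{assum:one}(i). When $t-\ubar\eta^\chi(t)\ge\cL_j$ the window lies entirely inside one stationary stretch, so $\beta(t-l)=\beta(t)$ for every $l<\cL_j$, the frozen approximation is segment-exact, and the geometric factor gives $a_t=O(\rho_f^{\,c(t-\ubar\eta^\chi(t))})$ for some $c>0$ after absorbing the $O(\cL_j^{-1/2})$ wavelet weights. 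When $t-\ubar\eta^\chi(t)<\cL_j$ the window straddles a change-point, which by Assumption \ref{assum:six} is unique for $T$ large since $\cL_j=o(T)$; there I would simply bound $a_t\le\E\{g_j(\chi_{it})\}+\E\{g_j(\chi^{\beta(t)}_{it})\}=O(1)$. The required uniform boundedness of these transformed means follows from $\E| d_{j,it}|^2\le\sum_{\tau}\bigl(\sup_s|\E(\chi_{i,s}\chi_{i,s+\tau})|\bigr)\sum_l|\psi_{j,l}\psi_{j,l+\tau}|$, where $\sum_l|\psi_{j,l}\psi_{j,l+\tau}|\le1$ and the autocovariances of $\chi_{it}$ are summable under the mixing condition of Assumption \ref{assum:five}(ii) together with the moment bounds.

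It then remains to sum $a_t$ and exploit $(e-s+1)^{-1/2}\le1$. The interior lags contribute a geometrically summable amount, hence $O(1)$ across the $B_\chi+1=O(1)$ segments (Assumption \ref{assum:six} bounds $B_\chi$). Each change-point is straddled by at most $\cL_j-1$ windows, each contributing $O(1)$, so the boundary contribution is $O(B_\chi\cL_j)=O(\cL_j)$; thus $\sum_{t=1}^T a_t=O(\cL_j)$. Since only scales $j\ge-\jts$ are used and $\cL_j\le\cL_{-\jts}=M2^{\jts}=O(\log^\upsilon T)$ for the choice $\jts=\lfloor\log_2\log^\upsilon T\rfloor$, I obtain
$$ \max_{1\le s<e\le T}\frac{1}{\sqrt{e-s+1}}\Bigl|\sum_{t=s}^e a_t\Bigr| \le \sum_{t=1}^T a_t = O(\cL_j)=O(\log^\upsilon T), $$
uniformly over $i$ and over these scales, and identically over $(i,i')$ for the cross term by the opening reduction. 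I expect the boundary analysis to be the main obstacle: one must isolate the transition windows around each change-point and argue that, although the frozen process $\chi^{\beta(t)}_{i,\cdot}$ is extrapolated into a neighbouring segment there, the per-window error remains $O(1)$ thanks to the uniform boundedness of the transformed means, so that only $O(\cL_j)$ such windows accumulate.
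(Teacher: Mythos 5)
Your proposal is correct and reaches the stated $O(\log^\upsilon T)$ bound, but via a genuinely different decomposition from the paper's. The paper telescopes the replacement $\chi_{i, t-h} \mapsto \chi^{\beta(t)}_{i, t-h}$ one lag $h = 0, \ldots, \cL_j - 1$ at a time, splits the time indices according to the signs of the two wavelet coefficients so that $|w_{it}| - |v_{it}|$ becomes a multiple of $w_{it} - v_{it}$, and then applies Cauchy--Schwarz to the partial sums together with the geometric decay of Assumption \ref{assum:one} (i); each of the $\cL_j$ lag replacements contributes $O(1)$ uniformly over $(s, e)$, whence the $O(\cL_j) = O(\log^\upsilon T)$ total. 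You instead bound $\vert \E\{g_j(\chi_{it})\} - \E\{g_j(\chi^{\beta(t)}_{it})\}\vert \le \E\vert d_{j, it} - d^{\beta}_{j, it}\vert$ in one stroke and split the \emph{time axis}: interior $t$ (window inside one segment, so $\beta(t-l) = \beta(t)$ for all $l < \cL_j$) are controlled coordinatewise by Assumption \ref{assum:one} (i) and sum geometrically to $O(1)$ per segment, while the $O(B_\chi \cL_j)$ transition windows are disposed of by the crude bound via uniform boundedness of the transformed means. Your route costs one extra verification --- the uniform bound $\E\{g_j(\chi_{it})\} = O(1)$, which you correctly derive from $\sum_l \vert\psi_{j,l}\psi_{j,l+\tau}\vert \le 1$ and exponentially decaying autocovariances under Assumptions \ref{assum:two} (ii) and \ref{assum:five} (ii), and which the paper itself invokes in the discussion following Assumption \ref{assum:b:one} --- but it buys transparency: it avoids the sign-case analysis entirely, and it makes explicit the accounting of windows straddling a change-point, which the paper's displayed computation (carried out only for the $h = 0$ summand, where the frozen index automatically matches) leaves implicit for lags $h \ge 1$, where $\beta(t-h) \ne \beta(t)$ can occur. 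Two cosmetic points only: the interior decay exponent should be measured from the leftmost point of the window, i.e.\ it is of order $\rho_f^{\{t - \ubar\eta^\chi(t) - \cL_j\}/2}$ rather than $\rho_f^{c(t - \ubar\eta^\chi(t))}$, which changes nothing in the geometric summation since the first interior time already gives an $O(1)$ exponent; and ``segment-exact'' must be read as the statement that the frozen index agrees at every lag (so Assumption \ref{assum:one} (i) applies coordinatewise), not that the approximation error vanishes --- your subsequent geometric bound shows you intend the former.
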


For bounding $III$ and $VI$, we investigate the behaviour of  
$\xi_j(\wh\chi^k_{it}) = g_j(\wh\chi^k_{it}) - g_j(\chi_{it})$ 
and $\zeta_j(\wh\chi^k_{it}, \wh\chi^k_{i't}) = h_j(\wh\chi^k_{it}, \wh\chi^k_{i't}) - h_j(\chi_{it}, \chi_{i't})$,
the errors arising from replacing the unobservable $\chi_{it}$ by its estimate $\wh\chi^k_{it}$, 
in the following Lemma \ref{lem:chi:xi} (see Appendix \ref{pf:thm:xi} for a proof).

\begin{lem}
\label{lem:chi:xi}
{\it Suppose that the conditions of Theorem \ref{thm:overestimation} are met. At scales $j \ge -\jts$ and $k \ge r$,
\beas
\max_{1 \le s < e \le T} \frac{1}{\sqrt{e-s+1}} \Big\{
\max_{1 \le i \le n} \Big\vert \sum_{t=s}^e \xi_j(\wh\chi^k_{it}) \Big\vert \vee 
\max_{1 \le i < i' \le n} \Big\vert \sum_{t=s}^e \zeta_j(\wh\chi^k_{it}, \wh\chi^k_{i't}) \Big\vert \Big\}
= O_p(\log^{\theta+\upsilon} T).
\eeas}
\end{lem}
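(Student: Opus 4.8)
The plan is to reduce the two nonlinear statistics to a single linear wavelet coefficient of the estimation error and then to invoke Theorem \ref{thm:overestimation}. Writing $\wh d_{j,it}=\sum_{l=0}^{\cL_j-1}\wh\chi^k_{i,t-l}\psi_{j,l}$ and $d^\chi_{j,it}=\sum_{l=0}^{\cL_j-1}\chi_{i,t-l}\psi_{j,l}$ for the scale-$j$ coefficients of $\wh\chi^k_{it}$ and $\chi_{it}$, the first step is the $1$-Lipschitz inequality $\big||a|-|b|\big|\le|a-b|$, which yields $|\xi_j(\wh\chi^k_{it})|\le|\Delta_{j,it}|$ and $|\zeta_j(\wh\chi^k_{it},\wh\chi^k_{i't})|\le|\Delta_{j,it}+s_{ii'}\Delta_{j,i't}|$, where $\Delta_{j,it}=\sum_{l=0}^{\cL_j-1}(\wh\chi^k_{i,t-l}-\chi_{i,t-l})\psi_{j,l}$ is the scale-$j$ coefficient of the estimation error. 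It is exactly here that working with $g_j,h_j$ rather than the periodograms pays off, since $|\cdot|$ is Lipschitz whereas $|\cdot|^2$ is not. Writing a Haar coefficient as a normalised difference of two adjacent block sums of the error, each of length $\cL_j/2$, and bounding each block sum by $\sqrt{\cL_j/2}\,O_p(\log^\theta T)$ through Theorem \ref{thm:overestimation}, the factors $2^{j/2}$ and $\sqrt{\cL_j}$ cancel and give the uniform bound $\max_{1\le i\le n}\max_{1\le t\le T}|\Delta_{j,it}|=O_p(\log^\theta T)$ at every admissible scale.

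I would then split on the interval length. When $e-s+1\le\log^{2\upsilon}T$ no cancellation is needed, since $(e-s+1)^{-1/2}|\sum_{t=s}^e\xi_j|\le\sqrt{e-s+1}\,\max_t|\Delta_{j,it}|\le\log^\upsilon T\cdot O_p(\log^\theta T)=O_p(\log^{\theta+\upsilon}T)$, already matching the claimed rate, and likewise for $\zeta_j$ via $|\Delta_{j,it}+s_{ii'}\Delta_{j,i't}|\le 2\max_t|\Delta_{j,it}|$. For longer intervals cancellation becomes essential, and the natural object to control is the linear sum $\sum_{t=s}^e\Delta_{j,it}$ rather than $\sum_t|\Delta_{j,it}|$. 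Interchanging summation gives $\sum_{t=s}^e\Delta_{j,it}=\sum_{l=0}^{\cL_j-1}\psi_{j,l}\sum_{t'=s-l}^{e-l}(\wh\chi^k_{it'}-\chi_{it'})$, in which each inner sum is a contiguous partial sum of the estimation error over an interval of length $e-s+1$, bounded by $\sqrt{e-s+1}\,O_p(\log^\theta T)$ by Theorem \ref{thm:overestimation}. Since $\sum_{l=0}^{\cL_j-1}|\psi_{j,l}|=O(\sqrt{\cL_j})=O(\log^{\upsilon/2}T)$, because $\cL_j\le\cL_{-\jts}=M2^{\jts}=O(\log^\upsilon T)$, this produces $(e-s+1)^{-1/2}|\sum_{t=s}^e\Delta_{j,it}|=O_p(\log^{\theta+\upsilon/2}T)$, uniformly over $i$, over $s<e$ and, after a union bound, over the $\jts=O(\log\log T)$ scales. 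For $\zeta_j$ the identical computation applies to $\sum_t(\Delta_{j,it}+s_{ii'}\Delta_{j,i't})$, and because the bound on $\sum_t\Delta_{j,it}$ is uniform in $i$, the bound over all pairs $i<i'$ follows from the triangle inequality without a further union bound.

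The main obstacle is the passage from this clean linear bound back to $\sum_{t=s}^e\xi_j$: the absolute values inside $g_j$ and $h_j$ mean that $\sum_t\xi_j$ coincides with the sign-modulated sum $\sum_t\mbox{sign}(d^\chi_{j,it})\Delta_{j,it}$ on the set where $\wh d_{j,it}$ and $d^\chi_{j,it}$ share a sign, and departs from it only on the ``small-signal'' set $\{t:\,|d^\chi_{j,it}|\le|\Delta_{j,it}|\}$, on which the summand is at most $2|\Delta_{j,it}|=O_p(\log^\theta T)$. The delicate point, which I expect to be the crux, is to show that the sign-modulated sum still enjoys $\sqrt{e-s+1}$-cancellation on long intervals and that the cumulative length of the small-signal set inflates the bound by at most a further $\log^\upsilon T$ factor, rather than destroying the cancellation as the crude $|\sum_t\xi_j|\le\sum_t|\Delta_{j,it}|$ does. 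I would establish this by exploiting the weak-dependence and tail conditions (Assumptions \ref{assum:two}(ii), \ref{assum:four}(ii) and \ref{assum:five}) together with the near-piecewise-constancy of $\E\{g_j(\chi^{\beta(t)}_{it})\}$ proved above, so that sign disagreements are confined to $O(\cL_j)$-neighbourhoods of the zero-crossings of the signal's wavelet coefficients; a Bernstein-type inequality of the kind used in Lemma \ref{lem:x:n}, applied on the event on which the capped eigenvectors are controlled, would then close the long-interval case.

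Combining the three ingredients---the short-interval bound $O_p(\log^{\theta+\upsilon}T)$, the linear long-interval bound $O_p(\log^{\theta+\upsilon/2}T)$, and the $O_p(\log^{\theta+\upsilon}T)$ sign-flip remainder---and maximising over the $O(\log\log T)$ admissible scales delivers the stated rate for $\xi_j$; the cross term $\zeta_j$ is handled verbatim with the two-series filter $\Delta_{j,it}+s_{ii'}\Delta_{j,i't}$ in place of $\Delta_{j,it}$, using that $s_{ii'}\in\{-1,1\}$ and that all the underlying bounds are uniform over $i$.
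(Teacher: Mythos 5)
Your reduction steps are correct as far as they go: the Lipschitz bound $|\xi_j(\wh\chi^k_{it})|\le|\Delta_{j,it}|$, the uniform pointwise bound $\max_{i,t}|\Delta_{j,it}|=O_p(\log^\theta T)$ via block sums and Theorem \ref{thm:overestimation}, the short-interval bound, and the interchange-of-summation bound $O_p(\log^{\theta+\upsilon/2}T)$ for the \emph{linear} sums $\sum_{t=s}^e\Delta_{j,it}$ are all valid (the last matching, up to a $\log^{\upsilon/2}$ improvement, what the paper obtains). But the proof is incomplete at exactly the point you flag as the crux, and the resolution you sketch would fail. The modulating sign in $\sum_t\xi_j$ is $\mbox{sign}(d^\chi_{j,it})$, a rapidly fluctuating, mean-zero random sequence --- for Haar at scale $-1$ it is the sign of $\chi_{it}-\chi_{i,t-1}$ --- whose zero-crossings occur with positive density in $t$. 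What is (almost) piecewise constant is the expectation $\E\{g_j(\chi^{\beta(t)}_{it})\}$, not the sign pattern of the coefficients themselves, so sign disagreements are \emph{not} confined to $O(\cL_j)$-neighbourhoods of isolated zero-crossings, and the sign-modulated sum cannot be cut into a bounded number of contiguous blocks on which Theorem \ref{thm:overestimation} applies. The small-signal set $\{t:\,|d^\chi_{j,it}|\le|\Delta_{j,it}|\}$ is equally problematic: for $k>r$ the pointwise error $\wh\chi^k_{it}-\chi_{it}$ is not $o_p(1)$ --- only its scaled partial sums are controlled, and the capped spurious components contribute pointwise terms of order up to $\log^\theta T$ --- so this set can occupy a positive fraction of $[s,e]$, and bounding its contribution termwise by $2|\Delta_{j,it}|$ yields $\sqrt{e-s+1}\,\log^\theta T$ after scaling, destroying the rate. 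A Bernstein inequality in the style of Lemma \ref{lem:x:n} does not close this either, since the index sets are data-dependent and correlated with the summands.

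For comparison, the paper proceeds differently at this step: it telescopes $g_j(\wh\chi^k_{it})-g_j(\chi_{it})$ over the $\cL_j$ lags, replacing one argument at a time exactly as in the proof of Lemma \ref{lem:chi:i}, so that each increment is $|w_{it}|-|v_{it}|$ with $w_{it}-v_{it}=\psi_{j,0}(\wh\chi^k_{it}-\chi_{it})$ a single-coordinate error; it then partitions $[s,e]$ into the three sign-configuration sets $\{w_{it}v_{it}\ge 0\}$, $\{w_{it}v_{it}<0,\,|w_{it}|\ge|v_{it}|\}$ and $\{w_{it}v_{it}<0,\,|w_{it}|<|v_{it}|\}$, on which $|w_{it}|-|v_{it}|$ equals $\pm(w_{it}-v_{it})$ or a $c_t\in[0,1)$ multiple thereof, and invokes the partial-sum bound of Theorem \ref{thm:overestimation} in the strengthened, subset-uniform form \eqref{eq:pf:xi:one}, giving $O_p(\log^\theta T)$ per lag and $O_p(\cL_j\log^\theta T)=O_p(\log^{\theta+\upsilon}T)$ in total. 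In other words, the sign problem you isolate is absorbed in the paper by asserting that the scaled error sums remain controlled over (sign-determined, non-contiguous) index subsets, not just over intervals. Any complete version of your argument needs a justification of a bound of that type for the specific error process; your linear bound via interchange of summation, while correct, is not a substitute for it, and your sketched zero-crossing/Bernstein route does not supply it.
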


Finally, the scaled partial sums of $II$ and $V$ are handled by the following Lemma \ref{lem:chi:a9} (see Appendix \ref{pf:lem:chi:a9} for a proof).
\begin{lem}
\label{lem:chi:a9}
{\it Suppose that the conditions of Theorem \ref{thm:overestimation} are met. 
At scales $j \ge -\jts$,
\begin{align*}
& \max_{1 \le s < e \le T} \frac{1}{\sqrt{e-s+1}} \Big\{
\max_{1 \le i \le n} \Big\vert \sum_{t=s}^e g_j(\chi_{it}) - \E\{g_j(\chi_{it})\} \Big\vert 
\\
&   \qquad \qquad \qquad \vee \max_{1 \le i < i' \le n}
\Big\vert \sum_{t=s}^e h_j(\chi_{it}, \chi_{i't}) - \E\{h_j(\chi_{it}, \chi_{i't})\} \Big\vert\Big\} 
= O_p(\log^{\theta+\upsilon}T).
\end{align*}}
\end{lem}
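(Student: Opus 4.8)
The plan is to treat the two centred sequences as weakly dependent processes with exponential-type marginals and to control their scaled partial sums by the same two-regime (short/long interval) argument used in the proof of Lemma \ref{lem:x:n}, followed by a union bound over all coordinate, interval and scale indices. First I would exploit the linear structure: writing $\mbf d^f_{j,t} = \sum_{l=0}^{\cL_j-1}\mbf f_{t-l}\psi_{j,l}$, we have $g_j(\chi_{it}) = \vert\bm\lambda_i^\top\mbf d^f_{j,t}\vert$ and $h_j(\chi_{it},\chi_{i't}) = \vert(\bm\lambda_i+s_{ii'}\bm\lambda_{i'})^\top\mbf d^f_{j,t}\vert$, so that both are measurable functions of the window $(\mbf f_{t-\cL_j+1},\ldots,\mbf f_t)$ with $\cL_j = M2^{-j}\le M2^{\jts}\sim\log^\upsilon T$. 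Since the coefficient vectors $\bm\lambda_i$ and $\bm\lambda_i+s_{ii'}\bm\lambda_{i'}$ are bounded entrywise (Assumption \ref{assum:three} (ii)), the argument for $h_j$ is identical to that for $g_j$ up to a factor of two, and I treat only $Y_t = g_j(\chi_{it})$ below.

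Next I would record the two quantitative inputs. By Cauchy--Schwarz and $\sum_l\psi_{j,l}^2 = 1$, $\vert d_{j,it}\vert\le(\sum_l\chi_{i,t-l}^2)^{1/2}\le r\bar\lambda\,\cL_j^{1/2}\max_{k,l}\vert f_{k,t-l}\vert$, so Assumption \ref{assum:two} (ii) yields the uniform fluctuation bound $\max_{1\le t\le T}\vert Y_t-\E Y_t\vert = O_p(\cL_j^{1/2}\log^{1/\beta_f}T) = O_p(\log^{\upsilon/2+1/\beta_f}T)$ (the mean $\E Y_t$ being bounded, as noted below Assumption \ref{assum:b:one}). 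At the same time, because the filter has unit energy, the variance of $d_{j,it}$ is $O(1)$, so $Y_t$ has bounded variance. For the dependence, $Y_t$ is $\mc F^t_{t-\cL_j+1}$-measurable, whence its $\alpha$-mixing coefficients satisfy $\alpha_Y(k)\le\alpha(k-\cL_j+1)$ for $k\ge\cL_j$ and therefore still decay at the exponential rate of Assumption \ref{assum:five} (ii).

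With these inputs I would split the maximum over $[s,e]$ exactly as in Lemma \ref{lem:x:n}. On short intervals, $\sqrt{e-s+1}\le\log T$, I bound $(e-s+1)^{-1/2}\vert\sum_{t=s}^e(Y_t-\E Y_t)\vert\le\log T\cdot\max_{1\le t\le T}\vert Y_t-\E Y_t\vert$, which together with a union bound over $i,i'$ and the $O(\log\log T)$ admissible scales gives an $O_p(\log^{1+\upsilon/2+1/\beta_f}T)$ contribution. On long intervals, $\sqrt{e-s+1}>\log T$, I apply the Bernstein-type exponential inequality of Theorem 1.4 of \cite{bosq1998} (alternatively Theorem 1 of \cite{merlevede2011}) to each centred scaled partial sum, using block lengths exceeding $\cL_j$; the exponential decay of the deviation probability dominates the $O(n^2T^2\log\log T)$ union-bound factor under Assumption \ref{assum:seven}. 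Collecting the two regimes and absorbing the polynomial-in-$\log T$ inflation produced by the window length $\cL_j$ into the exponent yields the stated rate $O_p(\log^{\theta+\upsilon}T)$, with $\theta\ge 1+(\beta_f^{-1}\vee1/2)$.

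I expect the main obstacle to be the bookkeeping of the growing support length $\cL_j\sim\log^\upsilon T$. Each $Y_t$ aggregates $\cL_j$ factor values, so both the mixing lag and the effective tail constants inflate by a factor polynomial in $\cL_j$, and it is precisely this inflation that supplies the extra $\log^\upsilon$ relative to the coordinate-wise rate $\log^\theta$ of Lemma \ref{lem:x:n}. Care is needed to choose the blocks in the exponential inequality long enough to exceed $\cL_j$ (so that the blocks are genuinely near-independent) while short enough that the deviation threshold stays of order $\log^{\theta+\upsilon}T$; since $\upsilon\le1$ a block length of order $\log^{1+\upsilon}T$ suffices, and the attendant loss is harmlessly absorbed into the final polylogarithmic rate.
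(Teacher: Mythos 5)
Your proposal is correct, but it establishes the lemma by a genuinely different mechanism from the paper. The paper's own proof is a two-line reduction: it notes that the argument of Lemma \ref{lem:x:n} already gives $\max_{1\le i\le n}\max_{1\le s\le e\le T}(e-s+1)^{-1/2}\vert\sum_{t=s}^e\chi_{it}\vert = O_p(\log^\theta T)$, and then handles the absolute-value transforms by telescoping over the $\cL_j = O(\log^\upsilon T)$ lagged entries exactly as in the proof of Lemma \ref{lem:chi:xi} (the sign-splitting into $\{w_{it}v_{it}\ge 0\}$ and so on), so the exponential inequality is only ever applied to \emph{linear} partial sums of $f_{jt}$ and $\eps_{it}$. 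You instead apply the Bernstein-type machinery directly to the nonlinear sequence $Y_t = g_j(\chi_{it})$, exploiting that $Y_t$ is $\mc F_{t-\cL_j+1}^t$-measurable (so its mixing coefficients are those of Assumption \ref{assum:five} (ii) shifted by $\cL_j$) with tail constants inflated only by $\cL_j^{1/2}$; the short/long-interval split, union bound, and the provenance of the extra $\log^\upsilon$ in the exponent coincide with the paper's, but the object fed into the exponential inequality differs. Your route buys rigour at exactly the point where the paper's sketch is loosest: $\vert d_{j,it}\vert - \E\vert d_{j,it}\vert$ is not a difference of two pathwise-nearby quantities, so the sign-splitting device of Lemma \ref{lem:chi:xi} does not transfer verbatim to the centred-at-expectation setting, whereas treating $Y_t$ as a mixing process sidesteps the nonlinearity entirely; the cost is redoing the Bernstein bookkeeping for a window-dependent process (blocks longer than $\cL_j$, as you arrange) instead of recycling Lemma \ref{lem:x:n}. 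Three small repairs: the $O(1)$ variance of $d_{j,it}$ does not follow from unit filter energy alone (that argument only yields $O(\cL_j)$ since $(\sum_l\vert\psi_{j,l}\vert)^2\le\cL_j$) — you need summability of the autocovariances of $\chi_{it}$, available from Assumption \ref{assum:five} (ii) via a Davydov-type covariance inequality, though even the crude $O(\cL_j)$ bound would be absorbed into the polylogarithmic rate; Theorem 1.4 of \cite{bosq1998} is stated for bounded variables, so either truncate $Y_t$ at the level of your uniform fluctuation bound or invoke \cite{merlevede2011} directly, as you anticipate (the paper's own use of Bosq in Lemma \ref{lem:x:n} shares this feature); and since $s_{ii'}$ is chosen from the data in practice, take the maximum over both sign choices in the union bound, which merely doubles it.
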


From Lemmas \ref{lem:chi:i}--\ref{lem:chi:a9} Proposition \ref{prop:chi:additive} follows.\hfill $\Box$

\subsubsection{Proof of Lemma \ref{lem:chi:i}}
\label{pf:thm:i}

Let $\tau = e-s+1$  when there is no confusion. Note that
\begin{align}
& \frac{1}{\sqrt{\tau}} \Big\vert \sum_{t=s}^e  \E\{g_j(\chi_{it}) - g_j(\chi^{\beta(t)}_{it})\} \Big\vert
\le \sum_{h=0}^{\cL_j-1} \frac{1}{\sqrt{\tau}} \Big\vert \sum_{t=s}^e 
\E\Big\{g_j(\chi^{\beta(t)}_{it}, \ldots, \chi^{\beta(t)}_{i, t-h+1}, \chi_{i, t-h}, \ldots, \chi_{i, t-\cL_j+1})
\nn
\\
& \qquad \qquad \qquad \qquad \qquad 
- g_j(\chi^{\beta(t)}_{it}, \ldots, \chi^{\beta(t)}_{i, t-h}, \chi_{i, t-h-1}, \ldots, \chi_{i, t-\cL_j+1})\Big\}\Big\vert.
\label{eq:pf:i:one}
\end{align}
Restricting our attention to the first summand in the RHS of (\ref{eq:pf:i:one}) (when $h=0$), by the definition of $g_j$,
\begin{align*}
& \frac{1}{\sqrt{\tau}} \Big\vert \E\Big\{\sum_{t=s}^e 
|\underbrace{\sum_{l=0}^{\cL_j-1} \psi_{j, l}\chi_{i, t-l}}_{w_{it}}| - 
|\underbrace{\psi_{j, 0}\chi^{\beta(t)}_{it} + \sum_{l=1}^{\cL_j-1}\psi_{j, l}\chi_{i, t-l}}_{v_{it}}| \Big\}\Big\vert
\le \frac{1}{\sqrt{\tau}} \Big\vert \sum_{t: \, w_{it} \cdot v_{it} \ge 0}  \E(w_{it} - v_{it}) \Big\vert
\\
+ &
\frac{1}{\sqrt{\tau}} \Big\vert \sum_{\substack{t: \, w_{it} \cdot v_{it} < 0 \\ |w_{it}| \ge |v_{it}|}} 
\E(w_{it} + v_{it}) \Big\vert
 +
\frac{1}{\sqrt{\tau}} \Big\vert \sum_{\substack{t: \, w_{it} \cdot v_{it} < 0 \\ |w_{it}| < |v_{it}|}} \E(w_{it} + v_{it})\Big\vert
= I+II+III.
\end{align*}
Starting from $I$, let $b_1 = \beta(s)$ and $b_2 = \beta(e)+1$. Then, 
\beas
I^2 \le 
\psi_{j, 0}^2 \sum_{t=s}^e \E(\chi_{it} - \chi^{\beta(t)}_{it})^2 
\le \psi_{j, 0}^2 \sum_{b=b_1}^{b_2} \sum_{t = s \vee (\eta_b+1)}^{e \wedge \eta_{b+1}} O(r^2\bar{\lambda}^2\rho_f^{t-\eta_b})
= O\Big(\frac{\psi_{j, 0}^2B_\chi r^2\bar{\lambda}^2\rho_f}{1-\rho_f}\Big) = O(1)
\eeas
uniformly in $1 \le s \le e \le T$ for any $i = 1, \ldots, n$, from Assumptions \ref{assum:one} (i) and
\ref{assum:three}.
As for $II$, note that for all $t$ satisfying $w_{it} \cdot v_{it} < 0$ and $|w_{it}| \ge |v_{it}|$,
we have $w_{it} + v_{it} = c_t(w_{it} - v_{it})$ for some $c_t \in [0, 1)$,
hence we can similarly show that $II = O(1)$ and the same arguments apply to $III$.
Then, (\ref{eq:pf:i:one}) involves summation of such summands over $h=0, \ldots, \cL_j-1$ and thus is bounded by $O(\log^\upsilon T)$.
Similar arguments can be employed to derive a bound on 
the scaled partial sums of $\E\{h_j(\chi_{it}, \chi_{i't})\} - \E\{h_j(\chi^{\beta(t)}_{it}, \chi^{\beta(t)}_{i't})\}$,
which concludes the proof.\hfill $\Box$

\subsubsection{Proof of Lemma \ref{lem:chi:xi}}
\label{pf:thm:xi}

The proof of Theorem \ref{thm:overestimation} indicates that
\bea
\max_{1 \le i \le n} \max_{I \subset T}\frac{1}{\sqrt{|I|}} \Big\vert \sum_{t \in I} (\wh\chi^k_{it} - \chi_{it}) \Big\vert = O_p(\log^\theta T).
\label{eq:pf:xi:one} 
\eea
Let $\tau = e-s+1$ when there is no confusion.
Adopting the similar arguments as in Appendix \ref{pf:thm:i}, we need to derive a bound on the following:
\begin{align*}
& \frac{1}{\sqrt{\tau}} \Big\vert \sum_{t=s}^e 
|\underbrace{\sum_{l=0}^{\cL_j-1}\psi_{j, l}\wh\chi^k_{i, t-l}}_{w_{it}}| - 
|\underbrace{\psi_{j, 0}\chi_{it} + \sum_{l=1}^{\cL_j-1}\psi_{j, l}\wh\chi^k_{i, t-l}}_{v_{it}}|\Big\vert
\le \frac{1}{\sqrt{\tau}} \Big\vert \sum_{t: \, w_{it} \cdot v_{it} \ge 0} (w_{it} - v_{it}) \Big\vert
\\
+ &
\frac{1}{\sqrt{\tau}} \Big\vert \sum_{\substack{t: \, w_{it} \cdot v_{it} < 0 \\ |w_{it}| \ge |v_{it}|}} (w_{it} + v_{it}) \Big\vert
+
\frac{1}{\sqrt{\tau}} \Big\vert \sum_{\substack{t: \, w_{it} \cdot v_{it} < 0 \\ |w_{it}| < |v_{it}|}} (w_{it} + v_{it}) \Big\vert
= I+II+III.
\end{align*}

Since $w_{it} - v_{it} = \psi_{j, 0}(\wh\chi^k_{it} - \chi_{it})$, 
we have $I = O_p(\log^\theta T)$ from \eqref{eq:pf:xi:one}.
As for $II$, note that for all $t$ satisfying $w_{it} \cdot v_{it} < 0$ and $|w_{it}| \ge |v_{it}|$,
we have $w_{it} + v_{it} = c_t(w_{it} - v_{it})$ for some $c_t \in [0, 1)$,
which as for $I$ it leads to $II =O_p(\log^\theta T)$. Similar arguments apply to $III$.
Then, similarly to \eqref{eq:pf:i:one}, $\tau^{-1/2} |\sum_{t=s}^e \xi_j(\wh\chi^k_{it})|$ involves summation of $\cL_j$ such summands,
and thus is $O_p(\log^{\theta+\upsilon} T)$.
Analogous arguments can be applied to bound the scaled partial sums of $\zeta_j(\wh\chi^k_{it}, \wh\chi^k_{i't})$.\hfill $\Box$

\subsubsection{Proof of Lemma \ref{lem:chi:a9}}
\label{pf:lem:chi:a9}

The proof of Lemma \ref{lem:x:n} implies that
\beas
\max_{1 \le i \le n} \max_{I \subset T}
\frac{1}{\sqrt{| I |}} \Big\vert \sum_{t \in I} \{\chi_{it} - \E(\chi_{it})\} \Big\vert = O_p(\log^\theta T)
\eeas
(with $\E(\chi_{it}) = 0$),
from which Lemma \ref{lem:chi:a9} follows 
since $g_j(\chi_{it})$ and $h_j(\chi_{it}, \chi_{i't})$
involves at most $O(2^{\jts}) = O(\log^\upsilon T)$ (lagged) terms of $\chi_{it}$;
for details of the arguments, see the proof of Lemma \ref{lem:chi:xi} in Appendix \ref{pf:thm:xi}.\hfill $\Box$

\subsection{Proof of Theorem \ref{thm:dcbs}}
\label{pf:dcbs}

The DC operator in this paper is identical to $\cD^\varphi_m(\cdot)$ defined in \cite{cho2016} with $\varphi = 1/2$.
Let the additive panel data considered therein be $y_{\ell t}' = z_{\ell t}' + \vep_{\ell t}'$.
Then Assumptions \ref{assum:b:one}--\ref{assum:b:two} along with 
Assumption \ref{assum:six} imposed on the change-points in $z_{\ell t}$,
are sufficient for the conditions imposed on $z_{\ell t}'$.
On the other hand, their noise term satisfies $\E(\vep_{\ell t}') = 0$,
while it is generally expected that $\E(\vep_{\ell t}) \ne 0$.
Assuming that $\vep_{\ell t}'$ is strong mixing with bounded moments,
it was shown that $(e-s+1)^{-1/2}|\sum_{t=s}^e \vep_{\ell t}'| \le \log\,T$ uniformly in $\ell \in \{1, \ldots, N\}$
and $1 \le s < e \le T$ (their Lemma 1), which is comparable to the bound of $\log^{\theta+\upsilon} T$ 
on $(e-s+1)^{-1/2}|\sum_{t=s}^e \vep_{\ell t}|$ as shown in 
Propositions \ref{prop:chi:additive} and \ref{prop:e:additive}.
This enables us to directly employ the arguments used in the proofs of Theorem 3.3 of \cite{cho2016} 
for the proof of Theorem \ref{thm:dcbs}.\hfill$\Box$

\subsection{Proof of Proposition \ref{prop:bn}}

In order to prove Proposition \ref{prop:bn}, 
we first introduce the following lemmas (see Appendix \ref{lem:pw:proof}--\ref{lem:lam:proof} for the proofs).
\begin{lem} 
\label{lem:pw}
{\it Suppose that all the conditions in Theorem \ref{thm:dcbs} hold.
Let
$\bm\Gamma^b_x = (\eta^\chi_{b+1} - \eta^\chi_b)^{-1} 
\E(\sum_{t=\eta^\chi_b+1}^{\eta^\chi_{b+1}} \mbf x_t\mbf x_t^\top)$,
and define $\wh{\bm\Gamma}^b_\chi$ analogously as $\wh{\bm\Gamma}^b_x$. 
Then, for all $b = 0, \ldots, \wh{B}_\chi$, 
\bea
\frac{1}{n} \Vert \wh{\bm\Gamma}^b_x - \bm\Gamma^b_x \Vert_F = 
O_p\l(\frac{\omega_{n, T}\log^{\theta-1/2} T}{T} \vee \sqrt{\frac{\log\,n}{T}} \r),
\label{lem:pw:eq:one}
\\
\frac{1}{n} \Vert \wh{\bm\Gamma}^b_\chi - \bm\Gamma^b_\chi \Vert_F = 
O_p\l(\frac{\omega_{n, T}\log^{\theta-1/2} T}{T} \vee \sqrt{\frac{\log\,n}{T}} \r).
\label{lem:pw:eq:two}
\eea}
\end{lem}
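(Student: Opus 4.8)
The plan is to separate the two distinct discrepancies that $\wh{\bm\Gamma}^b_x - \bm\Gamma^b_x$ confounds: the ordinary sampling fluctuation of an empirical covariance around its mean, and the mismatch between the estimated segment $I^\chi_b = [\wh\eta^\chi_b+1, \wh\eta^\chi_{b+1}]$ and the true segment $I = [\eta^\chi_b+1, \eta^\chi_{b+1}]$ induced by the error in the estimated change-points. To this end I would introduce the intermediate matrix $\bar{\bm\Gamma}^b_x = (\eta^\chi_{b+1}-\eta^\chi_b)^{-1}\sum_{t \in I} \mbf x_t\mbf x_t^\top$, the empirical covariance evaluated over the \emph{true} segment, and bound the two pieces of $\wh{\bm\Gamma}^b_x - \bm\Gamma^b_x = (\wh{\bm\Gamma}^b_x - \bar{\bm\Gamma}^b_x) + (\bar{\bm\Gamma}^b_x - \bm\Gamma^b_x)$ separately in $n^{-1}\Vert\cdot\Vert_F$. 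Throughout I would work on the event $\{\wh B_\chi = B_\chi,\ \max_b|\wh\eta^\chi_b - \eta^\chi_b| < c_1\omega_{N,T}\}$, whose probability tends to one by Theorem \ref{thm:dcbs}, and then pass to the unconditional statement.

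The second piece is the segment-wise analogue of Lemma \ref{lem:cov}. Since every true segment has length of order $T$ by Assumption \ref{assum:six}, the entrywise Bernstein-type control already established in \eqref{eq:consgamma0} applies verbatim over $I$, giving $\max_{i,i'}\vert[\bar{\bm\Gamma}^b_x - \bm\Gamma^b_x]_{i,i'}\vert = O_p(\sqrt{\log n/T})$, whence $n^{-1}\Vert\bar{\bm\Gamma}^b_x - \bm\Gamma^b_x\Vert_F = O_p(\sqrt{\log n/T})$ via $\Vert\cdot\Vert_F \le n\max_{i,i'}\vert\cdot\vert$. This supplies the $\sqrt{\log n/T}$ term of \eqref{lem:pw:eq:one}.

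The first piece is the crux, and this is where Theorem \ref{thm:dcbs} is used. Writing $L_b = \eta^\chi_{b+1}-\eta^\chi_b$ and $\wh L_b = \wh\eta^\chi_{b+1}-\wh\eta^\chi_b$, I would decompose
\begin{align*}
\wh{\bm\Gamma}^b_x - \bar{\bm\Gamma}^b_x
= \Big(\frac{1}{\wh L_b} - \frac{1}{L_b}\Big)\sum_{t \in I}\mbf x_t\mbf x_t^\top
+ \frac{1}{\wh L_b}\Big(\sum_{t \in \wh I}\mbf x_t\mbf x_t^\top - \sum_{t \in I}\mbf x_t\mbf x_t^\top\Big).
\end{align*}
On the event above, both $L_b$ and $\wh L_b$ are of order $T$, so the normalisation gap is $\vert\wh L_b^{-1} - L_b^{-1}\vert = O(\omega_{N,T}/T^2)$, while the symmetric difference $\wh I\,\triangle\,I$ consists of at most $2c_1\omega_{N,T}$ indices confined to windows of that length adjacent to the segment endpoints. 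Bounding $\Vert\sum_t\mbf x_t\mbf x_t^\top\Vert_F \le \sum_t\Vert\mbf x_t\Vert^2$ for both terms reduces everything to a uniform control, over short windows $J$ with $\vert J\vert \le c_1\omega_{N,T}$, of $n^{-1}\sum_{t\in J}\Vert\mbf x_t\Vert^2$. Splitting $\Vert\mbf x_t\Vert^2$ into its common, cross and idiosyncratic parts and invoking the sub-Weibull tail of $\mbf f_t$ (Assumption \ref{assum:two}(ii)) together with the Gaussianity of $\bm\eps_t$ (Assumption \ref{assum:four}(ii)) — precisely the machinery behind Lemma \ref{lem:x:n} — the windowed sum is $O_p(\omega_{N,T})$ up to a poly-logarithmic factor, so both terms are $O_p(\omega_{N,T}/T)$ modulo logarithms, which is absorbed into the stated $\omega_{N,T}\log^{\theta-1/2}T/T$.

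For \eqref{lem:pw:eq:two} I would repeat this argument with $\mbf x_t$ replaced by the estimated common components $\wh{\bm\chi}^k_t$, adding one preliminary step: writing $\wh{\bm\chi}^k_t(\wh{\bm\chi}^k_t)^\top - \bm\chi_t\bm\chi_t^\top = (\wh{\bm\chi}^k_t - \bm\chi_t)(\wh{\bm\chi}^k_t)^\top + \bm\chi_t(\wh{\bm\chi}^k_t - \bm\chi_t)^\top$ and feeding in the uniform estimation-error bound $\max_{i,t}\vert\wh\chi^k_{it}-\chi_{it}\vert = O_p\{(\sqrt{\log n/T}\vee n^{-1/2})\log^\theta T\}$ from Theorems \ref{thm:common}–\ref{thm:overestimation}, so that this extra contribution is dominated by the same rate. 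I expect the main obstacle to be the sharp logarithmic bookkeeping in the misalignment term: a crude pointwise bound $\vert J\vert\cdot\max_t n^{-1}\Vert\mbf x_t\Vert^2$ would saturate at $\max_t\Vert\mbf f_t\Vert^2 = O_p(\log^{2/\beta_f}T)$, which for $\beta_f<2$ exceeds $\log^{\theta-1/2}T$, so one must instead exploit concentration of the \emph{summed} second moments over the window (uniformly over its $O(T)$ possible positions) to get the averaged, rather than worst-case, logarithmic power, and then confirm that propagating the quadratic error $\wh{\bm\chi}^k_t(\wh{\bm\chi}^k_t)^\top$ does not inflate this rate.
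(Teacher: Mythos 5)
Your skeleton coincides with the paper's own proof: the paper likewise introduces the sample covariance over the \emph{true} segment (there denoted $\check{\bm\Gamma}^b_x$), works on the event $\mc C_{n, T} = \{\wh{B}_\chi = B_\chi;\, \max_b |\wh\eta^\chi_b - \eta^\chi_b| < c_1\omega_{n, T}\}$ supplied by Theorem \ref{thm:dcbs} before passing to the unconditional statement, splits the mismatch term into the two boundary windows plus the normalisation gap, and controls the fluctuation term over the true segment by the same Bernstein-type bounds behind \eqref{eq:consgamma0} (Lemmas A.3 and B.1 (ii) of \cite{fan2011b}), using Assumption \ref{assum:six} so that every true segment has length of order $T$. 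Where you diverge, and where your argument has a genuine hole, is the misalignment estimate. You reduce it to $n^{-1}\sum_{t \in J}\Vert \mbf x_t \Vert^2$ over windows $J$ with $|J| = O(\omega_{N,T})$, correctly observe that the crude bound $|J| \cdot \max_t n^{-1}\Vert \mbf x_t \Vert^2$ saturates at the worst-case power $\log^{2/\beta_f}T$, and then only gesture at a uniform-over-positions concentration argument for the windowed sums without carrying it out; since this is precisely the step that produces the first term in \eqref{lem:pw:eq:one}, the proof is incomplete as written. The paper sidesteps the issue by staying entrywise: on $\mc C_{n,T}$, each entry satisfies $|\wh\gamma^b_{x, ii'} - \check\gamma^b_{x, ii'}| \le III + IV + V$, where each of the three terms is bounded by (window length)$\,\times \max_{i, i', t}|x_{it}x_{i't}|/T$ up to constants; the uniform product bound $\max_{i,i',t}|x_{it}x_{i't}| = O_p(\log^{\theta - 1/2}T)$ follows from the exponential-tail product lemma (Lemma A.2 of \cite{fan2011b}) together with $n = O(T^\kappa)$, with the tail exponent matched against $\theta - 1/2$ (recall $\theta$ is only constrained from below); and $\Vert \cdot \Vert_F \le n \max_{i,i'}|\cdot|$ then converts the entrywise rate into the stated $n^{-1}\Vert\cdot\Vert_F$ rate with no concentration over windows needed at all.

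A second, smaller problem is your reading of \eqref{lem:pw:eq:two}. You take $\wh{\bm\Gamma}^b_\chi$ to be built from the PCA estimates $\wh\chi^k_{it}$ and add a step propagating the uniform estimation error of Theorems \ref{thm:common}--\ref{thm:overestimation}. In the paper, $\wh{\bm\Gamma}^b_\chi$ is the sample covariance of the \emph{true} common components $\bm\chi_t$ over the estimated segment $I^\chi_b$ (this is how it is used in the proof of Proposition \ref{prop:bn}, where $VI$ is rewritten as a trace against $\wh{\bm\Gamma}^b_\chi$ starting from sums of $\bm\chi_t^\top(\cdot)\bm\chi_t$), so \eqref{lem:pw:eq:two} is proved by repeating the part-(i) argument verbatim, with no error propagation. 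Your reading is not only unnecessary but rate-wise problematic: the cross term $n^{-1}\Vert(\wh{\bm\chi}^k_t - \bm\chi_t)\bm\chi_t^\top\Vert_F$ picks up an extra factor of order $\log^{\theta + 1/\beta_f}T$ against $(\sqrt{\log n/T} \vee n^{-1/2})$, and the resulting $n^{-1/2}$-type contribution is not dominated by the right-hand side of \eqref{lem:pw:eq:two} for all $(n, T)$ admissible under Assumption \ref{assum:seven} (e.g., $n \sim T^{1/2}$), so with your interpretation the stated bound would in general fail.
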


\begin{lem}
\label{lem:pw:evec}
{\it Suppose that all the conditions in Theorem \ref{thm:dcbs} hold.
Let $\mbf W^b_{\chi} = [\mbf w^b_{\chi,1}, \ldots, \mbf w^b_{\chi, r_b}]$,
with $\mbf w^b_{\chi, j}$ denoting the normalised eigenvectors corresponding to $\mu^b_{\chi, j}$,
the $j$th largest eigenvalues of ${\bm\Gamma}^b_{\chi}$.
For some $0 \le k \le r_b-1$, let $\wh{\mbf V} = [\wh{\mbf w}^b_{x, k+1}, \ldots, \wh{\mbf w}^b_{x, r_b}]$. 
Then, there exists an orthonormal $r_b \times (r_b-k)$ matrix $\wt{\mbf S}$ such that
\beas
\Vert \wh{\mbf V} - \mbf W^b_\chi\wt{\mbf S} \Vert = O_p\l(\sqrt{\frac{\log\,n}{T}} \vee \frac{1}{n}\r). 
\eeas}
\end{lem}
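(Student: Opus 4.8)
The plan is to establish this as a Davis--Kahan-type bound on a block of eigenvectors, following the same route as the proof of Lemma \ref{lem:evecs} but with $\bm\Gamma_\chi$ and $\wh{\bm\Gamma}_x$ replaced by their segment-specific counterparts $\bm\Gamma^b_\chi$ and $\wh{\bm\Gamma}^b_x$. First I would invoke the variant of the $\sin\theta$ theorem in Theorem 2 of \cite{yu15}, applied to the pair $(\wh{\bm\Gamma}^b_x, \bm\Gamma^b_\chi)$ and to the eigenvectors indexed $k+1, \ldots, r_b$. This produces an orthonormal $r_b \times (r_b-k)$ matrix $\wt{\mbf S}$ together with the bound
\[
\Vert \wh{\mbf V} - \mbf W^b_\chi\wt{\mbf S} \Vert \le
\frac{2^{3/2}\sqrt{r_b-k}\,\Vert \wh{\bm\Gamma}^b_x - \bm\Gamma^b_\chi \Vert}
{\min\big(\mu^b_{\chi, k} - \mu^b_{\chi, k+1}, \, \mu^b_{\chi, r_b} - \mu^b_{\chi, r_b+1}\big)},
\]
where $\mu^b_{\chi, 0} = \infty$ and $\mu^b_{\chi, r_b+1} = 0$, so that the task reduces to controlling the numerator and bounding the denominator from below.

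For the numerator I would split the perturbation using $\bm\Gamma^b_x = \bm\Gamma^b_\chi + \bm\Gamma^b_\eps$, which holds by the independence of the factors and the idiosyncratic components (Assumption \ref{assum:five} (i)). Writing
\[
\Vert \wh{\bm\Gamma}^b_x - \bm\Gamma^b_\chi \Vert
\le \Vert \wh{\bm\Gamma}^b_x - \bm\Gamma^b_x \Vert_F + \Vert \bm\Gamma^b_\eps \Vert,
\]
the first term is controlled by \eqref{lem:pw:eq:one} of Lemma \ref{lem:pw} and is $O_p(n\sqrt{\log\,n/T})$ once the contribution involving $\omega_{n,T}$ in that bound is absorbed, while the second term is $O(1)$ since $\Vert \bm\Gamma^b_\eps \Vert = \mu^b_{\eps, 1} \le C_\eps$ by Assumption \ref{assum:four} (i), in line with (C2).

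The denominator is of exact order $n$: under Assumption \ref{assum:three:new}, the structure $\bm\Gamma^b_\chi = \bm\Lambda_b\bm\Gamma^b_f\bm\Lambda_b^\top$ together with $n^{-1}\bm\Lambda_b^\top\bm\Lambda_b \to \mbf H_b$ positive definite forces the $r_b$ leading eigenvalues to diverge linearly in $n$ with $\mu^b_{\chi, r_b+1} = 0$, the segment analogue of (C1), making both relevant eigengaps $O(n)$. Combining the three estimates then yields
\[
\Vert \wh{\mbf V} - \mbf W^b_\chi\wt{\mbf S} \Vert
= \frac{O_p\big(n\sqrt{\log\,n/T}\big) + O(1)}{O(n)}
= O_p\Big(\sqrt{\tfrac{\log\,n}{T}} \vee \tfrac{1}{n}\Big),
\]
the $1/n$ rate arising precisely from the residual contribution of $\bm\Gamma^b_\eps$ relative to the order-$n$ signal.

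I expect the main obstacle to be twofold. First, the boundary eigengap $\mu^b_{\chi, k} - \mu^b_{\chi, k+1}$ must be shown to be of order $n$ for the index $k$ at which the lemma is applied; this requires the within-segment leading eigenvalues to be separated at rate $n$, which I would justify through the segment analogue of (C1) rather than from the mere positive definiteness of $\mbf H_b$. Second, care is needed to verify that the $\omega_{n,T}\log^{\theta-1/2}T/T$ term in Lemma \ref{lem:pw}, which encodes the change-point localisation error, is dominated by $\sqrt{\log\,n/T}$ under the conditions of Theorem \ref{thm:dcbs}, so that it does not inflate the stated rate.
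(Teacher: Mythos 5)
Your proposal is correct and follows essentially the same route as the paper's own (very terse) proof: invoke the Yu--Wang--Samworth variant of the $\sin\theta$ theorem for the segment-specific pair in combination with Lemma \ref{lem:pw}, observe that Assumption \ref{assum:b:two} forces $\omega_{n,T} \ll T^{1/2}$ so that $\sqrt{\log n/T}$ dominates the right-hand side of \eqref{lem:pw:eq:one}, and divide by eigengaps of order $n$, with $\Vert\bm\Gamma^b_\eps\Vert \le C_\eps$ contributing the residual $1/n$ rate. The two caveats you flag are apt: the paper handles the $\omega_{n,T}$ domination explicitly but leaves the order-$n$ boundary gap $\mu^b_{\chi,k}-\mu^b_{\chi,k+1}$ (requiring a segmentwise analogue of (C1) beyond mere positive definiteness of $\mbf H_b$) implicit, just as you anticipate.
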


\begin{lem}
\label{lem:lam}
{\it Suppose that all the conditions in Theorem \ref{thm:dcbs} and Assumption \ref{assum:three:new} hold.
For a fixed $k > r$, let $\wh{\mbf V} = [\wh{\mbf w}^b_{x, r_b+1}, \ldots, \wh{\mbf w}^b_{x, k}]$. Then,
\beas
\Vert \wh{\mbf V}^\top \bm\Lambda_b \Vert = O_p\l(\sqrt{\frac{n\log\,n}{T}}\vee \frac 1 {\sqrt n}\r).
\eeas}
\end{lem}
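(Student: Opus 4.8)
The plan is to exploit the eigen-equation satisfied by the ``spurious'' sample eigenvectors together with the exact factor structure of the population common covariance. Denote by $\wh{\mbf M}$ the diagonal matrix carrying the eigenvalues $\wh\mu^b_{x, r_b+1}, \ldots, \wh\mu^b_{x, k}$ of $\wh{\bm\Gamma}^b_x$, so that by definition $\wh{\bm\Gamma}^b_x \wh{\mbf V} = \wh{\mbf V}\wh{\mbf M}$. Over the $b$-th segment the common component admits the representation $\bm\chi^b_t = \bm\Lambda_b\mbf f^b_t$, whence its population covariance factorises as $\bm\Gamma^b_\chi = \bm\Lambda_b\bm\Sigma_b\bm\Lambda_b^\top$ for the $r_b\times r_b$ positive definite matrix $\bm\Sigma_b = \E\{\mbf f^b_t(\mbf f^b_t)^\top\}$ (averaged over the segment), and $\bm\Gamma^b_x = \bm\Gamma^b_\chi + \bm\Gamma^b_\eps$. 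Since the $r_b$ leading eigenvectors of $\bm\Gamma^b_\chi$ span $\mathrm{col}(\bm\Lambda_b)$ while the remaining ones are orthogonal to it, in the noiseless case $\wh{\mbf V}^\top\bm\Lambda_b$ would vanish; the task is to quantify how the sampling and change-point localisation errors perturb this orthogonality.

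First I would pre-multiply the eigen-equation by $\bm\Lambda_b^\top$ and substitute $\wh{\bm\Gamma}^b_x = \bm\Lambda_b\bm\Sigma_b\bm\Lambda_b^\top + \bm\Gamma^b_\eps + \mbf E$, where $\mbf E = \wh{\bm\Gamma}^b_x - \bm\Gamma^b_x$. Writing $\mbf B = \bm\Lambda_b^\top\wh{\mbf V}$, this yields
\[
(\bm\Lambda_b^\top\bm\Lambda_b)\bm\Sigma_b\mbf B = \mbf B\wh{\mbf M} - \bm\Lambda_b^\top\bm\Gamma^b_\eps\wh{\mbf V} - \bm\Lambda_b^\top\mbf E\wh{\mbf V}.
\]
By Assumption \ref{assum:three:new}, $n^{-1}\bm\Lambda_b^\top\bm\Lambda_b \to \mbf H_b$ with $\mbf H_b$ positive definite, and $\bm\Sigma_b$ is positive definite, so $(\bm\Lambda_b^\top\bm\Lambda_b)\bm\Sigma_b$ is invertible with $\Vert[(\bm\Lambda_b^\top\bm\Lambda_b)\bm\Sigma_b]^{-1}\Vert = O_p(1/n)$. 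Solving for $\mbf B$ and taking spectral norms reduces the problem to bounding the three terms on the right.

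The final step is to bound these terms, using $\Vert\bm\Lambda_b\Vert = O(\sqrt n)$ (from $\bm\Lambda_b^\top\bm\Lambda_b\sim n\mbf H_b$), $\Vert\bm\Gamma^b_\eps\Vert = O(1)$ (Assumption \ref{assum:four} (i)), $\Vert\wh{\mbf V}\Vert = 1$, and Lemma \ref{lem:pw} for $\Vert\mbf E\Vert\le\Vert\mbf E\Vert_F$. The idiosyncratic term satisfies $\Vert\bm\Lambda_b^\top\bm\Gamma^b_\eps\wh{\mbf V}\Vert = O(\sqrt n)$, contributing $O_p(1/\sqrt n)$ after the $O_p(1/n)$ scaling; the sampling term satisfies $\Vert\bm\Lambda_b^\top\mbf E\wh{\mbf V}\Vert \le \Vert\bm\Lambda_b\Vert\,\Vert\mbf E\Vert = O(\sqrt n)\Vert\mbf E\Vert$, contributing $O_p(n^{-1/2}\Vert\mbf E\Vert) = O_p(\sqrt{n\log n/T})$ from the leading $\sqrt{\log n/T}$ part of Lemma \ref{lem:pw}. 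The eigenvalue term $n^{-1}\Vert\mbf B\wh{\mbf M}\Vert$ is absorbed into the left-hand side: by Weyl's inequality $\Vert\wh{\mbf M}\Vert \le \mu^b_{x, r_b+1} + \Vert\mbf E\Vert = O_p(1\vee\Vert\mbf E\Vert)$, so its coefficient $n^{-1}\Vert\wh{\mbf M}\Vert = o_p(1)$ and the factor $(1-o_p(1))$ multiplying $\mbf B$ absorbs it. Collecting the bounds yields $\Vert\wh{\mbf V}^\top\bm\Lambda_b\Vert = \Vert\mbf B\Vert = O_p(\sqrt{n\log n/T}\vee n^{-1/2})$, as claimed. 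The main obstacle is the sampling-error term: one must verify through Lemma \ref{lem:pw} that the change-point-localisation contribution (the $\omega_{n,T}$-dependent part of that bound) is of smaller order than $\sqrt{n\log n/T}$, which holds because $\omega_{N,T}$ is at most polylogarithmic in $T$ under the dense-change regime enforced by Assumptions \ref{assum:three} and \ref{assum:six}; the potential inflation of the spurious eigenvalues is the other delicate point, but it is rendered harmless by the $n^{-1}$ scaling.
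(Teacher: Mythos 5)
Your argument is correct, but it proceeds by a genuinely different route from the paper's. The paper's proof never touches the eigen-equation for the spurious eigenvectors: it uses the exact sample-level orthogonality $\wh{\mbf V}^\top \wh{\mbf W}^b_x = \mbf O$ together with the $\sin\theta$-type bound of Lemma \ref{lem:pw:evec} (applied with $k=0$) to get $\Vert \wh{\mbf V}^\top \mbf W^b_\chi \Vert = \Vert \wh{\mbf V}^\top(\wh{\mbf W}^b_x - \mbf W^b_\chi \wt{\mbf S})\Vert = O_p(\sqrt{\log n/T} \vee n^{-1})$, and then converts population eigenvectors into loadings via the identity $\bm\Lambda_b = \mbf W^b_\chi \mbf M^b_\chi (\mbf W^b_\chi)^\top \bm\Lambda_b (\bm\Lambda_b^\top\bm\Lambda_b)^{-1}(\bm\Gamma^b_f)^{-1}$, paying the factor $\Vert \mbf M^b_\chi\Vert = O(n)$ and recovering $\Vert\bm\Lambda_b(\bm\Lambda_b^\top\bm\Lambda_b)^{-1}\Vert = O(n^{-1/2})$. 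You instead work directly from $\wh{\bm\Gamma}^b_x \wh{\mbf V} = \wh{\mbf V}\wh{\mbf M}$, pre-multiply by $\bm\Lambda_b^\top$, invert the order-$n$ matrix $(\bm\Lambda_b^\top\bm\Lambda_b)\bm\Sigma_b$, and absorb the $\mbf B\wh{\mbf M}$ term by a self-bounding step, which is valid since Weyl's inequality and the rank-$r_b$ structure of $\bm\Gamma^b_\chi$ give $\Vert\wh{\mbf M}\Vert = O_p(1 \vee \Vert\mbf E\Vert)$ and hence $n^{-1}\Vert\wh{\mbf M}\Vert = o_p(1)$; your remaining bounds ($\Vert\bm\Lambda_b^\top\bm\Gamma^b_\eps\wh{\mbf V}\Vert = O(\sqrt n)$ and $n^{-1/2}\Vert\mbf E\Vert = O_p(\sqrt{n\log n/T})$) match the claimed rate. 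Both proofs bottom out in the same perturbation input, Lemma \ref{lem:pw}, and in the same domination of the $\omega_{n,T}$-dependent localisation term by $\sqrt{\log n/T}$. What your route buys is self-containedness — it bypasses Davis--Kahan and Lemma \ref{lem:pw:evec} entirely and makes explicit why the spurious eigenvalues are harmless (the $n^{-1}$ scaling from the loadings' Gram matrix) — whereas the paper's route reuses eigenvector machinery it has already established, making its proof two lines long. One caveat on your final remark: $\omega_{N,T}$ is polylogarithmic in $T$ only when the average jump sizes $\wt\delta_b$ are bounded away from zero; Assumption \ref{assum:b:two} in general permits $\omega_{n,T}$ to grow at nearly $T^{1/2}$, so the domination you need is exactly the assertion the paper itself makes in the proof of Lemma \ref{lem:pw:evec} ($\omega_{n,T} \ll T^{1/2}$ for change-points in the common components, so that $\sqrt{\log n/T}$ dominates the bound of Lemma \ref{lem:pw}) — a phrasing issue, not a gap, since your reliance on it is no heavier than the paper's.
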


From Theorem \ref{thm:dcbs},
we have $\p(\mc C_{n, T}) \to 1$ where
\beas
\mc C_{n, T} = \Big\{ 
\wh{B}_\chi =B_\chi; \, \max_{1 \le b \le \wh{B}_\chi} 
|\wh\eta^\chi_b-\eta^\chi_b| < c_1\omega_{N, T}\Big\}
\eeas
with for some $c_1 > 0$ and 
$\omega_{n, T} = \jts \min_{1 \le b \le B_\chi} \wt{\delta}_b^{-2} \log^{2\theta+2\upsilon} T$.
We first show that 
\bea
\label{ineq:prop:bn}
\p\{r_b = \arg\min_{1 \le k \le \bar{r}} V_b(k)\}
\ge
\p\{r_b = \arg\min_{1 \le k \le \bar{r}} V_b(k) \, | \, \mc C_{n, T}\} \p(\mc C_{n, T}) \to 1, 
\eea
where, denoting by 
$\wh{\mbf W}^b_{j:l} = [\wh{\mbf w}^b_{x, j}, \ldots, \wh{\mbf w}^b_{x, l}]$ for $1 \le j \le l \le n$,
\beas
V_b(k) 
= \frac{1}{n(\wh\eta^\chi_{b+1}-\wh\eta^\chi_b)}\sum_{t=\wh\eta^\chi_b+1}^{\wh\eta^\chi_{b+1}} 
\Vert \wh{\mbf W}^b_{1:k} (\wh{\mbf W}^b_{1:k})^\top \mbf x_t \Vert^2 + k \, p(n, T).
\eeas
Thanks to Theorem \ref{thm:dcbs}, it is sufficient to show that 
$\p\{r_b = \arg\min_{1 \le k \le \bar{r}} V_b(k) | \mc C_{n, T}\} \to 1$
for the proof of \eqref{ineq:prop:bn}.
Firstly, let $k > r_b$. Due to the orthonormality of $\wh{\mbf w}^b_{x, j}$,
\beas
V_b(k) - V_b(r_b) 
= \frac{1}{n(\wh\eta^\chi_{b+1}-\wh\eta^\chi_b)} \sum_{t=\wh\eta^\chi_b+1}^{\wh\eta^\chi_{b+1}} 
\Vert \wh{\mbf V} \wh{\mbf V}^\top \mbf x_t \Vert^2
+ (k-r_b) p(n, T)
\eeas
where $\wh{\mbf V} = \wh{\mbf W}^b_{(r_b+1):k}$. Note that
\begin{align*}
\frac{1}{n(\wh\eta^\chi_{b+1}-\wh\eta^\chi_b)} \sum_{t=\wh\eta^\chi_b+1}^{\wh\eta^\chi_{b+1}} 
\Vert \wh{\mbf V} \wh{\mbf V}^\top \mbf x_t \Vert^2
\le & \frac{2}{n(\wh\eta^\chi_{b+1}-\wh\eta^\chi_b)} \sum_{t=\wh\eta^\chi_b+1}^{\wh\eta^\chi_{b+1}} 
\Vert \wh{\mbf V} \wh{\mbf V}^\top \bm\chi_t \Vert^2
\\
& + \frac{2}{n(\wh\eta^\chi_{b+1}-\wh\eta^\chi_b)} \sum_{t=\wh\eta^\chi_b+1}^{\wh\eta^\chi_{b+1}} 
\Vert \wh{\mbf V} \wh{\mbf V}^\top \bm\eps_t \Vert^2 
= I + II.
\end{align*}
Then, using Lemma \ref{lem:lam},
\beas
I &\le& \frac{2}{n(\wh\eta^\chi_{b+1}-\wh\eta^\chi_b)} \sum_{t=\wh\eta^\chi_b+1}^{\wh\eta^\chi_{b+1}} 
\Vert \wh{\mbf V} \Vert^2 \Vert \wh{\mbf V}^\top\bm\Lambda_b \Vert^2 
\Vert \mbf f_t \Vert^2
= \frac{2}{n} \cdot O_p\l( \frac{n\log\,n}{T}\vee \frac 1n\r) \cdot O_p(\log^{2/\beta_f} T) 
\\
&=& O_p\l\{\Big(\frac{\log\, n}{T} \vee \frac{1}{n^2}\Big)\log^{2/\beta_f} T\r\},
\eeas
due to Assumptions \ref{assum:two} (ii), \ref{assum:five} (ii) and \ref{assum:six}. Also,
\begin{align*}
II =& \frac{2}{n} 
\tr\l[\wh{\mbf V}\wh{\mbf V}^\top
\frac{1}{|I^\chi_b|}\sum_{t \in I^\chi_b}
\Big\{\bm\eps_t\bm\eps_t^\top - \E(\bm\eps_t\bm\eps_t^\top)\Big\}\r]
 +
\frac{2}{n} 
\tr\l\{\wh{\mbf V}\wh{\mbf V}^\top
\frac{1}{|I^\chi_b|}\sum_{t \in I^\chi_b}\E(\bm\eps_t\bm\eps_t^\top)\r\}
\\
\le &
\frac{2(k-r_b)}{|I^\chi_b|n}\l\Vert \sum_{t \in I^\chi_b}
\Big\{\bm\eps_t\bm\eps_t^\top - \E(\bm\eps_t\bm\eps_t^\top)\Big\} \r\Vert + 
\frac{2(k-r_b)}{|I^\chi_b|n}\l\Vert \sum_{t \in I^\chi_b}\E(\bm\eps_t\bm\eps_t^\top) \r\Vert
= O_p\Big(\sqrt{\frac{\log\,n}{T}}\Big)
\end{align*}
invoking Assumptions \ref{assum:four})--\ref{assum:five} and Lemma A.3 of \cite{fan2011b}.
Hence, under the conditions imposed on $p(n, T)$,
we conclude that $V_b(k) > V_b(r_b)$ for any fixed $k > r_b$ 
with probability tending to one as $n, T \to \infty$.

Next, let $k < r_b$. Recalling the definition of $\bm\Gamma^b_\chi$,
denote the $r_b \times r_b$ diagonal matrix with $\mu^b_{\chi, j}, \, j=1, \ldots, r_b$ 
in its diagonal by $\mbf M^b_\chi$.
Note that
\beas
V_b(k) - V_b(r_b) = 
\frac{1}{n(\wh\eta^\chi_{b+1}-\wh\eta^\chi_b)}\sum_{t=\wh\eta^\chi_b+1}^{\wh\eta^\chi_{b+1}}
\Vert \wh{\mbf V} \wh{\mbf V}^\top \mbf x_t \Vert^2
+ (r_b-k) p(n, T)
\eeas
where $\wh{\mbf V} = \wh{\mbf W}^b_{(k+1):r_b}$. Further,
\begin{align*}
& \frac{1}{n(\wh\eta^\chi_{b+1}-\wh\eta^\chi_b)}\sum_{t=\wh\eta^\chi_b+1}^{\wh\eta^\chi_{b+1}} 
\Vert \wh{\mbf V} \wh{\mbf V}^\top \mbf x_t \Vert^2
= \frac{1}{n(\wh\eta^\chi_{b+1}-\wh\eta^\chi_b)}\sum_{t=\wh\eta^\chi_b+1}^{\wh\eta^\chi_{b+1}}
\Vert \wh{\mbf V} \wh{\mbf V}^\top \bm\chi_t \Vert^2
\\
& + \frac{2}{n(\wh\eta^\chi_{b+1}-\wh\eta^\chi_b)}\sum_{t=\wh\eta^\chi_b+1}^{\wh\eta^\chi_{b+1}}
\bm\chi_t^\top  \wh{\mbf V} \wh{\mbf V}^\top \bm\eps_t
+ \frac{1}{n(\wh\eta^\chi_{b+1}-\wh\eta^\chi_b)}\sum_{t=\wh\eta^\chi_b+1}^{\wh\eta^\chi_{b+1}}
\Vert \wh{\mbf V} \wh{\mbf V}^\top \bm\eps_t \Vert^2 
= III + IV + V.
\end{align*}
Then, we can bound $V = O_p(\sqrt{\log\,n/T})$ similarly as $II$.
Also, thanks to Lemma \ref{lem:pw:evec}, 
there exists an $r_b \times (r_b-k)$ matrix $\wt{\mbf S}$ with orthonormal columns
so that
\begin{align}
\Vert \wh{\mbf V}\wh{\mbf V}^\top - \mbf W^b_\chi\wt{\mbf S}\wt{\mbf S}^\top(\mbf W^b_\chi)^\top \Vert
&\le \Vert \wh{\mbf V}(\wh{\mbf V} - \mbf W^b_\chi\wt{\mbf S})^\top \Vert
+ \Vert (\wh{\mbf V} - \mbf W^b_\chi\wt{\mbf S})\mbf W^b_\chi\wt{\mbf S}^\top \Vert
= O_p\l(\sqrt{\frac{\log\,n}{T}} \vee \frac{1}{n}\r). \label{eq:vv:ww}
\end{align}
Note that
\begin{align*}
III &\ge 
\overbrace{\frac{1}{n(\wh\eta^\chi_{b+1}-\wh\eta^\chi_b)}
\sum_{t=\wh\eta^\chi_b+1}^{\wh\eta^\chi_{b+1}} \Vert \mbf W^b_\chi\wt{\mbf S}
\wt{\mbf S}^\top(\mbf W^b_\chi)^\top\bm\chi_t \Vert^2}^{VI}
\\
&- \overbrace{\frac{2}{n(\wh\eta^\chi_{b+1}-\wh\eta^\chi_b)}\sum_{t=\wh\eta^\chi_b+1}^{\wh\eta^\chi_{b+1}}
\Vert \mbf W^b_\chi\wt{\mbf S}\wt{\mbf S}^\top(\mbf W^b_\chi)^\top\bm\chi_t \Vert \;
\Vert \{\wh{\mbf V}\wh{\mbf V}^\top - 
\mbf W^b_\chi\wt{\mbf S}\wt{\mbf S}^\top(\mbf W^b_\chi)^\top\}\bm\chi_t \Vert}^{VII}
\\
&+ \underbrace{\frac{1}{n(\wh\eta^\chi_{b+1}-\wh\eta^\chi_b)}
\sum_{t=\wh\eta^\chi_b+1}^{\wh\eta^\chi_{b+1}}
\Vert \{\wh{\mbf V}\wh{\mbf V}^\top - 
\mbf W^b_\chi\wt{\mbf S}\wt{\mbf S}^\top(\mbf W^b_\chi)^\top\}\bm\chi_t \Vert^2}_{VIII}.
\end{align*}
Then,
\begin{align*}
VI &= \frac{1}{n(\wh\eta^\chi_{b+1}-\wh\eta^\chi_b)}\sum_{t=\wh\eta^\chi_b+1}^{\wh\eta^\chi_{b+1}}
\bm\chi_t^\top\mbf W^b_\chi\wt{\mbf S}\wt{\mbf S}^\top(\mbf W^b_\chi)^\top\bm\chi_t
= \frac{1}{n} \tr(\mbf W^b_\chi\wt{\mbf S}\wt{\mbf S}^\top(\mbf W^b_\chi)^\top \wh{\bm\Gamma}^b_\chi)
\\
&=\frac{1}{n} \tr(\mbf W^b_\chi\wt{\mbf S}\wt{\mbf S}^\top(\mbf W^b_\chi)^\top \bm\Gamma^b_\chi) +
\frac{1}{n} \tr\{\mbf W^b_\chi\wt{\mbf S}\wt{\mbf S}^\top(\mbf W^b_\chi)^\top 
(\wh{\bm\Gamma}^b_\chi-\bm\Gamma^b_\chi)\}
\\
&\le \frac{1}{n} \tr\{\mbf W^b_\chi\wt{\mbf S}\wt{\mbf S}^\top(\mbf W^b_\chi)^\top \bm\Gamma^b_\chi\} +
\Vert\mbf W^b_\chi\wt{\mbf S}\wt{\mbf S}^\top(\mbf W^b_\chi)^\top\Vert_F \cdot \frac{1}{n} 
\Vert \wh{\bm\Gamma}^b_\chi-\bm\Gamma^b_\chi \Vert_F
\\
&= \frac{1}{n} \tr(\mbf W^b_\chi\wt{\mbf S}\wt{\mbf S}^\top(\mbf W^b_\chi)^\top \bm\Gamma^b_\chi) + 
O_p\l(\sqrt{\frac{\log\,n}{T}}\r)
= \frac{1}{n}\tr(\wt{\mbf S}\wt{\mbf S}^\top\mbf M^b_\chi) + O_p\l(\sqrt{\frac{\log\,n}{T}}\r),
\end{align*}
which follows from Lemma \ref{lem:pw}
and that $\wt{\mbf S}\wt{\mbf S}^\top$ is a rank $r_b-k$ projection matrix,
and hence $VI > 0$. 
Also, using \eqref{eq:vv:ww} and Assumption \ref{assum:two} (ii),
\beas
\Vert \{\wh{\mbf V}\wh{\mbf V}^\top - 
\mbf W^b_\chi\wt{\mbf S}\wt{\mbf S}^\top(\mbf W^b_\chi)^\top\}\bm\chi_t \Vert
&=& O_p\l\{\Big(\sqrt{\frac{n\log\,n}{T}} \vee \frac{1}{\sqrt n}\Big)\log^{1/\beta_f} T\r\}, \quad \mbox{and} 
\\
\Vert \mbf W^b_\chi\wt{\mbf S}\wt{\mbf S}^\top(\mbf W^b_\chi)^\top\bm\chi_t \Vert 
&=& O_p(\sqrt n\log^{1/\beta_f} T)
\eeas
uniformly in $t$, and therefore $VII = O_p\{(\sqrt{\log\,n/T} \vee 1/n)\log^{2/\beta_f}T\}$.
Besides, 
\beas
VIII \le \frac{1}{n(\wh\eta^\chi_{b+1}-\wh\eta^\chi_b)}\sum_{t=\wh\eta^\chi_b+1}^{\wh\eta^\chi_{b+1}}
\Vert \wh{\mbf V}\wh{\mbf V}^\top - \mbf W^b_\chi\wt{\mbf S}\wt{\mbf S}^\top(\mbf W^b_\chi)^\top \Vert^2
\Vert \bm\chi_t \Vert^2
= O_p\l\{\Big(\frac{\log\,n}{T} \vee \frac{1}{n^2}\Big)\log^{2/\beta_f}T\r\}.
\eeas
Combining the bounds on $VI$, $VII$ and $VIII$, we conclude that $III$ is bounded away from zero with probability
tending to one.
Finally, under Assumptions \ref{assum:two} (ii), \ref{assum:four} (ii) and \ref{assum:five},
Lemma B.1 (ii) of \cite{fan2011b} leads to
\beas
IV = \frac{2}{n|I^\chi_b|} \tr\l(\wh{\mbf V}\wh{\mbf V}^\top 
\sum_{t \in I^\chi_b} \bm\chi_t\bm\eps_t^\top\r)
\le \frac{2(r_b-k)}{n|I^\chi_b|}\l\Vert 
\sum_{t \in I^\chi_b} \bm\chi_t\bm\eps_t^\top \r\Vert
= O_p\l(\sqrt{\frac{\log\,n}{T}}\r),
\eeas
which leads to $V_b(k) > V_b(r_b)$ with probability converging to one.
Having shown that $V_b(k)$ is minimised at $r_b$, we proved \eqref{ineq:prop:bn}. 
Then we can adopt the arguments used in the proof of Corollary 1 in \cite{baing02} verbatim to complete the proof. 
\hfill $\Box$

\subsubsection{Proof of Lemma \ref{lem:pw}}
\label{lem:pw:proof}

For a given $b$, without loss of generality, assume that $\wh\eta_b \le \eta_b < \wh\eta_{b+1} \le \eta_{b+1}$. 
Define
$\check{\bm\Gamma}^b_x=(\eta^\chi_{b+1} - \eta^\chi_b)^{-1} 
\sum_{t=\eta^\chi_b+1}^{\eta^\chi_{b+1}} \mbf x_t\mbf x_t^\top$ and 
let $\check\gamma^b_{x, ii'} = [\check{\bm\Gamma}^b_x]_{i, i'}$ and 
$\wh\gamma^b_{x, ii'} = [\wh{\bm\Gamma}^b_x]_{i, i'}$. Then,
\beas
\Vert \wh{\bm\Gamma}^b_x - \bm\Gamma^b_x \Vert_F^2 
= \sum_{i, i'=1}^n |\wh\gamma^b_{x, ii'} - \gamma^b_{x, ii'}|^2
\le 2\sum_{i, i'=1}^n |\wh\gamma^b_{x, ii'} - \check\gamma^b_{x, ii'}|^2 +
2\sum_{i, i'=1}^n |\check\gamma^b_{x, ii'} - \gamma^b_{x, ii'}|^2 = I + II.
\eeas
Under Assumption \ref{assum:six}, Lemmas A.3 and B.1 (ii) of \cite{fan2011b} can be adopted to 
show that $II = O_p(n^2\log\,n/T)$.
Recall the definition of $\mc C_{n, T}$ from the proof of Proposition \ref{prop:bn}.
Then, for some $c_2 > 0$, consider
\bea
\p\Big(\max_{1 \le i, i' \le n} |\wh\gamma^b_{x, ii'} - \check\gamma^b_{x, ii'}| > \frac{c_2\omega_{n, T}}{T}\Big)
\le 
\p\Big(\max_{1 \le i, i' \le n} |\wh\gamma^b_{x, ii'} - \check\gamma^b_{x, ii'}| > \frac{c_2\omega_{n, T}}{T}
\, \cap \, \mc C_{n, T}\Big) + \p(\mc C_{n, T}^c),
\label{lem:pw:one}
\eea
where the second probability in the RHS of \eqref{lem:pw:one} tends to zero as $n, T \to \infty$.
Note that
\begin{align*}
|\wh\gamma^b_{x, ii'} - \check\gamma^b_{x, ii'}| \le & 
\frac{1}{\wh\eta^\chi_{b+1} - \wh\eta^\chi_b} 
\sum_{t=\wh\eta^\chi_b+1}^{\eta^\chi_b} \vert x_{it}x_{i't} \vert
+
\l\vert \frac{1}{\wh\eta^\chi_{b+1} - \wh\eta^\chi_b} - \frac{1}{\eta^\chi_{b+1} - \eta^\chi_b} \r\vert
\sum_{t=\eta^\chi_b+1}^{\wh\eta^\chi_{b+1}} \vert x_{it}x_{i't} \vert
\\
& +
\frac{1}{\eta^\chi_{b+1} - \eta^\chi_b} 
\sum_{t=\wh\eta^\chi_{b+1}+1}^{\eta^\chi_{b+1}} \vert x_{it}x_{i't}\vert = III + IV + V.
\end{align*}
Lemma A.2 of \cite{fan2011b} shows that the exponential tail bound
carries over to $x_{it}x_{i't}$ with parameter $2\beta_f/(2+\beta_f)$, from which we derive that
$\max_{1 \le i, i' \le n}\max_{1 \le t \le T} |x_{it}x_{i't}| = O_p(\log^{\theta - 1/2} T)$ 
under Assumption \ref{assum:seven}.
Then, we have $III, IV, V = O_p(\omega_{n, T}\log^{\theta - 1/2} T/T)$ 
uniformly in $i, i' = 1, \ldots, n$ in the event of $\mc C_{n, T}$ under Assumption \ref{assum:six}.
Hence, the RHS of \eqref{lem:pw:one} tends to zero with $n, T \to \infty$, 
which leads to $I = O_p(n^2\omega_{n, T}^2\log^{2\theta - 1} T/T^2)$ and 
concludes the proof of \eqref{lem:pw:eq:one}. The proof of \eqref{lem:pw:eq:two} follows analogously. \hfill $\Box$

\subsubsection{Proof of Lemma \ref{lem:pw:evec}}
\label{lem:pw:evec:proof}

Recalling the discussion on Assumption \ref{assum:b:two} and 
the definition of $\omega_{n, T}$ in Theorem \ref{thm:dcbs},
we have $\omega_{n, T} \ll T^{1/2}$ 
for estimating the change-points in the common components, 
and thus $\sqrt{\log\,n/T}$ dominates the RHS of \eqref{lem:pw:eq:one}. 
Then, we apply the variation of sin$\,\theta$ theorem in \cite{yu15} as in Lemma \ref{lem:evecs} 
in combination with Lemma \ref{lem:pw}, and show that 
$\Vert \wh{\mbf V} - \mbf W^b_\chi\wt{\mbf S} \Vert = O_p\l(\sqrt{\frac{\log\,n}{T}} \vee \frac{1}{n}\r)$.
\hfill$\Box$

\subsubsection{Proof of Lemma \ref{lem:lam}}
\label{lem:lam:proof}

Let $\wh{\mbf W}^b_x = [\wh{\mbf w}^b_{x, 1}, \ldots, \wh{\mbf w}^b_{x, r_b}]$.
Since $\wh{\mbf V}^\top \wh{\mbf W}^b_x = \mbf O_{(k-r_b)\times r_b}$ and 
$\Vert\wh{\mbf V}\Vert=1$, we have
\bea
\Vert \wh{\mbf V}^\top \mbf W^b_\chi \Vert = \Vert \wh{\mbf V}^\top \mbf W^b_\chi \wt{\mbf S} \Vert
= \Vert \wh{\mbf V}^\top (\wh{\mbf W}^b_x - \mbf W^b_\chi\wt{\mbf S}) \Vert
\leq \Vert \wh{\mbf W}^b_x - \mbf W^b_\chi\wt{\mbf S} \Vert
= O_p\l(\sqrt{\frac{\log\,n}{T}} \vee \frac{1}{n}\r)
\label{eq:v:w}
\eea
from Lemma \ref{lem:pw:evec}.
Note that $\bm\Gamma^b_\chi = \mbf W^b_\chi \mbf M^b_\chi \mbf W_\chi^{b\, \top}
= \bm\Lambda_b \bm\Gamma^b_f \bm\Lambda_b^\top$.
Then, 
\beas
\Vert \wh{\mbf V}^\top\bm\Lambda_b \Vert 
&=& \Vert \wh{\mbf V}^\top \mbf W^b_\chi \mbf M^b_\chi \mbf W_\chi^{b\, \top}
\bm\Lambda _b(\bm\Lambda_b^\top\bm\Lambda)_b^{-1} (\bm\Gamma_f^{b})^{-1} \Vert
\le \Vert \wh{\mbf V}^\top \mbf W^b_\chi \Vert\, \Vert \mbf M^b_\chi \Vert\, 
\Vert \bm\Lambda_b (\bm\Lambda_b^\top\bm\Lambda_b)^{-1} \Vert\,
\Vert (\bm\Gamma^b_f)^{-1} \Vert
\\
&=& O_p\l\{\Big(\sqrt{\frac{\log\,n}{T}} \vee \frac{1}{{n}}\Big) \cdot n \cdot \frac{1}{\sqrt n}\r\} 
= O_p\l(\sqrt{\frac{n\log\,n}{T}} \vee \frac1{\sqrt n}\r).
\eeas
using \eqref{eq:v:w} and Assumption \ref{assum:three:new}.\hfill$\Box$

\subsection{Proof of Proposition \ref{thm:pca}}

We can show the consistency of the PCA-based estimator of the common components
within each segment (in the sense of Theorem \ref{thm:common}),
by establishing consistency of the $r_b$ leading eigenvectors of $\wh{\bm\Gamma}^b_x$
in estimating the leading eigenvectors of $\bm\Gamma^b_\chi$ up to a rotation.
see the proof of Lemma \ref{lem:cov} (i). 



Lemma \ref{lem:pw} shows the element-wise consistency of $\wh{\bm\Gamma}_x^b$ over each $I^\chi_b$
defined by the estimated change-points $\wh\eta^\chi_b, \, b = 1, \ldots, \wh{B}_\chi$. 
Recalling that $\omega_{n, T} \ll T^{1/2}$ for estimating the change-points in the common components, 
we have $\sqrt{\log\,n/T}$ dominate the RHS of \eqref{lem:pw:eq:one}. 
Therefore,
\[
\frac 1 n\Vert  \wh{\bm\Gamma}^b_x- {\bm\Gamma}^b_{\chi}\Vert\le \frac{1}{n} \Vert \wh{\bm\Gamma}^b_x - \bm\Gamma^b_x \Vert_F +\frac 1 n\Vert\bm\Gamma^b_{\eps} \Vert= 
O_p\l(\sqrt{\frac{\log\,n}{T}}\vee \frac 1 n \r),
\]
as $\mu_{\eps,1}^b<C_{\eps}$  from Assumption \ref{assum:four} (i). 
Given this result, the arguments adopted in the proof of Theorem \ref{thm:common} are applicable
verbatim for the proof of Proposition \ref{thm:pca} and details are therefore omitted.
\hfill$\Box$


\section{WT for change-point analysis of the idiosyncratic components}
\label{sec:when:y:e}

The WT proposed in Section \ref{sec:choice}
is directly applicable to $\wh\eps^k_{it}$ for change-point analysis in the idiosyncratic components.
For completeness, we state the corresponding arguments below.

Recall the notation $\gamma(t) = \max\{0 \le b \le B_\eps: \, \eta^\eps_b+1 \le t\}$. 
Then, $g_j(\wh \eps^k_{it})$ and $h_j(\wh \eps^k_{it}, \wh \eps^k_{i't})$ admit the following decompositions
\begin{align}
g_j(\wh \eps^k_{it}) &= \E\{g_j(\eps^{\gamma(t)}_{it})\} + 
[\E\{g_j(\eps_{it})\} - \E\{g_j(\eps^{\gamma(t)}_{it})\}] + [g_j(\eps_{it}) - \E\{g_j(\eps_{it})\}] 
+ \xi_j(\wh \eps^k_{it}), 
\label{eq:eps:g:decom}
\\
h_j(\wh \eps^k_{it}, \wh \eps^k_{i't}) &= \E\{h_j(\eps^{\gamma(t)}_{it}, \eps^{\gamma(t)}_{i't})\} 
+ [\E\{h_j(\eps_{it}, \eps_{i't})\} - \E\{h_j(\eps^{\gamma(t)}_{it}, \eps^{\gamma(t)}_{i't})\}] 
\nn \\
& + [h_j(\eps_{it}, \eps_{i't}) - \E\{h_j(\eps_{it}, \eps_{i't})\}] +  \zeta_j(\wh \eps^k_{it}, \wh \eps^k_{i't}).
\label{eq:eps:h:decom}
\end{align}
Under Assumption \ref{assum:one} (iii)--(iv), 
all change-points in $\bm\Gamma_\eps^{\gamma(t)}(\tau)$ at lags $|\tau| \le \bar{\tau}_\eps$, 
namely all $\eta^\eps_b \in \cB^\eps$, appear as change-points in the piecewise constant signals
$\{\E\{g_j(\eps^{\gamma(t)}_{it})\}, \, 1 \le i \le n, \, 
\E\{h_j(\eps^{\gamma(t)}_{it}, \eps^{\gamma(t)}_{i't})\}, \, 1 \le i < i' \le n\}$ 
at scales $j \ge -\jts$.
Analogous to Lemmas \ref{lem:chi:i}--\ref{lem:chi:a9},
we establish the bound on the scaled partial terms in \eqref{eq:eps:g:decom}--\eqref{eq:eps:h:decom}.

\begin{lem}
\label{lem:eps:i}
{\it Suppose that the conditions of Theorem \ref{thm:overestimation} are met. At $j \ge -\jts$ and $k \ge r$,
\bit
\item[(i)]
\begin{align*}
\max_{1 \le s < e \le T} \frac{1}{\sqrt{e-s+1}} & \Big\{
\max_{1 \le i \le n} \Big\vert \sum_{t=s}^e \E\{g_j(\eps_{it})\} - \E\{g_j(\eps^{\gamma(t)}_{it})\} \Big\vert \vee \\
& \max_{1 \le i < i' \le n} \Big\vert \sum_{t=s}^e \E\{h_j(\eps_{it}, \eps_{i't})\} - 
\E\{h_j(\eps^{\gamma(t)}_{it}, \eps^{\gamma(t)}_{i't})\} \Big\vert \Big\}
= O(\log^\upsilon T).
\end{align*}
\item[(ii)]
\beas
 \max_{1 \le s < e \le T} \frac{1}{\sqrt{e-s+1}} \Big\{
\max_{1 \le i \le n} \Big\vert \sum_{t=s}^e \xi_j(\wh \eps^k_{it}) \Big\vert \vee 
\max_{1 \le i < i' \le n} \Big\vert \sum_{t=s}^e \zeta_j(\wh \eps^k_{it}, \wh \eps^k_{i't}) \Big\vert \Big\}
= O(\log^{\theta+\upsilon}T).
\eeas
\item[(iii)]
\begin{align*}
& \max_{1 \le s < e \le T} \frac{1}{\sqrt{e-s+1}} \Big[
\max_{1 \le i \le n} \Big\vert \sum_{t=s}^e g_j(\eps_{it}) - \E\{g_j(\eps_{it})\} \Big\vert 
\\
& \qquad \qquad \qquad \vee \max_{1 \le i < i' \le n}
\Big\vert \sum_{t=s}^e h_j(\eps_{it}, \eps_{i't}) - \E\{h_j(\eps_{it}, \eps_{i't})\} \Big\vert\Big] 
= O_p(\log^{\theta+\upsilon}T).
\end{align*}
\eit}
\end{lem}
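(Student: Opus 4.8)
The plan is to establish the three bounds by transcribing, almost verbatim, the proofs of Lemmas \ref{lem:chi:i}, \ref{lem:chi:xi} and \ref{lem:chi:a9} given in Appendix \ref{sec:proof:prop:additive}, with the substitutions $\chi_{it} \mapsto \eps_{it}$, $\chi^{\beta(t)}_{it} \mapsto \eps^{\gamma(t)}_{it}$, $\wh\chi^k_{it} \mapsto \wh\eps^k_{it}$, $\cB^\chi \mapsto \cB^\eps$ and $\rho_f \mapsto \rho_\eps$ throughout. Parts (i), (ii) and (iii) are the exact analogues of Lemmas \ref{lem:chi:i}, \ref{lem:chi:xi} and \ref{lem:chi:a9} respectively; the only genuine differences lie in which structural assumptions supply the moment and tail control, so I indicate these separately for each part.

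For part (i), following the proof of Lemma \ref{lem:chi:i} I would decompose the scaled partial sum of $\E\{g_j(\eps_{it})\} - \E\{g_j(\eps^{\gamma(t)}_{it})\}$ telescopically over the $\cL_j$ lags and, for each summand, split the time indices according to the sign of $w_{it}\cdot v_{it}$ and the comparison of $|w_{it}|$ with $|v_{it}|$ exactly as there. The crux is the bound $\psi_{j,0}^2\sum_{t=s}^e \E(\eps_{it} - \eps^{\gamma(t)}_{it})^2 = O(1)$, uniform in $i$ and in $1 \le s \le e \le T$, which now follows from Assumption \ref{assum:one} (iii) (giving the geometric decay $O(\rho_\eps^{t - \ubar{\eta}^\eps(t)})$) together with the finiteness of $B_\eps$. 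Summing the $\cL_j = O(\log^\upsilon T)$ terms then yields the $O(\log^\upsilon T)$ rate, and the cross term involving $h_j$ is handled identically.

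For part (ii), the analogue of \eqref{eq:pf:xi:one} is the estimation-error bound $\max_{1\le i \le n}\max_{I \subset \{1,\ldots,T\}} |I|^{-1/2}|\sum_{t\in I}(\wh\eps^k_{it} - \eps_{it})| = O_p(\log^\theta T)$, which holds for $k \ge r$ because $\wh\eps^k_{it} - \eps_{it} = -(\wh\chi^k_{it} - \chi_{it})$; this is precisely Theorem \ref{thm:overestimation}, transferred to the idiosyncratic component via Lemma \ref{lem:x:n}. The case analysis on $|w_{it}|$ versus $|v_{it}|$ then carries over with no change, producing the $O_p(\log^{\theta+\upsilon}T)$ rate. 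For part (iii), the proof of Lemma \ref{lem:x:n} already furnishes $\max_{1\le i \le n}\max_{I} |I|^{-1/2}|\sum_{t\in I}\eps_{it}| = O_p(\log^\theta T)$ (recalling $\E(\eps_{it}) = 0$); since $g_j(\eps_{it})$ and $h_j(\eps_{it},\eps_{i't})$ are fixed linear combinations of at most $O(2^{\jts}) = O(\log^\upsilon T)$ lagged values of $\eps_{it}$, the triangle inequality inflates this bound only by the factor $\log^\upsilon T$, giving $O_p(\log^{\theta+\upsilon}T)$.

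The point requiring care, rather than a genuine obstacle, is that the uniform second-moment control used in part (i) and the tail and mixing control used in part (iii) must be sourced differently than in the common-component case. Whereas the bounds on $\chi_{it}$ rested on Assumption \ref{assum:three} and the factor structure, here the uniform bounds on $\E(\eps_{it}^2)$ and $\E\{(\eps^{\gamma(t)}_{it})^2\}$ follow from Assumption \ref{assum:four} (i) by specialising its weighted inequality to $\mbf a = \bm\varphi_i$ with $s=e$, while the Gaussian-tail control of the partial sums of $\eps_{it}$ in part (iii) comes from the normality in Assumption \ref{assum:four} (ii) combined with the strong-mixing condition in Assumption \ref{assum:five} (ii), exactly as invoked in the proof of Lemma \ref{lem:x:n}.
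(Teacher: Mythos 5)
Your proposal is correct and matches the paper exactly: the paper's own ``proof'' of Lemma \ref{lem:eps:i} is a one-line remark that it takes the analogous steps as the proofs of Lemmas \ref{lem:chi:i}--\ref{lem:chi:a9} and is therefore omitted, which is precisely the substitution argument you carry out. Your identification of the non-mechanical points --- the identity $\wh\eps^k_{it} - \eps_{it} = -(\wh\chi^k_{it} - \chi_{it})$ making Theorem \ref{thm:overestimation} directly applicable in part (ii), and sourcing the moment and tail control from Assumptions \ref{assum:one} (iii), \ref{assum:four} and \ref{assum:five} (ii) rather than from Assumption \ref{assum:three} and the factor structure --- supplies exactly the detail the paper leaves implicit.
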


The proof of Lemma \ref{lem:eps:i} take the analogous steps 
as the proofs of Lemmas \ref{lem:chi:i}--\ref{lem:chi:a9} and thus is omitted.
Then Proposition \ref{prop:e:additive} below holds for the WT of $\wh\eps^k_{it}$.
Again, its proof takes the identical arguments as the proof of Proposition \ref{prop:chi:additive} thus is omitted.
\begin{prop}
\label{prop:e:additive}
{\it Suppose that all the conditions in Theorem \ref{thm:overestimation} hold. 
For some fixed $k \ge r$ and $\jts = \lfloor C\log_2\log^\upsilon T\rfloor$, 
consider the $N = \jts\, n(n+1)/2$-dimensional panel in \eqref{eq:eps:panel0},
and denote as $y_{\ell t}$ a generic element of the panel. Then, we have the following decomposition:
\beas
y_{\ell t} = z_{\ell t} + \vep_{\ell t}, \quad \ell = 1, \ldots, N,  \quad  t=1, \ldots, T.
\eeas
\bit
\item[(i)] $z_{\ell t}$ are piecewise constant as the corresponding elements of \eqref{eq:eps:z}.
That is, all change-points in $z_{\ell t}$ belong to $\cB^\eps = \{\eta^\eps_1, \ldots, \eta^\eps_{B_\eps}\}$ and
for each $b \in \{1, \ldots, B_\eps\}$, there exists at least a single index $\ell \in \{1, \ldots, N\}$ for which 
$|z_{\ell, \eta^\eps_b+1} - z_{\ell \eta^\eps_b}| \ne 0$.
\item[(ii)]
$\max_{1 \le \ell \le N} \max_{1 \le s < e \le T} (e-s+1)^{-1/2} \;
\vert \sum_{t=s}^e \vep_{\ell t} \vert = O_p(\log^{\theta+\upsilon}T)$.
\eit}
\end{prop}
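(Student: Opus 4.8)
The plan is to follow verbatim the structure of the proof of Proposition~\ref{prop:chi:additive}, replacing $\chi$ by $\eps$ throughout and invoking Lemma~\ref{lem:eps:i} in place of Lemmas~\ref{lem:chi:i}--\ref{lem:chi:a9}. The starting point is the pair of decompositions \eqref{eq:eps:g:decom}--\eqref{eq:eps:h:decom}, in which each transformed coordinate is written as a leading term, $\E\{g_j(\eps^{\gamma(t)}_{it})\}$ or $\E\{h_j(\eps^{\gamma(t)}_{it},\eps^{\gamma(t)}_{i't})\}$, plus three residuals accounting respectively for (a) the discrepancy between $\eps_{it}$ and its stationary surrogate $\eps^{\gamma(t)}_{it}$, (b) the stochastic fluctuation of the transform about its mean, and (c) the estimation error incurred by using $\wh\eps^k_{it}$ rather than $\eps_{it}$. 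I would set $z_{\ell t}$ equal to the leading term, so that $z_{\ell t}$ is exactly the corresponding element of \eqref{eq:eps:z}, and $\vep_{\ell t}$ equal to the sum of the three residuals.

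For part~(i) I first argue that $z_{\ell t}$ is piecewise constant with breaks only in $\cB^\eps$. By weak stationarity of $\bm\eps^b_t$ on each interval $[\eta^\eps_b+1,\eta^\eps_{b+1}]$ and the compact support of $\bm\psi_j$, the quantities $\E\{g_j(\eps^{\gamma(t)}_{it})\}$ and $\E\{h_j(\eps^{\gamma(t)}_{it},\eps^{\gamma(t)}_{i't})\}$ depend on $t$ only through $\gamma(t)$, hence are constant on each segment and can change only at the $\eta^\eps_b$. To verify that every $\eta^\eps_b$ is registered in at least one coordinate, I would reproduce the wavelet--periodogram identity \eqref{eq:chi:wp} for the idiosyncratic field, namely $\E|d^\eps_{j,it}|^2=\sum_{|\tau|<\cL_j}[\wt{\bm\Gamma}^{\gamma(t)}_\eps(\tau)]_{i,i}\,\Psi_j(\tau)$ together with its cross analogue, which shows that a jump in $\wt{\bm\Gamma}^b_\eps(\tau)-\wt{\bm\Gamma}^{b-1}_\eps(\tau)$ at some lag $|\tau|\le\bar\tau_\eps$, guaranteed to exist by Assumption~\ref{assum:one}~(iv), propagates to a jump in the mean of $g_j$ or $h_j$ at some scale $j\ge-\jts$; the choice $\jts=\lfloor C\log_2\log^\upsilon T\rfloor$ ensures $\cL_{-\jts}\ge\bar\tau_\eps$ for large $T$.

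For part~(ii) the three residuals are controlled termwise by Lemma~\ref{lem:eps:i}: the bias term (a) is $O(\log^\upsilon T)$ by part~(i) of that lemma, the estimation-error term (c) is $O_p(\log^{\theta+\upsilon}T)$ by part~(ii), and the centred fluctuation term (b) is $O_p(\log^{\theta+\upsilon}T)$ by part~(iii), each uniformly over $1\le i\le n$, $1\le i<i'\le n$ and $-\jts\le j\le-1$. A triangle inequality on the scaled partial sums then yields the claimed $O_p(\log^{\theta+\upsilon}T)$ bound, the $O(\log^\upsilon T)$ contribution being dominated; since these lemma bounds are already stated uniformly over all coordinates and scales, passing to the maximum over the $N=\jts n(n+1)/2$ rows costs nothing further.

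The only genuinely new ingredient relative to Proposition~\ref{prop:chi:additive} sits inside Lemma~\ref{lem:eps:i}~(ii): the estimation error must be transferred from $\wh\chi^k_{it}$ to $\wh\eps^k_{it}$. Because $\wh\eps^k_{it}=x_{it}-\wh\chi^k_{it}$ and $\eps_{it}=x_{it}-\chi_{it}$, we have $\wh\eps^k_{it}-\eps_{it}=-(\wh\chi^k_{it}-\chi_{it})$, so the partial-sum bound $O_p(\log^\theta T)$ of Theorem~\ref{thm:overestimation}, via Lemma~\ref{lem:x:n}, carries over directly; the wavelet filter then spreads this error over at most $O(2^{\jts})=O(\log^\upsilon T)$ lags, producing the extra $\log^\upsilon T$ factor exactly as in the proof of Lemma~\ref{lem:chi:xi}. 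I expect this bookkeeping, keeping the over-specification $k\ge r$ harmless while tracking the $\cL_j$ lagged terms introduced by $g_j$ and $h_j$, to be the main technical point, albeit a routine one; everything else is a transcription of the common-component argument.
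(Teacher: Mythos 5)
Your proposal is correct and coincides with the paper's own treatment: the paper omits the proof of Proposition \ref{prop:e:additive} precisely on the grounds that it repeats the proof of Proposition \ref{prop:chi:additive} verbatim with $\eps$ in place of $\chi$ and with Lemma \ref{lem:eps:i} substituting for Lemmas \ref{lem:chi:i}--\ref{lem:chi:a9}, which is exactly the transcription you carry out, including the decompositions \eqref{eq:eps:g:decom}--\eqref{eq:eps:h:decom} and the registration of each $\eta^\eps_b$ via Assumption \ref{assum:one} (iv). Your closing observation that the estimation error transfers through $\wh\eps^k_{it}-\eps_{it}=-(\wh\chi^k_{it}-\chi_{it})$, so that the partial-sum bound of Theorem \ref{thm:overestimation} carries over with the extra $\log^\upsilon T$ factor from the wavelet filter, is likewise the paper's own remark following Theorem \ref{thm:overestimation}.
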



\section{Implementation of the proposed methodology}
\label{sec:tricks}

\subsection{Generation of the additive panel data from $\wh\chi^k_{it}$}

In the implementation of our proposed methodology,
we use the input panel data of reduced dimension $\wt{N} = \jts \, n$ instead of $N = \jts\,n(n+1)/2$:
\bea
\label{eq:red:panel}
\{g_j(\wh\chi^k_{it}), \, 1 \le i \le n; \, -\jts \le j \le -1; \, 1 \le t \le T\}.
\eea
The motivation behind such a modification is two-fold.
Firstly, it substantially reduces the computational cost by speeding up
not only the computation of the $\wt{N} \times (e-s)$-dimensional array of DC statistics over the segment $[s, e]$, 
but also the resampling procedure adopted in the bootstrap algorithm for threshold selection.

Besides, suppose that a change-point, say $\eta^\chi_b$,
appears as jumps only in $h_j(\chi^{\beta(t)}_{it}, \chi^{\beta(t)}_{i't})$ at the population level, i.e.,
\beas
\bm\lambda_i^\top\l[\sum_{|\tau| < \cL_j} \Big\{\wt{\bm\Gamma}^{b}_f(\tau) - 
\wt{\bm\Gamma}^{b-1}_f(\tau)\Big\}\Psi_j(\tau)\r]\bm\lambda_i = 0
\quad \mbox{for all } i=1, \ldots, n \mbox{ and } j \ge -\jts,
\eeas
where $\wt{\bm\Gamma}^b_f(\tau) = \E\{\mbf f^b_{t+\tau}(\mbf f^b_t)^\top\}$.
Such a constraint may be viewed as an over-determined system
consisting of $\jts\,n$ equations and $\cL_{-\jts}r(r+1)/2$ variables.
With $n \to \infty$ and $r$ fixed, there exists only a trivial solution, namely
$\wt{\bm\Gamma}^{b}_f(\tau) - \wt{\bm\Gamma}^{b-1}_f(\tau) = \mbf O$ 
for all $|\tau| < \cL_{-\jts}$.
Therefore, it is reasonable to expect that
every change-point in the second-order structure of the common components is detectable from (\ref{eq:red:panel}).

\subsection{Choice of the number of wavelet scales}

In the paper, we have investigated the DCBS algorithm applied to the panel that consists of  
wavelet-transformed $\wh\chi_{it}^k$ (or $\wh\eps_{it}^k$) at multiple scales simultaneously.
However, it is possible to apply the algorithm sequentially:
\begin{itemize}
\item[(a)] first apply the DCBS algorithm to the panel consisting of wavelet-transformed 
$\wh\chi^k_{it}$ ($\wh\eps_{it}^k$) at scale $-1$ only,
and denote the set of estimated change-points by $\wh{\mc B}$;
\item[(b)] apply the DCBS algorithm to wavelet-transformed $\wh\chi^k_{it}$ ($\wh\eps_{it}^k$) at scale $-2$
over each segment defined by two adjacent change-points in $\wh{\mc B}$,
and add the estimated change-points to $\wh{\mc B}$;
\item[(c)] repeat (b) with wavelet scales $j = -3, -4, \ldots$ until $\wh{\mc B}$ does not change
from one scale to another.
\end{itemize}
The above procedure is motivated by the fact that the finer wavelet scales are preferable 
for the purpose of change-point estimation,
and allows for the data-driven choice of $\jts$.

\subsection{Binary segmentation algorithm}

\subsubsection{Trimming off of the intervals for change-point analysis}
\label{sec:dt}

In practice, we introduce an additional parameter $d_T$ in order to
ensure that the interval of interest $[s, e]$ is of sufficient length ($e-s+1 > 4d_T$); and
account for possible bias in the previously detected change-points,
by trimming off the short intervals of length $d_T$ around the previously identified change-points in 
Steps 1.2--1.3 of the DCBS algorithm:
\beas
\cT_{s, e} = \max_{b\in[s+d_T, e-d_T]}\max_{1 \le m \le N} \cD_{s, b, e}(m) \quad \mbox{and} \quad
\heta = \arg\max_{b\in[s+d_T, e-d_T]}\max_{1 \le m \le N} \cD_{s, b, e}(m).
\eeas
Due to the condition on the spread of change-points in Assumption \ref{assum:six}, 
this adjustment does not affect the theoretical consistency in the detected change-points,
while ensuring that the bias in estimated change-points 
do not hinder the subsequent search for change-points empirically.
With regards to the discussion in Remark \ref{rem:beta} and the bias presented in Theorem \ref{thm:dcbs}, 
we propose to use $d_T = [\log^2 T \wedge 0.25T^{6/7}]$.

Though our methodology requires the selection of a multitude of parameters as listed in this section,
we observe that its empirical performance is relatively robust to their choices.
The parameter that exerts the most influence is $d_T$:
smaller values of $d_T$ tend to return a larger set of estimated change-points, some of which may be spurious 
as a result of bias associated with previously detected change-points.
On the other hand, if $d_T$ is set to be too large, some change-points may be left undetected due to the restrictions imposed by its choice.
The default choice of $d_T$ recommended above worked well for the simulated datasets.
Also, with some prior knowledge on the dataset, $d_T$ may be selected accordingly, as we have done in real data analysis.

\subsubsection{Implementation of the DCBS algorithm}
\label{sec:rule}

Steps 1--2 of the bootstrap algorithm proposed in Section \ref{sec:bootstrap}
generate panel data $y^\bullet_{\ell t}$ of the same dimensions as the original 
$\{y_{\ell t}, \, \ell = 1, \ldots, N; \, t = 1, \ldots, T\}$.
Therefore, $\cT_{s, e}^\bullet$ over different intervals $[s_{u, v}, e_{u, v}]$ (with $s = s_{u, v}$ and $e = e_{u, v}$),
which are examined at certain iterations of the DCBS algorithm,
can be computed from the same bootstrap panel data $y^\bullet_{\ell t}$.
Since storing $R$ copies of such high-dimensional panel data ($N \times T$) is costly, 
we propose to apply the DCBS algorithm with the bootstrap algorithm as proposed below.

Firstly, a binary tree of a given height, say $\hbar$, is grown from $y_{\ell t}$ with $(s_{u, v}, e_{u, v})$ as its nodes,
by omitting the testing procedure of Step 1.3 when applying the DCBS algorithm until $u = \hbar$.
We impose a minimum length constraint that requires $e_{u, v} - s_{u, v} + 1 \ge 4d_T$; 
if not, the interval $[s_{u, v}, e_{u, v}]$ is not split further.

Let $\cI = \{(s_{u, v}, e_{u, v}):\, 1 \le v \le I_u; \, 1 \le u \le \hbar\}$
be the collection of the nodes in the thus-generated binary tree,
with $I_u$ denoting the number of nodes at level $u$.
Then, the following is repeated for $m = 1, \ldots, R$:
we generate the bootstrap panel data $y^{m\bullet}_{\ell t}$ using the bootstrap algorithm,
and compute $\cT_{s, e}^{m\bullet}$ for all $(s, e) \in \cI$, from which the corresponding thresholds are drawn.
Once it is complete, we perform Step 1.3 by setting $(s, e) = (s_{u, v}, e_{u, v}) \in \cI$; 
starting from $(u, v) = (1, 1)$, we progress in the order $(u, v) = (1, 1), (2, 1), \ldots, (2, I_2), (3, 1), \ldots$.
If $\cT_{s, e} \le \pi_{N, T}$ for some $(s, e)$,
all the nodes $(s', e')$ that are dependent on $(s, e)$ (in the sense that $[s', e'] \subset [s, e]$)
are removed from $\cI$, which ensures that we are indeed performing a binary segmentation procedure.
$\hbar = [\log_2\,T/2]$ is used in simulation studies and real data analysis,
which grows a sufficiently large tree
considering that we can grow a binary tree of height at most $\lfloor \log_2\,T \rfloor$.

\subsection{Stationary bootstrap and choice of parameters}
\label{app:SB}

Stationary bootstrap (SB) generates bootstrap samples $X^\bullet_1, \ldots, X^\bullet_T$ as below. 
Let $J_1, J_2, \ldots$ be i.i.d. geometric random variables independent from the data, 
where $\p(J_1 = L) = p(1-p)^{L-1}$ with some $p\in (0, 1)$,
 $I_1, \ldots, I_Q$ be i.i.d. uniform variables on $\{1, \ldots, T\}$,
and denote a data block by $B(t, L) = (X_t, \ldots, X_{t+L-1})$ for $t, L \ge 1$ with a periodic extension 
($X_t = X_{t-uT}$ for some $u \in \Z$ such that $1 \le t - uT \le T$). 
The SB sample $X^\bullet_1, \ldots, X^\bullet_T$ is generated as the first $T$ observations in the sequence
$B(I_1, J_1), \ldots, B(I_Q, J_Q)$ for $Q = \min\{q:\, \sum_{l=1}^q J_q \ge T\}$.

We used the bootstrap sample size $R = 200$ for simulation studies, and $R = 500$ for real data analysis.
Naturally, the larger bootstrap sample size is expected to return the better choice for the threshold,
but we did not observe any sensitivity due to the choice of the size of bootstrap samples in our simulation studies.
The binary segmentation procedure implicitly performs multiple testing,
with the number of tests determined by the choice of $\hbar$ described in Appendix \ref{sec:rule}.
However, the test statistics computed at different iterations of the DCBS algorithm are correlated.
Therefore, with the added difficulty arising from the hierarchy inherent in the procedure,
the task of controlling for multiple testing is highly challenging.
Instead, noticing that the bootstrap test statistics tend to have heavy tails,
we chose to adopt the same $\alpha = 0.05$ at all iterations of the DCBS algorithm.

In the SB, $p$ stands for the inverse of average block length.
Typically, large $p$ leads to large bias and small variance and vice versa.
For approximating the finite sample distribution of the sample mean of a univariate series $\{X_t\}_{t=1}^T$,
\cite{politis2004} (accompanying corrections in \cite{patton2009}) proposed to select $p$ as 
$p^{-1} = (\wh{G}^2/\wh{g}^2(0))^{1/3}T^{1/3}$,
where $\wh{G} = \sum_{k=-\Lambda}^\Lambda \lambda(k/\Lambda)|k|\wh{R}(k)$,
$\wh{g}(0) = \sum_{k=-\Lambda}^\Lambda \lambda(k/\Lambda)\wh{R}(k)$, $\wh{R}(k) = T^{-1}\sum_{t=1}^{T-|k|}(X_t-\bar{X})(X_{t+|k|}-\bar{X})$
and $\lambda(z)$ is a trapezoidal shape symmetric taper around zero:
\begin{eqnarray*}
\lambda(z) = \l\{\begin{array}{ll}
1 & \mbox{for } |z| \in [0, 1/2), \\
2(1-|z|) & \mbox{for } |z| \in [1/2, 1], \\
0 & \mbox{otherwise}.
\end{array}
\r.
\end{eqnarray*}
For the stationarity testing in multivariate (finite-dimensional) time series, \cite{jentsch2015} proposed to modify the above $p$ 
as per the theory developed for the consistency of their stationary bootstrap,
by (a) replacing $T^{1/3}$ with $T^{1/5}$, and
(b) taking the average of $p^{-1}$ associated with each univariate series.
Taking their approach, we choose to select $p_j$ for individual $\wh f_{jt}, \, j = 1, \ldots, k$ 
according to (a) in Step 1 for the common components,
and select $p_i$ for individual $\wh{\eps}_{it}^{k}, \, i=1, \ldots, n$ according to (a) and
use $(n^{-1}\sum_{i=1}^n p_i^{-1})^{-1}$ according to (b) in the same step for the idiosyncratic components.
As for $\Lambda$, we plug in the automatically chosen bandwidth 
based on the autocorrelation structure of $\wh f_{jt}$ and $\wh \eps^k_{it}$
as suggested in \cite{politis2004}.

\end{document}